\theoremstyle{plain}
\newtheorem{thm}{Theorem}[section]
\newtheorem{prp}{Proposition}[section]
\newtheorem{cor}{Corollary}[section]
\newtheorem{lm}{Lemma}[section]
\theoremstyle{definition}
\newtheorem{dfn}{Definition}[section]
\newtheorem{rdc}{Reduction}[section]
\theoremstyle{remark}
\newtheorem{rmk}{Remark}[section]
\title{Sending a Bi-Variate Gaussian over a Gaussian MAC}
\author{Amos Lapidoth \and Stephan Tinguely}
\date{}
\begin{document}

\maketitle

\begin{abstract}
\renewcommand{\thefootnote}{}

We study the power versus distortion trade-off for the distributed
transmission of a memoryless bi-variate Gaussian source over a
two-to-one average-power limited Gaussian multiple-access channel. In
this problem, each of two separate transmitters observes a different
component of a memoryless bi-variate Gaussian source. The two
transmitters then describe their source component to a common receiver
via an average-power constrained Gaussian multiple-access
channel. From the output of the multiple-access channel, the receiver
wishes to reconstruct each source component with the least possible
expected squared-error distortion. Our interest is in characterizing
the distortion pairs that are simultaneously achievable on the two
source components.

We present sufficient conditions and necessary conditions for the
achievability of a distortion pair. These conditions are expressed as
a function of the channel signal-to-noise ratio (SNR) and of the
source correlation. In several cases the necessary conditions and
sufficient conditions are shown to agree. In particular, we show that
if the channel SNR is below a certain threshold, then an uncoded
transmission scheme is optimal. We also derive the precise
high-SNR asymptotics of an optimal scheme.



\footnote{The work of Stephan Tinguely was partially supported by the
  Swiss National Science Foundation under Grant 200021-111863/1. The
  results in this paper were presented in part at the 2006 IEEE
  Communications Theory Workshop, Dorado, Puerto Rico and at the 2006
  IEEE International Symposium on Information Theory, Seattle, USA.

  A.~Lapidoth and S.~Tinguely are with the Signal and Information
  Processing Laboratory (ISI), ETH Zurich, Switzerland (e-mail:
  lapidoth@isi.ee.ethz.ch; tinguely@isi.ee.ethz.ch).}
\setcounter{footnote}{0}
\end{abstract}

\section{Introduction}

We study the power versus distortion trade-off for the distributed
transmission of a memoryless bi-variate Gaussian source over a
two-to-one average-power limited Gaussian multiple-access channel. In
this problem, each of two separate transmitters observes a different
component of a memoryless bi-variate Gaussian source. The two
transmitters then describe their source component to a common receiver
via an average-power constrained Gaussian multiple-access
channel. From the output of the multiple-access channel, the receiver
wishes to reconstruct each source component with the least possible
expected squared-error distortion. Our interest is in characterizing
the distortion pairs that are simultaneously achievable on the two
source components.



We present sufficient conditions and necessary conditions for the
achievability of a distortion pair. These conditions are expressed as
a function of the channel signal-to-noise ratio (SNR) and of the
source correlation. In several cases the necessary conditions and
sufficient conditions are shown to agree, thus yielding a full
characterization of the achievable distortions. In particular, we show
that if the channel SNR is below a certain threshold (that we
compute), then an uncoded transmission scheme is optimal. We also
derive the precise high-SNR asymptotics of an optimal scheme. The
uncoded result is reminiscent of Goblick's result \cite{goblick65}
that for the transmission of a Gaussian source over an AWGN channel
the minimal squared-error distortion is achieved by uncoded
transmission. But in our setting uncoded transmission is only optimal
for some SNRs.



Our problem can be viewed as a lossy Gaussian version of the problem
addressed by Cover, El Gamal and Salehi \cite{cover_elgamal_salehi80}
(see also \cite{dueck81, kang_ulukus05}) in which a bi-variate
finite-alphabet source is to be transmitted losslessly over a
two-to-one multiple-access channel. Our problem is also related to the
quadratic Gaussian two-terminal source-coding problem \cite{oohama97,
  wagner-tavildar-vishwanath05} and to the quadratic Gaussian CEO
problem \cite{berger_vishwanathan97, oohama98}. In both of these
problems, correlated Gaussians are described distributedly to a
central receiver. But, in the quadratic Gaussian CEO problem the
interest is in reconstructing a single Gaussian random variable that
underlies the observations of the different transmitters, rather than
reconstructing each transmitter's observation itself. But more
importantly, the above two problems are source-coding problems whereas
ours is one of combined source-channel coding. We emphasize that, as
our results show, source-channel separation is suboptimal for our
setting.

The problem of transmitting correlated sources over multiple-access
channels has so far only been studied sparsely. One of the first
results is due to Cover, El Gamal and Salehi
\cite{cover_elgamal_salehi80} who presented sufficient conditions for
the lossless transmission of a finite-alphabet bi-variate source over
a multiple-access channel. Later, several variations of this problem
were considered. Salehi \cite{salehi95} studied a lossy versions of
the problem with a finite-alphabet source and arbitrary distortion
measures on each source component. For this problem he derived
sufficient conditions for the achievability of a distortion pair. More
recently, another variation where the two source components are binary
with Hamming distortion and where the multiple-access channel is
Gaussian was considered by Murugan, Gopala and El Gamal
\cite{murugan-gopala-elgamal04} who derived sufficient conditions for
the achievability of a distortion pair. Gastpar \cite{gastpar07}
considered a combined source-channel coding analog of the quadratic
Gaussian CEO problem. In this problem, distributed transmitters
observe independently corrupted versions of the same univariate
Gaussian source. These transmitters are connected to a central
receiver by means of a many-to-one Gaussian multiple-access
channel. The central receiver wishes to reconstruct the original
univariate source as accurately as possible. For this problem, Gastpar
showed that the minimal expected squared-error distortion is achieved
by an uncoded transmission scheme. The extension of our problem to the
case where perfect causal feedback from the receiver to each
transmitter is available is studied in \cite{lapidoth-tinguely09} (see
also \cite{lapidoth-tinguely07}).

\section{Problem Statement}\label{sec:problem-statement}


\subsection{Setup}\label{subsec:setup}



Our setup is illustrated in Figure \ref{fig:setup-mac}.
\begin{figure}[h]
  \centering
  \psfrag{s1}[cc][cc]{$S_{1,k}$}
  \psfrag{s2}[cc][cc]{$S_{2,k}$}
  \psfrag{x1}[cc][cc]{$X_{1,k}$}
  \psfrag{x2}[cc][cc]{$X_{2,k}$}
  \psfrag{src}[cc][cc]{Source}
  \psfrag{f1}[cc][cc]{$f_1^{(n)}(\cdot)$}
  \psfrag{f2}[cc][cc]{$f_2^{(n)}(\cdot)$}
  \psfrag{z}[cc][cc]{$Z_{k}$}
  \psfrag{y}[cc][cc]{$Y_{k}$}
  \psfrag{p1}[cc][cc]{$\phi_1^{(n)}(\cdot)$}
  \psfrag{p2}[cc][cc]{$\phi_2^{(n)}(\cdot)$}
  \psfrag{s1h}[cc][cc]{$\hat{S}_{1,k}$}
  \psfrag{s2h}[cc][cc]{$\hat{S}_{2,k}$}
  \epsfig{file=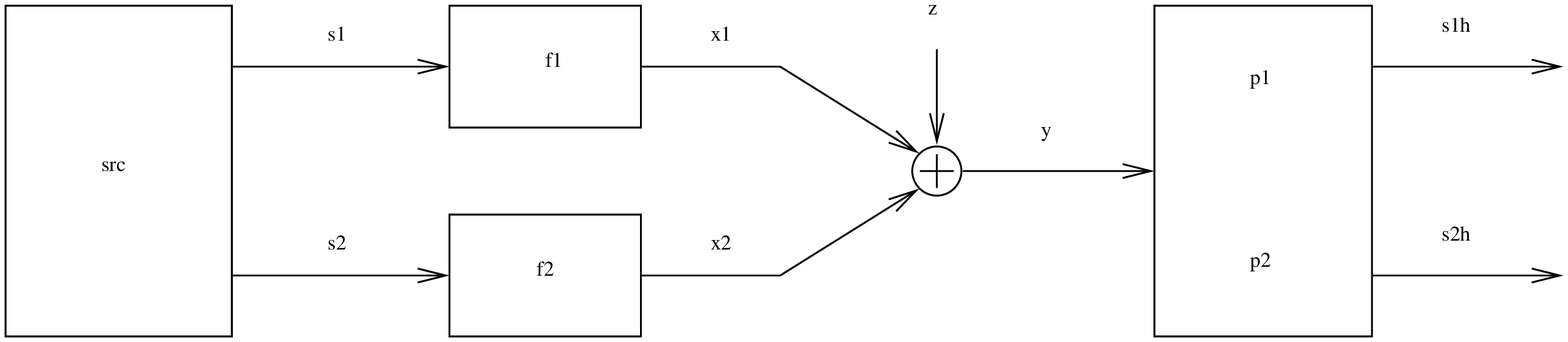, width=0.8\textwidth}
  \caption{Bi-variate Gaussian source with one-to-two Gaussian
    multiple-access channel.}
  \label{fig:setup-mac}
\end{figure}
A memoryless bi-variate Gaussian source is connected to a two-to-one
Gaussian multiple-access channel. Each transmitter observes one of the
source components and wishes to describe it to the common
receiver. The source symbols produced at time $k \in \Integers$ are
denoted by $(S_{1,k}, S_{2,k})$. The source output pairs $\{ (S_{1,k},
S_{2,k}) \}$ are independent identically distributed (IID) zero-mean
Gaussians of covariance matrix
\begin{equation}\label{eq:source-law}
\cov{S} = \left( \begin{array}{c c}
\sigma_1^2 & \rho \sigma_1 \sigma_2\\[3mm]
\rho \sigma_1 \sigma_2 & \sigma_2^2
\end{array} \right),
\end{equation}
where $\rho \in [-1,1]$ and where $0 < \sigma_i^2 < \infty$, $i \in \{
1,2 \}$. The sequence $\{ S_{1,k} \}$ of the first source component is
observed by Transmitter~1 and the sequence $\{ S_{2,k} \}$ of the
second source component is observed by Transmitter~2. The two source
components are to be described over the multiple-access channel to the
common receiver by means of the channel input sequences $\{ X_{1,k}
\}$ and $\{ X_{2,k} \}$, where $x_{1,k} \in \Reals$ and $x_{2,k} \in
\Reals$. The corresponding time-$k$ channel output is given by
\begin{equation}
Y_k = X_{1,k} + X_{2,k} + Z_k,
\end{equation}
where $Z_k$ is the time-$k$ additive noise term, and where $\{ Z_{k}\}$
are IID zero-mean variance-$N$ Gaussian random variables that are
independent of the source sequence.

For the transmission of the source $\{ S_{1,k}, S_{2,k}\}$, we
consider block encoding schemes and denote the block-length by $n$ and
the corresponding $n$-sequences in boldface, e.g.~${\bf S}_1 =
(S_{1,1}, S_{1,2}, \ldots , S_{1,n})$. Transmitter $i$ is modeled as a
function $f_i^{(n)} \colon \Reals^n \rightarrow \Reals^n$ which produces the
channel input sequence ${\bf X}_i$ based on the observed source
sequence ${\bf S}_i = (S_{i,1}, S_{i,2}, \ldots ,S_{i,n})$, i.e.
\begin{IEEEeqnarray}{rCl}\label{eq:encoders-nofb}
  {\bf X}_i & = & f_i^{(n)} \left( {\bf S}_i \right) \qquad i \in \{
  1,2 \}.
\end{IEEEeqnarray}
The channel input sequences are subjected to expected average power
constraints
\begin{IEEEeqnarray}{rCl}\label{eq:power-constraint}
  \frac{1}{n} \sum_{k=1}^n \E{X_{i,k}^2} & \leq & P_i \qquad i \in \{
  1,2 \},
\end{IEEEeqnarray}
for some given $P_i > 0$.

The decoder consists of two functions $\phi_{i}^{(n)} \colon
\Reals^{n} \rightarrow \Reals^{n}$, $i \in \{ 1,2 \}$, which perform
estimates $\hat{\bf S}_i$ of the respective source sequences ${\bf
  S}_i$, based on the observed channel output sequence ${\bf Y}$, i.e.
\begin{IEEEeqnarray}{rCl}\label{eq:reconstructors}
  \hat{\bf S}_i & = & \phi_{i}^{(n)} \left( {\bf Y} \right) \qquad i
  \in \{ 1,2 \}.
\end{IEEEeqnarray}

Our interest is in the pairs of expected squared-error distortions
that can be achieved simultaneously on the source-pair as the
blocklength $n$ tends to infinity. In view of this, we next define the
notion of achievability.

\subsection{Achievability of Distortion Pairs}\label{subsec:achievability}

\begin{dfn}\label{def:achv-dist}
  Given $\sigma_{1}, \sigma_{2} > 0$, $\rho \in [-1,1]$, $P_{1}, P_{2}
  > 0$, and $N > 0$ we say that the tuple $(D_{1}, D_{2},
  \sigma^{2}_{1}, \sigma_{2}^{2},$ $\rho, P_{1}, P_{2}, N )$ is
  \emph{achievable} if there exists a sequence of encoding functions
  $\{ f_{1}^{(n)}, f_{2}^{(n)} \}$ as in \eqref{eq:encoders-nofb},
  satisfying the average power constraints
  (\ref{eq:power-constraint}), and a sequence of reconstruction pairs
  $\{ \phi_{1}^{(n)}, \phi_{2}^{(n)} \}$ as in
  (\ref{eq:reconstructors}), such that the average distortions
  resulting from these encoding and reconstruction functions fulfill
  \begin{displaymath}
    \varlimsup_{n \rightarrow \infty} \frac{1}{n} \sum_{k=1}^n \E{
      \left( S_{i,k} - \hat{S}_{i,k} \right)^2} \leq D_{i}, \quad i=1,2,
  \end{displaymath}
  whenever
  \begin{displaymath}
    {\bf Y} = f_{1}^{(n)}({\bf S}_1) + f_{2}^{(n)}({\bf S}_2) + {\bf Z},
  \end{displaymath}
  and where $\{(S_{1,k},S_{2,k})\}$ are IID zero-mean bi-variate
  Gaussian vectors of covariance matrix $\cov{S}$ as in
  \eqref{eq:source-law} and $\{Z_{k}\}$ are IID zero-mean variance-$N$
  Gaussians that are independent of $\{(S_{1,k},S_{2,k})\}$.
\end{dfn}

The problem we address here is, for given $\sigma_1^{2}$,
$\sigma_2^{2}$, $\rho$, $P_{1}$, $P_{2}$, and $N$, to find the set of
pairs $(D_{1}, D_{2})$ such that $(D_{1}, D_{2}, \sigma_1^{2},
\sigma_2^{2}, \rho, P_{1}, P_{2}, N)$ is achievable. Sometimes, we
will refer to the set of all $(D_1,D_2)$ such that $(D_{1}, D_{2},
\sigma_1^{2}, \sigma_2^{2}, \rho, P_{1}, P_{2}, N)$ is achievable as
the distortion region associated to $( \sigma_1^{2}, \sigma_2^{2},
\rho, P_{1}, P_{2}, N )$. In that sense, we will often say, with
respect to some $( \sigma_{1}$, $\sigma_{2}$, $\rho$, $P_{1}$, $P_{2},
N )$, that the pair $(D_1,D_2)$ is achievable, instead of saying that
the tuple $(D_{1}, D_{2}, \sigma^{2}_{1}, \sigma_{2}^{2},$ $\rho,
P_{1}, P_{2}, N )$ is achievable.

\subsection{Normalization}

For the described problem we now show that, without loss in
generality, the source law given in \eqref{eq:source-law} can be
restricted to a simpler form. This restriction will ease the statement
of our results as well as their derivations.

\begin{rdc}\label{rdc:source-normalization}
  For the problem stated in Sections \ref{subsec:setup} and
  \ref{subsec:achievability}, there is no loss in generality in
  restricting the source law to satisfy
  \begin{IEEEeqnarray}{rCl}\label{eq:source-normalization}
    \sigma_1^2 = \sigma_2^2 = \sigma^2 & \hspace{12mm} \text{and}
    \hspace{12mm} & \rho \in [0,1].
  \end{IEEEeqnarray}
\end{rdc}

\begin{proof}
  The proof follows by noting that the described problem has certain
  symmetry properties with respect to the source law. We prove the
  reductions on the source variance and on the correlation coefficient
  separately.
  \renewcommand{\labelenumi}{\roman{enumi})}
  \begin{enumerate}
  \item The reduction to correlation coefficients $\rho \in [0,1]$ holds
    because the optimal distortion region depends on the correlation
    coefficient only via its absolute value $|\rho |$. That is, the
    tuple $(D_{1}, D_{2}, \sigma_{1}^{2}, \sigma_{2}^{2}, \rho,$ $P_{1},
    P_{2},N)$ is achievable if, and only if, the tuple $(D_{1}, D_{2},
    \sigma_{1}^{2}, \sigma_{2}^{2}, -\rho, P_{1}, P_{2},N)$ is achievable.
    To see this, note that if $\{ f_{1}^{(n)}, f_{2}^{(n)},
    \phi_{1}^{(n)}, \phi_{2}^{(n)} \}$ achieves the
    distortion $(D_{1}, D_{2})$ for the source of correlation
    coefficient $\rho$, then $\{ \tilde{f}_{1}^{(n)}, f_{2}^{(n)},
    \tilde{\phi}_{1}^{(n)}, \phi_{2}^{(n)} \}$, where
    \begin{equation*}
      \tilde{f}_{1}^{(n)}( {\bf S}_{1} ) = f_{1}^{(n)}( - {\bf S}_{1} )
      \qquad \text{and} \qquad 
      \tilde{\phi}_{1}^{(n)}( {\bf Y} ) = - \phi_{1}^{(n)}( {\bf Y} ) 
    \end{equation*}
    achieves  $(D_{1}, D_{2})$ on the source with correlation
    coefficient $-\rho$.
    
  \item The restriction to source variances satisfying $\sigma_1^2 =
    \sigma_2^2 = \sigma^2$ incurs no loss of generality because the
    distortion region scales linearly with the source variances. That
    is, the tuple $(D_{1}, D_{2}, \sigma_{1}^{2}, \sigma_{2}^{2},
    \rho, P_{1}, P_{2},N)$ is achievable if, and only if, for every
    $\alpha_{1}, \alpha_{2} \in \Reals^+$, the tuple
    $(\alpha_{1}D_{1}, \alpha_{2}D_{2}, \alpha_{1} \sigma_{1}^{2},
    \alpha_{2} \sigma_{2}^{2}, \rho, P_{1}, P_{2},N)$ is achievable.

    This can be seen as follows. If $\{ f_{1}^{(n)}, f_{2}^{(n)},
    \phi_{1}^{(n)}, \phi_{2}^{(n)} \}$ achieves $(D_{1}, D_{2},
    \sigma_{1}^{2}, \sigma_{2}^{2}, \rho,$ $P_{1}, P_{2},N)$, then the
    combination of the encoders
    \begin{equation*}
      \tilde{f}_{i}^{(n)}( {\bf S}_{i}) = f_{i}^{(n)}( {\bf
        S}_{i}/\sqrt{\alpha_{i}}), \hspace{10mm} i \in \{ 1,2 \},
    \end{equation*}
    with the reconstructors
    \begin{equation*}
      \tilde{\phi}_{i}^{(n)}( {\bf Y} ) = \sqrt{\alpha_{i}} \cdot
      \phi_{i}^{(n)}( {\bf Y} ), \hspace{10mm} i \in \{ 1,2 \},
    \end{equation*}
    achieves the tuple $(\alpha_1 D_{1}, \alpha_2 D_{2}, \alpha_{1}
    \sigma_{1}^{2}, \alpha_{2} \sigma_{2}^{2}, \rho, P_{1},
    P_{2},N)$. And by an analogous argument it follows that if
    $(\alpha_1 D_{1}, \alpha_2 D_{2}, \alpha_{1} \sigma_{1}^{2},
    \alpha_{2} \sigma_{2}^{2}, \rho, P_{1}, P_{2},N)$ is achievable,
    then also $(D_{1}, D_{2}, \sigma_{1}^{2}, \sigma_{2}^{2}, \rho,
    P_{1}, P_{2},N)$ is achievable. \hfill \qedhere
  \end{enumerate}
\end{proof}

In view of Reduction \ref{rdc:source-normalization} we assume for the
remainder that the source law additionally satisfies
\eqref{eq:source-normalization}.

\subsection{``Symmetric Version'' and a Convexity Property}

The ``symmetric version'' of our problem corresponds to the case where
the transmitters are subjected to the same power constraint, and where
we seek to achieve the same distortion on each source component. That
is, $P_{1} = P_{2} = P$, and we are interested in the minimal
distortion
\begin{IEEEeqnarray*}{rCl}
  D^{*}(\sigma^{2}, \rho, P, N) & \triangleq & \inf \{ D \colon (D, D,
  \sigma^{2}, \sigma^{2}, \rho, P, P, N) \text{ is achievable}\},
\end{IEEEeqnarray*}
that is simultaneously achievable on $\{ S_{1,k} \}$ and on $\{
S_{2,k} \}$. In this case, we define the SNR as $P/N$ and seek the
distortion $D^{*}(\sigma^{2}, \rho, P, N)$, for some fixed $\sigma^2$
and $\rho$, and as a function of the SNR.

We conclude this section with a convexity property of the achievable
distortions.

\begin{rmk}\label{rmk:convexity}
  If $(D_{1}, D_{2}, \sigma_{1}^{2}$, $\sigma_{2}^{2}, \rho, P_{1},
  P_{2}, N)$ and $(\tilde{D}_{1}, \tilde{D}_{2}, \sigma_{1}^{2},
  \sigma_{2}^{2}, \rho, \tilde{P}_{1}, \tilde{P}_{2}, N)$ are
  achievable, then
  \begin{equation*}
    \left( \lambda D_{1} + \bar{\lambda} \tilde{D}_{1}, \lambda D_{2} + 
      \bar{\lambda}\tilde{D}_{2},  \sigma_{1}^{2}, \sigma_{2}^{2}, \rho,
      \lambda P_{1} + \bar{\lambda} \tilde{P}_{1}, \lambda P_{2} + 
      \bar{\lambda}\tilde{P}_{2}, N \right),
  \end{equation*}
  is also achievable for every $\lambda \in [0,1]$, where
  $\bar{\lambda} = (1-\lambda)$.
\end{rmk}

\begin{proof}
  Follows by a time-sharing argument.
\end{proof}


\section{Preliminaries: Sending a Bi-Variate Gaussian over an AWGN
  Channel}\label{sec:pt2pt}

In this section we lay the ground for our main results. We study a
point-to-point analog of the multiple-access problem described in
Section \ref{subsec:setup}. More concretely, we consider the
transmission of a memoryless bi-variate Gaussian source, subject to
expected squared-error distortion on each source component, over the
additive white Gaussian noise (AWGN) channel. For this problem, we
characterize the power versus distortion trade-off and show that below
a certain SNR threshold, an uncoded transmission scheme is
optimal. This problem is simpler than our multiple-access problem
because here source-channel separation is optimal.

\subsection{Problem Statement}

The setup considered in this section is illustrated in Figure
\ref{fig:awgn-problem}.
\begin{figure}[h]
  \centering
  \psfrag{s1}[cc][cc]{$S_{1,k}$}
  \psfrag{s2}[cc][cc]{$S_{2,k}$}
  \psfrag{x}[cc][cc]{$X_{k}$}
  \psfrag{src}[cc][cc]{Source}
  \psfrag{f}[cc][cc]{$f^{(n)}(\cdot)$}
  \psfrag{z}[cc][cc]{$Z_{k}$}
  \psfrag{y}[cc][cc]{$Y_{k}$}
  \psfrag{p1}[cc][cc]{$\phi_1^{(n)}(\cdot)$}
  \psfrag{p2}[cc][cc]{$\phi_2^{(n)}(\cdot)$}
  \psfrag{s1h}[cc][cc]{$\hat{S}_{1,k}$}
  \psfrag{s2h}[cc][cc]{$\hat{S}_{2,k}$}
  \epsfig{file=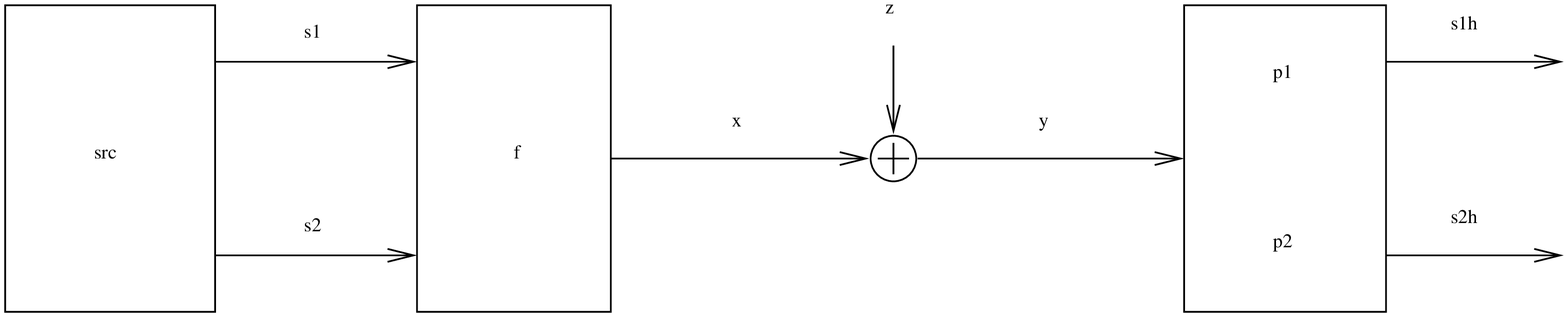, width=0.85\textwidth}
  \caption{Bi-variate Gaussian source with additive white Gaussian
    noise channel.}
  \label{fig:awgn-problem}
\end{figure}
The difference to the multiple-access problem of Section
\ref{subsec:setup} is that now the two source sequences ${\bf S}_1$
and ${\bf S}_2$ are observed and transmitted jointly by one single
transmitter rather than by two distributed transmitters. Thus, the
channel input sequence ${\bf X}$ is a function $f^{(n)} \colon
\Reals^n \times \Reals^n \rightarrow \Reals^n$ of the source sequences
$( {\bf S}_1, {\bf S}_2)$, i.e.
\begin{IEEEeqnarray}{rCl}\label{eq:encoder-awgn}
  {\bf X} & = & f^{(n)} \left( {\bf S}_1, {\bf S}_2 \right).
\end{IEEEeqnarray}
This channel input sequence is subjected to an average power
constraint
\begin{IEEEeqnarray}{rCl}\label{eq:power-awgn}
  \frac{1}{n} \sum_{k=1}^n \E{X_k^2} & \leq & P,
\end{IEEEeqnarray}
for some given $P > 0$.

The remainder of the problem statement is as in the multiple-access
problem. The source law is assumed to be given by
\eqref{eq:source-law} and to satisfy
\eqref{eq:source-normalization}. The reconstruction functions are as
defined in \eqref{eq:reconstructors}, and the achievability of
distortion pairs is defined analogously as in Section
\ref{def:achv-dist}. Our interest is in the set of achievable
distortion pairs $(D_{1}, D_{2})$.

\subsection{Rate-Distortion Function of a Bi-Variate Gaussian}

Denoting the rate-distortion function of the source $\{
(S_{1,k},S_{2,k}) \}$ by $R_{S_1,S_2}(D_1,D_2)$, the set of achievable
distortion pairs is given by all pairs $(D_1,D_2)$ satisfying
\begin{IEEEeqnarray}{rCl}\label{eq:awgn-nec+suff-cond}
  R_{S_1,S_2}(D_1,D_2) & \leq & \frac{1}{2} \log_2 \left( 1 +
    \frac{P}{N} \right).
\end{IEEEeqnarray}
We next compute the rate-distortion function $R_{S_1,S_2}(D_1,D_2)$.

\begin{thm}\label{thm:rd1d2-main}
  The rate-distortion function $R_{S_1,S_2}(D_1,D_2)$ is given by
  \begin{IEEEeqnarray}{rCl}\label{eq:RD1D2-solution}
    R_{S_1,S_2}(D_1,D_2) & = & \left\{ \begin{array}{l c}
        \frac{1}{2} \log_2^+ \left( \frac{\sigma^2}{D_\textnormal{min}}
        \right) & \text{if } (D_1,D_2) \in \mathscr{D}_1\\[5mm]
        \frac{1}{2} \log_2^+ \left( \frac{\sigma^4 (1-\rho^2)}{D_1 D_2} \right)
        & \text{if } (D_1,D_2) \in \mathscr{D}_2 \\[5mm]
        \frac{1}{2} \log_2^+ \left( \frac{\sigma^4 (1-\rho^2)}{D_1D_2 - \left(
              \rho \sigma^2 - \sqrt{(\sigma^2-D_1)(\sigma^2-D_2)} \right)^2}
        \right) & \text{if } (D_1,D_2) \in \mathscr{D}_3.
      \end{array} \right. \hspace{5mm}
  \end{IEEEeqnarray}
  where $\log_2^+(x) = \max \{ 0,\log_2(x) \}$, $D_{\textnormal{min}}
  = \min \left\{ D_1,D_2 \right\}$ and where the regions
  $\mathscr{D}_1$, $\mathscr{D}_2$ and $\mathscr{D}_3$ are given by
  \begin{IEEEeqnarray*}{rCl}
    \mathscr{D}_1 & = \Bigg\{ (D_1,D_2): \: &0 \leq D_1 \leq \sigma^2
    (1-\rho^2), \: \, D_2 \geq \sigma^2 (1-\rho^2) + \rho^2 D_1;\\[3mm]
    & & \hspace{7mm} \sigma^2(1-\rho^2) < D_1 \leq \sigma^2, \: \,
    D_2 \geq \sigma^2 (1-\rho^2) + \rho^2 D_1,\\
    & & \hspace{73mm} D_2 \leq \frac{D_1 - \sigma^2(1-\rho^2)}{\rho^2}
    \Bigg\},\\[7mm]
    \mathscr{D}_2 & = \Bigg\{ (D_1,D_2): \; &0 \leq D_1 \leq \sigma^2
    (1-\rho^2), \; 0 \leq D_2 < (\sigma^2(1-\rho^2) - D_1)
    \frac{\sigma^2}{\sigma^2-D_1} \Bigg\},\\[8mm]
    \mathscr{D}_3 & = \Bigg\{ (D_1,D_2): \: &0 \leq D_1 \leq
    \sigma^2(1-\rho^2),\\[-3mm]
    & & \hspace{12mm} (\sigma^2(1-\rho^2) - D_1)
    \frac{\sigma^2}{\sigma^2-D_1} \leq D_2 < \sigma^2 (1-\rho^2) +
    \rho^2 D_1;\\[4mm]
    & & \hspace{-2mm} \sigma^2(1-\rho^2) < D_1 \leq \sigma^2, \: \, \frac{D_1 -
      \sigma^2(1-\rho^2)}{\rho^2} < D_2 < \sigma^2 (1-\rho^2) + \rho^2
    D_1 \Bigg\}.
  \end{IEEEeqnarray*}
\end{thm}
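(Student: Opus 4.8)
The plan is to identify $R_{S_1,S_2}(D_1,D_2)$ with the informational rate--distortion function $\inf I(S_1,S_2;\hat S_1,\hat S_2)$, the infimum being over all conditional laws of $(\hat S_1,\hat S_2)$ given $(S_1,S_2)$ satisfying $\E{(S_i-\hat S_i)^2}\le D_i$, $i=1,2$ (operational and informational rate--distortion functions coincide here by the classical source-coding theorem for i.i.d.\ sources with per-component distortion constraints). Write $S=(S_1,S_2)$ and $\hat S=(\hat S_1,\hat S_2)$.

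The first step is to reduce this to a determinant maximization over covariance matrices. For the converse, fix an admissible test channel and write $I(S;\hat S)=h(S)-h(S\mid\hat S)$. Since $\E{S\mid\hat S}$ is a function of $\hat S$, $h(S\mid\hat S)=h(S-\E{S\mid\hat S}\,|\,\hat S)\le h(S-\E{S\mid\hat S})\le\frac{1}{2}\log_2((2\pi e)^2\det K)$, where $K$ is the covariance matrix of the MMSE error $S-\E{S\mid\hat S}$. By the orthogonality principle, $\cov{S}$ equals $K$ plus the (positive semidefinite) covariance matrix of $\E{S\mid\hat S}$, so $0\preceq K\preceq\cov{S}$; and since $\hat S_i$ is a function of $\hat S$, $K_{ii}=\E{(S_i-\E{S_i\mid\hat S})^2}\le\E{(S_i-\hat S_i)^2}\le D_i$. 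As $S$ is Gaussian, $h(S)=\frac{1}{2}\log_2((2\pi e)^2\det\cov{S})$, so $I(S;\hat S)\ge\frac{1}{2}\log_2(\det\cov{S}/\det K)\ge\frac{1}{2}\log_2^+(\det\cov{S}/\Delta(D_1,D_2))$, where $\Delta(D_1,D_2):=\max\{\det K:\ 0\preceq K\preceq\cov{S},\ K_{ii}\le D_i,\ i=1,2\}$. For a matching achievable scheme, pick $K^\star$ attaining $\Delta(D_1,D_2)$ and let $(S,\hat S)$ be jointly Gaussian with $S=\hat S+U$, $U$ independent of $\hat S$, $U$ of covariance matrix $K^\star$, and $\hat S$ of covariance matrix $\cov{S}-K^\star\succeq0$; then $\E{(S_i-\hat S_i)^2}=K^\star_{ii}\le D_i$ and $I(S;\hat S)=\frac{1}{2}\log_2(\det\cov{S}/\det K^\star)$. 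Hence $R_{S_1,S_2}(D_1,D_2)=\frac{1}{2}\log_2^+(\det\cov{S}/\Delta(D_1,D_2))$, and since $\det\cov{S}=\sigma^4(1-\rho^2)$ it remains only to evaluate $\Delta(D_1,D_2)$.

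The second step is to solve for $\Delta(D_1,D_2)$. Parametrize a symmetric positive semidefinite $2\times2$ matrix $K$ by its diagonal entries $d_1,d_2$ and off-diagonal entry $c$, so $\det K=d_1d_2-c^2$; the constraints become $d_i\le D_i$, $d_i\le\sigma^2$, $c^2\le d_1d_2$, and $\cov{S}-K\succeq0$, i.e.\ $(\sigma^2-d_1)(\sigma^2-d_2)\ge(\rho\sigma^2-c)^2$. Maximizing the log-concave function $\det K$ over this convex set splits, via a KKT/active-set analysis, into three cases. (i) If $K=\mathrm{diag}(D_1,D_2)$ is feasible --- the best conceivable choice, since $d_1d_2-c^2\le D_1D_2$ always --- then $\Delta=D_1D_2$; feasibility is exactly $(\sigma^2-D_1)(\sigma^2-D_2)\ge\rho^2\sigma^4$, the inequality defining $\mathscr{D}_2$, and this yields the second line of \eqref{eq:RD1D2-solution}. (ii) If at the optimum one of the two distortion constraints is inactive --- the corresponding optimal test channel describes only the component carrying the smaller distortion and forms the MMSE estimate of the other --- then $\Delta=\sigma^2(1-\rho^2)\,D_{\textnormal{min}}$, yielding the first line; the condition for this configuration is precisely that the MMSE distortion on the unconstrained component, $\sigma^2(1-\rho^2)+\rho^2D_{\textnormal{min}}$, not exceed its budget, which (together with the symmetry under $D_1\leftrightarrow D_2$) is what cuts out $\mathscr{D}_1$. (iii) Otherwise both diagonal constraints bind, $d_i=D_i$, but $c=0$ is infeasible; the optimum then takes $c$ as small in magnitude as $\cov{S}-K\succeq0$ permits, $c=\rho\sigma^2-\sqrt{(\sigma^2-D_1)(\sigma^2-D_2)}$, so $\Delta=D_1D_2-(\rho\sigma^2-\sqrt{(\sigma^2-D_1)(\sigma^2-D_2)})^2$, yielding the third line on $\mathscr{D}_3$. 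It then remains to check that $\mathscr{D}_1$, $\mathscr{D}_2$, $\mathscr{D}_3$ are exactly the loci of cases (i)--(iii) and that they exhaust the relevant range of $(D_1,D_2)$ --- values with some $D_i\ge\sigma^2$ being absorbed by monotonicity of $R_{S_1,S_2}$ and the $\log_2^+$ --- and to verify continuity of \eqref{eq:RD1D2-solution} across the region boundaries; across both the $\mathscr{D}_2/\mathscr{D}_3$ and the $\mathscr{D}_1/\mathscr{D}_3$ interfaces the correction term inside the logarithm vanishes.

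The first step is the standard Gaussian converse-plus-matching-test-channel argument and should be routine. The main obstacle is the second step: carrying out the KKT/active-set analysis of the determinant maximization, solving each case in closed form, and --- the genuinely fiddly part --- translating the ``which constraints are active'' conditions into the rather intricate descriptions of $\mathscr{D}_1$, $\mathscr{D}_2$, $\mathscr{D}_3$, all while handling the degenerate situations $\rho\in\{0,1\}$, singular $K$, and $D_i>\sigma^2$ separately. (Alternatively one may invoke the classical reverse-water-filling solution of the vector Gaussian rate--distortion problem under a matrix distortion constraint and then minimize over all error-covariance matrices with the prescribed diagonal, but the book-keeping is essentially the same.)
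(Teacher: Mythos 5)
Your proposal is correct, but it takes a genuinely different route from the paper's. You reduce everything to a two-by-two determinant maximization: the converse lower-bounds $I(S_1,S_2;\hat{S}_1,\hat{S}_2)$ by $\tfrac{1}{2}\log_2\bigl(\det\cov{S}/\det K\bigr)$ with $K$ the MMSE-error covariance satisfying $K_{ii}\le D_i$ and $\cov{S}-K$ positive semidefinite, a backward Gaussian test channel matches it, and the whole theorem then rests on an active-set analysis of $\max \det K$ over that convex set; the three activity patterns (both distortion constraints tight with $c=0$ feasible; one distortion constraint slack; both tight with the off-diagonal entry pushed to $\rho\sigma^2-\sqrt{(\sigma^2-D_1)(\sigma^2-D_2)}$) reproduce the three branches of \eqref{eq:RD1D2-solution} and, after translation, the regions $\mathscr{D}_2$, $\mathscr{D}_1$, $\mathscr{D}_3$. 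The paper instead disposes of $\mathscr{D}_1$ by the direct ``the larger distortion comes for free'' argument and, for the remaining pairs, runs a constructive algorithm---scale $S_2$ by $c$, unitarily decorrelate, reverse-waterfill, invert---proving via convexity and symmetry of the distortion regions $\mathscr{D}(R)$, $\mathscr{D}_c(R)$ (Theorem~\ref{thm:basic-method}) that sweeping $c$ traces the entire boundary, after which Lemma~\ref{lm:KShat-new} computes $\cov{\hat{S}}$ and Schur's complement gives the rate; your covariance-optimization route is essentially the alternative proof the paper attributes to Xiao and Luo. What each buys: yours is shorter and stays at the single-letter level, with concavity of $\log\det$ turning each local activity pattern into a global optimum, at the price of exactly the bookkeeping you flag---checking that the case-(ii) relaxed optimum is $c=\rho D_{\min}$, $d=\sigma^2(1-\rho^2)+\rho^2D_{\min}$ (hence the threshold $D_2\ge\sigma^2(1-\rho^2)+\rho^2D_1$ and its mirror image, which is the correct reading of $\mathscr{D}_1$), and that $\det K>0$ throughout $\mathscr{D}_3$ so the third branch is well defined. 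The paper's construction is heavier but yields as a by-product Proposition~\ref{prp:opt-lin-comb} (on $\mathscr{D}_1\cup\mathscr{D}_3$ the optimum describes a single linear combination of the two sources), which is reused later in the proof of Proposition~\ref{prp:uncoded-AWGN-ch}; if you adopt your route you would need to extract that structural fact separately from your case-(ii)/(iii) optimizers.
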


\begin{proof}
See Appendix \ref{appx:prf-thm-RD1D2}.
\end{proof}

The result of Theorem \ref{thm:rd1d2-main} was also established
independently (and by a different proof) in \cite{xiao-luo05}. The
regions $\mathscr{D}_1$, $\mathscr{D}_2$, and $\mathscr{D}_3$ are
illustrated in Figure \ref{fig:regions}.
\begin{figure}[h]
 \centering
 \psfrag{d1}[cc][cc]{$D_1$}
 \psfrag{d2}[cc][cc]{$D_2$}
 \psfrag{D1}[cc][cc]{$\mathscr{D}_2$}
 \psfrag{D2}[cc][cc]{$\mathscr{D}_3$}
 \psfrag{D3}[cc][cc]{$\mathscr{D}_1$}
 \psfrag{s}[cc][cc]{$\sigma^2$}
 \psfrag{sr}[cc][cc]{$\sigma^2(1-\rho)$}
 \psfrag{sr2}[cc][cc]{$\sigma^2(1-\rho^2)$}
 \epsfig{file=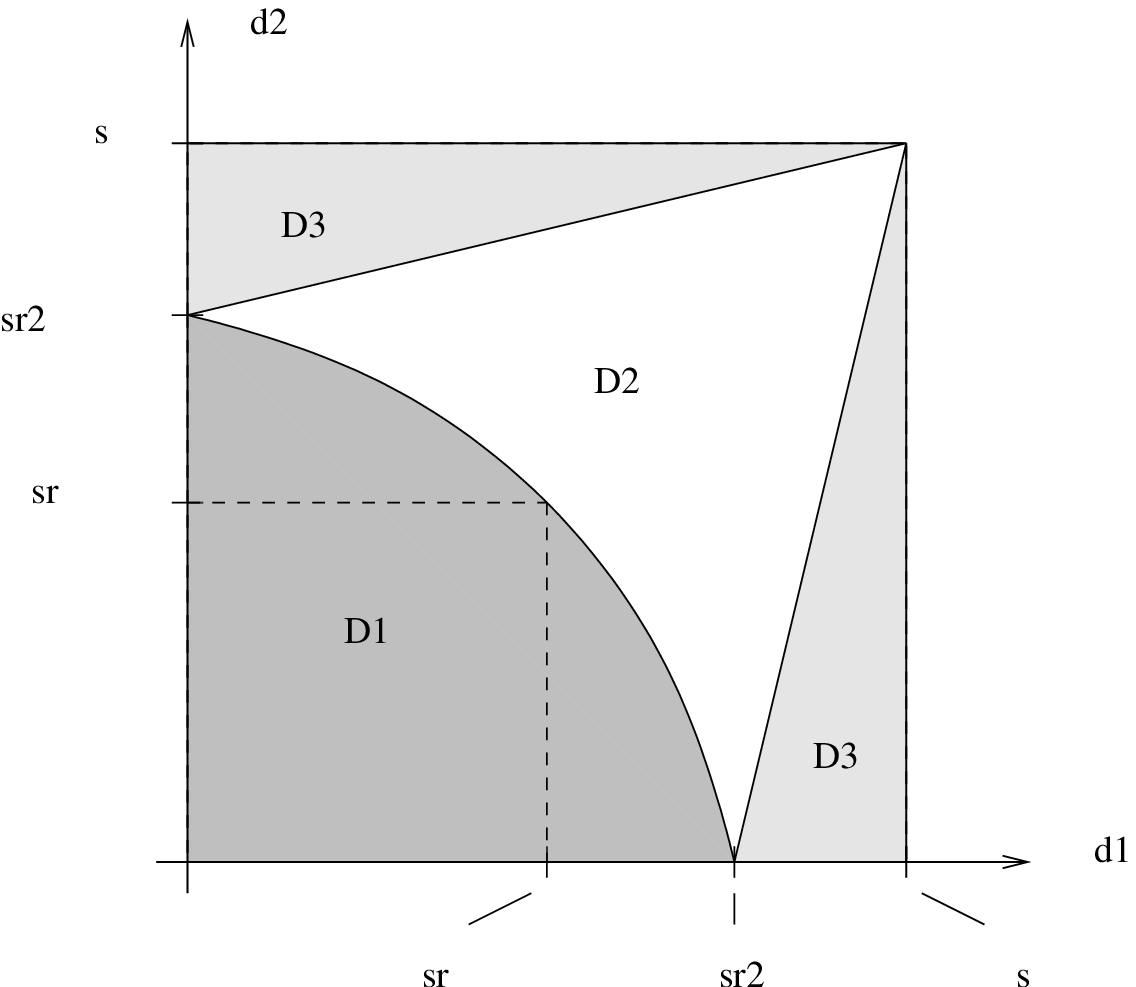, width=0.6\textwidth}
 \caption{The regions $\mathscr{D}_1$, $\mathscr{D}_2$,
   $\mathscr{D}_3$.}
 \label{fig:regions}
\end{figure}

\begin{rmk}
  Let $R_{S_1}(D_1)$ denote the rate-distortion function for the
  source component $\{ S_{1,k} \}$, i.e.,
  \begin{IEEEeqnarray*}{rCl}
    R_{S_1}(D_1) & = & \frac{1}{2} \log_2^+ \left(
      \frac{\sigma^2}{D_1} \right), 
  \end{IEEEeqnarray*}
  and let $R_{S_2 | S_1}(D_2)$ denote the rate-distortion function for
  $\{ S_{2,k} \}$ when $\{ S_{1,k} \}$ is given as side-information to
  both, the encoder and the decoder, i.e.,
  \begin{IEEEeqnarray*}{rCl}
    R_{S_2 | S_1} (D_2) & = & \frac{1}{2} \log_2^+ \left(
      \frac{\sigma^2 (1-\rho^2) }{D_2} \right).
  \end{IEEEeqnarray*}
  Then, for every $(D_1,D_2) \in \mathscr{D}_2$ the rate-distortion
  function $R_{S_1,S_2}(D_1,D_2)$ satisfies
  \begin{IEEEeqnarray*}{rCl}
    R_{S_1,S_2}(D_1,D_2) & = & \frac{1}{2} \log_2^+ \left(
      \frac{\sigma^4
        (1-\rho^2)}{D_1D_2} \right)\\[3mm]
    & \stackrel{a)}{=} & \frac{1}{2} \log_2^+ \left(
      \frac{\sigma^2}{D_1} \right) +
    \frac{1}{2} \log_2^+ \left( \frac{\sigma^2(1-\rho^2)}{D_2} \right)\\[3mm]
    & = & R_{S_1}(D_1) + R_{S_2|S_1}(D_2),
  \end{IEEEeqnarray*}
  where $a)$ holds since for $(D_1,D_2) \in \mathscr{D}_2$ we have
  $D_1, D_2 \geq 0$.
\end{rmk}


\subsection{Optimal Uncoded Scheme}\label{subsec:pt2pt-uncoded}

As an alternative to the separation-based approach, we now present an
uncoded scheme that, for all SNR below a certain threshold, is
optimal. The optimality of this uncoded scheme will be useful for
understanding a similar result in the multiple-access problem.

The uncoded scheme can be described as follows. At every time instant
$k$, the transmitter produces a channel input $X_k^{\textnormal{u}}$
of the form
\begin{IEEEeqnarray*}{rCl}
  X_k^{\textnormal{u}}(\alpha,\beta) & = & \sqrt{\frac{P}{\sigma^2
      (\alpha^2 + 2 \rho \alpha \beta + \beta^2)}} \left( \alpha
    S_{1,k} + \beta S_{2,k} \right) \qquad k \in \{ 1,2, \ldots ,n \},
\end{IEEEeqnarray*}
for some $\alpha, \beta \in \Reals$. From the resulting channel output
$Y_k$, the receiver makes a minimum mean squared-error (MMSE) estimate
$\hat{S}_{i,k}^{\textnormal{u}}$, $i \in \{ 1,2 \}$, of the source
sample $S_{i,k}$, i.e.,
\begin{IEEEeqnarray*}{rCl}
  \hat{S}_{i,k}^{\textnormal{u}} & = & \E{S_{i,k} | Y_k}, \qquad i \in
  \{ 1,2 \}.
\end{IEEEeqnarray*}
The corresponding expected distortions on $\{ S_{1,k} \}$ and on $\{
S_{2,k} \}$ are
\begin{IEEEeqnarray*}{rCl}
  \tilde{D}_1^{\textnormal{u}}(\alpha, \beta) & = & \sigma^2
  \frac{P^2 \beta^2 (1-\rho^2) + PN(\alpha^2 + 2 \rho \alpha \beta +
    \beta^2(2-\rho^2)) + N^2 (\alpha^2 + 2 \rho \alpha \beta +
    \beta^2)}{(P + N)^2 (\alpha^2 + 2 \rho \alpha \beta + \beta^2)},\\[3mm]
  \tilde{D}_2^{\textnormal{u}}(\alpha, \beta) & = & \sigma^2
  \frac{P^2 \alpha^2 (1-\rho^2) + PN(\beta^2 + 2 \rho \alpha \beta +
    \alpha^2(2-\rho^2)) + N^2 (\alpha^2 + 2 \rho \alpha \beta +
    \beta^2)}{(P + N)^2 (\alpha^2 + 2 \rho \alpha \beta + \beta^2)}.
\end{IEEEeqnarray*}
The optimality of this uncoded scheme below a certain SNR-threshold is
stated next.

\begin{prp}\label{prp:uncoded-AWGN-ch}
  Let $(D_1,D_2)$ be an achievable distortion pair for our
  point-to-point setting. If
  \begin{IEEEeqnarray}{rCl}\label{eq:threshold-pt2pt}
    \frac{P}{N} & \leq & \Gamma \left( D_1, \sigma^2, \rho \right),
  \end{IEEEeqnarray}
  where the threshold $\Gamma$ is given by
  \begin{IEEEeqnarray}{rCl}\label{eq:SNR-threshold}
    \Gamma \left( D_1, \sigma^2, \rho \right) & = &
    \left\{ \begin{array}{l l}
        \frac{\sigma^4(1-\rho^2) - 2 D_1 \sigma^2 (1-\rho^2)
          + D_1^2}{D_1 \left( \sigma^2 (1-\rho^2) - D_1\right)} &
        \text{if } 0 < D_1 < \sigma^2(1-\rho^2),\\[3mm]
        + \infty & \text{else,}
    \end{array} \right.
  \end{IEEEeqnarray}
  then there exist $\alpha^{\ast}, \beta^{\ast} \geq 0$ such that
  \begin{IEEEeqnarray*}{rCl}
    \tilde{D}_1^{\textnormal{u}} (\alpha^{\ast}, \beta^{\ast}) \leq D_1 \qquad
    & \text{and} & \qquad \tilde{D}_2^{\textnormal{u}} (\alpha^{\ast},
    \beta^{\ast}) \leq D_2.
  \end{IEEEeqnarray*}
\end{prp}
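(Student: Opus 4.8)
The plan is to recast achievability through the rate--distortion function and then to make the performance of the uncoded scheme transparent via a change of variables. By \eqref{eq:awgn-nec+suff-cond}, a pair $(D_1,D_2)$ is achievable precisely when $R_{S_1,S_2}(D_1,D_2)\le C$, where $C:=\tfrac12\log_2(1+P/N)$; and since the uncoded scheme of Section~\ref{subsec:pt2pt-uncoded} is itself a coding scheme (applied letter by letter), every pair $\bigl(\tilde D_1^{\textnormal u}(\alpha,\beta),\tilde D_2^{\textnormal u}(\alpha,\beta)\bigr)$ it attains satisfies $R_{S_1,S_2}\le C$ as well. I may assume $0<D_1\le\sigma^2$ and $D_2\le\sigma^2$, since larger distortions are harder and $D_1=0$ is never achievable.

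The heart of the argument is a computation carried out with the substitution $\xi=\sqrt{\sigma^2-\tilde D_1^{\textnormal u}}$, $\eta=\sqrt{\sigma^2-\tilde D_2^{\textnormal u}}$ (one checks that $\tilde D_i^{\textnormal u}\in[0,\sigma^2]$, so these are well defined). Substituting the given expressions for $\tilde D_1^{\textnormal u},\tilde D_2^{\textnormal u}$ and simplifying, one finds that, as $(\alpha,\beta)$ ranges over the nonnegative quadrant, $(\xi,\eta)$ sweeps out exactly the arc $\mathscr A$ of the ellipse
\[
\xi^2-2\rho\,\xi\eta+\eta^2=\frac{\sigma^2P(1-\rho^2)}{P+N}
\]
lying in $\{\xi,\eta\ge0,\ \xi\eta\ge\rho\sigma^2P/(P+N)\}$, with endpoints $Q_1=\bigl(\sigma\sqrt{P/(P+N)},\ \rho\sigma\sqrt{P/(P+N)}\bigr)$, reached at $\beta=0$, and its mirror image $Q_2$, reached at $\alpha=0$; along $\mathscr A$ the first distortion increases monotonically from $\sigma^2N/(P+N)$ to $\sigma^2\bigl(N+P(1-\rho^2)\bigr)/(P+N)$ while the second decreases. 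The same algebra gives the identity, valid for all $d_1,d_2\in[0,\sigma^2]$ with $\xi=\sqrt{\sigma^2-d_1}$ and $\eta=\sqrt{\sigma^2-d_2}$,
\[
d_1d_2-\Bigl(\rho\sigma^2-\sqrt{(\sigma^2-d_1)(\sigma^2-d_2)}\Bigr)^2=\sigma^4(1-\rho^2)-\sigma^2\bigl(\xi^2-2\rho\,\xi\eta+\eta^2\bigr),
\]
so comparison with the $\mathscr D_3$-branch of Theorem~\ref{thm:rd1d2-main} shows that any point of $\mathscr A$ lying in $\overline{\mathscr D_3}$ satisfies $R_{S_1,S_2}=C$, i.e.\ is Pareto-optimal among achievable pairs. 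Write $D_2^{\min}(D_1)$ for the smallest $D_2$ with $(D_1,D_2)$ achievable, so that the Pareto boundary is the graph of $D_1\mapsto D_2^{\min}(D_1)$.

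Next I would pin down the threshold. For $0<D_1<\sigma^2(1-\rho^2)$, both the $\mathscr D_2$-branch and the $\mathscr D_3$-branch of Theorem~\ref{thm:rd1d2-main}, evaluated on the interface $D_2=\bigl(\sigma^2(1-\rho^2)-D_1\bigr)\tfrac{\sigma^2}{\sigma^2-D_1}$, equal $\tfrac12\log_2\frac{\sigma^2(1-\rho^2)(\sigma^2-D_1)}{D_1(\sigma^2(1-\rho^2)-D_1)}$; equating this to $C$ and solving for $P/N$ returns exactly $\Gamma(D_1,\sigma^2,\rho)$ of \eqref{eq:SNR-threshold}. Because $R_{S_1,S_2}(D_1,\cdot)$ is non-increasing and, along a vertical line with $D_1<\sigma^2(1-\rho^2)$, the three regions are met in the order $\mathscr D_1,\mathscr D_3,\mathscr D_2$ as $D_2$ decreases, this means: $P/N\le\Gamma(D_1,\sigma^2,\rho)$ holds if and only if the Pareto point $\bigl(D_1,D_2^{\min}(D_1)\bigr)$ lies in $\overline{\mathscr D_3}$ or in $\mathscr D_1$, rather than in the interior of $\mathscr D_2$; and for $D_1\ge\sigma^2(1-\rho^2)$ it holds automatically, since then $\mathscr D_2$ imposes no constraint.

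The proposition then follows by cases on $D_1$. If $D_1\ge\sigma^2\bigl(N+P(1-\rho^2)\bigr)/(P+N)$, take $\alpha^{\ast}=0$, i.e.\ the endpoint $Q_2$: then $\tilde D_1^{\textnormal u}=\sigma^2\bigl(N+P(1-\rho^2)\bigr)/(P+N)\le D_1$, while $\tilde D_2^{\textnormal u}=\sigma^2N/(P+N)\le D_2$ because achievability of $(D_1,D_2)$ already forces $D_2\ge\sigma^2N/(P+N)$. Otherwise achievability forces $D_1\ge\sigma^2N/(P+N)$, so $D_1$ falls in the range of the first coordinate along $\mathscr A$ and there exist $\alpha^{\ast},\beta^{\ast}\ge0$ with $\tilde D_1^{\textnormal u}(\alpha^{\ast},\beta^{\ast})=D_1$. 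By the two previous steps the corresponding point of $\mathscr A$ lies in $\overline{\mathscr D_3}$ --- here the restriction $D_1<\sigma^2\bigl(N+P(1-\rho^2)\bigr)/(P+N)$ keeps it out of $\mathscr D_1$ and the hypothesis $P/N\le\Gamma(D_1,\sigma^2,\rho)$ keeps it out of $\mathscr D_2$ --- so it is Pareto-optimal and hence coincides with $\bigl(D_1,D_2^{\min}(D_1)\bigr)$, giving $\tilde D_2^{\textnormal u}(\alpha^{\ast},\beta^{\ast})=D_2^{\min}(D_1)\le D_2$. The main obstacle I anticipate is precisely this region bookkeeping --- verifying through the defining inequalities of $\mathscr D_1,\mathscr D_2,\mathscr D_3$ that the chosen uncoded point lands in $\overline{\mathscr D_3}$ --- along with the degenerate case $\rho=1$, where the ellipse collapses to the line $\xi=\eta$ and the statement reduces to the scalar fact that the uncoded scheme attains distortion $\sigma^2N/(P+N)$ on the (now single) source.
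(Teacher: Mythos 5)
Your argument is correct, but it reaches Proposition~\ref{prp:uncoded-AWGN-ch} by a genuinely different route than the paper. The paper's proof is structural: it invokes Proposition~\ref{prp:opt-lin-comb} (every pair in $\mathscr{D}_1\cup\mathscr{D}_3$ is attained at rate $R_{S_1,S_2}(D_1,D_2)$ by optimally describing a single linear combination $\alpha{\bf S}_1+\beta{\bf S}_2$) together with Goblick's result that a Gaussian source is optimally transmitted over an AWGN channel uncoded, which immediately yields that the uncoded scheme attains every point of $\mathscr{D}_1\cup\mathscr{D}_3$ at the minimal SNR $\Psi(D_1,D_2)$; the threshold \eqref{eq:SNR-threshold} is then obtained, exactly as in your last-but-one step, by solving $R_{S_1,S_2}=\tfrac12\log_2(1+P/N)$ on the $\mathscr{D}_2$--$\mathscr{D}_3$ interface. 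You instead verify the same optimality by direct computation, and your key claims do check out: writing $Q=\alpha^2+2\rho\alpha\beta+\beta^2$ one finds $\xi=c(\alpha+\rho\beta)$ and $\eta=c(\beta+\rho\alpha)$ with $c^2=\sigma^2P/((P+N)Q)$, so the uncoded pairs lie on your ellipse, your algebraic identity is exact, and the $\mathscr{D}_3$-expression on that ellipse evaluates to $\tfrac12\log_2(1+P/N)$. What your route buys is self-containedness (no appeal to Proposition~\ref{prp:opt-lin-comb}, Goblick, or the reverse-waterfilling analysis behind them); what it costs is the bookkeeping you flagged, which does close: the hypothesis $P/N\le\Gamma$ is equivalent to $\sigma^2N/(P+N)\ge D_1\bigl(\sigma^2(1-\rho^2)-D_1\bigr)/\bigl((1-\rho^2)(\sigma^2-D_1)\bigr)$, which rearranges to $(1-\rho^2)(\sigma^2-D_1)\bigl(D_1-\sigma^2N/(P+N)\bigr)\le\rho^2D_1^2$, and this is precisely the condition $\xi\eta\le\rho\sigma^2$ at the arc point with first coordinate $D_1$, i.e.\ exclusion from $\mathscr{D}_2$; exclusion from the interior of $\mathscr{D}_1$ follows from $\eta^2-\rho^2\xi^2=c^2(1-\rho^2)\beta\bigl(\beta(1+\rho^2)+2\rho\alpha\bigr)\ge0$ and its mirror image. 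Two points you should make explicit in a final write-up: the ``Pareto-optimal'' step needs that $R_{S_1,S_2}(D_1,\cdot)$ is \emph{strictly} decreasing below the arc point (true in $\mathscr{D}_2$ and in the interior of $\mathscr{D}_3$; the relevant derivative vanishes only on the $\mathscr{D}_1$ interface, so the contradiction with achievability of a smaller $D_2$ still goes through), and the monotonicity of $\tilde{D}_1^{\textnormal{u}}$ along the arc, which follows from differentiating $(1+\rho t)^2/(1+2\rho t+t^2)$ in $t=\beta/\alpha$.
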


\begin{proof}
  See Appendix \ref{appx:prf-unc-AWGN}.
\end{proof}

In the symmetric case, Proposition \ref{prp:uncoded-AWGN-ch}
simplifies as follows.

\begin{cor}\label{cor:uncoded-AWGN-symmetric}
  Let $D > 0$ be such that $(D,D)$ is an achievable distortion pair
  for the point-to-point problem. If
  \begin{IEEEeqnarray}{rCl}\label{eq:threshold-AWGN-SNR-symmetric}
    \frac{P}{N} & \leq & \frac{2 \rho}{1-\rho},
  \end{IEEEeqnarray}
  then the pair $(D,D)$ is achieved by the uncoded scheme with
  time-$k$ channel input
  \begin{IEEEeqnarray*}{rCl}
    X_k^{\textnormal{u}}(\alpha,\alpha) & = &
    \sqrt{\frac{P}{2\sigma^2(1+\rho)}} \left( S_{1,k} + S_{2,k}
    \right) \qquad \text{for } k \in \{ 1,2, \ldots ,n \}.
  \end{IEEEeqnarray*}
\end{cor}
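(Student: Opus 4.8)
The plan is to reduce the statement to a single evaluation of the rate--distortion function of Theorem~\ref{thm:rd1d2-main}. First I would compute the distortion of the symmetric uncoded scheme: setting $\beta=\alpha$ in the formulas for $\tilde{D}_1^{\textnormal{u}}$ and $\tilde{D}_2^{\textnormal{u}}$, the common factor $\alpha^2$ cancels (using $\alpha^2+2\rho\alpha\beta+\beta^2=2\alpha^2(1+\rho)$), and with the identities $3+2\rho-\rho^2=(1+\rho)(3-\rho)$ and $P^2(1-\rho)+PN(3-\rho)+2N^2=(P+N)\bigl(P(1-\rho)+2N\bigr)$ one obtains
\[
  \tilde{D}_1^{\textnormal{u}}(\alpha,\alpha)=\tilde{D}_2^{\textnormal{u}}(\alpha,\alpha)=\sigma^2\,\frac{P(1-\rho)+2N}{2(P+N)},
\]
independently of $\alpha\neq 0$; call this value $D^{\textnormal{u}}$. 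Note also that the channel input $X_k^{\textnormal{u}}(\alpha,\alpha)$ does not depend on $\alpha>0$ and coincides with the input in the statement. Thus that scheme achieves $(D^{\textnormal{u}},D^{\textnormal{u}})$, and it remains only to show that, when $P/N\leq 2\rho/(1-\rho)$, every achievable pair $(D,D)$ satisfies $D\geq D^{\textnormal{u}}$ --- for then the same scheme achieves $(D,D)$ as well.

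The heart of the argument is the identity $R_{S_1,S_2}(D^{\textnormal{u}},D^{\textnormal{u}})=\frac12\log_2(1+P/N)$, which says that $(D^{\textnormal{u}},D^{\textnormal{u}})$ lies exactly on the boundary of the achievable region \eqref{eq:awgn-nec+suff-cond}. Here the threshold enters: an elementary rearrangement shows that $D^{\textnormal{u}}\geq\sigma^2(1-\rho)$ is \emph{equivalent} to $P/N\leq 2\rho/(1-\rho)$, whereas $D^{\textnormal{u}}<\sigma^2$ always; hence under the hypothesis the pair $(D^{\textnormal{u}},D^{\textnormal{u}})$ lies in $\mathscr{D}_3$. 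On the diagonal the $\mathscr{D}_3$-expression simplifies: since $\rho\sigma^2-\sqrt{(\sigma^2-D)(\sigma^2-D)}=D-\sigma^2(1-\rho)$, a difference of squares gives $D^2-\bigl(D-\sigma^2(1-\rho)\bigr)^2=\sigma^2(1-\rho)\bigl(2D-\sigma^2(1-\rho)\bigr)$, so that $R_{S_1,S_2}(D,D)=\frac12\log_2^+\!\left(\frac{\sigma^2(1+\rho)}{2D-\sigma^2(1-\rho)}\right)$ for $\sigma^2(1-\rho)\leq D<\sigma^2$. Substituting $D=D^{\textnormal{u}}$ and using $2D^{\textnormal{u}}-\sigma^2(1-\rho)=\sigma^2 N(1+\rho)/(P+N)$ collapses the argument of the logarithm to $(P+N)/N$, which is the desired identity.

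To conclude, observe that by Theorem~\ref{thm:rd1d2-main} the map $D\mapsto R_{S_1,S_2}(D,D)$ is continuous and strictly decreasing on $(0,\sigma^2)$: it equals $\frac12\log_2\frac{\sigma^4(1-\rho^2)}{D^2}$ on $(0,\sigma^2(1-\rho)]$ and $\frac12\log_2\frac{\sigma^2(1+\rho)}{2D-\sigma^2(1-\rho)}$ on $[\sigma^2(1-\rho),\sigma^2)$, and the two agree at $\sigma^2(1-\rho)$. Since $0<D^{\textnormal{u}}<\sigma^2$, any $D$ with $0<D<D^{\textnormal{u}}$ would give $R_{S_1,S_2}(D,D)>\frac12\log_2(1+P/N)$ and hence, by \eqref{eq:awgn-nec+suff-cond}, a non-achievable pair $(D,D)$; the case $D\geq\sigma^2>D^{\textnormal{u}}$ being trivial, every achievable $(D,D)$ indeed has $D\geq D^{\textnormal{u}}$, and the symmetric uncoded scheme is optimal. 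The one mildly delicate point is the bookkeeping that places $(D^{\textnormal{u}},D^{\textnormal{u}})$ in region $\mathscr{D}_3$ exactly under the stated SNR condition; everything else is routine algebra. (One could alternatively derive the result from Proposition~\ref{prp:uncoded-AWGN-ch} using the calculus fact that $\Gamma(D_1,\sigma^2,\rho)\geq 2\rho/(1-\rho)$ for every $D_1>0$, but isolating the \emph{symmetric} input $X_k^{\textnormal{u}}(\alpha,\alpha)$ still amounts to the computation above.)
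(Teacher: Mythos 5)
Your proof is correct, and it is essentially the argument the paper itself sketches in Remark~\ref{rmk:prf-uncoded-awgn-symmetric}: compute $\tilde{D}_i^{\textnormal{u}}(\alpha,\alpha)=\sigma^2\frac{P(1-\rho)+2N}{2(P+N)}$ and compare with the necessary-and-sufficient condition \eqref{eq:awgn-nec+suff-cond} evaluated on the diagonal $D_1=D_2=D$; you have simply supplied the algebra the remark leaves implicit (the equivalence $D^{\textnormal{u}}\geq\sigma^2(1-\rho)\Leftrightarrow P/N\leq 2\rho/(1-\rho)$, the simplified $\mathscr{D}_3$ expression on the diagonal, and the strict monotonicity of $D\mapsto R_{S_1,S_2}(D,D)$). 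The paper's headline route instead specializes Proposition~\ref{prp:uncoded-AWGN-ch} (whose threshold $\Gamma(D_1,\sigma^2,\rho)$ is minimized over $D_1$ at $D_1=\sigma^2(1-\rho)$, where it equals $2\rho/(1-\rho)$), but your self-contained verification coincides with the alternative the authors themselves indicate.
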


Corollary \ref{cor:uncoded-AWGN-symmetric} can also be verified
without relying on Proposition \ref{prp:uncoded-AWGN-ch}. This is
discussed in the following remark.

\begin{rmk}\label{rmk:prf-uncoded-awgn-symmetric}
  The distortions resulting from the uncoded scheme with any choice of
  $(\alpha,\beta)$ such that $\alpha = \beta$ are
  \begin{IEEEeqnarray*}{rCl}
    \tilde{D}_i^{\textnormal{u}}(\alpha,\alpha) & = & \sigma^2
    \frac{P(1-\rho) + 2N}{2 (P + N)} \qquad i \in \{ 1,2 \}.
  \end{IEEEeqnarray*}
  By evaluating the necessary and sufficient condition of
  \eqref{eq:awgn-nec+suff-cond} for the case where $D_1 = D_2 = D$, it
  follows that this is indeed the minimal achievable distortion for
  all $P/N$ satisfying \eqref{eq:threshold-AWGN-SNR-symmetric}.
\end{rmk}

This concludes our discussion of the point-to-point problem.


\section{Main Results}\label{sec:rslts-nofb}

\subsection{Necessary Condition for Achievability of
  $(D_1,D_2)$}\label{subsec:nec-cond}

\begin{thm}\label{thm:nofb-nec-cond}
  A necessary condition for the achievability of a distortion pair
  $(D_1,D_2)$ is that
  \begin{IEEEeqnarray}{rCl}\label{eq:noFB-nec-cond}
    R_{S_1,S_2}(D_1,D_2) \leq \frac{1}{2} \log_2 \left( 1 + \frac{P_1 +
        P_2 + 2\rho \sqrt{P_1P_2}}{N} \right).
  \end{IEEEeqnarray}
\end{thm}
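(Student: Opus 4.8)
The plan is to prove the necessary condition by a converse (data-processing) argument combined with a cut-set-type bound on the multiple-access channel, reducing the distributed problem to the point-to-point rate-distortion problem whose solution is given in Theorem~\ref{thm:rd1d2-main}. Fix a blocklength-$n$ scheme $\{f_1^{(n)}, f_2^{(n)}, \phi_1^{(n)}, \phi_2^{(n)}\}$ achieving distortions $(D_1+\epsilon, D_2+\epsilon)$ and satisfying the power constraints \eqref{eq:power-constraint}. The first step is the chain
\begin{IEEEeqnarray*}{rCl}
n \, R_{S_1,S_2}(D_1',D_2') & \leq & I({\bf S}_1, {\bf S}_2 ; \hat{\bf S}_1, \hat{\bf S}_2) \;\leq\; I({\bf S}_1, {\bf S}_2 ; {\bf Y}) \;\leq\; \sum_{k=1}^n I(S_{1,k},S_{2,k}; Y_k),
\end{IEEEeqnarray*}
where $D_i'$ denotes the per-scheme distortion. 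The first inequality is the definition of the rate-distortion function (the $n$-letter source is IID, so its rate-distortion function is $n$ times the single-letter one, and one should invoke its convexity together with a standard averaging over the time index $k$); the second is the data-processing inequality applied to the Markov chain $({\bf S}_1, {\bf S}_2) \to {\bf Y} \to (\hat{\bf S}_1, \hat{\bf S}_2)$; the third uses that the channel is memoryless and that, conditioned on the inputs, the outputs are independent, plus the usual expansion of $I({\bf S}_1,{\bf S}_2;{\bf Y})$ using $h({\bf Y}) \leq \sum_k h(Y_k)$ and $h({\bf Y}\mid {\bf S}_1,{\bf S}_2) = \sum_k h(Z_k)$.

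The second step is to bound each term $I(S_{1,k},S_{2,k};Y_k)$. Since $Y_k = X_{1,k}+X_{2,k}+Z_k$ with $Z_k$ independent Gaussian of variance $N$, we have $I(S_{1,k},S_{2,k};Y_k) \leq \frac12 \log_2\!\bigl(1 + \mathrm{Var}(X_{1,k}+X_{2,k})/N\bigr)$, because a Gaussian maximizes differential entropy for a given variance and $\mathrm{Var}(X_{1,k}+X_{2,k}) \geq \mathrm{Var}(X_{1,k}+X_{2,k}\mid S_{1,k},S_{2,k})$ — actually, more carefully, $h(Y_k) - h(Y_k\mid S_{1,k},S_{2,k}) \leq \frac12\log_2(2\pi e(\mathrm{Var}(X_{1,k}+X_{2,k})+N)) - \frac12\log_2(2\pi e N)$. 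Now $\mathrm{Var}(X_{1,k}+X_{2,k}) = \mathrm{Var}(X_{1,k}) + \mathrm{Var}(X_{2,k}) + 2\,\mathrm{Cov}(X_{1,k},X_{2,k})$, and by Cauchy–Schwarz the covariance is at most $\sqrt{\mathrm{Var}(X_{1,k})\mathrm{Var}(X_{2,k})}$. Averaging over $k$, using concavity of $\log$ (Jensen) and the fact that $x\mapsto \log(1+(\sqrt{a}+\sqrt{b})^2/N)$ is concave and nondecreasing in $(a,b)$ in the right sense, together with the power constraints $\frac1n\sum_k \E{X_{i,k}^2} \leq P_i$, yields $\frac1n \sum_k I(S_{1,k},S_{2,k};Y_k) \leq \frac12\log_2\!\bigl(1 + (P_1+P_2+2\sqrt{P_1P_2})/N\bigr)$. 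This is the bound one gets without exploiting source correlation; the extra factor $\rho$ in the target comes from observing that $X_{1,k}$ depends only on ${\bf S}_1$ and $X_{2,k}$ only on ${\bf S}_2$, so that the correlation between $X_{1,k}$ and $X_{2,k}$ is mediated entirely through the source correlation $\rho$.

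The main obstacle — and the step that needs the most care — is precisely showing $\frac1n\sum_k \mathrm{Cov}(X_{1,k},X_{2,k}) \leq \rho \sqrt{P_1 P_2}$, i.e. that distributed encoders cannot create more input correlation than the source already carries. The clean way is: for each $k$, $\mathrm{Cov}(X_{1,k},X_{2,k}) = \E{X_{1,k}X_{2,k}}$, and since $X_{1,k} = g_{1,k}({\bf S}_1)$, $X_{2,k}=g_{2,k}({\bf S}_2)$ are deterministic functions of the respective source blocks, one uses the Gaussian correlation/Lancaster-type inequality: for jointly Gaussian vectors ${\bf S}_1, {\bf S}_2$ whose cross-covariance has all "singular values" bounded by $|\rho|$ (here every pair $(S_{1,j},S_{2,\ell})$ has correlation $\rho\,\delta_{j\ell}$, and the rest are governed by the bi-variate structure), any zero-mean functions $u({\bf S}_1)$, $v({\bf S}_2)$ satisfy $\E{u({\bf S}_1)v({\bf S}_2)} \leq \rho\sqrt{\E{u^2}\E{v^2}}$. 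Applying this with $u = X_{1,k}-\E{X_{1,k}}$, $v = X_{2,k}-\E{X_{2,k}}$, then summing, applying Cauchy–Schwarz over $k$ and the power constraints, gives the claim. (Alternatively one can bound $I(S_{1,k},S_{2,k};Y_k)$ more directly via $h(Y_k\mid S_{1,k}) $ and the conditional independence $X_{1,k} \perp X_{2,k} \mid S_{1,k}$, but the covariance route is the most transparent.) Finally, letting $n\to\infty$ and $\epsilon\downarrow 0$, using continuity of $R_{S_1,S_2}$ in $(D_1,D_2)$ and that $\varlimsup$ of the distortions is $\leq D_i$, delivers \eqref{eq:noFB-nec-cond}.
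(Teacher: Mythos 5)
Your proposal is correct and follows essentially the same route as the paper: the paper also reduces the multiple-access problem to a point-to-point AWGN converse with effective power $P_1+P_2+2\rho\sqrt{P_1P_2}$, and its key lemma is precisely your correlation bound $\E{X_{1,k}X_{2,k}}\leq\rho\sqrt{\E{X_{1,k}^2}}\sqrt{\E{X_{2,k}^2}}$, obtained there from Witsenhausen's tensorization theorem together with the fact that the Hirschfeld--Gebelein--R\'enyi maximal correlation of jointly Gaussian variables equals $|\rho|$ --- the same content as the vector-Gaussian (Lancaster-type) inequality you invoke, followed by the identical Cauchy--Schwarz and Jensen steps. One small caveat: your intermediate step should be stated as $I({\bf S}_1,{\bf S}_2;{\bf Y})\leq\sum_{k}\bigl(h(Y_k)-h(Z_k)\bigr)$ rather than $\sum_k I(S_{1,k},S_{2,k};Y_k)$, since that is what your justification ($h({\bf Y})\leq\sum_k h(Y_k)$ and $h({\bf Y}\,|\,{\bf S}_1,{\bf S}_2)=\sum_k h(Z_k)$) actually yields and what your per-letter Gaussian maximum-entropy bound applies to.
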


\begin{proof}
See Appendix \ref{appx:prf-thm-nofb-nec-cond}.
\end{proof}

\begin{rmk}
  Theorem \ref{thm:nofb-nec-cond} can be extended to a wider class of
  sources and distortion measures. Indeed, if the source is any
  memoryless bi-variate source (not necessarily zero-mean Gaussian)
  and if the fidelity measures $d_{1}(s_{1}, \hat{s}_{1}),
  d_{2}(s_{2}, \hat{s}_{2}) \geq 0$ that are used to measure the
  distortion in reconstructing each of the source components are
  arbitrary, then the pair $(D_{1}, D_{2})$ is achievable with powers
  $P_{1}, P_{2}$ only if
  \begin{IEEEeqnarray}{rCl}\label{eq:biv-RD-problem-nonGauss}
    \inf_{\substack{P_{\widehat{S}_1, \widehat{S}_2 | S_1,S_2}:\\
        \E{d_1 (S_1,\hat{S}_1)} \leq D_1 \\
        \E{d_2 (S_2,\hat{S}_2)} \leq D_2}} I(S_1,S_2 ; \widehat{S}_1,
    \widehat{S}_2 ) & \leq & \frac{1}{2} \log \left( 1 + \frac{P_1 +
        P_2 + 2\rho_{\textnormal{max}} \sqrt{P_1 P_2}}{N} \right),
  \end{IEEEeqnarray}
  where $\rho_{\textnormal{max}}$ is the Hirschfeld-Gebelein-R\'{e}nyi
  maximal correlation between $S_{1}$ and $S_{2}$, i.e.
  \begin{equation}
    \rho_{\textnormal{max}} = \sup \E{g(S_{1}) h(S_{2})}
  \end{equation}
  where the supremum is over all functions $g(\cdot)$, $h(\cdot)$
  under which
  \begin{equation}
    \E{g(S_{1})} = \E{h(S_{2})} = 0 \qquad \text{and} \qquad
    \E{g^{2}(S_{1})} = \E{h^{2}(S_{2})} = 1.
  \end{equation}
  For the bi-variate Gaussian memoryless source, condition
  \eqref{eq:biv-RD-problem-nonGauss} reduces to
  \eqref{eq:noFB-nec-cond} because in this case
  $\rho_{\textnormal{max}}$ is equal to $\rho$ \cite[Lemma~10.2,
  p.~182]{rozanov67}.
\end{rmk}

\begin{rmk}\label{rmk:nec-cond-mac-awgn}
  The necessary condition of Theorem \ref{thm:nofb-nec-cond}
  corresponds to the necessary and sufficient condition for the
  achievability of a distortion pair $(D_1,D_2)$ when the source $\{
  (S_{1,k}, S_{2,k}) \}$ is transmitted over a point-to-point AWGN
  channel of input power constraint $P_1 + P_2 + \rho \sqrt{P_1P_2}$
  (see \eqref{eq:awgn-nec+suff-cond}). This relation is not a
  coincidence. The proof of Theorem~\ref{thm:nofb-nec-cond} (see
  Appendix~\ref{appx:prf-thm-nofb-nec-cond}) indeed consists of
  reducing the multiple-access problem to the problem of transmitting
  the source $\{ (S_{1,k}, S_{2,k}) \}$ over an AWGN channel of input
  power constraint $P_1 + P_2 + \rho \sqrt{P_1P_2}$.
\end{rmk}

We now specialize Theorem \ref{thm:nofb-nec-cond} to the symmetric
case. We combine the explicit form of the rate-distortion function in
\eqref{eq:RD1D2-solution} with \eqref{eq:noFB-nec-cond} and substitute
$(D,D)$ for $(D_1,D_2)$ to obtain:

\begin{cor}\label{cor:nofb-nec-cond-sym}
In the symmetric case
\begin{IEEEeqnarray}{rCl}
D^{\ast}(\sigma^2, \rho, P, N) \geq \left\{ \begin{array}{l l}
\sigma^2 \frac{P (1-\rho^2) +N}{2P(1 + \rho) +N} & \text{for
}\frac{P}{N} \in \left( 0, \frac{\rho}{1-\rho^2}\right] \nonumber \\[5mm]
\sigma^2 \sqrt{\frac{(1-\rho^2)N}{2P(1+\rho) +N}} & \text{for }
\frac{P}{N} > \frac{\rho}{1-\rho^2}. \nonumber
\end{array} \right.
\end{IEEEeqnarray}
\end{cor}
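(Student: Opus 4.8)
The plan is to combine the necessary condition of Theorem~\ref{thm:nofb-nec-cond} with the explicit rate-distortion formula of Theorem~\ref{thm:rd1d2-main}, both restricted to the symmetric setting $D_1 = D_2 = D$ and $P_1 = P_2 = P$. Substituting $P_1 = P_2 = P$ into \eqref{eq:noFB-nec-cond} gives $P_1 + P_2 + 2\rho\sqrt{P_1 P_2} = 2P(1+\rho)$, so every achievable pair $(D,D)$ must satisfy
\[
  R_{S_1,S_2}(D,D) \;\leq\; \frac{1}{2}\log_2\!\left(1 + \frac{2P(1+\rho)}{N}\right).
\]
It then only remains to evaluate the left-hand side explicitly and to solve this inequality for $D$.

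First I would locate the diagonal $\{(D,D)\}$ within the three regions of Theorem~\ref{thm:rd1d2-main}. Using $\rho \in [0,1]$ (Reduction~\ref{rdc:source-normalization}) one checks that $(D,D) \in \mathscr{D}_2$ for $0 \le D < \sigma^2(1-\rho)$, that $(D,D) \in \mathscr{D}_3$ for $\sigma^2(1-\rho) \le D < \sigma^2$, and that $R_{S_1,S_2}(D,D) = 0$ for $D \ge \sigma^2$. In the $\mathscr{D}_3$ branch the identity
\[
  D^2 - \left(\rho\sigma^2 - (\sigma^2 - D)\right)^2 \;=\; \sigma^2(1-\rho)\left(2D - \sigma^2(1-\rho)\right)
\]
reduces \eqref{eq:RD1D2-solution} to $R_{S_1,S_2}(D,D) = \frac{1}{2}\log_2^+\!\left(\frac{\sigma^2(1+\rho)}{2D-\sigma^2(1-\rho)}\right)$, while in the $\mathscr{D}_2$ branch it is $\frac{1}{2}\log_2^+\!\left(\frac{\sigma^4(1-\rho^2)}{D^2}\right)$. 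On both ranges the argument of $\log_2^+$ exceeds $1$, so the $(\,\cdot\,)^+$ is inactive; moreover $D \mapsto R_{S_1,S_2}(D,D)$ is continuous and strictly decreasing on $[0,\sigma^2]$, taking the value $\frac{1}{2}\log_2\!\left(\frac{1+\rho}{1-\rho}\right)$ at the junction point $D = \sigma^2(1-\rho)$.

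I would then compare the right-hand side with this junction value: an elementary manipulation shows $1 + 2P(1+\rho)/N \le (1+\rho)/(1-\rho)$ if and only if $P/N \le \rho/(1-\rho^2)$. Consequently, for $P/N \le \rho/(1-\rho^2)$ no achievable $D$ can lie in $\mathscr{D}_2$ --- there $R_{S_1,S_2}(D,D)$ would strictly exceed the junction value, hence the right-hand side, contradicting the necessary condition --- so every achievable $D$ obeys $D \ge \sigma^2(1-\rho)$. Inserting the $\mathscr{D}_3$ expression and solving the resulting linear inequality for $D$ (the subcase $D \ge \sigma^2$ being trivial, since the claimed bound is at most $\sigma^2$) yields $D \ge \sigma^2\frac{P(1-\rho^2)+N}{2P(1+\rho)+N}$. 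Symmetrically, for $P/N > \rho/(1-\rho^2)$ every achievable $D$ either lies in $\mathscr{D}_2$ or already exceeds the claimed bound, and solving the necessary condition with the $\mathscr{D}_2$ expression gives $D \ge \sigma^2\sqrt{\frac{(1-\rho^2)N}{2P(1+\rho)+N}}$. Passing to the infimum over all achievable $D$ turns these into the asserted lower bounds on $D^{\ast}(\sigma^2,\rho,P,N)$.

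All the computations are elementary. The only points that need care are the region bookkeeping on the diagonal, the use of strict monotonicity of $D \mapsto R_{S_1,S_2}(D,D)$ to discard ``wrong-region'' achievable distortions, and the verification that the change of formula occurs exactly at the SNR threshold $\rho/(1-\rho^2)$; beyond this I do not anticipate any genuine obstacle.
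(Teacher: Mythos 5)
Your proposal is correct and follows exactly the route the paper intends: Corollary~\ref{cor:nofb-nec-cond-sym} is stated there as the direct combination of the necessary condition \eqref{eq:noFB-nec-cond} with the explicit rate-distortion function \eqref{eq:RD1D2-solution} evaluated at $(D,D)$ with $P_1=P_2=P$, which is precisely what you carry out. Your region bookkeeping on the diagonal ($\mathscr{D}_2$ for $D<\sigma^2(1-\rho)$, $\mathscr{D}_3$ for $\sigma^2(1-\rho)\le D<\sigma^2$), the simplification of the $\mathscr{D}_3$ branch, the identification of the threshold $P/N=\rho/(1-\rho^2)$ with the junction value $\tfrac{1}{2}\log_2\bigl(\tfrac{1+\rho}{1-\rho}\bigr)$, and the solving for $D$ in each regime are all accurate, so no gap remains.
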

Corollary \ref{cor:nofb-nec-cond-sym} concludes the section on the
necessary condition for the achievability of a distortion pair
$(D_1,D_2)$. We now compare this necessary condition to several
sufficient conditions. The first sufficient condition that we consider
is based on conventional source-channel separation.

\subsection{Source-Channel Separation}\label{subsec:sep-based}

As a benchmark we now consider the set of distortion pairs that are
achieved by combining the optimal scheme for the corresponding
source-coding problem with the optimal scheme for the corresponding
channel-coding problem.

The corresponding source-coding problem is illustrated in Figure
\ref{fig:setup-oohama}.
\begin{figure}[h]
  \centering
  \psfrag{s1}[cc][cc]{$S_{1,k}$}
  \psfrag{s2}[cc][cc]{$S_{2,k}$}
  \psfrag{r1}[cc][cc]{$R_1$}
  \psfrag{r2}[cc][cc]{$R_2$}
  \psfrag{src}[cc][cc]{Source}
  \psfrag{f1}[cc][cc]{$\bar{f}_1^{(n)}(\cdot)$}
  \psfrag{f2}[cc][cc]{$\bar{f}_2^{(n)}(\cdot)$}
  \psfrag{p1}[cc][cc]{$\bar{\phi}_1^{(n)}(\cdot)$}
  \psfrag{p2}[cc][cc]{$\bar{\phi}_2^{(n)}(\cdot)$}
  \psfrag{s1h}[cc][cc]{$\hat{S}_{1,k}$}
  \psfrag{s2h}[cc][cc]{$\hat{S}_{2,k}$}
  \epsfig{file=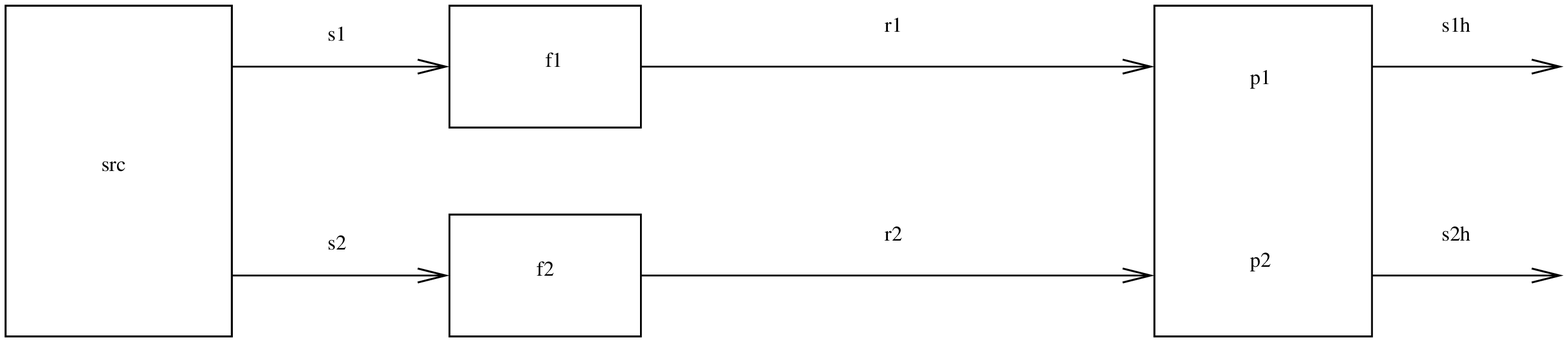, width=0.8\textwidth}
  \caption{Distributed source coding problem for a bi-variate Gaussian
    source.}
  \label{fig:setup-oohama}
\end{figure}
The two source components are observed by two separate encoders. These
two encoders wish to describe their source sequence to the common
receiver by means of individual rate-limited and error-free bit
pipes. The receiver estimates each of the sequences subject to
expected squared-error distortion. A detailed description of this
problem can be found in \cite{oohama97,
  wagner-tavildar-vishwanath05}. The associated rate-distortion region
is given in the next theorem.
\begin{thm}[Oohama \cite{oohama97}; Wagner, Tavildar, Viswanath
  \cite{wagner-tavildar-vishwanath05}]\label{thm:source-coding} 
  For the Gaussian two-terminal source coding problem (with source
  components of unit variances) a distortion-pair $(D_1,D_2)$ is
  achievable if, and only if,
  \begin{IEEEeqnarray*}{rCl}
    (R_1,R_2) \in \mathcal{R}_1(D_1) \cap \mathcal{R}_2(D_2) \cap
    \mathcal{R}_{\textnormal{sum}}(D_1,D_2),
  \end{IEEEeqnarray*} 
  where
  \begin{IEEEeqnarray*}{rCl}
    \mathcal{R}_1(D_1) & = & \left\{ (R_1,R_2): R_1\geq \frac{1}{2}
      \log_2^+ \left[ \frac{1}{D_1} (1-\rho^2(1-2^{-2R_2})) \right]
    \right\}\\[2mm]
    \mathcal{R}_2(D_2) & = & \left\{ (R_1,R_2): R_2\geq \frac{1}{2}
      \log_2^+ \left[ \frac{1}{D_2} (1-\rho^2(1-2^{-2R_1})) \right]
    \right\}\\[2mm]
    \mathcal{R}_{\textnormal{sum}}(D_1,D_2) & = & \left\{ (R_1,R_2):
      R_1+R_2\geq \frac{1}{2} \log_2^+ \left[ \frac{(1-\rho^2)
          \beta(D_1,D_2)}{2D_1D_2} \right] \right\}
  \end{IEEEeqnarray*}
  with
  \begin{IEEEeqnarray*}{rCl}
    \beta(D_1,D_2) & = & 1 + \sqrt{1 +
      \frac{4\rho^2D_1D_2}{(1-\rho^2)^2}}.
  \end{IEEEeqnarray*}
\end{thm}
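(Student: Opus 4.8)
The plan is to prove Theorem~\ref{thm:source-coding} by invoking the known single-letter characterization of the Gaussian two-terminal (quadratic Gaussian multiterminal) source-coding rate region, which is exactly the content of Oohama's direct-part construction together with the Wagner--Tavildar--Viswanath converse. Concretely, I would first set up the standard achievability scheme: each encoder quantizes its observation $S_{i,k}$ using a Gaussian test channel $U_i = S_i + Q_i$ with $Q_i$ independent of $S_i$ (and the $Q_i$ mutually independent), then sends a Wyner--Ziv--style binned description. The decoder reconstructs each component by MMSE estimation from $(U_1,U_2)$. Evaluating the resulting rates via the standard Berger--Tung inner bound and optimizing the quantization noise variances $\mathsf{Var}(Q_1),\mathsf{Var}(Q_2)$ gives precisely the three families of inequalities defining $\mathcal{R}_1(D_1)$, $\mathcal{R}_2(D_2)$, and $\mathcal{R}_{\textnormal{sum}}(D_1,D_2)$; the function $\beta(D_1,D_2)$ emerges from solving the quadratic that relates the sum-rate constraint to the distortion pair after the optimization. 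This establishes achievability.

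For the converse, I would reproduce the Wagner--Tavildar--Viswanath argument. The bounds $R_1 \geq \frac12\log_2^+[\frac{1}{D_1}(1-\rho^2(1-2^{-2R_2}))]$ and its symmetric counterpart follow from a genie-aided argument: reveal $S_2$ (or a sufficiently good description of it) to the decoder, reducing component~1 to a Wyner--Ziv problem whose rate-distortion function is the stated logarithm; care is needed because one must account for the fact that only a rate-$R_2$ description of $S_2$ is available, which is what produces the $(1-2^{-2R_2})$ factor. The sum-rate bound $R_1 + R_2 \geq \frac12\log_2^+[\frac{(1-\rho^2)\beta(D_1,D_2)}{2D_1D_2}]$ is the delicate one: it uses the entropy-power-type / rotated-coordinates argument of Wagner--Tavildar--Viswanath, in which one diagonalizes the source covariance, applies the scalar vector-Gaussian converse in each coordinate, and then re-optimizes; the quadratic $\beta(D_1,D_2) = 1+\sqrt{1+4\rho^2D_1D_2/(1-\rho^2)^2}$ is exactly the root that makes the converse tight against the achievability optimization.

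The main obstacle is the sum-rate converse: matching the achievable sum-rate expression with a converse bound requires the non-obvious observation (the crux of \cite{wagner-tavildar-vishwanath05}) that the Berger--Tung inner bound is in fact tight for the quadratic Gaussian two-terminal problem, which fails for general multiterminal source coding. A secondary technical point is handling the $\log_2^+$ (the $\max\{0,\cdot\}$): one must check the degenerate regimes where a component's distortion target exceeds what a zero-rate description achieves, i.e. $D_i \geq 1$ or $D_i \geq 1-\rho^2(\cdots)$, so that the corresponding constraint becomes vacuous. Since the statement is attributed to \cite{oohama97} and \cite{wagner-tavildar-vishwanath05}, the cleanest route for this paper is to cite those works for the full proof and merely indicate the Gaussian-test-channel achievability scheme and the genie-aided converse skeleton, rather than reproduce the entropy-power machinery in detail.
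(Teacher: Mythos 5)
The paper offers no proof of this theorem: it is quoted with attribution to Oohama and Wagner--Tavildar--Viswanath, which is exactly the route you recommend in your final paragraph, so your treatment (cite the references, with Berger--Tung/Gaussian-test-channel achievability and the WTV converse as the underlying argument) matches the paper's. One small caveat: the WTV sum-rate converse actually proceeds by a reduction to the $\mu$-sum/CEO problem combined with an entropy-power-inequality (worst-case noise) argument, not literally by diagonalizing the source covariance and applying a scalar converse per coordinate, but since you defer the details to \cite{wagner-tavildar-vishwanath05} this does not affect the correctness of your plan.
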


The distortions achievable by source-channel separation now follow
from combining Theorem \ref{thm:source-coding} with the capacity of
the Gaussian multiple-access channel (see e.g.~\cite{ahlswede71,
  liao72}). We state here the explicit expression for the resulting
distortion pairs only for the symmetric case.

\begin{cor}\label{cor:noFB-sep-based}
  In the symmetric case, a distortion $D$ is achievable by
  source-channel separation if, and only if,
  \begin{IEEEeqnarray*}{rCl}
    D \geq \sigma^2 \frac{\sqrt{N (N+2P(1-\rho^2))}}{2P +N}.
  \end{IEEEeqnarray*}
\end{cor}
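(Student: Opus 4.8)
The plan is to intersect the two-terminal source-coding rate-distortion region of Theorem~\ref{thm:source-coding} with the capacity region of the two-user Gaussian multiple-access channel and to specialize to the symmetric case $P_1=P_2=P$, $D_1=D_2=D$. By the separation principle, $D$ is achievable by source-channel separation if and only if there is a rate pair $(R_1,R_2)$ lying simultaneously in the source-coding region of Theorem~\ref{thm:source-coding} associated with the normalized distortion pair $(d,d)$, where $d:=D/\sigma^2$ (the factor $\sigma^2$ coming from the unit-variance convention of Theorem~\ref{thm:source-coding} and the linear scaling of the distortion region in the source variance, cf.\ Reduction~\ref{rdc:source-normalization}), and in the Gaussian MAC capacity region $\{(R_1,R_2):R_1\le C,\ R_2\le C,\ R_1+R_2\le C_{\Sigma}\}$, where $C=\frac{1}{2}\log_2(1+P/N)$ and $C_{\Sigma}=\frac{1}{2}\log_2(1+2P/N)$; note that $C_{\Sigma}\le 2C$. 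Since both regions are convex and invariant under interchanging $R_1$ and $R_2$, it suffices to search for feasible rate pairs on the diagonal $R_1=R_2=R$, on which the active MAC constraint is the sum-rate constraint $R\le C_{\Sigma}/2$ (this uses $C_{\Sigma}/2\le C$, which is immediate).

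On the diagonal, Theorem~\ref{thm:source-coding} imposes two lower bounds on $R$: an individual bound coming from $\mathcal{R}_1(d)$, and the sum-rate bound $2R\ge\frac{1}{2}\log_2^+[(1-\rho^2)\beta(d,d)/(2d^2)]$ with $\beta(d,d)=1+\sqrt{1+4\rho^2 d^2/(1-\rho^2)^2}$. First I would verify that at the candidate value $d^{\star}:=\sqrt{N(N+2P(1-\rho^2))}/(2P+N)$ the source sum-rate bound meets the MAC sum-rate bound with equality, i.e.\ $(1-\rho^2)\beta(d^{\star},d^{\star})/(2(d^{\star})^2)=1+2P/N$; after substituting $\beta$ and isolating the single radical, this reduces to the polynomial identity $a^2S^2+4(1-a)N(N(1-a)+Sa)=(2N(1-a)+aS)^2$ with $a:=1-\rho^2$ and $S:=N+2P$, which holds. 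Next I would check that at $R=C_{\Sigma}/2$ and $d=d^{\star}$ the individual bound from $\mathcal{R}_1(d)$ is also satisfied, so that the sum-rate constraint is indeed the binding one: writing $t:=\sqrt{N/S}\in(0,1]$, this inequality reduces to $a+(1-a)t^2\ge(a+(1-a)t)^2$, whose left-hand side minus right-hand side equals $a(1-a)(1-t)^2\ge0$. The individual MAC constraint at $R=C_{\Sigma}/2$ is $C_{\Sigma}/2\le C$, already noted.

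Finally I would close the argument by monotonicity in $D$. The right-hand side of the source sum-rate bound is strictly decreasing in $D$ (written as a function of $d^2$ it is a sum of two decreasing terms), and the individual source-coding lower bound on $R$ is nonincreasing in $D$; hence for $D\ge D^{\star}:=\sigma^2 d^{\star}$ both source lower bounds are at most $C_{\Sigma}/2$ and a feasible $R$ exists, whereas for $D<D^{\star}$ the source sum-rate lower bound alone exceeds $C_{\Sigma}/2$ and no feasible $R$ exists. This yields exactly the asserted condition $D\ge\sigma^2\sqrt{N(N+2P(1-\rho^2))}/(2P+N)$. The $\log_2^+(\cdot)$ bookkeeping is harmless: when the distortion is so large that $R=0$ suffices, separation works trivially, and since $D^{\star}\le\sigma^2$ this regime is already contained in $D\ge D^{\star}$. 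I expect the main obstacle to be not the algebra of matching the two sum-rate conditions at $d^{\star}$, but the verification that, among all the source-side and channel-side rate constraints, it is precisely the two sum-rate constraints that are simultaneously active at the optimum.
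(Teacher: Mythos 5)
Your proposal is correct and follows essentially the same route as the paper, which obtains Corollary~\ref{cor:noFB-sep-based} by intersecting the rate region of Theorem~\ref{thm:source-coding} with the Gaussian MAC capacity region and specializing to the symmetric case; the paper merely omits the algebra you carry out. Your diagonal reduction, the verification that the two sum-rate constraints coincide at $d^{\star}$ while the individual constraints are slack (the $a(1-a)(1-t)^2\ge 0$ computation), and the monotonicity argument for both directions are exactly the implicit content of the paper's statement.
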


We next consider several combined source-channel coding schemes. The
first scheme is an uncoded scheme.

\subsection{Uncoded Scheme}\label{subsec:uncoded}

In this section we consider an uncoded transmission scheme, which, as
we show, is optimal below a certain SNR-threshold.

The uncoded scheme operates as follows. At every time instant $k$, Encoder
$i \in \{ 1,2 \}$ produces as channel input $X_{i,k}$ a scaled version
of the time-$k$ source output $S_{i,k}$. The corresponding scaling is
such that the average power constraint of the channel is
satisfied. That is,
\begin{IEEEeqnarray*}{rCl}
  X_{i,k}^{\textnormal{u}} & = & \sqrt{\frac{P_i}{\sigma^2}} S_{i,k}
  \qquad k \in \{ 1,2, \ldots ,n \}. 
\end{IEEEeqnarray*}
Based on the resulting time-$k$ channel output $Y_k$, the decoder then
performs an MMSE estimate $\hat{S}_{i,k}^{\textnormal{u}}$ of the
source output $S_{i,k}$, $i \in \{ 1,2 \}$, $k \in \{ 1,2, \ldots ,n
\}$. That is,
\begin{IEEEeqnarray*}{rCl}
  \hat{S}_{i,k}^{\textnormal{u}} & = & \E{ S_{i,k} | Y_k } \qquad k
  \in \{ 1,2, \ldots ,n \}.
\end{IEEEeqnarray*}
The expected distortions $(D_1^{\textnormal{u}},D_2^{\textnormal{u}})$
resulting from this uncoded scheme as well as the optimality of the
scheme below a certain SNR-threshold are given in the following
theorem.

\begin{thm}\label{thm:uncoded-main}
  The distortion pairs $(D_1^{\textnormal{u}},D_2^{\textnormal{u}})$
  resulting from the described uncoded scheme are given by
  \begin{IEEEeqnarray}{rCl}\label{eq:mac-uc-expr-D1u-D2u}
    D_1^{\textnormal{u}} = \sigma^2 \frac{(1-\rho^2)P_2 + N}{P_1 + P_2 +
      2\rho \sqrt{P_1P_2} + N} & \qquad \qquad & D_2^{\textnormal{u}} =
    \sigma^2 \frac{(1-\rho^2)P_1 + N}{P_1 + P_2 + 2\rho \sqrt{P_1P_2} +
      N}.
  \end{IEEEeqnarray}
  These distortion pairs are optimal, i.e., lie on the boundary of the
  distortion region, whenever
  \begin{IEEEeqnarray}{rCl}\label{eq:mac-SNRcond-uc-opt}
    P_2 (1-\rho^2)^2 \Big( P_1 + 2\rho \sqrt{P_1P_2} \Big) & \leq &
    N\rho^2 \Big( 2 P_2(1-\rho^2) + N \Big). \hspace{5mm}
  \end{IEEEeqnarray}
\end{thm}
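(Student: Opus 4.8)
The plan is to prove the two assertions in turn: first the closed form \eqref{eq:mac-uc-expr-D1u-D2u} for the distortions of the uncoded scheme, and then its optimality under \eqref{eq:mac-SNRcond-uc-opt}. Throughout write $\Sigma\triangleq N+P_1+P_2+2\rho\sqrt{P_1P_2}$, so that the right-hand side of the necessary condition of Theorem~\ref{thm:nofb-nec-cond} equals $\frac{1}{2}\log_2(\Sigma/N)$.

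\noindent\emph{Distortions.} This part is routine. Since $\{(S_{1,k},S_{2,k})\}$ and $\{Z_k\}$ are jointly Gaussian and the encoders are linear, the channel output $Y_k=\sqrt{P_1/\sigma^2}\,S_{1,k}+\sqrt{P_2/\sigma^2}\,S_{2,k}+Z_k$ is jointly Gaussian with $(S_{1,k},S_{2,k})$, so each MMSE estimate is linear and $D_i^{\textnormal{u}}=\sigma^2-\E{S_{i,k}Y_k}^2/\E{Y_k^2}$. I would compute $\E{Y_k^2}=\Sigma$ and $\E{S_{1,k}Y_k}=\sigma(\sqrt{P_1}+\rho\sqrt{P_2})$, so that $D_1^{\textnormal{u}}=\sigma^2\bigl(\Sigma-(\sqrt{P_1}+\rho\sqrt{P_2})^2\bigr)/\Sigma$, and observe that $\Sigma-(\sqrt{P_1}+\rho\sqrt{P_2})^2$ telescopes to $(1-\rho^2)P_2+N$; $D_2^{\textnormal{u}}$ then follows by the $1\!\leftrightarrow\!2$ symmetry. (Equivalently, this scheme is precisely the point-to-point uncoded scheme of Section~\ref{sec:pt2pt} with $\alpha=\sqrt{P_1}$, $\beta=\sqrt{P_2}$, and total power $P=\Sigma-N$, so the formulas can also be read off from $\tilde{D}_i^{\textnormal{u}}(\cdot,\cdot)$.)

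\noindent\emph{Optimality.} The whole matter reduces to showing that, under \eqref{eq:mac-SNRcond-uc-opt}, the pair $(D_1^{\textnormal{u}},D_2^{\textnormal{u}})$ lies in $\overline{\mathscr{D}_3}$ and that the branch of \eqref{eq:RD1D2-solution} valid on $\mathscr{D}_3$, evaluated there, equals exactly $R_{S_1,S_2}(D_1^{\textnormal{u}},D_2^{\textnormal{u}})=\frac{1}{2}\log_2(\Sigma/N)$. Granting this, Theorem~\ref{thm:nofb-nec-cond} forces every achievable pair $(D_1,D_2)$ to obey $R_{S_1,S_2}(D_1,D_2)\le\frac{1}{2}\log_2(\Sigma/N)=R_{S_1,S_2}(D_1^{\textnormal{u}},D_2^{\textnormal{u}})$; since $R_{S_1,S_2}$ is nonincreasing in each argument and, near $(D_1^{\textnormal{u}},D_2^{\textnormal{u}})\in\overline{\mathscr{D}_3}$, strictly decreasing in each argument (as one reads off the $\mathscr{D}_2$ and $\mathscr{D}_3$ expressions), no achievable pair can dominate $(D_1^{\textnormal{u}},D_2^{\textnormal{u}})$ coordinatewise. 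As $(D_1^{\textnormal{u}},D_2^{\textnormal{u}})$ is itself achievable, it lies on the boundary of the distortion region.

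\noindent\emph{The key computation.} First, $\sigma^2-D_i^{\textnormal{u}}=\sigma^2(\sqrt{P_i}+\rho\sqrt{P_{3-i}})^2/\Sigma$, whence $\sqrt{(\sigma^2-D_1^{\textnormal{u}})(\sigma^2-D_2^{\textnormal{u}})}=\sigma^2(\sqrt{P_1}+\rho\sqrt{P_2})(\sqrt{P_2}+\rho\sqrt{P_1})/\Sigma$, and therefore $\rho\sigma^2-\sqrt{(\sigma^2-D_1^{\textnormal{u}})(\sigma^2-D_2^{\textnormal{u}})}=\sigma^2\bigl(\rho N-(1-\rho^2)\sqrt{P_1P_2}\bigr)/\Sigma$. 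A short factorization shows that \eqref{eq:mac-SNRcond-uc-opt} is equivalent to $\bigl(\rho N-(1-\rho^2)\sqrt{P_1P_2}\bigr)\bigl(\rho N+(1-\rho^2)\sqrt{P_1P_2}+2\rho(1-\rho^2)P_2\bigr)\ge0$; the second factor is strictly positive, so \eqref{eq:mac-SNRcond-uc-opt} is equivalent to $\rho N\ge(1-\rho^2)\sqrt{P_1P_2}$, i.e.\ to the displayed difference being nonnegative. Using that $D_2^{\textnormal{u}}<\sigma^2(1-\rho^2)+\rho^2D_1^{\textnormal{u}}$ and $D_i^{\textnormal{u}}<\sigma^2$ hold unconditionally, this places $(D_1^{\textnormal{u}},D_2^{\textnormal{u}})$ in $\overline{\mathscr{D}_3}$ and not in $\mathscr{D}_1$ or the interior of $\mathscr{D}_2$. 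Finally, substituting into the $\mathscr{D}_3$ branch, the denominator $D_1^{\textnormal{u}}D_2^{\textnormal{u}}-\bigl(\rho\sigma^2-\sqrt{(\sigma^2-D_1^{\textnormal{u}})(\sigma^2-D_2^{\textnormal{u}})}\bigr)^2$ collapses to $\sigma^4(1-\rho^2)N/\Sigma$, so $\sigma^4(1-\rho^2)$ divided by it equals $\Sigma/N$, which gives the claimed value $\frac{1}{2}\log_2(\Sigma/N)$.

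The main obstacle is the algebraic bookkeeping of the last paragraph: spotting the factorization of \eqref{eq:mac-SNRcond-uc-opt} that reduces it to $\rho N\ge(1-\rho^2)\sqrt{P_1P_2}$, and verifying that the $\mathscr{D}_3$ denominator telescopes to $\sigma^4(1-\rho^2)N/\Sigma$. One must also take a little care in the boundary argument to confirm that $(D_1^{\textnormal{u}},D_2^{\textnormal{u}})$ lands in $\overline{\mathscr{D}_3}$ rather than in $\mathscr{D}_1$, since on $\mathscr{D}_1$ the rate-distortion function is flat in one of the two distortions and would not pin down a boundary point.
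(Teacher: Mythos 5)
Your proposal is correct and follows essentially the same route as the paper: compute $D_i^{\textnormal{u}}$ via the Gaussian MMSE formula (as in Appendix~\ref{appx:prf-thm-uncoded-main}) and then verify that under \eqref{eq:mac-SNRcond-uc-opt} the pair lies in $\mathscr{D}_3$ and meets the necessary condition \eqref{eq:noFB-nec-cond} of Theorem~\ref{thm:nofb-nec-cond} with equality — you merely carry out explicitly the algebra (the factorization reducing \eqref{eq:mac-SNRcond-uc-opt} to $\rho N \ge (1-\rho^2)\sqrt{P_1P_2}$ and the collapse of the $\mathscr{D}_3$ denominator to $\sigma^4(1-\rho^2)N/\Sigma$) that the paper leaves to the reader, and you make the Pareto-boundary step explicit. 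The only nit is that excluding $\mathscr{D}_1$ requires both $D_2^{\textnormal{u}}<\sigma^2(1-\rho^2)+\rho^2D_1^{\textnormal{u}}$ and its index-swapped counterpart, the latter following by the identical computation with $P_1\leftrightarrow P_2$.
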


\begin{proof}
  The evaluation of $(D_1^{\textnormal{u}},D_2^{\textnormal{u}})$
  leading to \eqref{eq:mac-uc-expr-D1u-D2u} is given in Appendix
  \ref{appx:prf-thm-uncoded-main}. Based on the expressions for
  $D_1^{\textnormal{u}}$ and $D_2^{\textnormal{u}}$ the optimality of
  the uncoded scheme now follows from verifying that for all $P_1$,
  $P_2$ and $N$ satisfying \eqref{eq:mac-SNRcond-uc-opt} the
  corresponding distortion pair
  $(D_1^{\textnormal{u}},D_2^{\textnormal{u}})$ satisfies the
  necessary condition \eqref{eq:noFB-nec-cond} of Theorem
  \ref{thm:nofb-nec-cond} with equality. To verify this, one can first
  verify that for all $P_1$, $P_2$ and $N$ satisfying
  \eqref{eq:mac-SNRcond-uc-opt} we have
  $(D_1^{\textnormal{u}},D_2^{\textnormal{u}}) \in \mathscr{D}_3$.
\end{proof}

\begin{rmk}
  The optimality of the uncoded scheme can also be derived in a more
  conceptual way. To see this, denote by
  $\mathscr{D}_{\textnormal{MAC}}(\sigma^2,\rho,P_1,P_2,N)$ the
  distortion region for our multiple-access problem, and by
  $\mathscr{D}_{\textnormal{PTP}}(\sigma^2,\rho,P,N)$ the distortion
  region for the point-to-point problem of Section
  \ref{sec:pt2pt}. The optimality of the uncoded scheme for the
  multiple-access problem now follows from combining the following
  three statements:\\
  \begin{itemize}
  \item[A)]
    \begin{IEEEeqnarray*}{rCl}
      \mathscr{D}_{\textnormal{MAC}}\bigl( \sigma^2,\rho,P_1,P_2,N
      \bigr) & \; \subseteq \; & \mathscr{D}_{\textnormal{PTP}}\Bigl(
      \sigma^2,\rho,P_1 + P_2 + 2 \rho \sqrt{P_1P_2},N \Bigr).\\
    \end{IEEEeqnarray*}
  \end{itemize}
  \hfill \parbox{141mm}{Statement A) is nothing but a restatement of
    Theorem~\ref{thm:nofb-nec-cond} and
    Remark~\ref{rmk:nec-cond-mac-awgn}.\\[5mm]}

  \begin{itemize}
  \item[B)] For the point-to-point problem of Section \ref{sec:pt2pt}
    with power constraint $P = P_1 + P_2 + 2 \rho \sqrt{P_1P_2}$,
    let $(D_1,D_2)$ be a distorion pair resulting from the uncoded
    scheme of Section \ref{subsec:pt2pt-uncoded}. If
    \begin{IEEEeqnarray*}{rCl}
      \frac{P_1 + P_2 + 2 \rho \sqrt{P_1P_2}}{N} & \leq & \Gamma
      (D_1,\sigma^2,\rho),
    \end{IEEEeqnarray*}
    where $\Gamma$ is the threshold function defined in
    \eqref{eq:SNR-threshold}, then $(D_1,D_2)$ lies on the boundary of
    $\mathscr{D}_{\textnormal{PTP}}(\sigma^2,\rho,P_1 + P_2 + 2 \rho
    \sqrt{P_1P_2},N)$.\\
  \end{itemize}
  \hfill \parbox{141mm}{Statement B) follows immediately by
    Proposition~\ref{prp:uncoded-AWGN-ch}.\\[5mm]}

  \begin{itemize}
  \item[C)] Let $(\tilde{D}_1^{\textnormal{u}}(\alpha,\beta),
    \tilde{D}_2^{\textnormal{u}}(\alpha,\beta))$ be the distortion
    pair resulting from the uncoded scheme for the point-to-point
    problem, and let $(D_1^{\textnormal{u}},
    D_2^{\textnormal{u}})$ be the distortion pair
    resulting from the uncoded scheme for the multiple-access
    problem. Then, if
    \begin{IEEEeqnarray*}{rCl}
      \alpha = \sqrt{\frac{P_1}{\sigma^2}} & \hspace{8mm} \text{and} \hspace{8mm}
      \beta = \sqrt{\frac{P_2}{\sigma^2}},
    \end{IEEEeqnarray*}
    then
    \begin{IEEEeqnarray*}{rCl}
      \bigl( \tilde{D}_1^{\textnormal{u}}(\alpha,\beta),
      \tilde{D}_2^{\textnormal{u}}(\alpha,\beta) \bigr) & = &
      \bigl( D_1^{\textnormal{u}}, D_2^{\textnormal{u}} \bigr).\\
    \end{IEEEeqnarray*}
  \end{itemize}
  \hfill \parbox{141mm}{Statement C) follows since in the
    multiple-access problem, the channel output}
  \begin{IEEEeqnarray*}{rCl}
    Y_k = \alpha S_{1,k} + \beta S_{2,k} + Z_k,
  \end{IEEEeqnarray*}
  \hfill \parbox{141mm}{resulting from the uncoded scheme mimics the
    channel output of the uncoded scheme for the point-to-point
    problem with power constraint $P = P_1 + P_2 + 2 \rho
    \sqrt{P_1P_2}$. Thus, while in the multiple-access problem the
    encoders cannot cooperate, the channel performs the addition for
    them. And since the reconstructors are the same in the
    multiple-access problem and the point-to-point problem, the
    resulting distortions are the same in both problems.\\[5mm]}
  Combining Statements A), B) and C), gives that if
  \begin{IEEEeqnarray}{rCl}\label{eq:mac-rmk-uc-opt}
    \frac{P_1 + P_2 + 2 \rho \sqrt{P_1P_2}}{N} & \leq & \Gamma
    (D_1^{\textnormal{u}},\sigma^2,\rho),
  \end{IEEEeqnarray}
  then $(D_1^{\textnormal{u}}, D_2^{\textnormal{u}})$ lies on the
  boundary of
  $\mathscr{D}_{\textnormal{MAC}}(\sigma^2,\rho,P_1,P_2,N)$, i.e., the
  uncoded scheme for the multiple-access problem is optimal. The
  threshold condition \eqref{eq:mac-SNRcond-uc-opt} now follows by
  \eqref{eq:mac-rmk-uc-opt} and from substituting therein the value of
  $D_1^{\textnormal{u}}$ by its explicit expression given in
  \eqref{eq:mac-uc-expr-D1u-D2u}.

\end{rmk}

As a special case of Theorem \ref{thm:uncoded-main} we obtain:
\begin{cor}\label{cor:uncoded-optimal}
  In the symmetric case,
  \begin{IEEEeqnarray}{rCl}\label{eq:snr-threshold-uncoded-optimal}
    D^{\ast}(\sigma^2, \rho, P, N) = \sigma^2 \frac{P(1-\rho^2)
      +N}{2P(1+\rho) +N}, \hspace{12mm} \text{for all} \hspace{4mm}
    \frac{P}{N} \leq \frac{\rho}{1-\rho^2}.
  \end{IEEEeqnarray}
  Moreover, for all SNRs below the given threshold, the minimal
  distortion $D^{\ast}(\sigma^2, \rho, P, N)$ is achieved by the
  uncoded scheme.
\end{cor}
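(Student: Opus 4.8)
The plan is to obtain Corollary~\ref{cor:uncoded-optimal} by specializing Theorem~\ref{thm:uncoded-main} to the symmetric case $P_1 = P_2 = P$ and then matching distortions on both source components. First I would substitute $P_1 = P_2 = P$ into the uncoded-distortion formulas \eqref{eq:mac-uc-expr-D1u-D2u}. Since the expressions for $D_1^{\textnormal{u}}$ and $D_2^{\textnormal{u}}$ are symmetric under swapping $P_1 \leftrightarrow P_2$, setting $P_1 = P_2 = P$ gives $D_1^{\textnormal{u}} = D_2^{\textnormal{u}}$, and a direct computation yields the common value
\begin{IEEEeqnarray*}{rCl}
  D_1^{\textnormal{u}} = D_2^{\textnormal{u}} = \sigma^2 \frac{(1-\rho^2)P + N}{2P + 2\rho P + N} = \sigma^2 \frac{P(1-\rho^2) + N}{2P(1+\rho) + N},
\end{IEEEeqnarray*}
which is exactly the expression claimed in \eqref{eq:snr-threshold-uncoded-optimal}. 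This establishes that the uncoded scheme achieves the pair $(D,D)$ with $D$ equal to the stated quantity; by definition of $D^{\ast}$ as an infimum over achievable symmetric distortions, we immediately get $D^{\ast}(\sigma^2,\rho,P,N) \leq \sigma^2 \frac{P(1-\rho^2)+N}{2P(1+\rho)+N}$.

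Second, I would invoke the optimality condition \eqref{eq:mac-SNRcond-uc-opt} of Theorem~\ref{thm:uncoded-main}, again with $P_1 = P_2 = P$. The left-hand side becomes $P(1-\rho^2)^2(P + 2\rho P) = P^2(1-\rho^2)^2(1+2\rho)$ and the right-hand side becomes $N\rho^2(2P(1-\rho^2) + N)$, so the condition reads
\begin{IEEEeqnarray*}{rCl}
  P^2 (1-\rho^2)^2 (1 + 2\rho) & \leq & N\rho^2 \bigl( 2P(1-\rho^2) + N \bigr).
\end{IEEEeqnarray*}
The task is then to show that this is equivalent to $P/N \leq \rho/(1-\rho^2)$, i.e.\ to $P(1-\rho^2) \leq \rho N$. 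Writing $t = P(1-\rho^2)/(N\rho)$, one checks that the displayed inequality is $t^2(1+2\rho) \leq 2t + 1$ — wait, more carefully: dividing through by $N^2\rho^2$ and setting $t = P(1-\rho^2)/(N\rho)$ gives $t^2 (1+2\rho) \le 2t\rho/(1-\rho^2)\cdot(1-\rho^2) + 1$; I would instead just factor the quadratic in $P$ directly. The inequality $P^2(1-\rho^2)^2(1+2\rho) - 2N\rho^2(1-\rho^2)P - N^2\rho^2 \le 0$ is quadratic in $P$ with positive leading coefficient, so it holds exactly for $P$ between its two roots; since the product of the roots is negative ($-N^2\rho^2 / [(1-\rho^2)^2(1+2\rho)] < 0$), one root is negative and the relevant constraint is $P \le P_+$, the positive root. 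A short computation should show $P_+ = N\rho/(1-\rho^2)$, i.e.\ that $P = N\rho/(1-\rho^2)$ makes the quadratic vanish — this is the key algebraic identity to verify, and substituting $P(1-\rho^2) = N\rho$ into $P^2(1-\rho^2)^2(1+2\rho) - 2N\rho^2(1-\rho^2)P - N^2\rho^2$ gives $N^2\rho^2(1+2\rho) - 2N^2\rho^3 - N^2\rho^2 = N^2\rho^2(1 + 2\rho - 2\rho - 1) = 0$, as needed.

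Having shown that \eqref{eq:mac-SNRcond-uc-opt} with $P_1=P_2=P$ is equivalent to $P/N \le \rho/(1-\rho^2)$, Theorem~\ref{thm:uncoded-main} tells us that in this SNR range the pair $(D_1^{\textnormal{u}}, D_2^{\textnormal{u}})$ lies on the boundary of the distortion region. In particular no smaller symmetric distortion is achievable, giving the reverse inequality $D^{\ast}(\sigma^2,\rho,P,N) \ge \sigma^2 \frac{P(1-\rho^2)+N}{2P(1+\rho)+N}$; alternatively one reads off the same lower bound from the first branch of Corollary~\ref{cor:nofb-nec-cond-sym}, whose range of validity $P/N \in (0,\rho/(1-\rho^2)]$ matches exactly. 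Combining the two inequalities yields the claimed equality \eqref{eq:snr-threshold-uncoded-optimal}, and the statement that the uncoded scheme attains it is immediate from the first step. I do not anticipate a genuine obstacle here — the only non-routine point is the algebraic verification that the optimality threshold \eqref{eq:mac-SNRcond-uc-opt} collapses to $P/N = \rho/(1-\rho^2)$ in the symmetric case, and the root-substitution check above handles that cleanly.
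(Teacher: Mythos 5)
Your proposal is correct and follows essentially the paper's route: the corollary is obtained exactly as you do, by specializing the uncoded distortions \eqref{eq:mac-uc-expr-D1u-D2u} and the optimality threshold \eqref{eq:mac-SNRcond-uc-opt} of Theorem~\ref{thm:uncoded-main} to $P_1=P_2=P$, with the matching lower bound read off from the first branch of Corollary~\ref{cor:nofb-nec-cond-sym}. The only substantive step is your algebraic check that the symmetric form of \eqref{eq:mac-SNRcond-uc-opt} reduces to $P/N\leq\rho/(1-\rho^2)$ (via the positive root $P=N\rho/(1-\rho^2)$ of the quadratic), and that verification is correct.
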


The upper and lower bounds that result on
$D^{\ast}(\sigma^2,\rho,P,N)$ from our derived necessary conditions and
sufficient conditions are illustrated in Figure
\ref{fig:uc-vs-sb} for a source of correlation coefficient $\rho = 0.5$.
\begin{figure}[h]
  \centering
  \psfrag{ds}[cc][cc]{$D/\sigma^2$}
  \psfrag{pn}[cc][cc]{$P/N$}
  \psfrag{r}[cc][cc]{$\rho = 0.5$}
  \psfrag{sb}[cc][cc]{source-channel separation}
  \psfrag{uc}[cc][cc]{uncoded}
  \psfrag{lb}[cc][cc]{lower bound}
  \psfrag{0}[cc][cc]{\footnotesize $0$}
  \psfrag{0.2}[cc][cc]{\footnotesize $0.2$}
  \psfrag{0.3}[cc][cc]{\footnotesize $0.3$}
  \psfrag{0.4}[cc][cc]{\footnotesize $0.4$}
  \psfrag{0.5}[cc][cc]{\footnotesize $0.5$}
  \psfrag{0.6}[cc][cc]{\footnotesize $0.6$}
  \psfrag{0.7}[cc][cc]{\footnotesize $0.7$}
  \psfrag{0.8}[cc][cc]{\footnotesize $0.8$}
  \psfrag{0.9}[cc][cc]{\footnotesize $0.9$}
  \psfrag{1}[cc][cc]{\footnotesize $1$}
  \psfrag{1.5}[cc][cc]{\footnotesize $1.5$}
  \psfrag{2}[cc][cc]{\footnotesize $2$}
  \psfrag{2.5}[cc][cc]{\footnotesize $2.5$}
  \psfrag{3}[cc][cc]{\footnotesize $3$}
  \epsfig{file=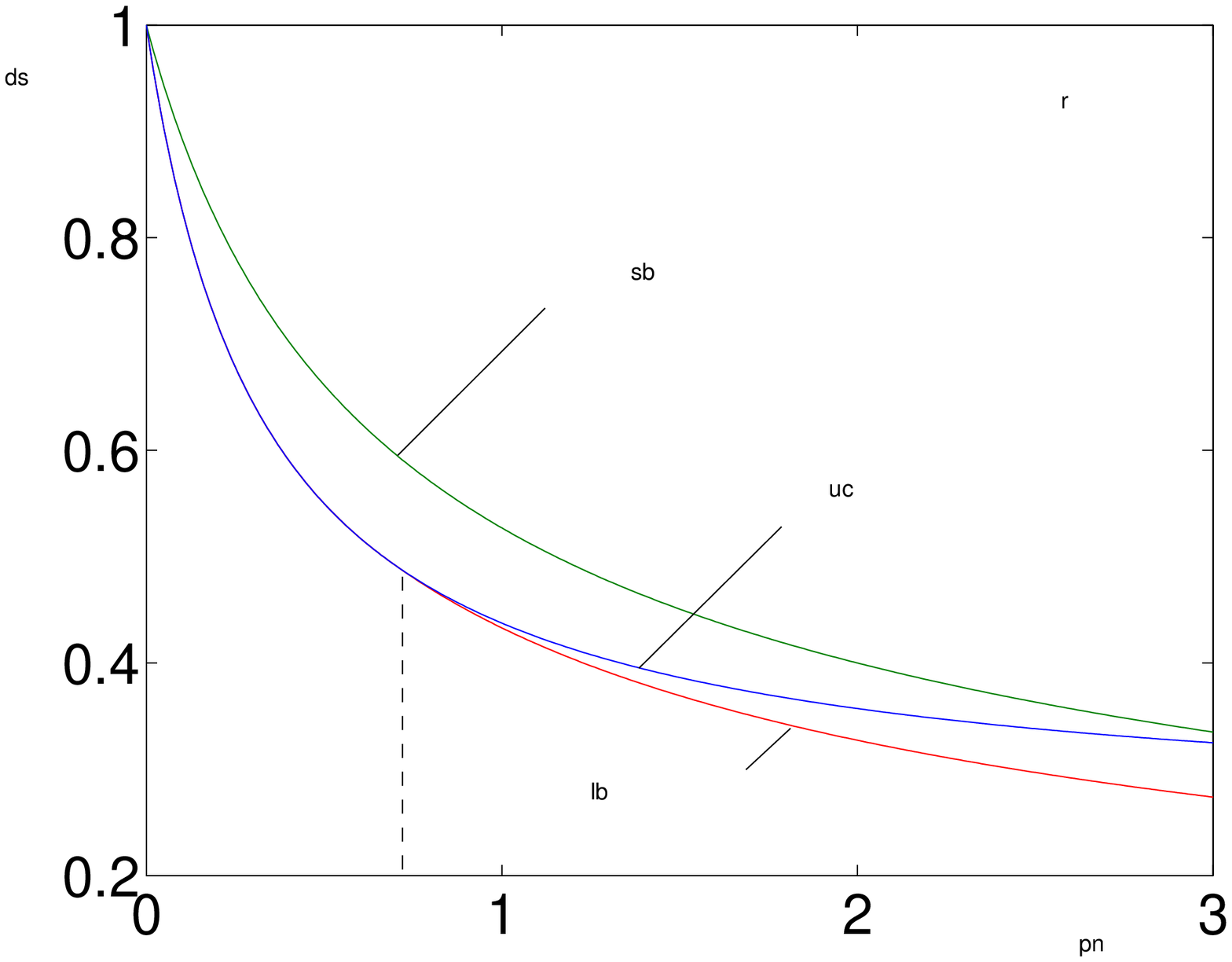, width=0.8\textwidth}
  \caption{Upper and lower bounds on $D^{\ast}(\sigma^2,\rho,P,N)$ for
    a source of correlation coefficient $\rho = 0.5$.}
  \label{fig:uc-vs-sb}
\end{figure}
For the SNRs below the threshold of
\eqref{eq:snr-threshold-uncoded-optimal} (marked by the dashed line)
the uncoded approach performs significantly better than the
separation-based approach. However, for SNRs above the threshold of
\eqref{eq:snr-threshold-uncoded-optimal} the performance of the
uncoded scheme gets successively worse. By the expressions in
\eqref{eq:snr-threshold-uncoded-optimal}, we obtain that in the
symmetric case
\begin{IEEEeqnarray}{rCl}\label{eq:mac-uc-sym-highSNR}
  \lim_{P/N \rightarrow \infty} D_i^{\textnormal{u}} & = & \sigma^2
  \frac{1 - \rho}{2}, \hspace{12mm} i \in \{ 1,2 \}.
\end{IEEEeqnarray}
That is, as $P/N \rightarrow \infty$ the distortion
$D_i^{\textnormal{u}}$ does not tend to zero. The reason is that as
the noise tends to zero, the channel output corresponding to the
uncoded scheme tends to $\alpha {\bf S}_1 + \beta {\bf S}_2$, from
which ${\bf S}_1$ and ${\bf S}_2$ cannot be recovered.

\subsection{Vector-Quantizer Scheme}\label{sec:mac-vq}

In this section, we propose a coding scheme that improves on the
uncoded scheme at high SNR. In this scheme the signal transmitted by
each encoder is a vector-quantized version of its source sequence. The
vital difference to the separation-based scheme is that the
vector-quantized sequences are not mapped to bits before they are
transmitted. Instead, the vector-quantized sequences are the channel
inputs themselves. This transfers some of the correlation from the
source to the channel inputs with the channel inputs still being from
discrete sets, thereby enabling the decoder to make distinct estimates
of ${\bf S}_1$ and of ${\bf S}_2$. For this scheme, we derive the
achievable distortions and, based on those and on the necessary
condition of Theorem \ref{thm:nofb-nec-cond}, deduce the high SNR
asymptotics of an optimal scheme.

The structure of an encoder of our scheme is illustrated in Figure
\ref{fig:vq-enc}.
\begin{figure}[h]
  \centering
  \psfrag{s}[cc][cc]{${\bf S}_i$}
  \psfrag{u}[cc][cc]{${\bf U}_i^{\ast}$}
  \psfrag{x}[cc][cc]{${\bf X}_i$}
  \psfrag{vq}[cc][cc]{\textsf{VQ}}
  \psfrag{r}[cc][cc]{rate-$R_i$}
  \psfrag{c}[cc][cc]{$\sqrt{\frac{P}{\sigma^2(1-2^{-2R_i})}}$}
  \epsfig{file=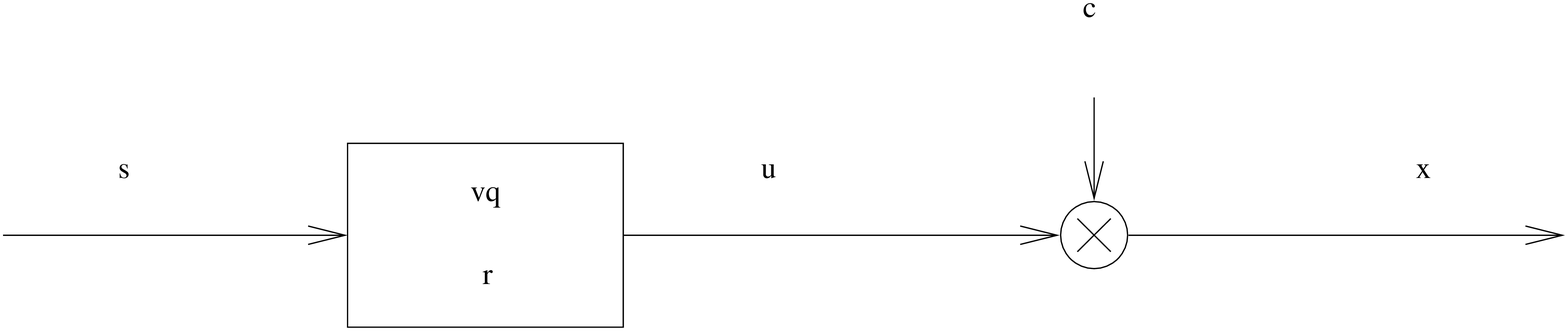, width=0.65\textwidth}
  \caption{Encoder of Vector-Quantizer Scheme.}
  \label{fig:vq-enc}
\end{figure}
First, the source sequence ${\bf S}_i$ is quantized by an optimal
rate-$R_i$ vector-quantizer. The resulting quantized sequence is
denoted by ${\bf U}_i^{\ast}$. For its transmission over the channel,
this sequence is scaled so as to satisfy the average power constraint
of \eqref{eq:power-constraint}. That is, the channel input sequence
${\bf X}_i$ is given by
\begin{IEEEeqnarray*}{rCl}
  {\bf X}_i & = & \sqrt{\frac{P}{\sigma^2(1-2^{-2R_i})}} {\bf U}_i^{\ast}
  \hspace{15mm} i \in \{ 1,2 \}.
\end{IEEEeqnarray*}
Based on the channel output ${\bf Y}$ resulting from ${\bf X}_1$ and
${\bf X}_2$, the decoder then estimates the two source sequences ${\bf
  S}_1$ and ${\bf S}_2$. It does this in two steps. First, it tries to
recover the two transmitted sequences ${\bf U}_1^{\ast}$ and ${\bf
  U}_2^{\ast}$ from the channel output sequence ${\bf Y}$ by
performing joint decoding that takes into consideration the
correlation between two transmitted sequences ${\bf U}_1^{\ast}$ and
${\bf U}_2^{\ast}$. The resulting decoded sequences are denoted by
$\hat{\bf U}_1$ and $\hat{\bf U}_2$ respectively. In the second step,
the decoder performs approximate MMSE estimates $\hat{\bf S}_i$, $i
\in \{ 1,2 \}$, of the source sequences ${\bf S}_i$ based on $\hat{\bf
  U}_1$ and $\hat{\bf U}_2$, i.e.
\begin{IEEEeqnarray*}{rCl}
  \hat{\bf S}_i & = & \gamma_{i1} \hat{\bf U}_1 + \gamma_{i2} \hat{\bf U}_2,\\[2mm]
  & \approx & \E{{\bf S}_i \big| \hat{\bf U}_1, \hat{\bf U}_2}.
\end{IEEEeqnarray*}
A detailed description of the scheme is given in Appendix
\ref{appx:prf-thm-vq}.

The distortion pairs achieved by this vector-quantizer scheme are
stated in the following theorem.

\begin{thm}\label{thm:vq-achv}
  The distortions achieved by the vector-quantizer scheme are all
  pairs $(D_1,D_2)$ satisfying
  \begin{IEEEeqnarray*}{rCl}
    D_1 &>& \sigma^2 2^{-2R_1} \cdot
    \frac{1-\rho^2(1-2^{-2R_2})}{1-\tilde{\rho}^2} \\
    D_2 &>& \sigma^2 2^{-2R_2} \cdot
    \frac{1-\rho^2(1-2^{-2R_1})}{1-\tilde{\rho}^2},
  \end{IEEEeqnarray*}
  where the rate-pair $(R_1,R_2)$ satisfies
  \begin{IEEEeqnarray}{rCl}
    R_1 &<& \frac{1}{2} \log_2 \left( \frac{P_1(1-\tilde{\rho}^2) +
        N}{N(1-\tilde{\rho}^2)} \right) \label{eq:vq-R1-constr}\\[2mm]
    R_2 &<& \frac{1}{2} \log_2 \left( \frac{P_2(1-\tilde{\rho}^2) +
        N}{N(1-\tilde{\rho}^2)} \right) \label{eq:vq-R2-constr}\\[2mm]
    R_1+R_2 &<& \frac{1}{2} \log_2 \left( \frac{P_1 + P_2 + 2\tilde{\rho}
        \sqrt{P_1P_2} + N}{N(1-\tilde{\rho}^2)} \right), \label{eq:vq-R1R2-constr}
  \end{IEEEeqnarray}
  and where
  \begin{IEEEeqnarray}{rCl}\label{eq:vq-tilde-rho}
    \tilde{\rho} & = & \rho \sqrt{(1-2^{-2R_1})(1-2^{-2R_2})}.
  \end{IEEEeqnarray}
\end{thm}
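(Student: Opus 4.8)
The plan is to analyze the proposed vector-quantizer scheme as a concatenation of three stages—distributed vector quantization, analog scaling for the power constraint, and joint channel decoding followed by linear MMSE reconstruction—and to track the relevant second-order statistics through each stage. The key observation that makes the scheme tractable is that after the optimal rate-$R_i$ quantizers are applied, the quantized sequences ${\bf U}_i^\ast$ behave (in the appropriate jointly-typical sense, for large $n$) like i.i.d.\ Gaussian sequences whose per-letter variance is $\sigma^2(1-2^{-2R_i})$ and whose pairwise correlation coefficient is exactly $\tilde\rho = \rho\sqrt{(1-2^{-2R_1})(1-2^{-2R_2})}$ as in \eqref{eq:vq-tilde-rho}. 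This is the standard backward-test-channel representation of the Gaussian rate-distortion function: writing $S_{i} = U_i^\ast + V_i$ with $V_i$ the quantization error orthogonal to $U_i^\ast$ and of variance $\sigma^2 2^{-2R_i}$, one computes $\E{U_1^\ast U_2^\ast}$ directly from $\E{S_1 S_2} = \rho\sigma^2$ and the orthogonality relations, obtaining $\tilde\rho$.

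First I would make precise, in Appendix~\ref{appx:prf-thm-vq}, the random-coding construction: each encoder uses an optimal rate-$R_i$ codebook for its marginal source, so that with high probability the source sequence ${\bf S}_i$ is encoded to a codeword ${\bf U}_i^\ast$ with per-letter distortion arbitrarily close to $\sigma^2 2^{-2R_i}$; the scaled codeword $\sqrt{P/(\sigma^2(1-2^{-2R_i}))}\,{\bf U}_i^\ast$ then meets the power constraint \eqref{eq:power-constraint}. Next I would argue that the decoder can recover the correct codeword pair $(\hat{\bf U}_1,\hat{\bf U}_2) = ({\bf U}_1^\ast,{\bf U}_2^\ast)$ with vanishing error probability provided the rate pair $(R_1,R_2)$ lies strictly inside the capacity region of the Gaussian MAC induced by the two \emph{scaled, correlated} codebooks. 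Because the codewords are (effectively) correlated Gaussians with correlation $\tilde\rho$, the effective channel seen at the decoder is $Y_k = \sqrt{P_1}\,\tilde U_{1,k} + \sqrt{P_2}\,\tilde U_{2,k} + Z_k$ with $\tilde U_{i}$ unit-variance and $\E{\tilde U_1\tilde U_2}=\tilde\rho$; the achievable rate region for reliably decoding both messages over such a channel—by a standard joint-typicality argument treating the correlated codebook as a superposition—is exactly the set of $(R_1,R_2)$ satisfying \eqref{eq:vq-R1-constr}–\eqref{eq:vq-R1R2-constr}, since the single-user bounds use the conditional mutual information $\frac12\log(1 + P_i(1-\tilde\rho^2)/N)$ (the effective SNR after the correlated interference is partially predictable) and the sum bound uses $\frac12\log(1 + (P_1+P_2+2\tilde\rho\sqrt{P_1P_2})/N)$ minus the entropy rate $\frac12\log\frac{1}{1-\tilde\rho^2}$ carried by the correlation between the codebooks. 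I would verify these three bounds by direct computation of the relevant (conditional) differential entropies of jointly Gaussian vectors.

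Finally, conditioned on the high-probability event that $(\hat{\bf U}_1,\hat{\bf U}_2)=({\bf U}_1^\ast,{\bf U}_2^\ast)$, the decoder forms the linear estimate $\hat{\bf S}_i = \gamma_{i1}\hat{\bf U}_1 + \gamma_{i2}\hat{\bf U}_2$ with the $\gamma$'s chosen as the genuine MMSE coefficients for estimating $S_i$ from $(U_1^\ast,U_2^\ast)$; solving the $2\times2$ normal equations with the covariance entries computed above yields resulting per-letter distortion
\begin{IEEEeqnarray*}{rCl}
  D_i & = & \sigma^2 2^{-2R_i}\cdot\frac{1-\rho^2(1-2^{-2R_{\bar i}})}{1-\tilde\rho^2},
\end{IEEEeqnarray*}
with $\bar i$ the complementary index, and the strict inequalities $D_i > (\cdot)$ in the theorem absorb the vanishing contributions from the (rare) decoding-error event and the $\epsilon$-slack in the quantizer distortion. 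I expect the main obstacle to be the second step: carefully justifying that the correlated, \emph{quantizer-generated} codebooks can be decoded over the MAC with the claimed rate region, i.e.\ that one may treat ${\bf U}_1^\ast,{\bf U}_2^\ast$ as if they were drawn from an i.i.d.\ jointly-Gaussian test-channel ensemble for the purpose of the joint-typicality decoding error analysis. This requires either invoking a suitable version of the Markov-lemma / typicality machinery for the Gaussian case or, more cleanly, re-deriving the whole scheme with an auxiliary i.i.d.\ Gaussian codebook and only afterward binning it against the source—so that the quantization and the channel codebook share the same Gaussian statistics by construction. The distortion computation in the third step and the entropy computations in the second are then routine linear-algebra exercises.
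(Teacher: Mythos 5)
Your overall architecture matches the paper's: quantize each source with an optimal rate-$R_i$ spherical codebook, scale to meet the power constraint, jointly decode the codeword pair exploiting its correlation $\tilde\rho$, then form a linear (approximate MMSE) estimate $\hat{\bf S}_i=\gamma_{i1}\hat{\bf U}_1+\gamma_{i2}\hat{\bf U}_2$, and absorb the decoding-error event and the $\epsilon$-slack into the strict inequalities (the paper formalizes this last step with a genie-aided decoder that is handed $({\bf U}_1^\ast,{\bf U}_2^\ast)$). Your computation of the second-order statistics and of the resulting distortion expression is the same as the paper's, and the rate region you state is the correct one.

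The genuine gap is the step you yourself flag as ``the main obstacle'': you never actually establish that the decoder recovers $({\bf U}_1^\ast,{\bf U}_2^\ast)$ with vanishing error probability under \eqref{eq:vq-R1-constr}--\eqref{eq:vq-R1R2-constr}, and a ``standard joint-typicality argument treating the correlated codebook as a superposition'' does not exist for this situation. The difficulty is structural: the transmitted codeword is selected by a deterministic vector quantizer, so conditional on the selection the remaining codewords of $\mathcal{C}_i$ are \emph{not} distributed as an independent i.i.d.\ ensemble, and conditional on both codebooks the two selected indices are statistically dependent — exactly the two reasons the paper gives for why the conventional MAC random-coding analysis fails. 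The paper closes this hole with a geometric argument: it shows (Lemma \ref{lm:density-wrong-codeword}) that, conditioned on which codeword was selected, the density of every other codeword on the sphere $\mathcal{S}_i$ is at most twice the uniform density, and then it bounds the three error events $\mathcal{E}_{\hat{\bf U}_1},\mathcal{E}_{\hat{\bf U}_2},\mathcal{E}_{(\hat{\bf U}_1,\hat{\bf U}_2)}$ by sphere-cap area estimates (Lemmas \ref{lm:ub-E41-B41}--\ref{lm:caps-double-error}, using Shannon's cap bounds). Your two proposed escape routes are not carried out, and the second one (an auxiliary i.i.d.\ Gaussian codebook with binning) would be a different scheme from the one the theorem is about. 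Relatedly, your claim that ${\bf U}_1^\ast,{\bf U}_2^\ast$ ``behave like'' jointly Gaussian sequences with correlation $\tilde\rho$ is itself something that must be proved for the actual quantizer outputs: the backward-test-channel identity is a single-letter statement about an optimal test channel, not about the deterministic encoder here. The paper proves the needed facts ($\tfrac1n\E{\inner{{\bf U}_1^\ast}{{\bf U}_2^\ast}}\approx\sigma^2\tilde\rho$, $\tfrac1n\E{\inner{{\bf S}_1}{{\bf U}_2^\ast}}\approx\rho\sigma^2(1-2^{-2R_2})$, etc.) via the typicality events $\mathcal{E}_{\bf S},\mathcal{E}_{\bf X}$ and the orthogonal decomposition ${\bf u}_i^\ast=\nu_i{\bf s}_i+{\bf v}_i$ together with symmetry arguments showing the cross terms vanish (Lemmas \ref{lm:vq-Pr-E2}, \ref{lm1:vq-bd-D1genie}--\ref{lm3:vq-bd-D1genie}); without some version of these arguments the distortion formula and the value of $\tilde\rho$ in the rate constraints remain heuristic.
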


\begin{proof}
See Appendix \ref{appx:prf-thm-vq}.
\end{proof}

\begin{rmk}
  The coefficient $\tilde{\rho}$ corresponds to the asymptotic average
  correlation coefficient between two time-$k$ channel inputs
  $X_{1,k}$ and $X_{2,k}$.
\end{rmk}

Based on Theorem \ref{thm:vq-achv} we now derive two more results: we
show that for the symmetric version of our problem, source-channel
separation is suboptimal also at high SNR, and we determine the
precise high-SNR asymptotics of an optimal scheme. We begin with the
sub-optimality of source-channel separation. To this end, we restate
Theorem \ref{thm:vq-achv} more specifically for the symmetric case.
\begin{cor}\label{cor:nofb-VQ-sym}
In the symmetric case 
\begin{IEEEeqnarray*}{rCl}
D^{*}(\sigma^2, \rho, P, N) & \leq & \sigma^2 2^{-2R} \cdot \frac{1-
  \rho^2(1-2^{-2R})}{1-\rho^2(1-2^{-2R})^2},
\end{IEEEeqnarray*}
where 
\begin{IEEEeqnarray*}{rCl}
R & < & \frac{1}{4} \log_2 \left( \frac{2P (1+\rho(1-2^{-2R})) +N}{N(1-
    \rho^2(1-2^{-2R})^2)} \right).
\end{IEEEeqnarray*}
\end{cor}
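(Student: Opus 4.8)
The statement is a direct specialization of Theorem~\ref{thm:vq-achv}. The plan is to put $P_1 = P_2 = P$ there and to restrict attention to symmetric rate allocations $R_1 = R_2 = R$. This is a valid (if possibly suboptimal) choice of the scheme's parameters, so every distortion achievable under it is an upper bound on $D^*(\sigma^2,\rho,P,N)$.

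First I would evaluate $\tilde{\rho}$ in \eqref{eq:vq-tilde-rho} at $R_1 = R_2 = R$: since $1 - 2^{-2R} \geq 0$, this gives $\tilde{\rho} = \rho(1-2^{-2R})$, hence $1-\tilde{\rho}^2 = 1-\rho^2(1-2^{-2R})^2 \in (0,1]$. Substituting $P_1 = P_2 = P$ and $R_1 = R_2 = R$ into the two distortion inequalities of Theorem~\ref{thm:vq-achv} makes them coincide and turns them into $D > \sigma^2 2^{-2R}\,\frac{1-\rho^2(1-2^{-2R})}{1-\rho^2(1-2^{-2R})^2}$, the expression appearing in the corollary. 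In the same way the two per-user rate constraints \eqref{eq:vq-R1-constr}--\eqref{eq:vq-R2-constr} collapse to the single condition $R < \frac{1}{2}\log_2\big(\frac{P(1-\tilde{\rho}^2)+N}{N(1-\tilde{\rho}^2)}\big)$, while the sum-rate constraint \eqref{eq:vq-R1R2-constr} becomes $2R < \frac{1}{2}\log_2\big(\frac{2P(1+\tilde{\rho})+N}{N(1-\tilde{\rho}^2)}\big)$, i.e.\ $R < \frac{1}{4}\log_2\big(\frac{2P(1+\tilde{\rho})+N}{N(1-\tilde{\rho}^2)}\big)$; inserting $\tilde{\rho} = \rho(1-2^{-2R})$ here recovers exactly the constraint on $R$ stated in the corollary.

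It then remains to argue that the sum-rate constraint is the binding one, so that the per-user constraints need not be displayed. For this I would check the perfect-square identity
\[
\bigl(P(1-\tilde{\rho}^2)+N\bigr)^2 - N(1-\tilde{\rho}^2)\bigl(2P(1+\tilde{\rho})+N\bigr) = \bigl(P(1-\tilde{\rho}^2) - N\tilde{\rho}\bigr)^2 \geq 0,
\]
which, since $1-\tilde{\rho}^2 > 0$, yields $\frac{1}{2}\log_2\big(\frac{P(1-\tilde{\rho}^2)+N}{N(1-\tilde{\rho}^2)}\big) \geq \frac{1}{4}\log_2\big(\frac{2P(1+\tilde{\rho})+N}{N(1-\tilde{\rho}^2)}\big)$; hence any $R$ obeying the sum-rate constraint automatically obeys both per-user constraints. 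Finally, for such an $R$ and any $\varepsilon > 0$ the pair $(D,D)$ with $D$ equal to the right-hand side of the corollary plus $\varepsilon$ is achievable by Theorem~\ref{thm:vq-achv}, so letting $\varepsilon \downarrow 0$ in the definition of $D^*$ gives the claimed bound. The argument is almost entirely bookkeeping; the one step that is not is the domination of the sum-rate constraint over the per-user rate constraints, which is precisely the displayed perfect-square identity.
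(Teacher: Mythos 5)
Your proposal is correct and follows essentially the same route as the paper, whose proof of the corollary is just the specialization of Theorem~\ref{thm:vq-achv} to $P_1=P_2=P$ and $R_1=R_2=R$, under which $\tilde{\rho}=\rho(1-2^{-2R})$ and the stated distortion and sum-rate expressions fall out directly. Your perfect-square identity $\bigl(P(1-\tilde{\rho}^2)+N\bigr)^2-N(1-\tilde{\rho}^2)\bigl(2P(1+\tilde{\rho})+N\bigr)=\bigl(P(1-\tilde{\rho}^2)-N\tilde{\rho}\bigr)^2$ checks out and correctly supplies the one step the paper leaves implicit, namely that the sum-rate constraint dominates the per-user rate constraints in the symmetric case, so only it need be displayed.
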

By comparing the achievable distortion of the vector-quantizer scheme,
in Corollary \ref{cor:nofb-VQ-sym}, with the achievable distortion of
the separation-based scheme, in Corollary \ref{cor:noFB-sep-based}, we
obtain:
\begin{cor}
  In the symmetric case with $\rho > 0$, source-channel separation is
  suboptimal for all $P > 0$.
\end{cor}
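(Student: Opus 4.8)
The plan is to compare the best distortion attainable by source-channel separation, as given explicitly in Corollary~\ref{cor:noFB-sep-based}, with the distortion attainable by the vector-quantizer scheme, as given in Corollary~\ref{cor:nofb-VQ-sym}, and to show that for every $P>0$ and every $\rho>0$ the vector-quantizer scheme strictly beats separation. Write $D_{\textnormal{sep}} = \sigma^2 \sqrt{N(N+2P(1-\rho^2))}/(2P+N)$ for the separation distortion. It suffices to exhibit a single admissible rate $R$ for the vector-quantizer scheme whose resulting distortion is strictly below $D_{\textnormal{sep}}$. First I would observe that the separation-based scheme is itself a degenerate limit of the vector-quantizer family: as $R\to\infty$ with $\tilde\rho = \rho(1-2^{-2R})$, the rate constraint in Corollary~\ref{cor:nofb-VQ-sym} forces $2^{-2R}$ down to a positive floor, and in that limit the vector-quantizer distortion approaches precisely $D_{\textnormal{sep}}$ (this is the intuition behind Remark~\ref{rmk:nec-cond-mac-awgn} and the sum-rate structure). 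So the strategy is to take $R$ slightly \emph{below} that limiting value and show that a first-order perturbation argument yields a strict decrease.

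Concretely, I would set $R$ to be the value $R_0$ that makes the rate constraint in Corollary~\ref{cor:nofb-VQ-sym} hold with equality, solving
\[
4R_0 = \log_2\!\left( \frac{2P\bigl(1+\rho(1-2^{-2R_0})\bigr) + N}{N\bigl(1-\rho^2(1-2^{-2R_0})^2\bigr)} \right),
\]
which has a unique solution since the left side is increasing and the right side is bounded. At $R=R_0$ the achievable distortion expression in Corollary~\ref{cor:nofb-VQ-sym} equals $D_{\textnormal{sep}}$; this should follow by direct algebraic substitution, using $\tilde\rho = \rho(1-2^{-2R_0})$ and simplifying $\sigma^2 2^{-2R_0}(1-\rho^2(1-2^{-2R_0}))/(1-\tilde\rho^2)$ against $D_{\textnormal{sep}}$ with the relation defining $R_0$. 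Then, since the inequality in Corollary~\ref{cor:nofb-VQ-sym} is strict ($D_1,D_2$ are bounded \emph{strictly} below and $R$ is bounded \emph{strictly} above), and since the distortion expression is continuous and strictly increasing in $R$ over the relevant range while the rate bound leaves slack for $R<R_0$, choosing any $R\in(R_0-\delta, R_0)$ gives an achievable distortion strictly less than $D_{\textnormal{sep}}$. The final step is to confirm the distortion expression is indeed monotincreasing in $R$ on $(0,R_0]$ — this is a one-variable calculus check on $g(R) = 2^{-2R}(1-\rho^2(1-2^{-2R}))/(1-\rho^2(1-2^{-2R})^2)$, whose derivative is negative for $\rho>0$, so that decreasing $R$ from $R_0$ does decrease distortion.

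The main obstacle I anticipate is the algebra verifying that the vector-quantizer distortion at $R=R_0$ exactly matches $D_{\textnormal{sep}}$: one must eliminate $2^{-2R_0}$ between the defining equation for $R_0$ and the distortion formula, and the expressions involve nested quadratics in $1-2^{-2R_0}$ through $\tilde\rho$ and $\beta(D_1,D_2)$. A cleaner route that sidesteps the exact-match claim is to argue purely by strictness: the vector-quantizer scheme \emph{simulates} the separation scheme in the limit but never attains it, because the decoding in the VQ scheme reconstructs the quantized codewords directly rather than passing through a noiseless bit-pipe — equivalently, the VQ scheme's rate region \eqref{eq:vq-R1-constr}--\eqref{eq:vq-R1R2-constr} has all inequalities strict, whereas the separation scheme's achievability sits on the boundary (channel capacity achieved with equality). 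Thus any distortion achievable by separation is a limit point but not a member of the VQ achievable set, and conversely the VQ achievable set contains a neighborhood pushing below it. Formalizing this requires showing the VQ distortion is a continuous, strictly decreasing function of the sum-rate slack, which reduces to the same monotonicity computation above; I would present whichever of the two arguments produces the shorter write-up, likely the explicit-$R_0$ version since it makes the comparison with Corollary~\ref{cor:noFB-sep-based} transparent.
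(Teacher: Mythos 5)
Your overall strategy---finding an admissible rate $R$ for which the vector-quantizer distortion of Corollary~\ref{cor:nofb-VQ-sym} falls strictly below the separation distortion $D_{\textnormal{sep}}=\sigma^2\sqrt{N(N+2P(1-\rho^2))}/(2P+N)$ of Corollary~\ref{cor:noFB-sep-based}---is exactly the comparison the paper intends, but both pillars of your argument fail. The central claim, that at the critical rate $R_0$ (where the VQ rate constraint holds with equality) the VQ distortion \emph{equals} $D_{\textnormal{sep}}$, is false: for $\rho>0$ the limiting VQ distortion as $R\uparrow R_0$ is strictly \emph{smaller} than $D_{\textnormal{sep}}$. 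This is already visible at high SNR, where the critical-rate VQ distortion behaves like $\sigma^2\sqrt{N(1-\rho)/(2P)}$ (consistent with Corollary~\ref{cor:noFB-high-SNR}) while $D_{\textnormal{sep}}\approx\sigma^2\sqrt{N(1-\rho^2)/(2P)}$; the two differ by the factor $\sqrt{1+\rho}$, which is precisely the gain from the received-power boost $2P(1+\tilde{\rho})$ and the $(1-\tilde{\rho}^2)$ term in \eqref{eq:vq-R1R2-constr} that separation, limited to the sum capacity $\tfrac12\log_2(1+2P/N)$, does not enjoy. (A numerical check at $\rho=0.5$, $P=N$: the critical-rate VQ distortion is about $0.49\,\sigma^2$ versus $D_{\textnormal{sep}}\approx 0.53\,\sigma^2$.) So the algebraic identity you plan to verify cannot be verified, and separation is not a ``degenerate limit'' of the VQ family.

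Second, even if that identity held, your perturbation goes the wrong way. The distortion $g(R)=2^{-2R}\bigl(1-\rho^2(1-2^{-2R})\bigr)/\bigl(1-\rho^2(1-2^{-2R})^2\bigr)$ is strictly \emph{decreasing} in $R$ (you state its derivative is negative, contradicting your earlier assertion that it is increasing), so any admissible $R\in(R_0-\delta,R_0)$ gives $\sigma^2 g(R)>\sigma^2 g(R_0)$; under your premise $\sigma^2 g(R_0)=D_{\textnormal{sep}}$ this yields distortions \emph{above} $D_{\textnormal{sep}}$ and proves nothing. The ``cleaner route'' has the same defect: showing that $D_{\textnormal{sep}}$ is a limit point not attained by the VQ scheme would not show separation is suboptimal---suboptimality requires strictly beating it. What is actually needed, and what the paper means by comparing the two corollaries, is the strict inequality $\sigma^2 g(R)<D_{\textnormal{sep}}$ for some $R$ satisfying the constraint of Corollary~\ref{cor:nofb-VQ-sym}; this must be read off directly from the two closed forms (the $\sqrt{1+\rho}$-type gain above, valid for every $P>0$ when $\rho>0$), for instance by evaluating the VQ expression at rates approaching $R_0$, rather than inferred from an exact coincidence followed by a perturbation.
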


We turn to the high-SNR asymptotics of an optimal scheme. To this end,
let $(D_1^{\ast},D_2^{\ast})$ denote an arbitrary distortion pair
resulting from an optimal scheme. For a subset of those distortion
pairs, the high SNR behavior is described in the following theorem.
\begin{thm}[High-SNR Distortion]\label{thm:nofb-asymptotics}
  The high-SNR asymptotic behavior of $(D_1^{\ast},D_2^{\ast})$ is
  given by
  \begin{IEEEeqnarray*}{rCl}
    \lim_{N \rightarrow 0} \frac{P_1 + P_2 + 2\rho
      \sqrt{P_1P_2}}{N} D_1^{\ast} D_2^{\ast} & = & \sigma^4
    (1-\rho^2),
  \end{IEEEeqnarray*}
  provided that $D_1^{\ast} \leq \sigma^2$ and $D_2^{\ast} \leq
  \sigma^2$, and that
  \begin{IEEEeqnarray}{rCl} \label{eq:lim-Di}
    \lim_{N \rightarrow 0} \frac{N}{P_1 D_1^{\ast}} = 0 & \qquad \text{and} 
    \qquad & \lim_{N \rightarrow 0} \frac{N}{P_2 D_2^{\ast}} = 0.
  \end{IEEEeqnarray}
\end{thm}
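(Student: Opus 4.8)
\emph{Proof strategy.} Write $\kappa \triangleq P_1+P_2+2\rho\sqrt{P_1 P_2}$. The plan is to squeeze $\frac{\kappa}{N}D_1^\ast D_2^\ast$ between a lower bound coming from the converse (Theorem~\ref{thm:nofb-nec-cond}) and an upper bound coming from the vector-quantizer achievability (Theorem~\ref{thm:vq-achv}), both of which tend to $\sigma^4(1-\rho^2)$ as $N\to 0$. (The case $\rho=1$ is degenerate and vacuous, since then the hypotheses $N/(P_iD_i^\ast)\to 0$ cannot hold for an optimal scheme; so assume $\rho\in[0,1)$.) The decisive preliminary step is to show that, for $N$ small, $(D_1^\ast,D_2^\ast)$ lies in the region $\mathscr{D}_2$, where Theorem~\ref{thm:rd1d2-main} gives the product form $R_{S_1,S_2}(D_1^\ast,D_2^\ast)=\frac12\log_2^+\!\big(\sigma^4(1-\rho^2)/(D_1^\ast D_2^\ast)\big)$. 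Because $\mathscr{D}_2$ contains a relative neighbourhood of the origin in the positive quadrant (along $D_1=0$ it is the segment $0\le D_2<\sigma^2(1-\rho^2)$), this reduces to proving $D_1^\ast\to 0$ and $D_2^\ast\to 0$ as $N\to 0$.

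To prove $D_2^\ast\to 0$, I would argue by contradiction: suppose $D_2^\ast\ge c>0$ along a subsequence. Fix a \emph{constant} rate $R_2$ large enough that $\sigma^2 2^{-2R_2}\frac{1-\rho^2}{\,1-\rho^2(1-2^{-2R_2})\,}<c$ (this quantity is the $R_1\to\infty$ limit of the lower bound on $D_2^{\mathrm{VQ}}$ in Theorem~\ref{thm:vq-achv}, and it tends to $0$ as $R_2\to\infty$), and then take $R_1=R_1(N)\to\infty$ with $2^{-2R_1(N)}$ of order $N$. This pair is admissible in \eqref{eq:vq-R1-constr}--\eqref{eq:vq-R1R2-constr} for $N$ small, because $1-\tilde{\rho}^2$ stays bounded away from $0$ (it is $\ge 1-\rho^2$), so the three rate bounds all grow like $\frac12\log(1/N)$. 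The resulting VQ pair has $D_1^{\mathrm{VQ}}$ of order $N$ and $D_2^{\mathrm{VQ}}$ that can be taken below $c$; and since $N/(P_1D_1^\ast)\to 0$ forces $D_1^\ast/N\to\infty$, for $N$ small we obtain $D_1^{\mathrm{VQ}}<D_1^\ast$ and $D_2^{\mathrm{VQ}}<c\le D_2^\ast$, contradicting the (Pareto-)optimality of $(D_1^\ast,D_2^\ast)$. Exchanging the two sources gives $D_1^\ast\to 0$; together with $D_i^\ast\le\sigma^2$ this puts $(D_1^\ast,D_2^\ast)$ into $\mathscr{D}_2$ for all small $N$.

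Granting this, the lower bound follows by substituting the $\mathscr{D}_2$ expression into \eqref{eq:noFB-nec-cond} and using $D_1^\ast D_2^\ast\to 0$ to drop the positive part: $\sigma^4(1-\rho^2)/(D_1^\ast D_2^\ast)\le 1+\kappa/N$, i.e. $\frac{\kappa}{N}D_1^\ast D_2^\ast\ge\sigma^4(1-\rho^2)-D_1^\ast D_2^\ast$, whence $\liminf_{N\to 0}\frac{\kappa}{N}D_1^\ast D_2^\ast\ge\sigma^4(1-\rho^2)$. For the upper bound, suppose $\frac{\kappa}{N}D_1^\ast D_2^\ast\ge\sigma^4(1-\rho^2)+\varepsilon$ along a subsequence, and apply Theorem~\ref{thm:vq-achv} with $2^{-2R_i}$ chosen just below $D_i^\ast/\sigma^2$ (legitimate since $D_i^\ast\le\sigma^2$). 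The hypotheses $N/(P_iD_i^\ast)\to 0$ make \eqref{eq:vq-R1-constr}--\eqref{eq:vq-R2-constr} strict, and since $D_i^\ast\to 0$ we have $\tilde{\rho}\to\rho$ and $P_1+P_2+2\tilde{\rho}\sqrt{P_1P_2}\to\kappa$, so the sum constraint \eqref{eq:vq-R1R2-constr} reduces to $D_1^\ast D_2^\ast>\sigma^4 N(1-\rho^2)(1+o(1))/\kappa$, which holds once $N$ is small enough that the $o(1)$ correction is dominated by $\varepsilon$. For this rate pair the lower bounds on the VQ distortions are strictly below $D_1^\ast$ and $D_2^\ast$ respectively, so there is a VQ-achievable pair with $D_1^{\mathrm{VQ}}<D_1^\ast$ and $D_2^{\mathrm{VQ}}<D_2^\ast$, again contradicting Pareto-optimality. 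Hence $\limsup_{N\to 0}\frac{\kappa}{N}D_1^\ast D_2^\ast\le\sigma^4(1-\rho^2)$, and combining the two bounds gives the claim.

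I expect the main obstacle to be the preliminary step — forcing $(D_1^\ast,D_2^\ast)$ into $\mathscr{D}_2$ — since that is where the two growth hypotheses and the Pareto-optimality of the scheme have to be played off against the vector-quantizer region through a carefully tailored choice of rates; once this is in place the two sandwiching estimates are essentially routine, the one delicate point being to verify that the corrections $\tilde{\rho}\to\rho$ and $P_1+P_2+2\tilde{\rho}\sqrt{P_1P_2}\to\kappa$ entering the sum-rate constraint are genuinely $o(1)$ relative to the fixed gap $\varepsilon$.
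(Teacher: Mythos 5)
Your proof is correct and rests on the same two pillars as the paper's own argument: the converse half is essentially identical (insert the $\mathscr{D}_2$ form of $R_{S_1,S_2}$ into \eqref{eq:noFB-nec-cond} and let $N\rightarrow 0$), and the achievability half invokes Theorem~\ref{thm:vq-achv}. Where you differ is in how the vector-quantizer bound is transferred to the optimal pair. The paper parametrizes the VQ-achievable pairs by the product equation \eqref{eq:mac-hsnr-achv-D1D2} with the self-consistent coefficient $\breve{\rho}$, shows $\breve{\rho}\rightarrow\rho$ under \eqref{eq:D1D2-asymptotic}, and then ``combines'' the resulting $\limsup$ with the converse $\liminf$, leaving implicit both the comparison of $(D_1^{\ast},D_2^{\ast})$ with the VQ curve and the claim that $(D_1^{\ast},D_2^{\ast})\in\mathscr{D}_2$ for small $N$. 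You instead run two explicit contradiction arguments against Pareto-optimality: one with a fixed large $R_2$ and $2^{-2R_1}=\Theta(N)$ to force $D_1^{\ast},D_2^{\ast}\rightarrow 0$ (hence $\mathscr{D}_2$ membership), and one with $2^{-2R_i}$ pegged just below $D_i^{\ast}/\sigma^2$ to rule out $\frac{\kappa}{N}D_1^{\ast}D_2^{\ast}\geq\sigma^4(1-\rho^2)+\varepsilon$ along a subsequence. This buys a self-contained justification of precisely the steps the paper asserts without proof, at the cost of the $\eta$-versus-$\varepsilon$ bookkeeping in the sum-rate constraint, which you correctly identify as the one delicate verification; your observation that the hypotheses \eqref{eq:lim-Di} are vacuous for $\rho=1$ is also a correct point the paper does not address.
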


\begin{proof}
  See Appendix \ref{appx:prf-thm-asympt-nofb}.
\end{proof}

We restate Theorem \ref{thm:nofb-asymptotics} more specifically for
the symmetric case. Since there $D_1^{\ast} = D_2^{\ast} =
D^{\ast}(\sigma^2, \rho, P, N)$, condition \eqref{eq:lim-Di} is
implicitly satisfied. Thus,
\begin{cor}\label{cor:noFB-high-SNR}
  In the symmetric case
  \begin{IEEEeqnarray}{rCl}\label{eq:vq-high-snr}
    \lim_{P/N \rightarrow \infty} \sqrt{\frac{P}{N}} D^{*}(\sigma^2, \rho,
    P, N) = \sigma^2 \sqrt{\frac{1-\rho}{2}}.
  \end{IEEEeqnarray}
\end{cor}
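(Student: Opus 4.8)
The plan is to establish Corollary~\ref{cor:noFB-high-SNR} by sandwiching $\sqrt{P/N}\,D^{*}(\sigma^{2},\rho,P,N)$ between a lower bound derived from the necessary condition (Theorem~\ref{thm:nofb-nec-cond}, via Corollary~\ref{cor:nofb-nec-cond-sym}) and an upper bound derived from the vector-quantizer scheme (Theorem~\ref{thm:vq-achv}, via Corollary~\ref{cor:nofb-VQ-sym}), and showing that both bounds converge to $\sigma^{2}\sqrt{(1-\rho)/2}$ as $P/N\to\infty$. The lower bound is immediate: by Corollary~\ref{cor:nofb-nec-cond-sym}, for $P/N > \rho/(1-\rho^{2})$ we have $D^{*}(\sigma^{2},\rho,P,N) \geq \sigma^{2}\sqrt{(1-\rho^{2})N/(2P(1+\rho)+N)}$, so $\sqrt{P/N}\,D^{*} \geq \sigma^{2}\sqrt{(1-\rho^{2})P/(2P(1+\rho)+N)} = \sigma^{2}\sqrt{(1-\rho^{2})/(2(1+\rho)+N/P)}$, which tends to $\sigma^{2}\sqrt{(1-\rho^{2})/(2(1+\rho))} = \sigma^{2}\sqrt{(1-\rho)/2}$ as $P/N\to\infty$.

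For the upper bound, I would appeal to the symmetric special case of Theorem~\ref{thm:nofb-asymptotics}, namely to the relation $\lim_{N\to0}\frac{P_{1}+P_{2}+2\rho\sqrt{P_{1}P_{2}}}{N}D_{1}^{*}D_{2}^{*} = \sigma^{4}(1-\rho^{2})$. Setting $P_{1}=P_{2}=P$ and $D_{1}^{*}=D_{2}^{*}=D^{*}(\sigma^{2},\rho,P,N)$, the left-hand side becomes $\lim \frac{2P(1+\rho)}{N}\,(D^{*})^{2}$, so $\lim \frac{P}{N}(D^{*})^{2} = \frac{\sigma^{4}(1-\rho^{2})}{2(1+\rho)} = \frac{\sigma^{4}(1-\rho)}{2}$, and taking square roots gives \eqref{eq:vq-high-snr}. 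One must check that the hypotheses of Theorem~\ref{thm:nofb-asymptotics} hold here: the condition \eqref{eq:lim-Di} is automatic in the symmetric case (as remarked in the text just before the corollary), and $D^{*}\leq\sigma^{2}$ holds for all sufficiently large $P/N$ since $D^{*}$ is non-increasing in $P/N$ and strictly below $\sigma^{2}$ already at moderate SNR.

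Alternatively, one can obtain the upper bound self-containedly from Corollary~\ref{cor:nofb-VQ-sym} by choosing the rate $R=R(P/N)$ to grow slowly with the SNR—concretely, taking $2^{-2R}\to0$ while $2^{2R}N/P\to0$, e.g.\ $R=\frac{1}{8}\log_{2}(P/N)$—so that $\tilde\rho = \rho(1-2^{-2R})\to\rho$. Under this choice the rate constraint in Corollary~\ref{cor:nofb-VQ-sym} is satisfied for large $P/N$, and the achievable distortion behaves as $\sigma^{2}2^{-2R}\cdot\frac{1-\rho^{2}(1-2^{-2R})}{1-\rho^{2}(1-2^{-2R})^{2}}$. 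To extract the asymptotics of $\sqrt{P/N}\,D^{*}$ one then optimizes $R$ more carefully: writing $t = 2^{-2R}$, the dominant balance between the source term $\sim \sigma^{2}(1-\rho)t/\big((1-\rho)(1+\rho)\cdot 2(1-\rho)\cdot\ldots\big)$ and the channel-limited term forces $2^{2R}\asymp\sqrt{P/N}$, at which point the product $\sqrt{P/N}\,D^{*}$ tends to $\sigma^{2}\sqrt{(1-\rho)/2}$.

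The main obstacle is the upper-bound argument: if one goes the self-contained route through Corollary~\ref{cor:nofb-VQ-sym}, the implicit equation defining the admissible $R$ (the rate appears on both sides through $\tilde\rho$) must be analyzed asymptotically, and one has to verify that the optimal scaling $2^{2R}\asymp\sqrt{P/N}$ simultaneously (i) respects the sum-rate constraint \eqref{eq:vq-R1R2-constr} with $P_{1}=P_{2}=P$, and (ii) drives the distortion expression to exactly the claimed limit rather than something larger. Invoking the symmetric case of Theorem~\ref{thm:nofb-asymptotics} sidesteps this entirely, so I would present that as the primary proof and relegate the direct verification to a remark.
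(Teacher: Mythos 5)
Your primary argument—specializing Theorem~\ref{thm:nofb-asymptotics} to $P_1=P_2=P$ and $D_1^{\ast}=D_2^{\ast}=D^{\ast}(\sigma^2,\rho,P,N)$, noting that condition \eqref{eq:lim-Di} holds automatically in the symmetric case (indeed your lower bound $D^{\ast}\geq\sigma^2\sqrt{(1-\rho^2)N/(2P(1+\rho)+N)}$ from Corollary~\ref{cor:nofb-nec-cond-sym} is exactly what guarantees it) and that $D^{\ast}\leq\sigma^2$—is precisely the paper's proof. Your alternative, comparing the vector-quantizer distortion of Corollary~\ref{cor:nofb-VQ-sym} with the lower bound of Corollary~\ref{cor:nofb-nec-cond-sym}, coincides with the remark the paper makes immediately after the corollary, so the proposal is correct and follows the same route.
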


\begin{rmk}
  Corollary \ref{cor:noFB-high-SNR} can also be deduced without
  Theorem \ref{thm:nofb-asymptotics}, by comparing the distortion of
  the vector-quantizer scheme in Corollary \ref{cor:nofb-VQ-sym} to
  the lower bound on $D^{\ast}(\sigma^2, \rho, P, N)$ in Corollary
  \ref{cor:nofb-nec-cond-sym}.
\end{rmk}

To get some understanding of the coefficient on the RHS of
\eqref{eq:vq-high-snr}, let us first rewrite \eqref{eq:vq-high-snr} as
follows
\begin{IEEEeqnarray*}{rCl}
  D^{\ast} (\sigma^2, \rho, P, N) & \approx & \sigma^2 \sqrt{\frac{N
      (1-\rho^2)}{2P(1+\rho)}} \hspace{15mm} \text{as} \hspace{3mm}
  \frac{P}{N} \gg 1.
\end{IEEEeqnarray*}
Next, let us compare this asymptotic behavior to that of two
suboptimal schemes: the best separation-based scheme and the
suboptimal separation-based scheme that completely ignores the source
correlation, i.e., the best scheme where the transmitters and the
receiver treat the two source components as if they where
independent. Denoting the distortion of the best separation-based
scheme by $D_{\textnormal{SB}}$ and the distortion of the scheme that
ignores the source correlation by $D_{\textnormal{IC}}$, gives
\begin{IEEEeqnarray*}{rCl}
  D_{\textnormal{SB}} \approx \sigma^2 \sqrt{\frac{N(1-\rho^2)}{2P}}
  & \qquad \text{and} \qquad & D_{\textnormal{IC}} \approx \sigma^2
  \sqrt{\frac{N}{2P}}, \hspace{10mm} \text{as} \hspace{3mm} \frac{P}{N}
  \gg 1.
\end{IEEEeqnarray*}
The asymptotic expression for $D_{\textnormal{SB}}$ follows by
Corollary \ref{cor:noFB-sep-based} and the asymptotic expression for
$D_{\textnormal{IC}}$ follows from combining the rate-distortion
function of a Gaussian random variable, see e.g.~\cite[Theorem 13.3.2,
p.~344]{cover-thomas91}, with the capacity region of the Gaussian
multiple-access channel, see e.g.~\cite[Section 14.3.6,
p.~403]{cover-thomas91}.

The asymptotic behavior can now be understood as follows. The
denominator under the square-root corresponds to the average power
that the scheme under discussion produces on the sum of the channel
inputs $X_{1,k} + X_{2,k}$. In the two separation-based approaches
this average power is $2P$, and in the vector-quantizer scheme this
average power is $2P(1+\rho)$ as $P/N \rightarrow \infty$. The
numerator under the square-root consists of the noise variance $N$
multiplied by a coefficient reflecting the gain due to the logical
exploitation of the source correlation. For the scheme ignoring the
source correlation this coefficient is, by definition of the scheme,
equal to 1, i.e., no gain, whereas for the best separation-based scheme
and for the vector-quantizer scheme this coefficient is equal to
$1-\rho^2$. The means by which this gain is obtained in the best
separation-based scheme and in the vector-quantizer scheme are
fundamentally different. In the separation-based scheme the gain is
achieved by a generalized form of Slepian-Wolf coding (see
\cite{oohama97}), whereas in the vector-quantizer scheme the gain is
achieved by joint-typicality decoding that takes into consideration
the correlation between the transmitted sequences ${\bf U}_1^{\ast}$
and ${\bf U}_2^{\ast}$ (see Theorem \ref{thm:vq-achv}). The
corresponding advantage of the vector-quantizer scheme is that by
performing the logical exploitation only at the receiver, it
additionally allows for exploiting the source correlation in a
physical way, i.e., by producing a power boost in the transmitted
signal pair.

\subsection{Superposition Approach}

The last scheme of this paper is a combination of the previously
considered uncoded scheme and vector-quantizer scheme. One way to
combine these schemes would be by time- and power-sharing. As stated
in Remark \ref{rmk:convexity}, this would result in a convexification
of the union of the achievable distortions of the two individual
schemes. In this section, we instead propose an approach where the two
schemes are superimposed. In the symmetric case, this approach results
in better performances than time- and power-sharing, and for all SNRs,
the resulting distortion is very close to the lower bound on
$D^{\ast}(\sigma^2, \rho, P, N)$ of Corollary
\ref{cor:nofb-nec-cond-sym}. We also point out that for the simpler
problem of transmitting a univariate memoryless Gaussian source over a
point-to-point AWGN channel subject to expected squared-error
distortion, a similar superposition approach was shown in
\cite{bross-lapidoth-tinguely06} to yield a continuum of optimal
schemes.

The superimposed scheme can be described as follows. The channel input
sequence ${\bf X}_i$ produced by Encoder $i$, $i \in \{ 1,2 \}$, is a
linear combination of the source sequence ${\bf S}_i$ and its
rate-$R_i$ vector-quantized version ${\bf U}_i^{\ast}$. That is,
\begin{IEEEeqnarray}{rCl}\label{eq:si-Xi}
  {\bf X}_i & = & \alpha_i {\bf S}_i + \beta_i {\bf U}_i^{\ast},
\end{IEEEeqnarray}
where the sequence ${\bf U}_i^{\ast}$ is obtained in exactly the same way as
in the vector-quantizer scheme, and where the coefficients $\alpha_i$ and
$\beta_i$ are chosen so that the sequence ${\bf X}_i$ satisfies the
power constraint \eqref{eq:power-constraint}, and so that the receiver
can, with high probability, recover the transmitted codeword pair
$({\bf U}_1^{\ast}, {\bf U}_2^{\ast})$. As we shall see, these two
conditions will be satisfied as long as $\alpha_i$ and $\beta_i$ satisfy to
within some $\epsilon$'s and $\delta$'s
\begin{IEEEeqnarray}{rCl}\label{eq:superimp:alpha-beta}
  \alpha_i \in \left[ 0, \frac{P_i}{\sigma^2} \right] & \qquad &
  \beta_i = \sqrt{\frac{P_i - \alpha_i^2 \sigma^2
      2^{-2R_i}}{\sigma^2(1-2^{-2R_i})}} - \alpha_i \qquad i \in \{
  1,2 \}.
\end{IEEEeqnarray}
(For a precise statement see Appendix~\ref{appx:prf-thm-supimp}).


From the resulting channel output ${\bf Y} = {\bf X}_1 + {\bf X}_2 +
{\bf Z}$, the decoder then makes a guess $(\hat{\bf U}_1, \hat{\bf
  U}_2)$ of the transmitted sequences $({\bf U}_1^{\ast}, {\bf
  U}_2^{\ast})$. This guess is obtained by joint typicality decoding
that takes into consideration the correlation between ${\bf
  U}_1^{\ast}$, ${\bf U}_2^{\ast}$, ${\bf S}_1$ and ${\bf S}_2$. From
the sequences $\hat{\bf U}_1$, $\hat{\bf U}_2$, and ${\bf Y}$, the
decoder then computes approximate MMSE estimates $\hat{\bf S}_1$ and
$\hat{\bf S}_2$ of the source sequences ${\bf S}_1$ and ${\bf S}_2$,
i.e.,
\begin{IEEEeqnarray}{rCl}
  \hat{\bf S}_i & = & \gamma_{i1} \hat{\bf U}_1 + \gamma_{i2} \hat{\bf
    U}_2 + \gamma_{i3} {\bf Y} \qquad \qquad i \in \{ 1,2 \}, \label{eq:si-reconstr-Sih}
\end{IEEEeqnarray}
where the coefficients $\gamma_{ij}$ are chosen such that $\hat{\bf
  S}_i \approx \mat{E} \big[ {\bf S}_i \big| {\bf Y}, \hat{\bf U}_1,
\hat{\bf U}_2 \big]$. To state the explicit form of coefficients
$\gamma_{ij}$, define for any rate pair $(R_1,R_2)$, where $R_i \geq
0$, the $3 \times 3$ matrix $\mat{K}(R_1,R_2)$ by
\begin{IEEEeqnarray}{rCl}\label{eq:si-def-K}
  \mat{K}(R_1,R_2) & \triangleq & \left( \begin{array}{c c c}
      \mat{k}_{11} & \mat{k}_{12} & \mat{k}_{13}\\
      \mat{k}_{12} & \mat{k}_{22} & \mat{k}_{23}\\
      \mat{k}_{13} & \mat{k}_{23} & \mat{k}_{33}
    \end{array} \right),
\end{IEEEeqnarray}
where
\begin{IEEEeqnarray*}{rCl}
  \mat{k}_{11} & = & \sigma^2 (1-2^{-2R_1})\\
  \mat{k}_{12} & = & \sigma^2 \rho (1-2^{-2R_1}) (1-2^{-2R_2})\\
  \mat{k}_{13} & = & (\alpha_1 + \beta_1 + \alpha_2 \rho) \mat{k}_{11}
  + \beta_2 \mat{k}_{12}\\
  \mat{k}_{22} & = & \sigma^2 (1-2^{-2R_2})\\
  \mat{k}_{23} & = & (\alpha_2 + \beta_2 + \alpha_1 \rho) \mat{k}_{22}
  + \beta_1 \mat{k}_{12}\\
  \mat{k}_{33} & = & \alpha_1^2 \sigma^2 + 2 \alpha_1 \beta_1
  \mat{k}_{11} + 2 \alpha_1 \alpha_2 \rho \sigma^2 + 2 \alpha_1
  \beta_2 \rho \mat{k}_{22} + \beta_1^2 \mat{k}_{11} + 2 \beta_1
  \alpha_2 \rho \mat{k}_{11}\\
  & & {} + 2 \beta_1 \beta_2 \mat{k}_{12} + 2 \alpha_2 \beta_2
  \mat{k}_{22} + \alpha_2^2 \sigma^2 + \beta_2^2 \mat{k}_{22} + N.
\end{IEEEeqnarray*}
The coefficients $\gamma_{ij}$ are then given by
\begin{IEEEeqnarray}{rCl}\label{eq:src-estmt-coeff}
  \left( \begin{array}{c}
      \gamma_{i1}\\
      \gamma_{i2}\\
      \gamma_{i3} 
    \end{array} \right) & \triangleq & \mat{K}^{-1}(R_1,R_2)
  \left( \begin{array}{c}
      \mat{c}_{i1}\\
      \mat{c}_{i2}\\
      \mat{c}_{i3}
    \end{array} \right) \qquad \qquad i \in \{ 1,2 \},
\end{IEEEeqnarray}
where
\begin{IEEEeqnarray*}{rCl}
  \mat{c}_{11} & = & \mat{k}_{11}\\
  \mat{c}_{12} & = & \rho \mat{k}_{22}\\
  \mat{c}_{13} & = & (\alpha_1 + \alpha_2 \rho) \sigma^2 + \beta_1
  \mat{k}_{11} + \beta_2 \rho \mat{k}_{22}\\
  \mat{c}_{21} & = & \rho \mat{k}_{11}\\
  \mat{c}_{22} & = & \mat{k}_{22}\\
  \mat{c}_{23} & = & (\alpha_2 + \alpha_1 \rho) \sigma^2 + \beta_1
  \rho \mat{k}_{11} + \beta_2 \mat{k}_{22}.
\end{IEEEeqnarray*}
The distortions achieved by the superimposed scheme are now given in
the following theorem.

\begin{thm}\label{thm:superimposed}
  The distortions achieved by the superposition approach are all
  pairs $(D_1,D_2)$ satisfying
  \begin{IEEEeqnarray*}{rCl}
    D_i & > & \sigma^2 - \gamma_{i1} \mat{c}_{i1} - \gamma_{i2}
    \mat{c}_{i2} - \gamma_{i3} \mat{c}_{i3} \qquad \qquad i \in \{ 1,2
    \}.
  \end{IEEEeqnarray*}
  where the rate-pair $(R_1,R_2)$ satisfies 
  \begin{IEEEeqnarray*}{rCl}
    R_1 & < & \frac{1}{2} \log_2 \left( \frac{\beta_1'^{2} \mat{k}_{11}
        (1-\tilde{\rho}^2) + N'}{N'(1-\tilde{\rho}^2)} \right)\\[2mm]
    R_2 & < & \frac{1}{2} \log_2 \left( \frac{\beta_2'^{2} \mat{k}_{22}
        (1-\tilde{\rho}^2) + N'}{N'(1-\tilde{\rho}^2)} \right)\\[2mm]
    R_1 + R_2 & < & \frac{1}{2} \log_2 \left( \frac{\beta_1'^{2} \mat{k}_{11}
        + \beta_2'^{2} \mat{k}_{22} + 2 \tilde{\rho} \beta_1' \beta_2'
          \sqrt{\mat{k}_{11} \mat{k}_{22}} + N'}{N'(1-\tilde{\rho}^2)} \right),
  \end{IEEEeqnarray*}
  for some $\alpha_1$, $\alpha_2$, $\beta_1$, and
  $\beta_2$ satisfying \eqref{eq:superimp:alpha-beta} and where
  \begin{IEEEeqnarray}{rCl}\label{eq:si-N'}
    N' & = & \alpha_1^2 \nu_1 + \alpha_2^2 \nu_2 + 2 \alpha_1
    \alpha_2 \nu_3 + N,
  \end{IEEEeqnarray}
  where
  \begin{IEEEeqnarray*}{rCl}
    \nu_1 & = & \sigma^2 - (1-a_1\tilde{\rho})^2 \mat{k}_{11}
    - 2 (1-a_1\tilde{\rho}) a_1 \mat{k}_{12} - a_1^2 \mat{k}_{22}\\[1mm]
    \nu_2 & = & \sigma^2 - (1-a_2\tilde{\rho})^2 \mat{k}_{22}
    - 2 (1-a_2\tilde{\rho}) a_2 \mat{k}_{12} - a_2^2 \mat{k}_{11}\\[1mm]
    \nu_3 & = & \rho \sigma^2 - \big( (1-a_1\tilde{\rho})
    (1-a_2\tilde{\rho}) + a_1a_2\big) \mat{k}_{12} - (1-a_1\tilde{\rho}) a_2 \mat{k}_{11}
    - (1-a_2\tilde{\rho}) a_1 \mat{k}_{22},
  \end{IEEEeqnarray*}
  with
  \begin{IEEEeqnarray}{rCl}
    \beta_1' & = & \alpha_1 (1-a_1\tilde{\rho}) + \beta_1 + \alpha_2
    a_2 \label{eq:si-beta1'}\\[2mm]
    \beta_2' & = & \alpha_2 (1-a_2\tilde{\rho}) + \beta_2 + \alpha_1
    a_1, \label{eq:si-beta2'}
  \end{IEEEeqnarray}
  and with
  \begin{IEEEeqnarray}{rCl}
    a_1 & = & \frac{\rho 2^{-2R_1} (1-2^{-2R_2})}{(1-2^{-2R_2}) -2
      \tilde{\rho}^2 \sqrt{(1-2^{-2R_1})(1-2^{-2R_2})} + \tilde{\rho}^2
      (1-2^{-2R_1})}, \label{eq:si-a1}\\[3mm]
    a_2 & = & \frac{\rho 2^{-2R_2} (1-2^{-2R_1})}{(1-2^{-2R_1}) -2
      \tilde{\rho}^2 \sqrt{(1-2^{-2R_1})(1-2^{-2R_2})} + \tilde{\rho}^2
      (1-2^{-2R_2})}. \label{eq:si-a2}
  \end{IEEEeqnarray}
\end{thm}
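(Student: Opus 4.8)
The plan is a joint source--channel random-coding argument that combines quantization (covering) at the two encoders, joint-typicality decoding of the two quantizer indices at the receiver, and a linear MMSE reconstruction; the rate region will emerge as the capacity region of an auxiliary Gaussian multiple-access channel with \emph{correlated} inputs. Concretely, for $i\in\{1,2\}$ I would generate $2^{nR_i}$ codewords $\mathbf{u}_i(m)$ i.i.d.\ $\sim\mathcal N\!\bigl(0,\sigma^2(1-2^{-2R_i})\bigr)$ --- the output marginal of the backward Gaussian test channel achieving source distortion $\sigma^2 2^{-2R_i}$ --- and let Encoder~$i$ pick $\mathbf{U}_i^\ast=\mathbf{u}_i(m_i^\ast)$ jointly typical with its block $\mathbf{S}_i$; this succeeds with high probability by the covering lemma, up to an arbitrarily small rate slack that is absorbed by the strict inequalities in the theorem. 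Encoder~$i$ then transmits $\mathbf{X}_i=\alpha_i\mathbf{S}_i+\beta_i\mathbf{U}_i^\ast$. Since under the test channel $\mathrm{Cov}(S_i,U_i^\ast)=\mathrm{Var}(U_i^\ast)=\sigma^2(1-2^{-2R_i})$, the empirical power of $\mathbf{X}_i$ concentrates at $\alpha_i^2\sigma^2 2^{-2R_i}+(\alpha_i+\beta_i)^2\sigma^2(1-2^{-2R_i})$; equating this to $P_i$ yields precisely the expression for $\beta_i$ and the admissible range of $\alpha_i$ in \eqref{eq:superimp:alpha-beta}.

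Because $\mathbf{U}_i^\ast$ is a function of $\mathbf{S}_i$ alone we have the Markov chain $U_1^\ast-S_1-S_2-U_2^\ast$, so by the Markov lemma the tuple $(\mathbf{U}_1^\ast,\mathbf{U}_2^\ast,\mathbf{S}_1,\mathbf{S}_2,\mathbf{Y})$ is jointly typical for the Gaussian law under which, in particular, $(\mathbf{U}_1^\ast,\mathbf{U}_2^\ast,\mathbf{Y})$ has covariance matrix $\mathbf{K}(R_1,R_2)$ of \eqref{eq:si-def-K}, with $\mathrm{Cov}(U_1^\ast,U_2^\ast)=\tilde\rho\sqrt{\mathbf{k}_{11}\mathbf{k}_{22}}$ and $\mathrm{Cov}\!\bigl(\mathbf{S}_i,(\mathbf{U}_1^\ast,\mathbf{U}_2^\ast,\mathbf{Y})\bigr)=(\mathbf{c}_{i1},\mathbf{c}_{i2},\mathbf{c}_{i3})$. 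The decoder declares the unique $(\hat m_1,\hat m_2)$ consistent with $\mathbf{Y}$ under this joint law. Its error probability is controlled, exactly as for a Gaussian MAC with correlated inputs, by three events --- $\hat m_1$ wrong only, $\hat m_2$ wrong only, and both wrong --- whose spurious-typicality probabilities decay like $2^{-nI(U_1^\ast;U_2^\ast,Y)}$, $2^{-nI(U_2^\ast;U_1^\ast,Y)}$, and $2^{-n[I(U_1^\ast;U_2^\ast)+I(U_1^\ast,U_2^\ast;Y)]}$ respectively. Evaluating these informations amounts to rewriting $\mathbf{Y}$, split into its projection onto the span of $(\mathbf{U}_1^\ast,\mathbf{U}_2^\ast)$ plus an orthogonal residual, in the form $\beta_1'\mathbf{U}_1^\ast+\beta_2'\mathbf{U}_2^\ast$ plus noise of variance $N'$ independent of $(\mathbf{U}_1^\ast,\mathbf{U}_2^\ast)$; carrying out this decomposition --- which folds the uncoded contributions $\alpha_i\mathbf{S}_i$ into an effective signal and an effective noise --- is what gives rise to the auxiliary constants $a_1,a_2$ of \eqref{eq:si-a1}--\eqref{eq:si-a2}, the effective gains $\beta_1',\beta_2'$ of \eqref{eq:si-beta1'}--\eqref{eq:si-beta2'}, and the effective noise variance $N'$ of \eqref{eq:si-N'}. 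With these substitutions the three conditions become exactly the rate inequalities in the statement.

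On the event of correct decoding, whose probability tends to $1$, we have $\hat{\mathbf{U}}_i=\mathbf{U}_i^\ast$, and the reconstructor $\hat{\mathbf{S}}_i=\gamma_{i1}\hat{\mathbf{U}}_1+\gamma_{i2}\hat{\mathbf{U}}_2+\gamma_{i3}\mathbf{Y}$ with $(\gamma_{i1},\gamma_{i2},\gamma_{i3})^{\!\top}=\mathbf{K}^{-1}(\mathbf{c}_{i1},\mathbf{c}_{i2},\mathbf{c}_{i3})^{\!\top}$ is the linear MMSE estimate of $\mathbf{S}_i$ from $(\mathbf{U}_1^\ast,\mathbf{U}_2^\ast,\mathbf{Y})$; by the orthogonality principle its per-symbol mean-squared error concentrates at $\sigma^2-\gamma_{i1}\mathbf{c}_{i1}-\gamma_{i2}\mathbf{c}_{i2}-\gamma_{i3}\mathbf{c}_{i3}$. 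On the complementary, vanishing-probability event the distortion is at most a fixed constant, since $\hat{\mathbf{S}}_i$ is a bounded linear form in typical sequences, so this event does not influence the limit superior of the average distortion. Hence every pair $(D_1,D_2)$ strictly dominating the above values is achievable, which is the assertion of the theorem.

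I expect the bulk of the effort to lie in two places. First, making the joint-typicality argument rigorous for a \emph{continuous} source: one must run a two-encoder covering step and a two-index packing step simultaneously and invoke the Markov lemma to guarantee that two \emph{independently} selected quantizer codewords carry the prescribed correlation $\tilde\rho$ while remaining jointly typical with $\mathbf{Y}$ under $\mathbf{K}$, and one must control the reconstruction error on atypical channel outputs. Second, the (lengthy but mechanical) linear algebra that verifies the decomposition of $\mathbf{Y}$ behind the definitions of $a_i$, $\beta_i'$, $\nu_i$, and $N'$, and that confirms $\mathbf{K}^{-1}(\mathbf{c}_{i1},\mathbf{c}_{i2},\mathbf{c}_{i3})^{\!\top}$ yields the claimed distortion.
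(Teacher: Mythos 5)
Your overall architecture coincides with the paper's: you decompose the channel output as ${\bf Y}=\beta_1'{\bf U}_1^{\ast}+\beta_2'{\bf U}_2^{\ast}+{\bf Z}'$ by writing each ${\bf S}_i$ as its projection onto $({\bf U}_1^{\ast},{\bf U}_2^{\ast})$ plus a residual (this is exactly where $a_1,a_2,\beta_i'$ and $N'$ come from), you recover the three rate constraints as $I(U_1;U_2,Y)$, $I(U_2;U_1,Y)$ and $I(U_1;U_2)+I(U_1,U_2;Y)$ for the resulting correlated-input Gaussian MAC, and you obtain the distortion as the LMMSE error $\sigma^2-\gamma_{i1}\mat{c}_{i1}-\gamma_{i2}\mat{c}_{i2}-\gamma_{i3}\mat{c}_{i3}$; the power-constraint bookkeeping matching \eqref{eq:superimp:alpha-beta} is also correct. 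The genuine gap is in the step you dispose of with ``exactly as for a Gaussian MAC with correlated inputs'': the conventional covering-plus-packing analysis does not apply to this scheme, and the paper states this explicitly. The transmitted codewords are selected as a function of the \emph{entire} codebooks and of the correlated sources, so conditional on the transmitted pair $({\bf U}_1^{\ast},{\bf U}_2^{\ast})$ the competing codewords are not independent of the selection event, the two selected indices are dependent given the codebooks, and the effective noise ${\bf Z}'=\alpha_1{\bf W}_1+\alpha_2{\bf W}_2+{\bf Z}$ is itself built from quantization residuals that are only approximately orthogonal to, and not independent of, the codewords. The Markov lemma you invoke only guarantees that the \emph{selected} pair carries correlation $\tilde{\rho}$ and is typical with the sources; it does nothing for the packing step, which is where the dependence bites. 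The paper circumvents this with a geometric argument: it bounds the conditional density of every non-selected codeword by twice the uniform density on the codeword sphere (Lemma \ref{lm:density-wrong-codeword}), uses sphere-cap area estimates, and verifies the concentration properties \eqref{eq:noise-norm} and \eqref{eq:indep-noise} of ${\bf Z}'$ so that the vector-quantizer error analysis (Lemma \ref{lm:vq-Pr-E4}) can be carried over to the superimposed scheme. Your proposal neither supplies such a substitute nor identifies why the standard packing lemma fails, so as written the decoding-error bound — and hence the rate region — is unsupported.

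A secondary, more routine point: the achievability criterion is an \emph{expected} distortion, and on the decoding-error event the reconstruction contains $\gamma_{i3}{\bf Y}$ (and, with your i.i.d.\ Gaussian codebooks, codewords of unbounded norm), so ``the distortion is at most a fixed constant'' is not literally true; one must bound the conditional second moments on the error and atypical-source events and show their contribution vanishes, which is what the paper's genie-aided comparison (Proposition \ref{prp:si-D1-eql-genie} and the lemmas feeding it) does via Cauchy--Schwarz and conditional-expectation estimates. This part is fixable with standard effort, but it should not be waved away, whereas the packing issue above is the missing idea.
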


\begin{proof}
  See Appendix \ref{appx:prf-thm-supimp}.
\end{proof}

In the symmetric case where $P_1 = P_2 = P$, $R_1 = R_2 = R$ and where
$\alpha_1 = \alpha_2 = \alpha$ and $\beta_1 = \beta_2 = \beta$, the
matrix $\mat{K}(R,R)$ and the coefficients $\gamma_{ij}$ reduce to
\begin{IEEEeqnarray*}{rCl}
  \mat{K}(R,R) & = & \left( \begin{array}{c c c}
      \mat{k}_1 & \mat{k}_2 & \mat{k}_3\\
      \mat{k}_2 & \mat{k}_1 & \mat{k}_3\\
      \mat{k}_3 & \mat{k}_3 & \mat{k}_4
    \end{array} \right) \qquad \text{where} \quad \begin{array}{l}
    \mat{k}_1 = \sigma^2 (1-2^{-2R})\\[1mm]
    \mat{k}_2 = \sigma^2 \rho (1-2^{-2R})^2\\[1mm]
    \mat{k}_3 = (\alpha + \beta + \alpha \rho) \mat{k}_1
    + \beta \mat{k}_2\\[1mm]
    \mat{k}_4 = 2 \alpha \mat{c}_3 + 2 \beta \mat{k}_3 + N,
  \end{array}
\end{IEEEeqnarray*}
and
\begin{IEEEeqnarray*}{rCl}
  \left( \begin{array}{c}
      \gamma_1\\
      \gamma_2\\
      \gamma_3 
    \end{array} \right) & \triangleq & \mat{K}^{-1}(R,R)
  \left( \begin{array}{c}
      \mat{c}_{1}\\
      \mat{c}_{2}\\
      \mat{c}_{3}
    \end{array} \right) \qquad \text{where} \quad \begin{array}{l}
    \mat{c}_{1} = \mat{k}_{1}\\
    \mat{c}_{2} = \rho \mat{k}_{1}\\
    \mat{c}_{3} = ( \alpha \sigma^2 + \beta \mat{k}_{1}) (1 +
    \rho).
  \end{array}
\end{IEEEeqnarray*}
Thus, in the symmetric case Theorem \ref{thm:superimposed} simplifies
as follows.
\begin{cor}\label{cor:superimposed-sym}
  With the superposition approach in the symmetric case we can achieve
  the distortion
  \begin{IEEEeqnarray*}{C}
    \inf \sigma^2 - \gamma_1 \mat{c}_1 - \gamma_2 \mat{c}_2 -
    \gamma_3 \mat{c}_3,
  \end{IEEEeqnarray*}
  where the infimum is over all rates $R$ satisfying
  \begin{IEEEeqnarray*}{rCl}
    R & < & \frac{1}{4} \log_2 \left( \frac{2 \beta'^{2}
        \mat{k}_{1} (1 + \tilde{\rho}) + N'}{N'(1-\tilde{\rho}^2)}
    \right),
  \end{IEEEeqnarray*}
  for some $\alpha$ and $\beta$ satisfying
  \begin{IEEEeqnarray}{rCl}
    \alpha \in \left[ 0, \frac{P}{\sigma^2} \right] & \qquad
    \text{and} \qquad & \beta = \sqrt{\frac{P - \alpha^2 \sigma^2
        2^{-2R}}{\sigma^2(1-2^{-2R})}} - \alpha, 
  \end{IEEEeqnarray}
  and where
  \begin{IEEEeqnarray*}{rCl}
    \beta' = \alpha \left(1 + \frac{\rho
        2^{-2R}}{1-\tilde{\rho}^2}(1-\tilde{\rho}) \right) + \beta,
  \end{IEEEeqnarray*}
  and
  \begin{IEEEeqnarray*}{rCl}
    N' & = & 2 \alpha^2 (\nu_1 + \nu_3) + N,
  \end{IEEEeqnarray*}
  with
  \begin{IEEEeqnarray*}{rCl}
    \nu_1 = \sigma^2 2^{-2R} \frac{1-\rho
      \tilde{\rho}}{1-\tilde{\rho}^2} & \hspace{20mm} & \nu_3 =
    \sigma^2 \rho
    \frac{2^{-4R}}{1-\tilde{\rho}^2}.
  \end{IEEEeqnarray*}
\end{cor}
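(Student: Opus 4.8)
The plan is to derive Corollary~\ref{cor:superimposed-sym} purely as a specialization of Theorem~\ref{thm:superimposed}: set $P_1=P_2=P$, $R_1=R_2=R$, $\alpha_1=\alpha_2=\alpha$ and $\beta_1=\beta_2=\beta$, and check that every quantity appearing in the statement of Theorem~\ref{thm:superimposed} collapses to the form claimed in the corollary. No new coding argument is needed; the content is entirely the verification of a handful of algebraic identities.

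The computation is organized around two identities forced by the symmetry. First, from \eqref{eq:vq-tilde-rho}, in the symmetric case $\tilde\rho=\rho\sqrt{(1-2^{-2R})^2}=\rho(1-2^{-2R})$, which in particular gives $\mat{k}_{12}=\tilde\rho\,\mat{k}_{11}$ with $\mat{k}_1:=\mat{k}_{11}=\mat{k}_{22}$. Second, from \eqref{eq:si-a1}--\eqref{eq:si-a2}, the common denominator there becomes $(1-2^{-2R})(1-\tilde\rho^2)$, so that $a_1=a_2=:a$ with $a(1-\tilde\rho^2)=\rho2^{-2R}$. With these in hand, I would run through the list: the entries of $\mat{K}(R_1,R_2)$ in \eqref{eq:si-def-K} reduce immediately to the $\mat{k}_1,\mat{k}_2,\mat{k}_3$ of the corollary, and for $\mat{k}_4=\mat{k}_{33}$ one groups the ten summands into $2\alpha^2\sigma^2(1+\rho)+4\alpha\beta\mat{k}_1(1+\rho)+2\beta^2(\mat{k}_1+\mat{k}_2)+N$ and checks this equals $2\alpha\mat{c}_3+2\beta\mat{k}_3+N$. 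Likewise $(\mat{c}_{i1},\mat{c}_{i2},\mat{c}_{i3})$ reduces to $(\mat{k}_1,\rho\mat{k}_1,(\alpha\sigma^2+\beta\mat{k}_1)(1+\rho))$ for $i=1$ and to the same triple with the first two entries swapped for $i=2$. Since $\mat{K}(R,R)$ is invariant under simultaneously interchanging its first two rows and its first two columns, solving \eqref{eq:src-estmt-coeff} for $i=2$ gives the $i=1$ solution with $\gamma_1,\gamma_2$ swapped; hence $\sum_j\gamma_{1j}\mat{c}_{1j}=\sum_j\gamma_{2j}\mat{c}_{2j}$, the two distortion bounds of Theorem~\ref{thm:superimposed} coincide, and the achievable symmetric distortion is the infimum of the single expression $\sigma^2-\gamma_1\mat{c}_1-\gamma_2\mat{c}_2-\gamma_3\mat{c}_3$.

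Next I would handle $\nu_1,\nu_2,\nu_3$, $\beta_1',\beta_2'$ and $N'$. By symmetry $\nu_1=\nu_2$ and $\beta_1'=\beta_2'=:\beta'$; from \eqref{eq:si-beta1'}, $\beta'=\alpha(1+a(1-\tilde\rho))+\beta$, and $a(1-\tilde\rho^2)=\rho2^{-2R}$ turns this into the stated $\beta'=\alpha\bigl(1+\frac{\rho2^{-2R}}{1-\tilde\rho^2}(1-\tilde\rho)\bigr)+\beta$. For $\nu_1$, using $\mat{k}_{12}=\tilde\rho\mat{k}_{11}$ gives $(1-a\tilde\rho)^2\mat{k}_{11}+2(1-a\tilde\rho)a\mat{k}_{12}+a^2\mat{k}_{22}=\mat{k}_{11}\bigl[(1-a\tilde\rho)^2+2(1-a\tilde\rho)a\tilde\rho+a^2\bigr]=\mat{k}_{11}\bigl[1+a^2(1-\tilde\rho^2)\bigr]$, so $\nu_1=\sigma^2-\sigma^2(1-2^{-2R})\bigl[1+a^2(1-\tilde\rho^2)\bigr]$; then $a^2(1-\tilde\rho^2)=\rho^22^{-4R}/(1-\tilde\rho^2)$ together with $\rho^2(1-2^{-2R})=\rho\tilde\rho$ and $\tilde\rho+\rho2^{-2R}=\rho$ collapses this to $\sigma^22^{-2R}\frac{1-\rho\tilde\rho}{1-\tilde\rho^2}$, and an entirely parallel manipulation of $\nu_3$ yields $\sigma^2\rho\frac{2^{-4R}}{1-\tilde\rho^2}$. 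Feeding $\nu_1=\nu_2$ and $\alpha_1=\alpha_2=\alpha$ into \eqref{eq:si-N'} gives $N'=2\alpha^2(\nu_1+\nu_3)+N$, and \eqref{eq:superimp:alpha-beta} becomes the stated range for $\alpha$ and the stated formula for $\beta$.

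Finally, the three rate constraints of Theorem~\ref{thm:superimposed} collapse: the first two become the single constraint $R<\frac{1}{2}\log_2\frac{(\beta')^2\mat{k}_1(1-\tilde\rho^2)+N'}{N'(1-\tilde\rho^2)}$, while the sum-rate constraint becomes $2R<\frac{1}{2}\log_2\frac{2(\beta')^2\mat{k}_1(1+\tilde\rho)+N'}{N'(1-\tilde\rho^2)}$, i.e.\ $R<\frac{1}{4}\log_2\frac{2(\beta')^2\mat{k}_1(1+\tilde\rho)+N'}{N'(1-\tilde\rho^2)}$, which is exactly the constraint stated in the corollary. It remains to see that this last one is the binding one; writing $t:=(\beta')^2\mat{k}_1$, this amounts to $\bigl[2t(1+\tilde\rho)+N'\bigr]N'(1-\tilde\rho^2)\le\bigl[t(1-\tilde\rho^2)+N'\bigr]^2$, and the difference of the two sides is the perfect square $\bigl(t(1-\tilde\rho^2)-\tilde\rho N'\bigr)^2\ge0$. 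The only step with any real content is the collapse of $\nu_1$ and $\nu_3$, where one must repeatedly invoke $\tilde\rho=\rho(1-2^{-2R})$, $\mat{k}_{12}=\tilde\rho\mat{k}_{11}$ and $a(1-\tilde\rho^2)=\rho2^{-2R}$; everything else is bookkeeping.
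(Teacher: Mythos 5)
Your proposal is correct and follows essentially the same route as the paper, which obtains the corollary simply by specializing Theorem~\ref{thm:superimposed} to $P_1=P_2=P$, $R_1=R_2=R$, $\alpha_1=\alpha_2=\alpha$, $\beta_1=\beta_2=\beta$ and carrying out exactly the algebraic reductions you describe (in particular $\tilde{\rho}=\rho(1-2^{-2R})$, $a_1=a_2$ with $a(1-\tilde{\rho}^2)=\rho 2^{-2R}$, and the collapse of $\mat{K}$, the $\mat{c}_{ij}$, $\nu_1$, $\nu_3$, $\beta'$ and $N'$). Your explicit perfect-square argument showing that the sum-rate constraint dominates the individual-rate constraints is a detail the paper leaves implicit, and it is a welcome addition rather than a deviation.
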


To conclude our main results we have illustrated in Figure
\ref{fig:nofb-all} all presented upper and lower bounds on
$D^{\ast}(\sigma^2,\rho,P,N)$.

\begin{figure}[h]
  \centering
  \psfrag{ds}[cc][cc]{$D/\sigma^2$}
  \psfrag{pn}[cc][cc]{$P/N$}
  \psfrag{r}[cc][cc]{$\rho = 0.5$}
  \psfrag{sb}[cc][cc]{source-channel separation}
  \psfrag{vq}[cc][cc]{vector-quantizer}
  \psfrag{uc}[cc][cc]{uncoded}
  \psfrag{si}[cc][cc]{super-imposed}
  \psfrag{lb}[cc][cc]{lower bound}
  \psfrag{0}[cc][cc]{\footnotesize $0$}
  \psfrag{0.2}[cc][cc]{\footnotesize $0.2$}
  \psfrag{0.3}[cc][cc]{\footnotesize $0.3$}
  \psfrag{0.4}[cc][cc]{\footnotesize $0.4$}
  \psfrag{0.5}[cc][cc]{\footnotesize $0.5$}
  \psfrag{0.6}[cc][cc]{\footnotesize $0.6$}
  \psfrag{0.7}[cc][cc]{\footnotesize $0.7$}
  \psfrag{0.8}[cc][cc]{\footnotesize $0.8$}
  \psfrag{0.9}[cc][cc]{\footnotesize $0.9$}
  \psfrag{1}[cc][cc]{\footnotesize $1$}
  \psfrag{1.5}[cc][cc]{\footnotesize $1.5$}
  \psfrag{2}[cc][cc]{\footnotesize $2$}
  \psfrag{2.5}[cc][cc]{\footnotesize $2.5$}
  \psfrag{3}[cc][cc]{\footnotesize $3$}
  \psfrag{4}[cc][cc]{\footnotesize $4$}
  \epsfig{file=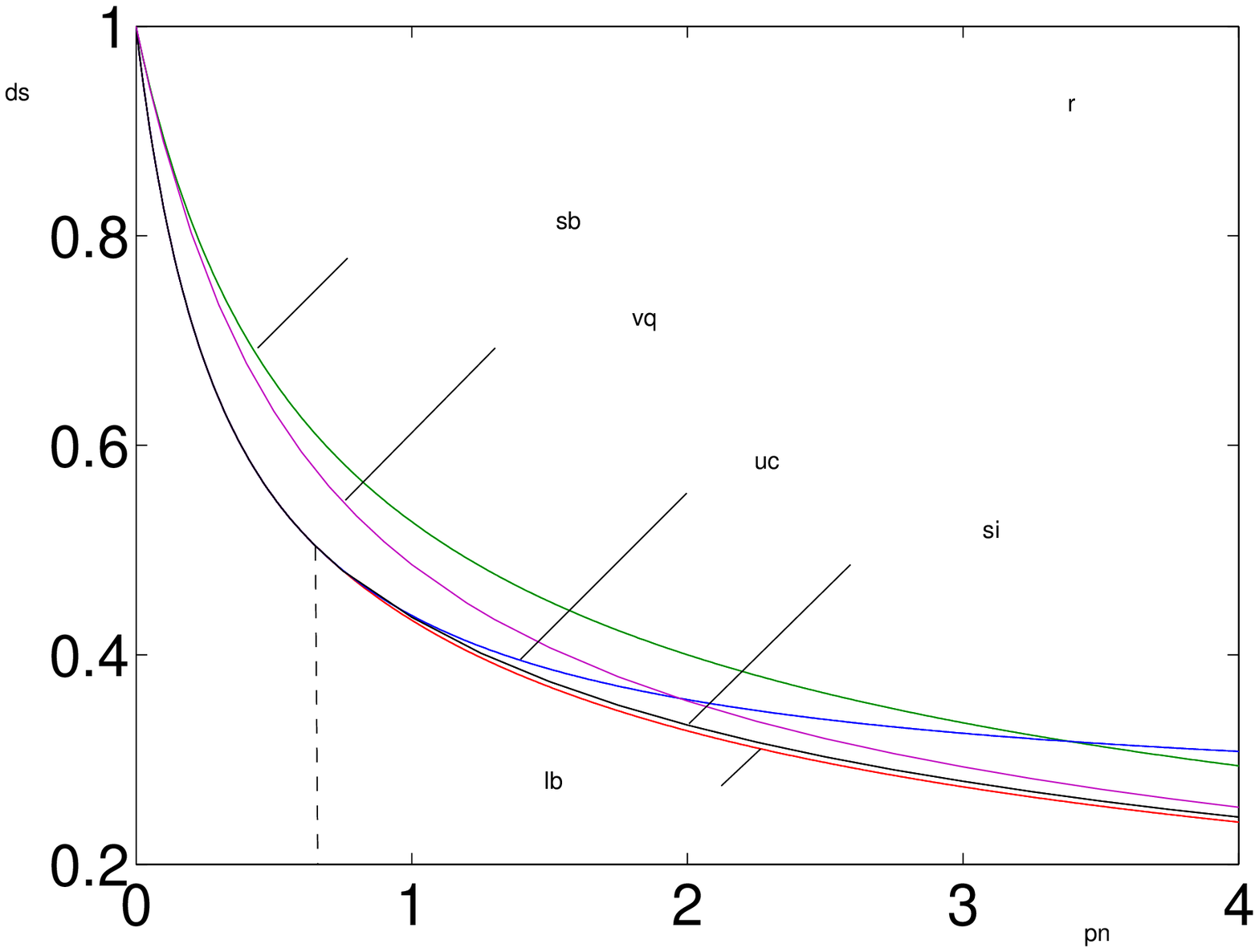, width=0.8\textwidth}
  \caption{Upper and lower bounds on $D^{\ast}(\sigma^2,\rho,P,N)$ for
    a source of correlation coefficient $\rho = 0.5$.}
  \label{fig:nofb-all}
\end{figure}

\section{Summary}

We studied the power versus distortion trade-off for the distributed
transmission of a memoryless bi-variate Gaussian source over a
two-to-one average-power limited Gaussian multiple-access channel. In
this problem, each of two separate transmitters observes a different
component of a memoryless bi-variate Gaussian source. The two
transmitters then describe their source component to a common
receiver via a Gaussian multiple-access channel with average-power
constraints on each channel input sequences. From the output of the
multiple-access channel, the receiver wishes to reconstruct each
source component with the least possible expected squared-error
distortion. Our interest was in characterizing the distortion pairs
that are simultaneously achievable on the two source components. These
pairs are a function of the power constraints and the variance of the
additive channel noise, as well as of the source variance and of the
correlation coefficient between the two source components.

We first considered a different (non-distributed) problem, which was
the point-to-point analog of our multiple-access problem. More
precisely, we studied the power versus distortion trade-off for the
transmission of a memoryless bi-variate Gaussian source over the AWGN
channel, subject to expected squared-error distortion on each source
component. For this problem, we determined the set of achievable
distortion pairs by deriving the explicit expression for the
rate-distortion function of a memoryless bi-variate Gaussian
source. Moreover, we showed that below a certain SNR-threshold an
uncoded transmission scheme is optimal.


For the multiple-access problem we then derived:\\

\begin{itemize}
\item A necessary condition for the achievability of a distortion pair
  (Theorem~\ref{thm:nofb-nec-cond}). This condition was obtained by
  reducing the multiple-access problem to a point-to-point
  problem. The key step was to upper bound the maximal correlation
  between two simultaneous channel inputs by using a result from
  maximum correlation theory.\\
\item The optimality of an uncoded transmission scheme below a certain
  SNR-threshold (Theorem~\ref{thm:uncoded-main}). In the symmetric
  case, this result becomes (Corollary~\ref{cor:uncoded-optimal})
  \begin{IEEEeqnarray*}{rCl}
    D^{\ast}(\sigma^2,\rho,P,N) = \sigma^2
    \frac{P(1-\rho^2)+N}{2P(1+\rho)+N}, & \hspace{15mm} \text{for all}
    \hspace{3mm} \frac{P}{N} \leq \frac{\rho}{1-\rho^2}.
  \end{IEEEeqnarray*}
  The strength of the underlying uncoded scheme is that it translates
  the entire source correlation onto the channel inputs, and thereby
  boosts the received power of the transmitted signal pair. Its
  weakness is that it allows the receiver to recover only the sum of
  the channel inputs.\\
\item A sufficient condition based on a ``vector-quantizer'' scheme
  (Theorem~\ref{thm:vq-achv}). The motivation behind this scheme was
  to overcome the weakness of the uncoded scheme. To this end, rather
  than transmitting the source components in an uncoded manner, the
  scheme transmits a scaled version of the optimally vector-quantized
  source components (without channel coding).\\ 
\item The precise high-SNR asymptotics of an optimal transmission
  scheme, which in the symmetric case are given by (Corollary
  \ref{cor:noFB-high-SNR})
  \begin{IEEEeqnarray*}{rCl}
    \lim_{P/N \rightarrow \infty} \sqrt{\frac{P}{N}}
    D^{\ast}(\sigma^2,\rho,P,N) & = & \sigma^2
    \sqrt{\frac{1-\rho}{2}}.
  \end{IEEEeqnarray*}
  This result follows from the ``vector-quantizer'' scheme
  (Theorem~\ref{thm:vq-achv}) and the necessary condition of Theorem
  \ref{thm:nofb-nec-cond}.\\ 
\item The suboptimality, in the symmetric case, of source-channel
  separation at all SNRs. This follows from comparing the best
  separation-based approach (Corollary~\ref{cor:noFB-sep-based}) with
  the achievable distortions from the ``vector-quantizer'' scheme
  (Corollary~\ref{cor:nofb-VQ-sym}).\\
\item A sufficient condition based on a superposition of the uncoded
  scheme and the vector-quantizer scheme
  (Theorem~\ref{thm:superimposed}). In the symmetric case this
  superposition approach was shown to be optimal or close to optimal
  at all SNRs.\\
\end{itemize}

The presented sufficient conditions indicate that for the efficient
exploitation of the source correlation it is necessary not only to
exploit the source correlation in a logical way, e.g.~by
Slepian-Wolf-like strategies, but to additionally exploit the source
correlation in a physical way. In the considered schemes, this is done
by translating the source correlation onto the channel inputs. The
logical exploitation of the source correlation is then performed at
the receiver-side, e.g.~by joint-typicality decoding taking into
consideration the correlation between the channel inputs.


\newpage

\appendix

\section{Proof of Theorem
  \ref{thm:rd1d2-main}} \label{appx:prf-thm-RD1D2}

Theorem \ref{thm:rd1d2-main} gives the expression of the
rate-distortion function $R_{S_1,S_2}(D_1,D_2)$. A single-letter
expression of this function, in the form of an optimization problem,
follows from \cite[Theorem 2, p.~856]{elgamal_cover82} and is
\begin{equation}\label{eq:problem}
  R_{S_1,S_2}(D_1,D_2) = \min_{\substack{P_{\widehat{S}_1,
        \widehat{S}_2 | S_1,S_2}:\\
      \E{(S_1 - \widehat{S}_1)^2} \leq D_1 \\
      \E{(S_2 - \widehat{S}_2)^2} \leq D_2}} I(S_1,S_2 ;
  \widehat{S}_1, \widehat{S}_2 ).
\end{equation}
Thus, to prove Theorem \ref{thm:rd1d2-main} it remains to solve
\eqref{eq:problem} for all distortion pairs $(D_1,D_2) \in
(0,\sigma^2] \times (0,\sigma^2]$. One way of doing this was presented
in \cite{xiao-luo05}. Here, we present an alternative approach. The
derivation is split in two parts. First we derive
$R_{S_1,S_2}(D_1,D_2)$ for $(D_1,D_2) \in \mathscr{D}_1$, and then for
$(D_1,D_2) \in (0,\sigma^2] \times (0,\sigma^2] \setminus
\mathscr{D}_1$. Before starting with the derivations, we remark:

\begin{rmk}
  The restriction to $\E{S_1^2} = \E{S_2^2} = \sigma^2$ and $\rho \in
  [0,1]$ incurs no loss in generality in the evaluation of
  $R_{S_1,S_2}(D_1,D_2)$ since the distortion region scales linearly
  with the source variance, and since the distortion region is the
  same for correlation coefficients $-\rho$ and $\rho$.
\end{rmk}

\subsection{$R_{S_1,S_2}(D_1,D_2)$ for the Region $\mathscr{D}_1$}\label{sec:prf-regD3}

For pairs $(D_1,D_2)$ in $\mathscr{D}_1$ the evaluation of
$R_{S_1,S_2}(D_1,D_2)$ is very simple. The pairs $(D_1,D_2) \in
\mathscr{D}_1$ are pairs where the larger of the two distortions can
be achieved from the description of the other source component without
any additional information simply by exploiting the correlation
between $S_1$ and $S_2$. For example, since $S_2 = (S_2 -\rho S_1) +
\rho S_1$, from an optimally vector-quantized version of $S_1$ that
yields a distortion $D_1$, a distortion $D_2 = \sigma^2 (1-\rho^2) +
\rho^2 D_1$ can be achieved on $S_2$ by picking the reconstruction
$\hat{S}_2$ of $S_2$ as a scaled version of the reconstruction
$\hat{S}_1$ of $S_1$. Thus, a necessary and sufficient condition to
achieve a distortion pair $(D_1,\sigma^2 (1-\rho^2) + \rho^2 D_1)$ is
\begin{displaymath}
  R = \frac{1}{2} \log_2^+ \left( \frac{\sigma^2}{D_1} \right).
\end{displaymath}
By symmetry, the above argument also works when the roles of $S_1$ and
$S_2$ reversed. Hence,
\begin{align}
  R_{S_1,S_2}(D_1,D_2) &= \max \left\{ R_{S_1}(D_1), R_{S_2}(D_2)
  \right\}\nonumber\\ 
  &= \frac{1}{2} \log_2^+ \left( \frac{\sigma^2}{D_{\textnormal{min}}}
  \right) \qquad \qquad \text{if } (D_1,D_2) \in
  \mathscr{D}_1.\label{eq:RD1D2-D3} 
\end{align}

\subsection{$R_{S_1,S_2}(D_1,D_2)$ for the Regions $\mathscr{D}_2$ and
  $\mathscr{D}_3$}\label{sec:prf-regD1D2}

To solve \eqref{eq:problem} for $(D_1,D_2) \in (0,\sigma^2] \times
(0,\sigma^2] \setminus \mathscr{D}_1$ we propose the following
algorithm. First, we scale the source component $S_2$ by a factor $c
\in \Reals^+$, where $\Reals^+$ is the set of positive reals numbers,
and then we decorrelate $(S_1,cS_2)$ to a pair $(V_1,V_2)$, by means of the
unitary linear transformation:
\begin{equation}\label{eq:decorrelation}
  \left( \begin{array}{c}
      V_1\\
      V_2
  \end{array} \right)
= 
\left( \begin{array}{c c}
      a_1 & -a_2\\
      a_2 & a_1
  \end{array} \right)
\left( \begin{array}{c c}
      1 & 0\\
      0 & c
  \end{array} \right)
\left( \begin{array}{c}
      S_1\\
      S_2
  \end{array} \right),
\end{equation}
where $\trans{(a_1,a_2)}$, $\trans{(-a_2,a_1)}$ are normalized
eigenvectors of the covariance matrix of $(S_1,cS_2)$. Finally, we
apply reverse waterfilling with rate $R$ on $(V_1,V_2)$, as described
in \cite[pp.~347]{cover-thomas91}. The resulting reconstruction of
$(V_1,V_2)$ is denoted by $(\hat{V}_1,\hat{V}_2)$. From
$(\hat{V}_1,\hat{V}_2)$ the reconstruction $(\hat{S}_1,\hat{S}_2)$ of
$(S_1,S_2)$ is then obtained by
\begin{equation}\label{eq:recorrelation}
  \left( \begin{array}{c}
      \hat{S}_1\\
      \hat{S}_2
    \end{array} \right)
  = \mat{B}
  \left( \begin{array}{c}
      \hat{V}_1\\
      \hat{V}_2
    \end{array} \right),
\end{equation}
where the matrix $\mat{B}$ is the reverse transformation
of \eqref{eq:decorrelation}, i.e.
\begin{equation}
  \mat{B} = 
  \left( \begin{array}{c c}
      1 & 0\\
      0 & \frac{1}{c}
    \end{array} \right)
  \left( \begin{array}{c c}
      a_1 & a_2\\
      -a_2 & a_1
    \end{array} \right).
\end{equation}

The probability law of $(S_1,S_2,\hat{S}_1,\hat{S}_2)$ that results
from the algorithm is a solution to the optimization problem in
\eqref{eq:problem}. This is stated more precisely in the following
theorem.

\vspace{5mm}

\begin{thm}\label{thm:basic-method}
  \hfill
  \renewcommand{\labelenumi}{\roman{enumi})}
  \begin{enumerate}
  \item For every distortion pair $(D_1,D_2) \in (0,\sigma^2] \times
    (0,\sigma^2] \setminus \mathscr{D}_1$ there exists some rate
    $R^{\ast} \in \Reals^+$ and some scaling coefficient $c^{\ast} \in
    \Reals^+$ such that the distortion pair resulting from the above
    algorithm is $(D_1,D_2)$.
  \item The probability law of $(S_1,S_2,\hat{S}_1,\hat{S}_2)$
    resulting from this algorithm is a solution to the optimization
    problem in \eqref{eq:problem}.
  \item For every $(D_1,D_2) \in (0,\sigma^2] \times (0,\sigma^2]
    \setminus \mathscr{D}_1$
    \begin{equation}\label{eq:method-main}
      R_{S_1,S_2}(D_1,D_2) = \frac{1}{2} \log_2 \left(
        \frac{\det{\cov{S}}}{\det{\left( \cov{S} - \cov{\hat{S}} \right)}} \right),
    \end{equation}
    where $\cov{\hat{S}}$ is the covariance matrix of the random vector
    \eqref{eq:recorrelation} corresponding to the above $R^{\ast}$ and
    $c^{\ast}$.
  \end{enumerate}
\end{thm}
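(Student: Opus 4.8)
The plan is to reduce iii) to ii) at the outset and then prove i) — that the algorithm realizes every target distortion pair — and ii) — that the resulting law is optimal in \eqref{eq:problem}. First note that the algorithm produces a \emph{jointly Gaussian} tuple $(S_1,S_2,\hat{S}_1,\hat{S}_2)$: each scalar Gaussian test channel obtained by reverse waterfilling is realized in ``backward'' form $V_i=\hat{V}_i+W_i$ with $W_i$ independent of $\hat{V}_i$, and the transformations \eqref{eq:decorrelation}, \eqref{eq:recorrelation} are linear, so $(S_1,S_2)-(\hat{S}_1,\hat{S}_2)$ is independent of $(\hat{S}_1,\hat{S}_2)$ and hence $(\hat{S}_1,\hat{S}_2)=\E{S_1,S_2\mid\hat{S}_1,\hat{S}_2}$. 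Granting ii), this at once yields iii): $\cov{\hat{S}}$ is the covariance of the MMSE estimate, $\cov{S}-\cov{\hat{S}}$ is the MMSE error covariance, and for jointly Gaussian vectors $R_{S_1,S_2}(D_1,D_2)=I(S_1,S_2;\hat{S}_1,\hat{S}_2)=h(S_1,S_2)-h(S_1,S_2\mid\hat{S}_1,\hat{S}_2)=\frac{1}{2}\log_2(\det\cov{S}/\det(\cov{S}-\cov{\hat{S}}))$.

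For i) I would carry the algorithm out explicitly. The covariance of $(S_1,cS_2)$ has eigenvalues $\lambda_1(c)\ge\lambda_2(c)$ with closed-form eigenvectors; reverse waterfilling at rate $R$ picks a water level $\theta$ and replaces each $\lambda_i$ by $\min\{\theta,\lambda_i\}$, so $\cov{\hat{V}}=\mathrm{diag}(\lambda_i-\min\{\theta,\lambda_i\})$ and $\cov{\hat{S}}=\mat{B}\cov{\hat{V}}\trans{\mat{B}}$; this gives the distortion pair $(D_1(c,\theta),D_2(c,\theta))$ as an explicit rational function of $(c,\theta)$. Two regimes occur: $\theta\le\lambda_2(c)$, where both eigendirections are quantized and $\cov{\hat{S}}$ has full rank, and $\lambda_2(c)<\theta<\lambda_1(c)$, where only the top eigendirection is quantized and $\cov{\hat{S}}$ has rank one. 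It then remains to show that, as $(c,\theta)$ range over these two regimes, $(D_1,D_2)$ sweeps exactly $(0,\sigma^2]\times(0,\sigma^2]\setminus\mathscr{D}_1$; I would fix $D_1$ and show, by continuity and an intermediate-value argument together with an analysis of the two endpoints, that the attainable $D_2$ fills the admissible interval, the first regime accounting for $\mathscr{D}_2$ and the second for $\mathscr{D}_3$, the two regimes meeting along the common boundary $D_2=(\sigma^2(1-\rho^2)-D_1)\sigma^2/(\sigma^2-D_1)$. The region $\mathscr{D}_1$ is properly excluded, since there the larger distortion is already achievable ``for free'' from a one-dimensional description — a degenerate case not produced by this construction, and treated separately in Section~\ref{sec:prf-regD3}.

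For ii) the converse is the standard vector-Gaussian argument. For any test channel with $\E{(S_i-\hat{S}_i)^2}\le D_i$, let $\Lambda=\cov{S}-\cov{\E{S_1,S_2\mid\hat{S}_1,\hat{S}_2}}$ be the MMSE error covariance; then $\Lambda$ lies in the convex set $\mathcal{L}=\{\Lambda:0\preceq\Lambda\preceq\cov{S},\ \Lambda_{11}\le D_1,\ \Lambda_{22}\le D_2\}$, and shifting by the conditional mean, discarding the conditioning, and using that the Gaussian maximizes differential entropy for a given covariance gives $I(S_1,S_2;\hat{S}_1,\hat{S}_2)\ge h(S_1,S_2)-\frac{1}{2}\log_2((2\pi e)^2\det\Lambda)=\frac{1}{2}\log_2(\det\cov{S}/\det\Lambda)$. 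Hence $R_{S_1,S_2}(D_1,D_2)\ge\frac{1}{2}\log_2(\det\cov{S}/\max_{\Lambda\in\mathcal{L}}\det\Lambda)$, and I would solve the concave program $\max_{\Lambda\in\mathcal{L}}\det\Lambda$ through its KKT conditions $\Lambda^{-1}=\mathrm{diag}(\mu_1,\mu_2)+\mat{M}$, $\mu_i\ge0$, $\mat{M}\succeq0$, $\mat{M}(\cov{S}-\Lambda)=0$. Checking against the explicit definitions of $\mathscr{D}_1,\mathscr{D}_2,\mathscr{D}_3$ shows that for $(D_1,D_2)\notin\mathscr{D}_1$ both multipliers $\mu_i$ are positive, so $\Lambda^{\ast}_{11}=D_1$, $\Lambda^{\ast}_{22}=D_2$; then either $\mat{M}=0$ and $\Lambda^{\ast}=\mathrm{diag}(D_1,D_2)$ — feasible exactly on $\mathscr{D}_2$, so $\det\Lambda^{\ast}=D_1D_2$ — or $\cov{S}-\Lambda^{\ast}$ is rank one (the case $\mathscr{D}_3$), which forces $\Lambda^{\ast}_{12}=\rho\sigma^2-\sqrt{(\sigma^2-D_1)(\sigma^2-D_2)}$ and $\det\Lambda^{\ast}=D_1D_2-(\rho\sigma^2-\sqrt{(\sigma^2-D_1)(\sigma^2-D_2)})^2$. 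In both cases the resulting lower bound coincides with the value of \eqref{eq:RD1D2-solution}. To close the loop I would verify that the test channel of i), which attains $(D_1,D_2)$, has $\cov{S}-\cov{\hat{S}}=\Lambda^{\ast}$ — the diagonal case corresponding to $\theta=D_1$, $c^{\ast}=\sqrt{D_1/D_2}$, the rank-one case to $\lambda_2(c^{\ast})<\theta<\lambda_1(c^{\ast})$, for which $\cov{\hat{S}}$ is automatically rank one — so that $I=\frac{1}{2}\log_2(\det\cov{S}/\det\Lambda^{\ast})$ matches the converse. This proves ii), and with it iii) and the closed forms of \eqref{eq:RD1D2-solution} on $\mathscr{D}_2$ and $\mathscr{D}_3$.

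I expect the matching step in ii) to be the main obstacle: showing that, for the right $(c^{\ast},R^{\ast})$, the algorithm's error covariance equals the determinant-maximizer $\Lambda^{\ast}$, uniformly over $\mathscr{D}_2$ and over $\mathscr{D}_3$. The $\mathscr{D}_2$ case is transparent, but for $\mathscr{D}_3$ one must exhibit a scaling $c^{\ast}\in\Reals^{+}$ for which, after decorrelating $(S_1,c^{\ast}S_2)$, quantizing only its top eigendirection, and recorrelating, the two distortions come out equal to $D_1$ and to $D_2$ simultaneously; one must also check that the associated water level then lies strictly between the two eigenvalues — so that the construction is the intended non-degenerate rank-one one — and that $R^{\ast}$ is strictly positive. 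Establishing the surjectivity in i) rigorously — the endpoint behavior and the monotonicity of $D_2$ along level sets of $D_1$ — is the other step requiring care, although it is conceptually routine.
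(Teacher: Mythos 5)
Your proposal is correct in outline, but it establishes the optimality claim ii) by a genuinely different route from the paper. The paper never solves an explicit optimization: it argues geometrically with the rate-$R$ distortion regions $\mathscr{D}(R)$ and $\mathscr{D}_c(R)$, exploiting their convexity, the symmetry of $\mathscr{D}(R)$ about the line $D_1=D_2$, the fact that $\mathscr{D}_c(R)$ is a $c^2$-scaling of $\mathscr{D}(R)$ along the $D_2$-axis, and the fact that reverse waterfilling attains the minimal-sum point of $\mathscr{D}_c(R)$ (Remarks~\ref{rmk:prop-DR-DcR} and \ref{rmk:D1D1-rw}); sweeping $c\in\Reals^+$ then shows the algorithm traces the whole boundary of $\mathscr{D}(R)$ outside $\mathscr{D}_1$, and the strict monotonicity of the algorithm's distortions in $R$ gives $R_{S_1,S_2}(D_1,D_2)=R$ at each such boundary point (Lemma~\ref{lm:prp-reg-DR}); part iii) is then computed via the orthogonality principle and Schur's complement on the joint covariance, which is equivalent to your backward-channel/MMSE derivation. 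You instead prove a direct converse---lower bounding the mutual information by $\frac{1}{2}\log_2\bigl(\det\cov{S}/\det\Lambda\bigr)$ over the convex set of admissible error covariances, solving the max-det program by KKT, and matching the algorithm's error covariance $\cov{S}-\cov{\hat{S}}$ to the maximizer $\Lambda^{\ast}$---which is essentially the approach of the independent proof in \cite{xiao-luo05} that the paper deliberately sets aside. Your route delivers the closed forms of Lemma~\ref{lm:KShat-new} and \eqref{eq:RS1S2-woD3} as a by-product of the KKT analysis and makes the role of $\mathscr{D}_1$ transparent (one multiplier inactive there); the paper's route avoids any Lagrangian computation and, as its concluding remark notes, extends verbatim to multivariate sources with more than two components. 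Both arguments ultimately hinge on the matching/surjectivity step you correctly single out---exhibiting, for every $(D_1,D_2)\in(0,\sigma^2]\times(0,\sigma^2]\setminus\mathscr{D}_1$, a pair $(c^{\ast},R^{\ast})$ whose resulting distortions are exactly $(D_1,D_2)$, with the water level in the intended regime---which the paper settles by the continuity and monotonicity of the algorithm's boundary point in $R$ together with the scaling of $\mathscr{D}_c(R)$ in $c$, i.e.\ essentially the intermediate-value argument you sketch; modulo carrying out that step and the positivity check on the multipliers, your plan is sound.
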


\begin{proof}
  See Section \ref{sec:prf-thm-basic}.
\end{proof}

Based on Theorem \ref{thm:basic-method}, we now evaluate
$R_{S_1,S_2}(D_1,D_2)$ for all $(D_1,D_2) \in (0,\sigma^2] \times
(0,\sigma^2] \setminus \mathscr{D}_1$ by expressing $\cov{\hat{S}}$ in
terms of $(D_1,D_2)$. This is done in the following lemma.

\begin{lm}\label{lm:KShat-new}
  For $(D_1,D_2) \in (0,\sigma^2] \times (0,\sigma^2] \setminus
  \mathscr{D}_1$, the covariance matrix $\cov{\hat{S}}$ in
  \eqref{eq:method-main} is given by
  \begin{equation}\label{eq:cvsh}
    \cov{\hat{S}} = \left\{ \begin{array}{l l}
        \left( \begin{array}{c c}
            \sigma^2-D_1 & \rho \sigma^2\\[5mm]
            \rho \sigma^2 & \sigma^2 -D_2
          \end{array} \right) & \text{if } (D_1,D_2) \in \mathscr{D}_2\\[8mm]
        \left( \begin{array}{c c}
            \sigma^2-D_1 & \sqrt{(\sigma^2-D_1)(\sigma^2-D_2)}\\[5mm]
            \sqrt{(\sigma^2-D_1)(\sigma^2-D_2)} & \sigma^2 -D_2
          \end{array} \right) & \text{if } (D_1,D_2) \in
        \mathscr{D}_3.
      \end{array} \right.
  \end{equation}
\end{lm}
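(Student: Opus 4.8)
The plan is to evaluate $\cov{\hat S}$ directly from the algorithm described before Theorem~\ref{thm:basic-method}, using the independence structure of reverse waterfilling together with the orthogonality of the decorrelating transformation. Fix $(D_1,D_2)\in(0,\sigma^2]\times(0,\sigma^2]\setminus\mathscr{D}_1$ and let $R^{\ast},c^{\ast}$ be the parameters furnished by Theorem~\ref{thm:basic-method}. Let $\lambda_+\geq\lambda_-$ be the eigenvalues of the covariance matrix of $(S_1,c^{\ast}S_2)$, with orthonormal eigenvectors $\mathbf{u}_+,\mathbf{u}_-$. This covariance matrix is $\sigma^2$ times the symmetric matrix with diagonal $(1,c^{\ast2})$ and off-diagonal entry $c^{\ast}\rho$; since $\rho\geq 0$ and $c^{\ast}>0$ after Reduction~\ref{rdc:source-normalization}, its off-diagonal entry is nonnegative, so the dominant eigenvector $\mathbf{u}_+$ may be chosen in the closed first quadrant. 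Consequently the matrix $\mat{B}$ of \eqref{eq:recorrelation} equals $\mathrm{diag}(1,1/c^{\ast})\,\mat{U}$, where $\mat{U}=[\,\mathbf{u}_+\ \ \mathbf{u}_-\,]$ is orthogonal; in particular the columns of $\mat{B}$ are $\mathbf{b}_{\pm}=\mathrm{diag}(1,1/c^{\ast})\mathbf{u}_{\pm}$, and $\mat{B}\trans{\mat{B}}=\mathrm{diag}(1,1/c^{\ast2})$. In the reverse-waterfilling (backward) representation with water level $\theta^{\ast}$, the coordinates of $\hat V$ are mutually independent, each $\hat V_j$ is uncorrelated with $V_j-\hat V_j$, and $\E{\hat V_j^2}=\max\{0,\lambda_j-\theta^{\ast}\}$, $\E{(V_j-\hat V_j)^2}=\min\{\lambda_j,\theta^{\ast}\}$.

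The first step is to pin down the diagonal of $\cov{\hat S}$ and to reduce the off-diagonal entry to an error cross-covariance. Because $\hat S=\mat{B}\hat V$ and $S-\hat S=\mat{B}(V-\hat V)$, for all $i,\ell\in\{1,2\}$ one has $\E{\hat S_i(S_\ell-\hat S_\ell)}=\sum_{j,k}\mat{B}_{ij}\mat{B}_{\ell k}\,\E{\hat V_j(V_k-\hat V_k)}=0$, since $\E{\hat V_j(V_k-\hat V_k)}=0$ for all $j,k$ (by independence across coordinates when $j\neq k$, and because $\hat V_j$ is uncorrelated with $V_j-\hat V_j$ when $j=k$). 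Hence $[\cov{\hat S}]_{ii}=\E{\hat S_i^2}=\E{S_i^2}-\E{(S_i-\hat S_i)^2}=\sigma^2-D_i$, and $[\cov{\hat S}]_{12}=\E{\hat S_1\hat S_2}=\E{S_1S_2}-\E{(S_1-\hat S_1)(S_2-\hat S_2)}=\rho\sigma^2-\E{(S_1-\hat S_1)(S_2-\hat S_2)}$, so only the error cross-covariance remains to be found.

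Next, distinguish two cases according to how many eigen-coordinates survive the waterfilling. If $\theta^{\ast}\leq\lambda_-$, both survive, so $\E{(V_j-\hat V_j)^2}=\theta^{\ast}$ for $j=1,2$; then the covariance matrix of $V-\hat V$ is $\theta^{\ast}$ times the identity, that of $S-\hat S$ is $\theta^{\ast}\mat{B}\trans{\mat{B}}=\theta^{\ast}\mathrm{diag}(1,1/c^{\ast2})$, hence diagonal, and therefore $\E{(S_1-\hat S_1)(S_2-\hat S_2)}=0$ and $\cov{\hat S}$ equals the first matrix in \eqref{eq:cvsh}. Otherwise $\lambda_-<\theta^{\ast}\leq\lambda_+$ (the case $\theta^{\ast}\geq\lambda_+$ would give $\hat S\equiv0$, i.e.\ $D_1=D_2=\sigma^2$, which lies in $\mathscr{D}_1$), so the small-eigenvalue coordinate is discarded and $\hat S=\hat V_+\,\mathbf{b}_+$, whence $\cov{\hat S}=\E{\hat V_+^2}\,\mathbf{b}_+\trans{\mathbf{b}_+}$ has rank at most one, so $\det\cov{\hat S}=0$. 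Together with $[\cov{\hat S}]_{ii}=\sigma^2-D_i$ this forces $[\cov{\hat S}]_{12}=\pm\sqrt{(\sigma^2-D_1)(\sigma^2-D_2)}$, with sign equal to that of $\mathbf{b}_{+,1}\mathbf{b}_{+,2}=\mathbf{u}_{+,1}\mathbf{u}_{+,2}/c^{\ast}\geq0$, so $\cov{\hat S}$ equals the second matrix in \eqref{eq:cvsh}.

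To finish, one matches the two cases to the two regions. The first matrix in \eqref{eq:cvsh} is a valid covariance matrix with $\cov{\hat S}\preceq\cov{S}$ (note $\cov{S}-\cov{\hat S}=\mathrm{diag}(D_1,D_2)\succeq 0$) exactly when $(\sigma^2-D_1)(\sigma^2-D_2)\geq\rho^2\sigma^4$, and a short algebraic manipulation shows this is equivalent to the conditions $0\leq D_1\leq\sigma^2(1-\rho^2)$ and $D_2\leq(\sigma^2(1-\rho^2)-D_1)\tfrac{\sigma^2}{\sigma^2-D_1}$ defining $\overline{\mathscr{D}_2}$; in particular it fails throughout $\mathscr{D}_3$, so there the algorithm must be in the rank-one case, while using $\sigma^2-D_1=\E{\hat V_+^2}\cos^2\phi$, $\sigma^2-D_2=\E{\hat V_+^2}\sin^2\phi/c^{\ast2}$ (with $\mathbf{u}_+=(\cos\phi,\sin\phi)$) one checks the rank-one form cannot be attained on $\mathscr{D}_2$, so there it is in the first case; since $\mathscr{D}_2$ and $\mathscr{D}_3$ partition $(0,\sigma^2]\times(0,\sigma^2]\setminus\mathscr{D}_1$, this yields \eqref{eq:cvsh}. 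The hard part is the sign in the rank-one case: a priori the singular matrix could carry the off-diagonal entry $-\sqrt{(\sigma^2-D_1)(\sigma^2-D_2)}$, and ruling this out is exactly what requires tracking the surviving eigenvector — here, the observation that $\rho\geq0$ makes the decorrelated matrix have a nonnegative off-diagonal entry, so its dominant eigenvector (the one kept by reverse waterfilling) lies in the closed first quadrant, forcing the $+$ sign; alternatively the $+$ sign follows by continuity from the common boundary of $\mathscr{D}_2$ and $\mathscr{D}_3$, on which the two expressions in \eqref{eq:RD1D2-solution} coincide, the off-diagonal entry equals $\rho\sigma^2>0$, and $\sqrt{(\sigma^2-D_1)(\sigma^2-D_2)}$ stays strictly positive. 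The remaining bookkeeping that pairs each case with exactly one region is routine.
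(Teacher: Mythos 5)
Your proposal is correct and rests on the same underlying decomposition as the paper --- split according to whether the reverse-waterfilling water level lies below $\lambda_{-}$ (both decorrelated coordinates survive) or between $\lambda_{-}$ and $\lambda_{+}$ (rank-one reconstruction), and read off $\cov{\hat{S}} = \mat{B} \cov{\hat{V}} \trans{\mat{B}}$ --- but the execution is genuinely different. The paper computes everything explicitly: it writes out the eigenvalues \eqref{eq:eigenvalues} and eigenvector coefficients \eqref{eq:eigenvector}, substitutes $\Delta_2 = D_1$ and $c^2 = D_1/D_2$ to obtain the first matrix (Lemma~\ref{lm:KShat}), and identifies the two waterfilling regimes with $\mathscr{D}_2$ and $\mathscr{D}_3$ by a separate explicit computation (Lemma~\ref{lm:reg-D2D3}). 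You replace the first computation by the structural observation that the error covariance $\theta^{\ast}\mat{B}\trans{\mat{B}} = \theta^{\ast}\,\mathrm{diag}(1,1/c^{\ast 2})$ is diagonal, so the off-diagonal entry of $\cov{\hat{S}}$ must be $\rho\sigma^2$; in the rank-one case you argue as the paper does (orthogonality principle for the diagonal, geometric mean for the off-diagonal), but you justify the $+$ sign more carefully via the first-quadrant dominant eigenvector (the paper leaves $a_1 a_2 \ge 0$ implicit). For the region identification you substitute a positive-semidefiniteness consistency argument: the case-one matrix fails to be a covariance matrix exactly when $(\sigma^2-D_1)(\sigma^2-D_2) < \rho^2\sigma^4$, which rules out case one on $\mathscr{D}_3$; this buys brevity, at the cost of needing the converse on $\mathscr{D}_2$, which is the one place you only assert (``one checks the rank-one form cannot be attained on $\mathscr{D}_2$''). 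That check does go through with the ingredients you already set up: since $\rho\sigma^2 = \E{S_1 S_2} = \tfrac{a_1 a_2}{c}(\lambda_{+}-\lambda_{-})$ and in the rank-one case $\E{\hat{V}_{+}^2} = \lambda_{+}-\theta^{\ast} < \lambda_{+}-\lambda_{-}$, your parametrization gives $(\sigma^2-D_1)(\sigma^2-D_2) = \bigl(\tfrac{a_1a_2}{c}\bigr)^2(\lambda_{+}-\theta^{\ast})^2 < \rho^2\sigma^4$, so rank-one outcomes never land in $\mathscr{D}_2$; spelling this out would make the argument complete (the paper instead gets the same dichotomy from the explicit $\mathscr{A}=\mathscr{D}_2$ computation). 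One further cosmetic point: on the boundary piece of $\mathscr{D}_3$ where $(\sigma^2-D_1)(\sigma^2-D_2) = \rho^2\sigma^4$ the case-one matrix is singular but still positive semidefinite, so your contradiction argument does not literally force case two there; this is harmless because the two expressions in \eqref{eq:cvsh} coincide on that boundary, as your continuity remark already notes, but it is worth saying explicitly.
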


\begin{proof}
  See Section \ref{sec:RD1D2-prf-lm-D2D3}
\end{proof}

Combining Lemma \ref{lm:KShat-new} with Theorem \ref{thm:basic-method}
gives
\begin{equation}\label{eq:RS1S2-woD3}
  R_{S_1,S_2}(D_1,D_2) = \left\{ \begin{array}{l l}
      \frac{1}{2} \log_2^+ \left( \frac{\sigma^4 (1-\rho^2)}{D_1 D_2}
      \right) & \text{if } (D_1,D_2) \in \mathscr{D}_2\\[4mm]
      \frac{1}{2} \log_2^+ \left( \frac{\sigma^4 (1-\rho^2)}{D_1D_2 -
          \left( \rho \sigma^2 - \sqrt{(\sigma^2-D_1)(\sigma^2-D_2)}
          \right)^2} \right) & \text{if } (D_1,D_2) \in \mathscr{D}_3.
  \end{array} \right.
\end{equation}
The complete expression of $R_{S_1,S_2}(D_1,D_2)$ for all $(D_1,D_2)
\in (0,\sigma^2] \times (0,\sigma^2]$ is now obtained by combining
\eqref{eq:RS1S2-woD3} with \eqref{eq:RD1D2-D3}. \hfill $\Box$\\

To conclude this section, we briefly point out that the above
described algorithm easily extends to multi-variate Gaussians with more
than two components.

\begin{rmk}
  The extension of the above algorithm to multivariate Gaussians with
  more than two components is straight forward. For a source with
  $\nu$ components $(S_1,S_2, \ldots ,S_{\nu})$, the scaling needs to
  be applied to $\nu-1$ components, e.g.~each of the components $S_2$,
  \ldots ,$S_{\nu}$ is scaled with a respective coefficient $c_{2}$,
  \ldots $c_{\nu}$ in $\Reals^+$. The tuple $(S_1, c_2 S_2, \ldots ,
  c_{\nu}S_{\nu})$ is then unitarily decorrelated to the tuple
  $(V_1,V_2, \ldots ,V_{\nu})$ on which reverse waterfilling is
  applied. The reconstructions $(\hat{S}_1, \hat{S}_2, \ldots
  ,\hat{S}_{\nu})$ then follow from $(\hat{V}_1, \hat{V}_2, \ldots
  ,\hat{V}_{\nu})$ by the reverse of the transformation that mapped
  $(S_1, S_2, \ldots , S_{\nu})$ to $(V_1,V_2, \ldots ,V_{\nu})$. The
  corresponding extension of Theorem \ref{thm:basic-method} follows
  easily from the proof in Section \ref{sec:prf-thm-basic}.
\end{rmk}

\subsection{Proof of Theorem \ref{thm:basic-method}}\label{sec:prf-thm-basic}

To prove Theorem \ref{thm:basic-method} we show that for every
distortion pair $(D_1,D_2) \in (0,\sigma^2] \times (0,\sigma^2]
\setminus \mathscr{D}_1$ there exists a rate $R^{\ast} \in \Reals^+$
and a scaling coefficient $c^{\ast} \in \Reals^+$ such that $\E{ (S_1
  - \hat{S}_1)^2} = D_1$, $\E{( S_2 - \hat{S}_2 )^2} = D_2$, and
$R^{\ast} = R_{S_1,S_2}(D_1,D_2)$. The proof is then completed by
computing $I(S_1,S_2;\hat{S}_1,\hat{S}_2)$ and showing that it is
equal to the RHS of \eqref{eq:method-main}.

We begin by introducing the distortion regions $\mathscr{D}(R)$ and
$\mathscr{D}_c(R)$. The region $\mathscr{D}(R)$ is the set of all
pairs $(D_1,D_2)$ that are achievable with rate $R$ for the source
pair $(S_1,S_2)$. Similarly, the region $\mathscr{D}_c(R)$ is the set
of all pairs $(D_{1},D_{2})$ that are achievable with rate $R$ on the
scaled source pair $(S_1,cS_2)$. The following two remarks state some
properties of these two regions.

\begin{rmk}\label{rmk:prop-DR-DcR}
  The regions $\mathscr{D}(R)$ and $\mathscr{D}_c(R)$ satisfy the
  following properties:
  \begin{enumerate}
    \renewcommand{\labelenumi}{\roman{enumi})}
  \item $\mathscr{D}(R)$ and $\mathscr{D}_c(R)$ are convex.
  \item $\mathscr{D}_c(R)$ is a linearly scaled version of
    $\mathscr{D}(R)$. The scaling is in the dimension of the
    $D_2$-axis and of factor $c^2$.
  \item For a source satisfying $\Var{S_1} = \Var{S_2}$, the region
    $\mathscr{D}(R)$ is symmetric with respect to the line $D_1=D_2$.
  \end{enumerate}
\end{rmk}

\begin{proof}[Proof of Remark \ref{rmk:prop-DR-DcR}]
  Part i) follows by a time-sharing argument.  Part ii) follows by
  showing that if a distortion pair $(D_1,D_2)$ is achievable with
  rate $R$ on $(S_1,S_2)$, then also $(D_1,c^2D_2)$ is achievable with
  rate $R$ on $(S_1,cS_2)$, and vice versa. This follows since if a
  reconstruction pair $(\hat{S}_1, \hat{S}_2)$ results in distortions
  $(D_1,D_2)$ on $(S_1,S_2)$, then the scaled reconstructions
  $(\hat{S}_1, c\hat{S}_2)$ result in distortions $(D_1,c^2D_2)$ on
  $(S_1,cS_2)$.  Part iii) follows since for $(S_1, S_2)$ jointly
  Gaussian with same variances, the distribution between $S_1$ and
  $S_2$ is perfectly symmetric. Hence, if with rate $R$ the pair
  $(D_1,D_2) = (a,b)$ is achievable, then also the pair $(D_1,D_2) =
  (b,a)$ is achievable.
\end{proof}

\begin{rmk}\label{rmk:D1D1-rw}
  The scaled reconstruction pair $(\hat{S}_1,c\hat{S}_2)$, where
  $(\hat{S}_1,\hat{S}_2)$ is the result from our algorithm at rate
  $R$, yields the expected distortion pair $(D_{1},D_{2})$ of minimal
  sum $D_{1} + D_{2}$ in $\mathscr{D}_c(R)$.
\end{rmk}

\begin{proof}[Proof of Remark \ref{rmk:D1D1-rw}]
  We denote by $\Delta_1$ and $\Delta_2$ the distortion on $V_1$ and
  $V_2$, i.e.
  \begin{displaymath}
    \Delta_i = \E{(V_i - \hat{V}_i)^2} \qquad i \in \{ 1,2 \}.
  \end{displaymath}
  By definition of the reverse waterfilling solution, the
  reconstruction pair $(\hat{V}_1,\hat{V}_2)$ achieves the distortion
  pair $(\Delta_1,\Delta_2)$ on $(V_1,V_2)$ of minimal sum $\Delta_1 +
  \Delta_2$ among all rate-$R$ achievable pairs
  $(\Delta_1,\Delta_2)$. Since $(S_1,cS_2)$ relates to $(V_1,V_2)$ by
  the same unitary transformation that relates
  $(\hat{S}_1,c\hat{S}_2)$ to $(\hat{V}_1,\hat{V}_1)$, the sum of the
  distortions $D_{1} + D_{2}$ on $(S_1,cS_2)$ equals $\Delta_1 +
  \Delta_2$. Hence, if $(\Delta_1,\Delta_2)$ is the pair of minimal
  sum among all rate-$R$ achievable pairs on $(V_1,V_2)$, then $D_{1},
  D_{2}$ is the pair of minimal sum among all rate-$R$ achievable
  pairs on $(S_1,cS_2)$, i.e., in $\mathscr{D}_c(R)$.
\end{proof}

We are now ready to start with the proof of Theorem
\ref{thm:basic-method}.

\begin{proof}[Proof of Theorem \ref{thm:basic-method}]

  We first prove the Parts i) and ii) together. To this end, we begin
  by arguing that for every boundary point $(D_1,D_2)$ of
  $\mathscr{D}(R)$ that falls out of $\mathscr{D}_1$ and satisfies
  $D_1 \leq D_2$, there exists some $c \in (0,1]$ such that our
  algorithm yields $(D_1,D_2)$. To begin, consider $c = 1$. For this
  case, the distortion pair resulting from our scheme is the boundary
  point of $\mathscr{D}(R)$ that satisfies $D_1 = D_2$. This can be
  seen by noticing that for $c = 1$ the regions $\mathscr{D}(R)$ and
  $\mathscr{D}_c(R)$ coincide. Thus, by Remark \ref{rmk:D1D1-rw}, the
  distortion pair $(D_1,D_2)$ resulting on $(S_1,S_2)$ is the one of
  smallest sum $D_1 + D_2$ in $\mathscr{D}(R)$. This distortion pair
  is the point on the boundary of $\mathscr{D}(R)$ that satisfies $D_1
  = D_2$, since, by Remark \ref{rmk:prop-DR-DcR}, the region
  $\mathscr{D}(R)$ is convex and symmetric with respect to the line
  satisfying $D_1=D_2$.

  Next, consider $0 < c < 1$. The key idea of our algorithm is
  illustrated in Figure \ref{fig:idea}.
  \begin{figure}[h]
    \centering
    \psfrag{d1}[cc][cc]{$D_1$}
    \psfrag{d2}[cc][cc]{$D_2$}
    \psfrag{d1c}[cc][cc]{$D_{1}^{\ast}$}
    \psfrag{d2c}[cc][cc]{$c^2D_{2}^{\ast}$}
    \psfrag{cd2c}[cc][cc]{$D_{2}^{\ast}$}
    \psfrag{dr}[cc][cc]{\large $\mathscr{D}(R)$}
    \psfrag{dcr}[cc][cc]{\large $\mathscr{D}_c(R)$}
    \psfrag{xi}[cc][cc]{\large $\xi$}
    \psfrag{a}[cc][cc]{a)}
    \psfrag{b}[cc][cc]{b)}
    \epsfig{file=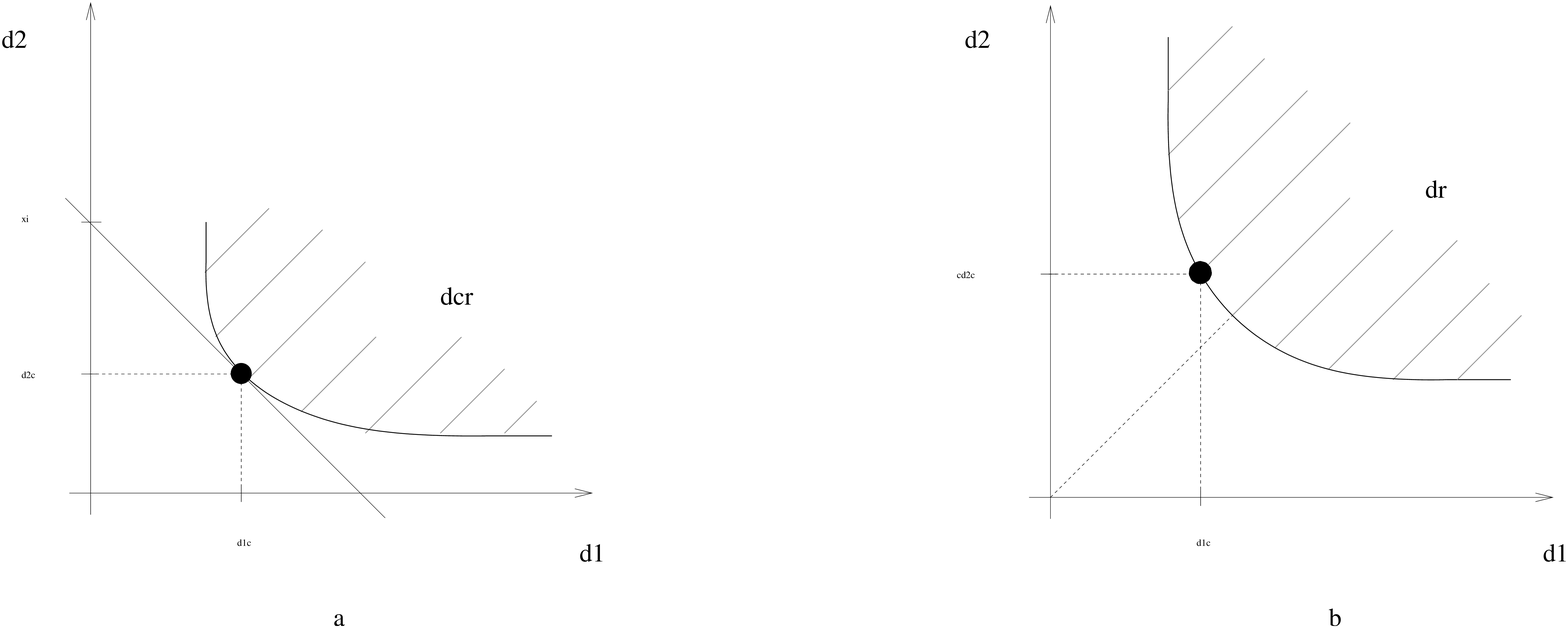, width=0.9\textwidth}
    \caption{Source scaling combined with reverse waterfilling.}
    \label{fig:idea}
  \end{figure}
  Subplot a) shows the distortion region $\mathscr{D}_c(R)$, and
  Subplot b) shows the distortion region $\mathscr{D}(R)$. The
  distortion pair $(D_1^{\ast},D_2^{\ast})$ resulting from our
  algorithm on $(S_1,S_2)$ and the corresponding distortion pair
  $(D_1^{\ast},c^2D_2^{\ast})$ on $(S_1,cS_2)$ are marked with a dot
  in Subplot b) and Subplot a) respectively. By Remark
  \ref{rmk:D1D1-rw} the distortion pair $(D_{1}^{\ast},
  c^2D_{2}^{\ast})$ is the point of smallest sum in
  $\mathscr{D}_c(R)$. Hence, graphically this point is characterized
  as the point on the boundary of $\mathscr{D}_c(R)$ for which the
  straight line of slope $-1$ containing it has the smallest
  ordinate\footnote{ This characterization follows since for any $\xi
    \in \Reals$ the set of pairs $(D_1,D_2)$ with sum $D_1+D_2 = \xi$
    is the straight line of slope $-1$ and ordinate $\xi$. Hence, if
    the ordinate $\xi$ of the straight line is minimized then the sum
    $D_1+D_2$ is minimized.}  $\xi$.  By this graphical
  characterization of $(D_{1}^{\ast}, c^2D_{2}^{\ast})$ it can now be
  seen that the smaller $c$ gets, i.e., the more $\mathscr{D}_c(R)$ is
  shrunk along the $D_2$-axis, the smaller the associated distortion
  $D_{1}^{\ast}$ gets. And as $c \rightarrow 0$, the distortion
  $D_{1}^{\ast}$ tends to the minimal value of $\sigma^2
  2^{-2R}$. Thus, by the linear relationship between
  $\mathscr{D}_c(R)$ and $\mathscr{D}(R)$, and by the convexity of
  $\mathscr{D}(R)$, it follows that for $c \in (0,1]$ our algorithm
  can achieve any $(D_1,D_2)$ on the boundary of $\mathscr{D}(R)$ for
  which $\sigma^22^{-2R} < D_1 \leq D_2$.

  For $c > 1$, it can be shown by similar arguments that our algorithm
  can achieve any $(D_1,D_2)$ on the boundary of $\mathscr{D}(R)$ for which $\sigma^2
  2^{-2R} < D_2 < D_1$, i.e., for which $D_2 < D_1 < \sigma^2 (1-\rho^2) +
  \rho^2 D_1$. Hence, for $c \in \Reals^+$ our algorithm can achieve
  any boundary point of $\mathscr{D}(R)$ in $(0,\sigma^2] \times
  (0,\sigma^2] \setminus \mathscr{D}_1$.

  To complete the proof of Parts i) and ii), we now show that with the
  appropriate $R$, the boundary of $\mathscr{D}(R)$ can indeed cover
  any point in $[0,\sigma^2] \times [0,\sigma^2] \setminus
  \mathscr{D}_1$, and that for each such boundary point of
  $\mathscr{D}(R)$ the rate-distortion function satisfies
  $R_{S_1,S_2}(D_1,D_2) = R$.

  \begin{lm}\label{lm:prp-reg-DR}
    For every distortion pair $(D_1,D_2) \in (0,\sigma^2] \times
    (0,\sigma^2] \setminus \mathscr{D}_1$, there exists some rate
    $R^{\ast} \in \Reals^+$ such that $(D_1,D_2)$ is in the boundary
    of $\mathscr{D}(R^{\ast})$. Furthermore, for each such boundary
    point
    \begin{displaymath}
      R_{S_1,S_2}(D_1,D_2) = R^{\ast}.
    \end{displaymath}
  \end{lm}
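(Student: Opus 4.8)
The plan is to argue abstractly from two facts about the rate--distortion function: its single-letter form \eqref{eq:problem}, and the identity $\mathscr{D}(R)=\{(D_1,D_2):R_{S_1,S_2}(D_1,D_2)\le R\}$. First I would record the relevant structure of $\mathscr{D}(R)$: it is convex (Remark~\ref{rmk:prop-DR-DcR}), it is closed, and it is upward-closed for the coordinatewise order (degrading a reconstruction is always feasible). For a set with these properties a point $(D_1,D_2)\in\mathscr{D}(R)$ is interior if and only if $(D_1-\epsilon,D_2-\epsilon)\in\mathscr{D}(R)$ for some $\epsilon>0$; thus the whole statement reduces to strict monotonicity of $R_{S_1,S_2}$ along the diagonal direction.

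Given $(D_1,D_2)\in(0,\sigma^2]\times(0,\sigma^2]\setminus\mathscr{D}_1$, set $R^{\ast}:=R_{S_1,S_2}(D_1,D_2)$. This is finite (one may simply describe the two components separately) and strictly positive: from $R_{S_1,S_2}(D_1,D_2)\ge R_{S_1}(D_1)=\frac{1}{2}\log_2^+(\sigma^2/D_1)$ and the analogous bound for the second component, $R^{\ast}=0$ would force $D_1=D_2=\sigma^2$, and that point lies in $\mathscr{D}_1$, contrary to assumption. Hence $R^{\ast}\in\Reals^+$, and $(D_1,D_2)\in\mathscr{D}(R^{\ast})$ is immediate.

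The heart of the argument is to show $(D_1,D_2)\notin\mathrm{int}\,\mathscr{D}(R^{\ast})$. Suppose otherwise; then $(D_1',D_2'):=(D_1-\epsilon,D_2-\epsilon)\in\mathscr{D}(R^{\ast})$ for some $\epsilon\in(0,\min\{D_1,D_2\})$, i.e.\ $R_{S_1,S_2}(D_1',D_2')\le R^{\ast}$. Choose a test channel with distortions at most $(D_1',D_2')$ and mutual information below $R^{\ast}+\delta$, and mix it --- with weight $\lambda_0:=\min_i(D_i-D_i')/(\sigma^2-D_i')>0$ --- with the degenerate channel that reconstructs each source by its mean (mutual information $0$, distortions $(\sigma^2,\sigma^2)$). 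Since $I(S_1,S_2;\cdot)$ is convex and the expected distortions are affine in the channel, the mixture meets the constraints $(D_1,D_2)$ while having mutual information at most $(1-\lambda_0)(R^{\ast}+\delta)$; choosing $\delta<\lambda_0 R^{\ast}$, which is legitimate because $R^{\ast}>0$, gives $R^{\ast}\le(1-\lambda_0)(R^{\ast}+\delta)<R^{\ast}$, a contradiction. Therefore $(D_1,D_2)\in\mathscr{D}(R^{\ast})\setminus\mathrm{int}\,\mathscr{D}(R^{\ast})$, which since $\mathscr{D}(R^{\ast})$ is closed equals $\partial\mathscr{D}(R^{\ast})$; this proves the first assertion. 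I expect this mixing step to be the only genuine obstacle --- the remainder is bookkeeping about convex, upward-closed sets.

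For the ``furthermore'' claim, let $(D_1,D_2)$ be an arbitrary boundary point of $\mathscr{D}(R')$. Membership in the closed set $\mathscr{D}(R')$ gives $R_{S_1,S_2}(D_1,D_2)\le R'$, while non-interiority gives $R_{S_1,S_2}(D_1-\epsilon,D_2-\epsilon)>R'$ for all sufficiently small $\epsilon>0$; letting $\epsilon\downarrow0$ and using continuity of the convex function $R_{S_1,S_2}$ at the interior point $(D_1,D_2)$ (it lies in $(0,\infty)^2$, where $R_{S_1,S_2}$ is finite) yields $R_{S_1,S_2}(D_1,D_2)\ge R'$. Hence $R_{S_1,S_2}(D_1,D_2)=R'$, and applying this with $R'=R^{\ast}$ finishes the lemma.
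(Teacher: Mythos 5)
Your proof is correct, but it follows a genuinely different route from the paper's. The paper argues constructively through the scaling/decorrelation/reverse-waterfilling algorithm: for each fixed $c$ the resulting distortions $(D_1^{\ast},D_2^{\ast})$ are strictly and continuously decreasing in $R$, so a boundary point of $\mathscr{D}(R)$ cannot lie in $\mathscr{D}(R-\epsilon)$, which gives $R_{S_1,S_2}(D_1,D_2)=R$ on the boundary; coverage of all of $(0,\sigma^2]\times(0,\sigma^2]\setminus\mathscr{D}_1$ is then obtained from a continuity-in-$R$ argument, with $\mathscr{D}(0)=\{(\sigma^2,\sigma^2)\}$ and $\mathscr{D}(R)\to[0,\sigma^2]^2$. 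You instead work abstractly with the identification $\mathscr{D}(R)=\{R_{S_1,S_2}\le R\}$: you take $R^{\ast}:=R_{S_1,S_2}(D_1,D_2)$ (finite and, since $(\sigma^2,\sigma^2)\in\mathscr{D}_1$, strictly positive), and rule out interiority by mixing a near-optimal test channel for $(D_1-\epsilon,D_2-\epsilon)$ with the trivial mean-reconstruction channel, exploiting convexity of $I(S_1,S_2;\hat{S}_1,\hat{S}_2)$ in the test channel and affinity of the distortions — your $\lambda_0$ and $\delta<\lambda_0 R^{\ast}$ bookkeeping is right; the ``furthermore'' part then follows from closedness, upward-closedness, and continuity of the convex function $R_{S_1,S_2}$ on $(0,\infty)^2$. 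What each buys: your argument is self-contained and rigorous where the paper is somewhat informal (the ``evolving continuously in $R$ \ldots can cover any point'' covering step and the strict-monotonicity claim are asserted rather than proved), and it does not presuppose anything about the algorithm; the paper's version, on the other hand, is tailored to the constructive narrative of Theorem \ref{thm:basic-method} — it simultaneously tells you \emph{which} boundary points the algorithm reaches — which is what the surrounding appendix actually needs. The only mild caveat in your write-up is that you lean on the operational-equals-informational identity $\mathscr{D}(R)=\{(D_1,D_2):R_{S_1,S_2}(D_1,D_2)\le R\}$ (including closedness of these sublevel sets) without comment; this is standard and is implicitly used by the paper as well, so it is not a gap, but it deserves a sentence.
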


  \begin{proof}[Proof of Lemma \ref{lm:prp-reg-DR}]
    We know that each distortion pair $(D_1,D_2)$ resulting from our
    algorithm at rate $R$ lies on the boundary of $\mathscr{D}(R)$,
    and, by \eqref{eq:decorrelation} and \eqref{eq:recorrelation}, is
    given by
    \begin{equation}\label{eq:Di-Delta12}
      D_1^{\ast} =  a_1^2 \Delta_1 + a_2^2 \Delta_2, \qquad \qquad
      D_2^{\ast} = \frac{a_2^2}{c^2} \Delta_1 + \frac{a_1^2}{c^2} \Delta_2.
    \end{equation}
    Since the distortion pair $(\Delta_1, \Delta_2)$ results from
    reverse waterfilling, either both of the distortions are strictly
    and continuously decreasing in $R$, or one of them is constant
    while the other is strictly and continuously decreasing in
    $R$. Thus, since for every $c \in \Reals^+$, the coefficients
    $a_1$ and $a_2$ are both non-zero and constant, both distortions
    $D_1^{\ast}$ and $D_2^{\ast}$ are strictly and continuously
    decreasing in $R$.  Hence, for every $\epsilon > 0$, no boundary
    point $(D_1,D_2)$ of $\mathscr{D}(R)$ belongs to
    $\mathscr{D}(R-\epsilon)$, and therefore, every boundary point
    $(D_1,D_2)$ of $\mathscr{D}(R)$ satisfies
    \begin{displaymath}
      R_{S_1,S_2}(D_1,D_2) = R.
    \end{displaymath}
    The proof of of Lemma \ref{lm:prp-reg-DR} is now completed by
    recalling that for each $R \in \Reals^+$ the region
    $\mathscr{D}(R)$ is convex, and that for every fixed $c \in
    \Reals^+$ the corresponding boundary point in $\mathscr{D}(R)$ is
    evolving continuously in $R$. Thus, since for $R=0$ the distortion
    region is given by $\mathscr{D}(R) = (\sigma^2,\sigma^2)$ and for
    $R \rightarrow \infty$ the distortion region is given by
    $\mathscr{D}(R) = [0,\sigma^2] \times [0,\sigma^2]$, it follows
    that with the appropriate $R \in \Reals^+$ the boundary of
    $\mathscr{D}(R)$ can cover any point in $[0,\sigma^2] \times
    [0,\sigma^2]$.
  \end{proof}

  We now turn to the proof of Part iii). The proof consists of
  computing $I(S_1,S_2;\hat{S}_1,\hat{S}_2)$ and showing that it is
  equal to the RHS of \eqref{eq:method-main}. We use the shorthand
  notation $\tilde{\vect{V}}$ for $(V_1,V_2,\hat{V}_1,\hat{V}_2)$ and
  $\tilde{\vect{S}}$ for $(S_1,S_2,\hat{S}_1,\hat{S}_2)$. Thus,
  \begin{equation}\label{eq:S-tilde_to_V-tilde}
    \tilde{\vect{S}} = \left( \begin{array}{c c}
        \mat{B} & \mat{0}\\
        \mat{0} & \mat{B}
      \end{array} \right)
    \tilde{\vect{V}}.
  \end{equation}
  Since $\tilde{\vect{S}}$ and $\tilde{\vect{V}}$ are related by a
  linear transformation and since $\tilde{\vect{V}}$ is a zero-mean
  Gaussian vector, it follows that also $\tilde{\vect{S}}$ is a
  zero-mean Gaussian vector. Thus,
  \begin{equation}\label{eq:I-S-tilde}
    I(S_1,S_2;\hat{S}_1,\hat{S}_2) = \frac{1}{2} \log_2 \left(
      \frac{\det{\cov{S}} \det{\cov{\hat{S}}}}{\det{\cov{\tilde{S}}}}
    \right),
  \end{equation}
  where $\cov{\tilde{S}}$ is the covariance matrix of
  $\tilde{\vect{S}}$. The determinant $\det{\cov{\tilde{S}}}$ in
  \eqref{eq:I-S-tilde} is now expressed in terms of $\cov{S}$ and
  $\cov{\hat{S}}$. Since the reconstruction pair
  $(\hat{S}_1,\hat{S}_2)$ achieves a boundary point of
  $\mathscr{D}(R)$, it must satisfy the orthogonality principle and
  thus $\mat{K}_{S\hat{S}} = \cov{\hat{S}}$, where
  $\mat{K}_{S\hat{S}}$ denoted the cross-covariance matrix between
  $(S_1,S_2)$ and $(\hat{S}_1,\hat{S}_2)$. Hence,
  \begin{equation}\label{eq:cov_S-tilde}
    \cov{\tilde{S}} = \left( \begin{array}{c c}
        \cov{S} & \cov{\hat{S}}\\
        \cov{\hat{S}} & \cov{\hat{S}}
      \end{array} \right).
  \end{equation}
  Applying Schur's complement \cite[pp.~21]{horn-johnson85} to
  \eqref{eq:cov_S-tilde} gives
  \begin{equation}
    \det{\cov{\tilde{S}}} = \det{\cov{\hat{S}}} \cdot \det{ \left(
        \cov{S} - \cov{\hat{S}} \right)}.\label{eq:Schur}
  \end{equation}
  Combining \eqref{eq:Schur} with \eqref{eq:I-S-tilde}, and using that
  the law of the $(S_1,S_2,\hat{S}_1,\hat{S}_2)$ at hand is a solution to
  the optimization problem in \eqref{eq:problem}, gives
  \begin{equation}\label{eq:RD2}
    R_{S_1,S_2}(D_1,D_2) = \frac{1}{2} \log_2 \left(
      \frac{\det{\cov{S}}}{\det{\left( \cov{S} - \cov{\hat{S}}
          \right)}} \right),
  \end{equation}
  for all $(D_1,D_2) \in (0,\sigma^2] \times (0,\sigma^2] \setminus
  \mathscr{D}_1$. 
\end{proof}

\subsection{Proof of Lemma \ref{lm:KShat-new}}\label{sec:RD1D2-prf-lm-D2D3}

Lemma \ref{lm:KShat-new} expresses $\cov{\hat{S}}$ in terms of
$(D_1,D_2)$. To obtain the stated expression, we first use that
$\cov{\hat{S}} = \mat{B} \cov{\hat{V}} \trans{\mat{B}}$ and begin by
writing $\cov{\hat{V}}$ in terms of the variances of $V_1$ and $V_2$,
and in terms of the distortions on those two components. To this end,
we denote by $\lambda_1$ and $\lambda_2$ the variances of $V_1$ and
$V_2$, i.e.
\begin{displaymath}
  \lambda_i = \Var{V_i} \qquad i \in \{ 1,2 \}.
\end{displaymath}
By \eqref{eq:decorrelation}, the variances $\lambda_1$ and $\lambda_2$
are the eigenvalues of the covariance matrix of $(S_1,c S_2)$, which
are given by
\begin{align}
\lambda_1 &= \frac{\sigma^2}{2} \left( 1 + c^2 -
  \sqrt{1-2c^2(1-2\rho^2) +c^4} \right), \nonumber\\[-2mm]
& \label{eq:eigenvalues}\\[-2mm]
\lambda_2 &= \frac{\sigma^2}{2} \left( 1 + c^2 +
  \sqrt{1-2c^2(1-2\rho^2) +c^4} \right). \nonumber
\end{align}
The covariance matrix $\cov{\hat{V}}$ can now easily be expressed in
terms of $\lambda_1$, $\lambda_2$, $\Delta_1$, and $\Delta_2$. Since
for every $c \in \Reals^+$ the expressions in \eqref{eq:eigenvalues}
yield $\lambda_1 \leq \lambda_2$ , the reverse waterfilling on
$(V_1,V_2)$ satisfies
\begin{equation}\label{eq:Delta1-rw}
  \Delta_1 = \left\{ \begin{array}{l l}
      \Delta_2 & \text{if } 0 \leq \Delta_2 \leq \lambda_1\\[2mm]
      \lambda_1 & \text{if } \lambda_1 < \Delta_2 \leq \lambda_2.
    \end{array} \right.
\end{equation}
The corresponding covariance matrix $\cov{\hat{V}}$ is thus
\begin{equation}\label{eq:cov-vhat}
\cov{\hat{V}} = \left\{ \begin{array}{l l}
    \left( \begin{array}{c c}
        \lambda_1 - \Delta_2 & 0\\
        0 & \lambda_2 - \Delta_2
      \end{array} \right) & \text{if } 0 \leq \Delta_2 \leq \lambda_1\\[7mm]
    \left( \begin{array}{c c}
        0 & 0\\
        0 & \lambda_2 - \Delta_2
      \end{array} \right) & \text{if } \lambda_1 < \Delta_2 \leq \lambda_2.
  \end{array} \right.
\end{equation}
Based on \eqref{eq:cov-vhat}, we now express $\cov{\hat{S}} = \mat{B}
\cov{\hat{V}} \trans{\mat{B}}$ in terms of $D_1$ and $D_2$. To this
end, we denote the set of all distortion pairs $(D_1,D_2)$ resulting
from $0 \leq \Delta_2 < \lambda_1$ by $\mathscr{A}$ and the set of all
distortion pairs resulting from $\lambda_1 \leq \Delta_2 \leq
\lambda_2$ by $\mathscr{B}$. The expressions for $\mathscr{A}$ and
$\mathscr{B}$, and for $\cov{\hat{S}}$, in terms of $D_1$ and $D_2$,
are now given in the following two lemmas.

\begin{lm}\label{lm:KShat}
  For $(D_1,D_2) \in (0,\sigma^2] \times (0,\sigma^2] \setminus
  \mathscr{D}_1$, the covariance matrix $\cov{\hat{S}}$ in
  \eqref{eq:method-main} is given by
  \begin{equation}\label{eq:cvsh2}
    \cov{\hat{S}} = \left\{ \begin{array}{l l}
        \left( \begin{array}{c c}
            \sigma^2-D_1 & \rho \sigma^2\\[5mm]
            \rho \sigma^2 & \sigma^2 -D_2
          \end{array} \right) & \text{if } (D_1,D_2) \in \mathscr{A}\\[8mm]
        \left( \begin{array}{c c}
            \sigma^2-D_1 & \sqrt{(\sigma^2-D_1)(\sigma^2-D_2)}\\[5mm]
            \sqrt{(\sigma^2-D_1)(\sigma^2-D_2)} & \sigma^2 -D_2
          \end{array} \right) & \text{if } (D_1,D_2) \in
        \mathscr{B}.
      \end{array} \right.
  \end{equation}
\end{lm}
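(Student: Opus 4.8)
The plan is to compute $\cov{\hat{S}}=\mat{B}\,\cov{\hat{V}}\,\trans{\mat{B}}$ directly from the form of $\cov{\hat{V}}$ in \eqref{eq:cov-vhat}, using the orthogonality (MMSE) property of reverse waterfilling to identify the diagonal entries, and splitting into the two cases according to whether one or both of the components $V_1,V_2$ survive the waterfilling. First I would pin down the diagonal entries. As already noted in the proof of Theorem~\ref{thm:basic-method}, the reconstruction produced by reverse waterfilling satisfies the orthogonality principle, so $\mat{K}_{V\hat{V}}=\cov{\hat{V}}$; since $\vect{S}=\mat{B}\vect{V}$ (invert \eqref{eq:decorrelation}) and $\hat{\vect{S}}=\mat{B}\hat{\vect{V}}$ (this is \eqref{eq:recorrelation}), this gives $\mat{K}_{S\hat{S}}=\mat{B}\,\cov{\hat{V}}\,\trans{\mat{B}}=\cov{\hat{S}}$. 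Hence $\cov{S}-\cov{\hat{S}}$ is the error covariance matrix, and since (by Theorem~\ref{thm:basic-method}(i)) the algorithm is run at the parameters $R^{\ast},c^{\ast}$ for which $\E{(S_i-\hat{S}_i)^2}=D_i$, the $(1,1)$ and $(2,2)$ entries of $\cov{\hat{S}}$ are $\sigma^2-D_1$ and $\sigma^2-D_2$ in both cases; it only remains to determine the off-diagonal entry $\E{\hat{S}_1\hat{S}_2}$.

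For pairs in $\mathscr{A}$, i.e.\ those produced with $0\le\Delta_2<\lambda_1$, reverse waterfilling yields $\Delta_1=\Delta_2=:\Delta$ by \eqref{eq:Delta1-rw}, so $\cov{\hat{V}}=\mathrm{diag}(\lambda_1-\Delta,\lambda_2-\Delta)=\cov{V}-\Delta\mat{I}$. Writing the decorrelation \eqref{eq:decorrelation} as $\vect{V}=\mat{Q}\mat{D}\vect{S}$ with $\mat{Q}$ orthogonal and $\mat{D}=\mathrm{diag}(1,c)$, we have $\mat{B}=\mat{D}^{-1}\trans{\mat{Q}}$ and therefore $\mat{B}\trans{\mat{B}}=\mat{D}^{-2}$, whence
\[
\cov{\hat{S}}=\mat{B}\,\cov{V}\,\trans{\mat{B}}-\Delta\,\mat{B}\trans{\mat{B}}=\cov{S}-\Delta\,\mat{D}^{-2}.
\]
Since $\mat{D}^{-2}$ is diagonal, the off-diagonal entry of $\cov{\hat{S}}$ coincides with that of $\cov{S}$, namely $\rho\sigma^2$; this also re-derives the diagonal entries and the relations $D_1=\Delta$, $D_2=\Delta/c^2$.

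For pairs in $\mathscr{B}$, i.e.\ those produced with $\lambda_1\le\Delta_2\le\lambda_2$, reverse waterfilling sets $\hat{V}_1=0$, so $\cov{\hat{V}}$ has rank at most one; consequently $\cov{\hat{S}}=\mat{B}\,\cov{\hat{V}}\,\trans{\mat{B}}$ also has rank at most one, i.e.\ $\det{\cov{\hat{S}}}=0$, which forces $(\cov{\hat{S}})_{12}^2=(\cov{\hat{S}})_{11}(\cov{\hat{S}})_{22}=(\sigma^2-D_1)(\sigma^2-D_2)$. To fix the sign I would use that, by \eqref{eq:recorrelation} with $\hat{V}_1=0$, $\hat{S}_1=a_2\hat{V}_2$ and $\hat{S}_2=(a_1/c)\hat{V}_2$, together with the fact that for $\rho\in[0,1]$ and $c>0$ the eigenvector of the covariance matrix of $(S_1,cS_2)$ belonging to the larger eigenvalue $\lambda_2$ may be chosen with non-negative entries, i.e.\ $a_1,a_2\ge0$. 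Then $\E{\hat{S}_1\hat{S}_2}=(a_1a_2/c)\,\E{\hat{V}_2^2}\ge0$, so $(\cov{\hat{S}})_{12}=+\sqrt{(\sigma^2-D_1)(\sigma^2-D_2)}$, and combined with the diagonal entries this is precisely \eqref{eq:cvsh2}.

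The step I expect to require the most care is not the matrix algebra but the bookkeeping of the sign and ordering conventions: verifying that the eigenvectors in \eqref{eq:decorrelation} can indeed be taken with $a_1,a_2\ge0$ for every $c>0$ when $\rho\in[0,1]$ (the degeneracies $\rho=0$ and $\lambda_1=\lambda_2$ being harmless, since the off-diagonal entry then vanishes on both sides of \eqref{eq:cvsh2}), and correctly identifying which of $V_1,V_2$ is zeroed out by the waterfilling — this is where \eqref{eq:eigenvalues}, guaranteeing $\lambda_1\le\lambda_2$ for all $c>0$, and the reverse-waterfilling description \eqref{eq:Delta1-rw} enter.
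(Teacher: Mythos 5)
Your proposal is correct, and it follows the same skeleton as the paper's proof ($\cov{\hat{S}}=\mat{B}\cov{\hat{V}}\trans{\mat{B}}$, with the two cases dictated by which reverse-waterfilling regime of \eqref{eq:cov-vhat} applies), but your evaluation of the $\mathscr{A}$ case is genuinely cleaner. The paper handles $\mathscr{A}$ by brute force: it substitutes $\Delta_2=D_1$, $c^2=D_1/D_2$, the eigenvalues \eqref{eq:eigenvalues} and the eigenvector formula \eqref{eq:eigenvector} into the explicit entries of \eqref{eq:cov_S-hat} and simplifies. You instead observe that in this regime $\cov{\hat{V}}=\cov{V}-\Delta\mat{I}$, so that with $\mat{B}=\mat{D}^{-1}\trans{\mat{Q}}$, $\mat{Q}$ orthogonal, one gets $\cov{\hat{S}}=\cov{S}-\Delta\mat{D}^{-2}$ in one line, from which the off-diagonal entry $\rho\sigma^2$, the diagonal entries, and the relations $D_1=\Delta$, $D_2=\Delta/c^2$ all drop out without touching the eigen-algebra; this buys a shorter, less error-prone computation (and makes the extension to more than two components transparent), at the cost of no new information. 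For $\mathscr{B}$ your argument is essentially the paper's: orthogonality fixes the diagonal entries at $\sigma^2-D_i$, and the rank-one structure of $\cov{\hat{V}}$ (equivalently, the product form of the anti-diagonal entry $\tfrac{a_1a_2}{c}(\lambda_2-\Delta_2)$ noted in the paper) forces the off-diagonal to be $\pm\sqrt{(\sigma^2-D_1)(\sigma^2-D_2)}$; your explicit sign check via $a_1,a_2\ge 0$ (the eigenvector of the larger eigenvalue of a matrix with non-negative off-diagonal can be taken componentwise non-negative, valid since $\rho\ge 0$ by Reduction \ref{rdc:source-normalization}) is a point the paper leaves implicit, and is worth making.
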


\begin{proof}
  See Section \ref{subsec:RD1D2-prf-KShat}.
\end{proof}

\begin{lm}\label{lm:reg-D2D3}
  The regions $\mathscr{A}$ and $\mathscr{B}$ are given by
  \begin{IEEEeqnarray*}{rCl}
    \mathscr{A} = \mathscr{D}_2 & \qquad \text{and} \qquad &
    \mathscr{B} = \mathscr{D}_3.
  \end{IEEEeqnarray*}
\end{lm}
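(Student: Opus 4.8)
The plan is to prove $\mathscr{A}=\mathscr{D}_2$ by a direct computation and then to deduce $\mathscr{B}=\mathscr{D}_3$ by set complementation, using that the algorithm of Section~\ref{sec:prf-regD1D2} realizes every distortion pair in $(0,\sigma^2]\times(0,\sigma^2]\setminus\mathscr{D}_1$ (Lemma~\ref{lm:prp-reg-DR}) and that $\mathscr{D}_1,\mathscr{D}_2,\mathscr{D}_3$ partition $(0,\sigma^2]\times(0,\sigma^2]$. Throughout I would work, as in Lemma~\ref{lm:KShat}, on $(0,\sigma^2]\times(0,\sigma^2]\setminus\mathscr{D}_1$.

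First I would treat $\mathscr{A}$. On the range $0\le\Delta_2<\lambda_1$ reverse waterfilling gives $\Delta_1=\Delta_2=:\Delta$, hence $\cov{\hat{V}}=\mathrm{diag}(\lambda_1-\Delta,\lambda_2-\Delta)$; a short computation of $\cov{\hat{S}}=\mat{B}\cov{\hat{V}}\trans{\mat{B}}$ (using $\mat{B}$ from \eqref{eq:recorrelation} together with the eigen-decomposition of the covariance of $(S_1,cS_2)$) shows that $\cov{\hat{S}}$ has diagonal entries $\sigma^2-\Delta$ and $\sigma^2-\Delta/c^2$ and off-diagonal entry $\rho\sigma^2$. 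Since $D_i=\sigma^2-(\cov{\hat{S}})_{ii}$ by the orthogonality principle, this forces $D_1=\Delta$ and $D_2=\Delta/c^2$, so $c^2=D_1/D_2$, and the only remaining constraint is $\Delta=D_1<\lambda_1$ with $\lambda_1$ the smaller eigenvalue in \eqref{eq:eigenvalues} evaluated at $c=\sqrt{D_1/D_2}$. Because $\lambda_1\le\frac12(\lambda_1+\lambda_2)=\frac12\sigma^2(1+c^2)$, the inequality $D_1<\lambda_1$ forces $\sigma^2(1+c^2)-2D_1>0$; rewriting $D_1<\lambda_1$ as $\sqrt{(1-c^2)^2+4\rho^2c^2}<1+c^2-2D_1/\sigma^2$, squaring (legitimate, since the right side is positive), and substituting $c^2=D_1/D_2$ reduces the constraint to $(\sigma^2-D_1)(\sigma^2-D_2)>\rho^2\sigma^4$, which is precisely the description of $\mathscr{D}_2$ in Theorem~\ref{thm:rd1d2-main} (it is equivalent to $D_2<(\sigma^2(1-\rho^2)-D_1)\sigma^2/(\sigma^2-D_1)$ and already forces $D_1<\sigma^2(1-\rho^2)$). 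For the converse one reverses this chain: for $(D_1,D_2)\in\mathscr{D}_2$ the inequality $(\sigma^2-D_1)(\sigma^2-D_2)>\rho^2\sigma^4$ holds, the positivity $\sigma^2(1+c^2)-2D_1>0$ needed to un-square follows from $2D_1D_2<\sigma^2(D_1+D_2)$ (true because $D_1,D_2<\sigma^2$), and taking $c=\sqrt{D_1/D_2}$ together with the reverse-waterfilling rate at water level $\Delta=D_1<\lambda_1$ makes the algorithm output $(D_1,D_2)$. Hence $\mathscr{A}=\mathscr{D}_2$.

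For $\mathscr{B}$ I would first note $\mathscr{A}\cap\mathscr{B}=\emptyset$. On the range $\lambda_1\le\Delta_2\le\lambda_2$ one has $\cov{\hat{V}}=\mathrm{diag}(0,\lambda_2-\Delta_2)$, and pushing this through $\cov{\hat{S}}=\mat{B}\cov{\hat{V}}\trans{\mat{B}}$ together with the identities $\lambda_1\lambda_2=c^2\sigma^4(1-\rho^2)$ and $a_1a_2(\lambda_2-\lambda_1)=c\rho\sigma^2$ yields $(\sigma^2-D_1)(\sigma^2-D_2)=\rho^2\sigma^4(\lambda_2-\Delta_2)^2/(\lambda_2-\lambda_1)^2\le\rho^2\sigma^4$, whereas on $\mathscr{A}=\mathscr{D}_2$ the strict reverse inequality holds, so the two are disjoint. (Alternatively: by Lemma~\ref{lm:KShat} a pair in $\mathscr{A}\cap\mathscr{B}$ would force the two matrices in \eqref{eq:cvsh2} to coincide, i.e.\ $(\sigma^2-D_1)(\sigma^2-D_2)=\rho^2\sigma^4$, again contradicting $\mathscr{A}=\mathscr{D}_2$.) Combining this with $\mathscr{A}\cup\mathscr{B}=(0,\sigma^2]\times(0,\sigma^2]\setminus\mathscr{D}_1$ (Lemma~\ref{lm:prp-reg-DR}) and the partition property of $\mathscr{D}_1,\mathscr{D}_2,\mathscr{D}_3$ gives $\mathscr{B}=(\mathscr{A}\cup\mathscr{B})\setminus\mathscr{A}=\bigl((0,\sigma^2]\times(0,\sigma^2]\setminus\mathscr{D}_1\bigr)\setminus\mathscr{D}_2=\mathscr{D}_3$.

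The main obstacle is the algebraic reduction in the second paragraph: recognizing that the implicit constraint $D_1<\lambda_1(\sqrt{D_1/D_2})$, which carries the square-root of \eqref{eq:eigenvalues}, is equivalent on all of $\mathscr{D}_2$ to the polynomial inequality $(\sigma^2-D_1)(\sigma^2-D_2)>\rho^2\sigma^4$ — in particular verifying that the squaring step is reversible throughout $\mathscr{D}_2$ and that this inequality matches the exact piecewise description of $\mathscr{D}_2$ in Theorem~\ref{thm:rd1d2-main}. The only other delicate point is boundary bookkeeping: the algorithm also produces the degenerate pair $(\sigma^2,\sigma^2)$ (at rate $0$), but this pair lies in $\mathscr{D}_1$ and is removed once one restricts to $(0,\sigma^2]\times(0,\sigma^2]\setminus\mathscr{D}_1$ as in the statement of Lemma~\ref{lm:KShat}.
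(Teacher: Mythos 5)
Your proposal is correct and takes essentially the same route as the paper: it identifies $D_1=\Delta_2$, $D_2=\Delta_2/c^2$ (hence $c^2=D_1/D_2$), reduces the constraint $D_1<\lambda_1$ algebraically to the defining inequality of $\mathscr{D}_2$, and then obtains $\mathscr{B}=\mathscr{D}_3$ by complementation of $\mathscr{A}\cup\mathscr{D}_1$ within $(0,\sigma^2]\times(0,\sigma^2]$. The only difference is that you make explicit what the paper leaves as ``can be verified'' — the reversibility of the squaring step and the disjointness of $\mathscr{A}$ and $\mathscr{B}$ via $(\sigma^2-D_1)(\sigma^2-D_2)$ versus $\rho^2\sigma^4$ — which is a harmless elaboration, not a different argument.
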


\begin{proof}
  See Section \ref{subsec:RD1D2-prf-regD2D3}.
\end{proof}

Combining Lemma \ref{lm:KShat} and Lemma \ref{lm:reg-D2D3} yields
Lemma \ref{lm:KShat-new}. \hfill $\Box$

\begin{prp}\label{prp:opt-lin-comb}
  Every $(D_1,D_2)$ in $\mathscr{D}_1 \cup \mathscr{D}_3$ can be
  achieved with rate $R_{S_1,S_2}(D_1,D_2)$ by optimally describing a
  linear combination of the sequences ${\bf S}_1$ and ${\bf S}_2$.
\end{prp}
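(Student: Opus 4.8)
The plan is to handle the two regions separately, in each case producing a scheme that describes a single scalar linear combination of the two source components at the scalar Gaussian rate--distortion rate and then forms each reconstruction as a scaling of that one description. For $(D_1,D_2)\in\mathscr{D}_1$ the linear combination is degenerate. Assuming $D_{\textnormal{min}}=D_1$ (the case $D_{\textnormal{min}}=D_2$ being symmetric), I would describe $\{S_{1,k}\}$ alone by an optimal rate-$\frac{1}{2}\log_2^+(\sigma^2/D_1)$ quantizer and set $\hat S_{2,k}=\rho\,\hat S_{1,k}$. Writing $S_2=\rho S_1+(S_2-\rho S_1)$ with $S_2-\rho S_1$ orthogonal to $S_1$, and hence to $\hat S_1$, the distortion on $\{S_{2,k}\}$ equals $\sigma^2(1-\rho^2)+\rho^2 D_1$, which the defining inequalities of $\mathscr{D}_1$ guarantee does not exceed $D_2$. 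The total rate used is $\frac{1}{2}\log_2^+(\sigma^2/D_{\textnormal{min}})$, which equals $R_{S_1,S_2}(D_1,D_2)$ by \eqref{eq:RD1D2-solution}.

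For $(D_1,D_2)\in\mathscr{D}_3$ I would invoke the algorithm of Section~\ref{sec:prf-regD1D2}. Let $R^{\ast}$ and $c^{\ast}$ be the rate and scaling coefficient furnished by Theorem~\ref{thm:basic-method}~i), decorrelate $(S_1,c^{\ast}S_2)$ into $(V_1,V_2)$ via \eqref{eq:decorrelation}, and apply reverse waterfilling at rate $R^{\ast}$. By Lemma~\ref{lm:reg-D2D3} the region $\mathscr{D}_3$ is $\mathscr{B}$, i.e., the regime $\lambda_1<\Delta_2\leq\lambda_2$, so by \eqref{eq:Delta1-rw} and \eqref{eq:cov-vhat} we have $\hat V_1=0$: all of the rate is placed on the single scalar zero-mean Gaussian $V_2=a_2 S_1+a_1 c^{\ast} S_2$, and on it reverse waterfilling reduces to the scalar rate--distortion code, with $R^{\ast}=\frac{1}{2}\log_2(\lambda_2/\Delta_2)$ the rate--distortion rate of $V_2$ at distortion $\Delta_2$. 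Since $\trans{(\hat S_1,\hat S_2)}=\mat{B}\,\trans{(0,\hat V_2)}$, both reconstructions are scalar multiples of $\hat V_2$; by Theorem~\ref{thm:basic-method}~i) they achieve exactly $(D_1,D_2)$, and by Lemma~\ref{lm:prp-reg-DR} (equivalently Theorem~\ref{thm:basic-method}~iii)) the rate $R^{\ast}$ of this scheme equals $R_{S_1,S_2}(D_1,D_2)$. Combining this with the $\mathscr{D}_1$ case proves the proposition.

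Most of the work is already done in Theorem~\ref{thm:basic-method} and Lemmas~\ref{lm:KShat-new}--\ref{lm:reg-D2D3}; the step that needs to be spelled out is the interpretation of Lemma~\ref{lm:reg-D2D3}, namely that $\mathscr{D}_3$ is precisely the set of distortion pairs for which the smaller eigen-direction $V_1$ receives no rate under reverse waterfilling, so that what remains is exactly the optimal scalar description of the linear combination $V_2=a_2 S_1+a_1 c^{\ast}S_2$ (which is nondegenerate for $\rho\in(0,1)$, the boundary cases $\rho\in\{0,1\}$ being checked directly, and $\mathscr{D}_2$ being genuinely excluded because there both $\hat V_1$ and $\hat V_2$ are active). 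I do not expect a real obstacle beyond this bookkeeping.
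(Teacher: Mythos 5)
Your proposal is correct and follows essentially the same route as the paper: the $\mathscr{D}_1$ case by optimally describing one source component alone (the argument of Section~\ref{sec:prf-regD3}), and the $\mathscr{D}_3$ case by observing via Lemma~\ref{lm:reg-D2D3} and \eqref{eq:cov-vhat} that reverse waterfilling there assigns all the rate to the $V_2$-component, which is a linear combination of ${\bf S}_1$ and ${\bf S}_2$, with optimality of the rate coming from Theorem~\ref{thm:basic-method}. Your write-up merely spells out in more detail the bookkeeping that the paper leaves implicit.
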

\begin{proof}
  We prove Proposition \ref{prp:opt-lin-comb} for the regions
  $\mathscr{D}_1$ and $\mathscr{D}_3$ separately. For $\mathscr{D}_1$
  the proof follows immediately from Section \ref{sec:prf-regD3} where
  it is shown that every $(D_1,D_2)$ in $\mathscr{D}_1$ is achieved
  with rate $R_{S_1,S_2}(D_1,D_2)$ either by optimally describing
  ${\bf S}_1$ or by optimally describing ${\bf S}_2$. For
  $\mathscr{D}_3$ the proof follows from combining \eqref{eq:cov-vhat}
  with Lemma \ref{lm:reg-D2D3}, from which it follows that every
  $(D_1,D_2)$ in $\mathscr{D}_3$ is achieved with rate
  $R_{S_1,S_2}(D_1,D_2)$ by optimally describing the sequence of
  corresponding $V_2$-components of $({\bf S}_1, {\bf S}_2)$. This
  sequence, by the definition of $V_2$, is a linear combination of
  ${\bf S}_1$ and ${\bf S}_2$.
\end{proof}

It remains to prove Lemma \ref{lm:KShat} and Lemma
\ref{lm:reg-D2D3}. We begin with Lemma \ref{lm:reg-D2D3}.

\subsubsection{Proof of Lemma \ref{lm:reg-D2D3}} \label{subsec:RD1D2-prf-regD2D3}

Lemma \ref{lm:reg-D2D3} determines the sets $\mathscr{A}$ and
$\mathscr{B}$ in terms of the distortions $D_1$ and $D_2$. To prove
this lemma we first derive the expression for $\mathscr{A}$. The
expression for $\mathscr{B}$ will then be deduced by noticing that
$\mathscr{A}$ and $\mathscr{B}$ form a partition of $(0,\sigma^2]
\times (0,\sigma^2] \setminus \mathscr{D}_1$.

The region $\mathscr{A}$ is defined as the set of all $(D_1,D_2)$
deriving from $0 \leq \Delta_2 \leq \lambda_1$. To translate this
condition to an expression in terms of $D_1$ and $D_2$, we express
$\Delta_2$ in terms of $D_1$ and $D_2$.

For $0 \leq \Delta_2 \leq \lambda_1$ however, the reverse
waterfilling solution in \eqref{eq:Delta1-rw} yields $\Delta_1 =
\Delta_2$ such that \eqref{eq:Di-Delta12} simplifies to
\begin{equation}\label{eq:D1D2-regD1}
  D_1 = \Delta_2 \qquad \qquad D_2 = \frac{1}{c^2} \Delta_2.
\end{equation}
Thus, the region $\mathscr{A}$ can be restated as the set of all pairs
$(D_1,D_2) \in (0,\sigma^2] \times (0,\sigma^2]$ satisfying $0 \leq
D_1 \leq \lambda_1$ and $D_1/D_2 = c^2$. Writing out $\lambda_1$ as in
\eqref{eq:eigenvalues} and substituting therein $c^2$ by $D_1/D_2$
yields
\begin{equation}\label{eq:D2-leq-f(D1)}
  0 \leq D_2 < (\sigma^2(1-\rho^2) - D_1) \frac{\sigma^2}{\sigma^2-D_1}.
\end{equation}
On the other hand, substituting the RHS of \eqref{eq:eigenvalues} in
$D_1 = \lambda_1$ leads to
\begin{displaymath}
  \max_{(D_1,D_2) \in \mathscr{D}_2} D_1 = \sigma^2 (1-\rho^2),
\end{displaymath}
where the maximum is obtained as $c \rightarrow \infty$. Thus, the
region $\mathscr{A}$ can finally be restated as
\begin{displaymath}
  \mathscr{A} = \bigg\{ (D_1,D_2): 0 \leq D_1 \leq \sigma^2
  (1-\rho^2), 0 \leq D_2 < (\sigma^2(1-\rho^2) - D_1)
  \frac{\sigma^2}{\sigma^2-D_1} \bigg\},
\end{displaymath}
i.e., $\mathscr{A} = \mathscr{D}_2$.

We now turn to the evaluation of $\mathscr{B}$. As can be verified
by help of \eqref{eq:Di-Delta12} the two sets $\mathscr{A}$ and
$\mathscr{B}$ form a partition of $(0,\sigma^2] \times
(0,\sigma^2] \setminus \mathscr{D}_1$. Thus, the region
$\mathscr{B}$ is given by $(0,\sigma^2] \times (0,\sigma^2]
\setminus \left( \mathscr{A} \cup \mathscr{D}_1 \right)$. Hence,
\begin{align*}
  \mathscr{B} = \Bigg\{ (D_1,D_2): \: &0 \leq D_1 \leq
  \sigma^2(1-\rho^2),\\
  &(\sigma^2(1-\rho^2) - D_1) \frac{\sigma^2}{\sigma^2-D_1} \leq D_2 <
  \sigma^2 (1-\rho^2) + \rho^2 D_1;\\[4mm]
  &\sigma^2(1-\rho^2) < D_1 \leq \sigma^2,\\
  &\frac{D_1 - \sigma^2(1-\rho^2)}{\rho^2} <  D_2 < \sigma^2
  (1-\rho^2) + \rho^2 D_1 \Bigg\},
\end{align*}
i.e., $\mathscr{B} = \mathscr{D}_3$. \hfill $\Box$

\subsubsection{Proof of Lemma
  \ref{lm:KShat}} \label{subsec:RD1D2-prf-KShat}

Lemma \ref{lm:KShat} gives an explicit expression of $\cov{\hat{S}}$
for $(D_1,D_2) \in \mathscr{A}$ and for $(D_1,D_2) \in
\mathscr{B}$. The derivations are based on the expression for
$\cov{\hat{V}}$ in \eqref{eq:cov-vhat}. Combining \eqref{eq:cov-vhat}
with $\cov{\hat{S}} = \mat{B} \cov{\hat{V}} \trans{\mat{B}}$ gives
\begin{equation} \label{eq:cov_S-hat}
  \cov{\hat{S}} = \left\{ \begin{array}{l l}
      \left( \begin{array}{c c}
          a_1^2 \lambda_1 + a_2^2 \lambda_2 - \Delta_2 &
          \frac{a_1a_2}{c}(\lambda_2-\lambda_1)\\[3mm] 
          \frac{a_1a_2}{c}(\lambda_2-\lambda_1) & \frac{1}{c^2}
          (a_2^2\lambda_1 + a_1^2\lambda_2 - \Delta_2)
        \end{array} \right) & \text{if } (D_1,D_2) \in \mathscr{A}\\[8mm]
      \left( \begin{array}{c c}
          a_2^2(\lambda_2 - \Delta_2) & \frac{a_1a_2}{c}(\lambda_2 -
          \Delta_2)\\[3mm] 
          \frac{a_1a_2}{c}(\lambda_2 - \Delta_2) & \frac{a_1^2}{c^2}
          (\lambda_2 - \Delta_2)
        \end{array} \right) & \text{if } (D_1,D_2) \in \mathscr{B}.
    \end{array} \right.
\end{equation}

For $(D_1,D_2) \in \mathscr{A}$, we now express the variables in
\eqref{eq:cov_S-hat} in terms of $D_1$ and $D_2$. From
\eqref{eq:D1D2-regD1} it follows that
\begin{equation}\label{eq:Delta2-c2-regD1}
  \Delta_2 = D_1, \qquad \text{and} \qquad c^2 = \frac{D_1}{D_2}.
\end{equation}
Furthermore, the coefficients $a_1$ and $a_2$ are determined by $a_1^2
+ a_2^2 = 1$ and
\begin{align}
  a_1^2 &= \frac{2c^2\rho^2}{1-2c^2(1-2\rho^2) + c^4 +(1-c^2)
    \sqrt{1-2c^2(1-2\rho^2)+c^4}}. \label{eq:eigenvector}
\end{align}
Combining \eqref{eq:Delta2-c2-regD1}, \eqref{eq:eigenvector} and the
expressions for $\lambda_1$ and $\lambda_2$ in
\eqref{eq:eigenvalues} with \eqref{eq:cov_S-hat}, yields
\begin{equation}\label{eq:KShat-D1}
  \cov{\hat{S}} = \left( \begin{array}{c c}
      \sigma^2-D_1 & \rho \sigma^2\\[5mm]
      \rho \sigma^2 & \sigma^2 -D_2
    \end{array} \right) \qquad \text{if } (D_1,D_2) \in \mathscr{A}.
\end{equation}

We now turn to the evaluation of $\cov{\hat{S}}$ for the region
$\mathscr{B}$. For this region, no calculations are needed. It
suffices to notice that since every optimal reconstruction pair
$(\hat{S}_1, \hat{S}_2)$ satisfies the orthogonality principle, the
main diagonal terms of the covariance matrix $\cov{\hat{S}}$ are
\begin{displaymath}
  a_2^2(\lambda_2 - \Delta_2) = \sigma^2 - D_1 \qquad \text{and}
  \qquad \frac{a_1^2}{c^2} (\lambda_2 - \Delta_2) = \sigma^2 - D_2,
\end{displaymath}
and that the anti-diagonal terms are both equal to the square-root
of the product of the two main diagonal terms and thus are
\begin{displaymath}
  \frac{a_1a_2}{c}(\lambda_2 - \Delta_2) =
  \sqrt{(\sigma^2-D_1)(\sigma^2-D_2)}.
\end{displaymath}
The covariance matrix $\cov{\hat{S}}$ for $\mathscr{B}$ is therefore
\begin{equation}\label{eq:KShat-D2}
  \cov{\hat{S}} = \left( \begin{array}{c c}
      \sigma^2-D_1 & \sqrt{(\sigma^2-D_1)(\sigma^2-D_2)}\\[5mm]
      \sqrt{(\sigma^2-D_1)(\sigma^2-D_2)} & \sigma^2 -D_2
    \end{array} \right) \qquad \text{if } (D_1,D_2) \in \mathscr{B}.
\end{equation}

\hfill $\Box$

\section{Proof of Proposition
  \ref{prp:uncoded-AWGN-ch}} \label{appx:prf-unc-AWGN}

Proposition \ref{prp:uncoded-AWGN-ch} pertains to the point-to-point
problem of Section \ref{sec:pt2pt}, in which the source pair $ \{ (
S_{1,k}, S_{2,k} ) \}$ is to be transmitted over an AWGN channel. It
states that for an achievable distortion pair $(D_1,D_2)$ for which
the SNR of the channel satisfies $P/N \leq \Gamma
(D_1,\sigma^2,\rho)$, there exist $\alpha^{\ast}, \beta^{\ast} \geq 0$
such that
\begin{IEEEeqnarray*}{rCl}
  \tilde{D}_1^{\textnormal{u}}(\alpha^{\ast},\beta^{\ast}) \leq D_1 & \qquad
  \text{and} \qquad & \tilde{D}_2^{\textnormal{u}}(\alpha^{\ast},\beta^{\ast})
  \leq D_2.
\end{IEEEeqnarray*}

The essence of Proposition \ref{prp:uncoded-AWGN-ch} is that the
uncoded scheme proposed in Section \ref{subsec:pt2pt-uncoded} achieves
every distortion pair $(D_1,D_2)$ in $\mathscr{D}_1 \cup
\mathscr{D}_3$ with the least possible transmission power, i.e., with
the $P$ for which
\begin{IEEEeqnarray*}{rCl}
  R_{S_1,S_2}(D_1,D_2) & = & \frac{1}{2} \log_2 \left( 1 + \frac{P}{N}
  \right).
\end{IEEEeqnarray*}
In Proposition \ref{prp:uncoded-AWGN-ch}, the condition $(D_1,D_2) \in
\mathscr{D}_1 \cup \mathscr{D}_3$ is merely expressed in form of the
threshold $\Gamma (D_1,\sigma^2,\rho)$ on $P/N$.

We start the proof by showing that the uncoded scheme indeed achieves
every $(D_1,D_2) \in \mathscr{D}_1 \cup \mathscr{D}_3$ with the least
possible transmission power, respectively at the smallest $P/N$. To
this end, let $\Psi(D_1,D_2)$ be the smallest $P/N$ at which
$(D_1,D_2)$ is achievable, i.e.
\begin{IEEEeqnarray*}{rCl}
  R_{S_1,S_2}(D_1,D_2) & = & \frac{1}{2} \log_2 \left( 1 +
    \Psi(D_1,D_2) \right).
\end{IEEEeqnarray*}
We now show that for every $(D_1,D_2) \in \mathscr{D}_1 \cup
\mathscr{D}_3$, there exist $\alpha^{\ast}$, $\beta^{\ast}$ such that
the distortions resulting from the uncoded scheme at $P/N =
\Psi(D_1,D_2)$ are $(\tilde{D}_1^{\textnormal{u}}(\alpha,\beta),
\tilde{D}_2^{\textnormal{u}}(\alpha,\beta)) = (D_1, D_2)$. To show
this, we rely on Proposition \ref{prp:opt-lin-comb}
(p.~\pageref{prp:opt-lin-comb}) and on the result of
\cite{goblick65}. Proposition \ref{prp:opt-lin-comb} states for the
corresponding source coding problem that if $(D_1,D_2)$ is in
$\mathscr{D}_1 \cup \mathscr{D}_3$ then $R_{S_1,S_2}(D_1,D_2)$ can be
achieved by optimally describing a linear combination of the sequences
${\bf S}_1$ and ${\bf S}_2$. The result of \cite{goblick65} states
that the minimum expected squared-error transmission of a Gaussian
source over a AWGN channel is achieved by uncoded transmission. Thus,
by combining Proposition \ref{prp:opt-lin-comb} with the result of
\cite{goblick65} and using that since $\{ (S_{1,k},S_{2,k}) \}$ are
jointly Gaussian, each of their linear combination $\alpha {\bf S}_1 +
\beta {\bf S}_2$ is also Gaussian, it follows that every distortion
pair $(D_1,D_2) \in \mathscr{D}_1 \cup \mathscr{D}_3$ is achieved at
$P/N = \Psi (D_1,D_2)$, by sending over the channel
\begin{IEEEeqnarray*}{rCl}
  X_k^{\textnormal{u}}(\alpha,\beta) & = & \sqrt{\frac{P}{\sigma^2
      (\alpha^2 + 2 \rho \alpha \beta + \beta^2)}} \left( \alpha
    S_{1,k} + \beta S_{2,k} \right) \qquad k \in \{ 1,2, \ldots ,n \},
\end{IEEEeqnarray*}
with the appropriate $\alpha,\beta \geq 0$.

It remains to derive the threshold function $\Gamma$. To this end,
first notice that for an arbitrary fixed $D_1 \in [0,\sigma^2]$, the
smaller the associated $D_2$ gets, the larger $\Psi(D_1,D_2)$ becomes,
i.e., for a fixed $D_1$ the function $\Psi(D_1,D_2)$ is decreasing in
$D_2$. Now, for every $D_1 \in [0,\sigma^2]$, let $\bar{D}_2(D_1)$ be
the smallest $D_2$ such that $(D_1,D_2) \in \mathscr{D}_1 \cup
\mathscr{D}_3$. Then, for every $D_1 \in [0,\sigma^2]$
\begin{IEEEeqnarray*}{rCl}
  \Gamma (D_1,\sigma^2,\rho) & = & \Psi(D_1,\bar{D}_2(D_1)).
\end{IEEEeqnarray*}
Hence, it remains to evaluate $\Psi(D_1,\bar{D}_2(D_1))$ for every $D_1
\in [0,\sigma^2]$. We have
\begin{IEEEeqnarray}{rCl}\label{eq:D2min-uncoded}
  \bar{D}_2(D_1) & = & \left\{ \begin{array}{l l}
      \left( \sigma^2 (1-\rho^2) - D_1 \right)
      \frac{\sigma^2}{\sigma^2 - D_1} & \text{if } 0 \leq D_1 \leq
      \sigma^2(1-\rho^2),\\[2mm]
      0 & \text{if } D_1 > \sigma^2(1-\rho^2).
    \end{array}
 \right.
\end{IEEEeqnarray}
For $D_1 > \sigma^2 (1-\rho^2)$ it immediately follows that $\Gamma
(D_1, \sigma^2, \rho) = \infty$. For $0 \leq D_1 \leq \sigma^2
(1-\rho^2)$ the value of $\Psi(D_1,\bar{D}_2(D_1))$, and hence the
value of $\Gamma(D_1,\sigma^2,\rho)$ follows from solving
\begin{IEEEeqnarray}{rCl}\label{eq:cond-D1D2-inD3}
 \frac{1}{2} \log_2^+ \left( \frac{\sigma^4 (1-\rho^2)}{D_1\bar{D}_2 - \left(
              \rho \sigma^2 - \sqrt{(\sigma^2-D_1)(\sigma^2-\bar{D}_2)} \right)^2}
        \right) & = & \frac{1}{2} \log_2 \left( 1 + \frac{P}{N}
        \right),
\end{IEEEeqnarray}
where we have used the shorthand notation $\bar{D}_2$ for
$\bar{D}_2(D_1)$. From \eqref{eq:D2min-uncoded}, we now get
\begin{IEEEeqnarray*}{rCl}
  \rho \sigma^2 - \sqrt{(\sigma^2 - D_1)(\sigma^2-\bar{D}_2)} & = & 0.
\end{IEEEeqnarray*}
Thus, \eqref{eq:cond-D1D2-inD3} reduces to
\begin{IEEEeqnarray}{rCl}
  \frac{\sigma^4(1-\rho^2)}{D_1 \bar{D}_2} & = & 1 + \frac{P}{N}.
\end{IEEEeqnarray}
which, by \eqref{eq:D2min-uncoded}, can be rewritten as
\begin{IEEEeqnarray}{rCl}\label{eq:}
  \frac{P}{N} & = & \frac{\sigma^4(1-\rho^2) - 2 \sigma^2 D_1
    (1-\rho^2) + D_1^2}{D_1 \left( \sigma^2 (1-\rho^2) - D_1 \right)}.
\end{IEEEeqnarray}
This is the threshold given in Proposition
\ref{prp:uncoded-AWGN-ch} for $0 \leq D_1 \leq
\sigma^2(1-\rho^2)$.

To conclude the proof, we justify the restriction to $\alpha, \beta
\geq 0$. This restriction is made because from the expressions for
$\tilde{D}_1^{\textnormal{u}}(\alpha,\beta)$ and
$\tilde{D}_2^{\textnormal{u}}(\alpha,\beta)$ it follows that it incurs
no loss in performance. This is so, since $\rho \geq 0$, and thus the
uncoded transmission scheme with the choice of $( \alpha, \beta )$
such that $\alpha \cdot \beta < 0$ yields a distortion that is
uniformly worse than the choice of $(| \alpha | , | \beta |)$, and
every distortion pair achievable with $\alpha, \beta < 0$, is also
achievable with $(| \alpha |, | \beta |)$. Thus, without loss in
performance, we can limit ourselves to $\alpha, \beta \geq 0$. \hfill
$\Box$

\section{Proof of Theorem
  \ref{thm:nofb-nec-cond}} \label{appx:prf-thm-nofb-nec-cond}

Theorem \ref{thm:nofb-nec-cond} applies to the multiple-access problem
without feedback. For this problem it gives a necessary condition for
the achievability of a distortion pair $(D_1,D_2)$. We begin with a
reduction.


To state the proof in more detail, we make the following reduction.
\begin{rdc}\label{rdc:zero-avg}
  There is no loss in optimality in restricting the encoding
  functions to satisfy
  \begin{IEEEeqnarray}{rCl}\label{eq:red-avg-zero-signal}
    \E{X_{i,k}} & = & 0 \qquad \text{for } i \in \{ 1,2 \}, \text{ and
    all } k \in \Integers.
  \end{IEEEeqnarray}
\end{rdc}

\begin{proof}
  We show that for every achievable tuple $(D_1,D_2, \sigma^2,
  \sigma^2, \rho, P_1, P_2, N)$, there exists a scheme with encoding
  functions satisfying \eqref{eq:red-avg-zero-signal} that achieves
  this tuple. To this end, let $(D_1,D_2, \sigma^2, \sigma^2, \rho,
  P_1, P_2, N)$ be an arbitrary achievable tuple. Further, let $\{
  f_1^{(n)}\}$, $\{ f_2^{(n)}\}$, $\{ \phi^{(n)}\}$ be sequences of
  encoding and decoding functions achieving this tuple. If the
  encoding functions $\{ f_1^{(n)}\}$, $\{ f_2^{(n)}\}$ do not satisfy
  \eqref{eq:red-avg-zero-signal}, then they can be adapted as
  follows. Before sending the codewords over the channel, the mean of
  the codewords is subtracted so as to satisfy
  \eqref{eq:red-avg-zero-signal}. And at the channel output this
  subtraction is corrected by adding this term to the received
  sequence before decoding.
\end{proof}

In view of Reduction \ref{rdc:zero-avg}, we restrict ourselves, for
the remainder of this proof to encoding functions that satisfy
\eqref{eq:red-avg-zero-signal}. The key element in the proof of
Theorem \ref{thm:nofb-nec-cond} is the following.
\begin{lm}\label{lm:ub-power}
  Any scheme satisfying condition \eqref{eq:red-avg-zero-signal} and
  the original power constraints \eqref{eq:power-constraint}, also satisfies
\begin{IEEEeqnarray}{rCl}\label{eq:ub-power}
  \frac{1}{n} \sum_{k=1}^n \E{(X_{1,k} + X_{2,k})^2} & \leq & P_1 +
  P_2 + 2 \rho \sqrt{P_1P_2}.
\end{IEEEeqnarray}
\end{lm}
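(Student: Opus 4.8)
The plan is to expand the square on the left-hand side of \eqref{eq:ub-power}, dispose of the two ``own-power'' terms using the power constraints \eqref{eq:power-constraint}, and bound the remaining cross term by exploiting the fact that $X_{1,k}$ is a deterministic function of ${\bf S}_1$ only while $X_{2,k}$ is a deterministic function of ${\bf S}_2$ only. Concretely, I would first write
$\frac{1}{n}\sum_{k=1}^n \E{(X_{1,k}+X_{2,k})^2} = \frac{1}{n}\sum_{k=1}^n \E{X_{1,k}^2} + \frac{1}{n}\sum_{k=1}^n \E{X_{2,k}^2} + \frac{2}{n}\sum_{k=1}^n \E{X_{1,k}X_{2,k}}$,
so that by \eqref{eq:power-constraint} the first two sums are bounded by $P_1$ and $P_2$ respectively, and it remains to show that $\frac{1}{n}\sum_{k=1}^n \E{X_{1,k}X_{2,k}} \leq \rho \sqrt{P_1 P_2}$.

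For the cross term, fix $k$ and recall from Reduction~\ref{rdc:zero-avg} that $\E{X_{1,k}} = \E{X_{2,k}} = 0$. If either $X_{i,k}$ is almost surely constant the term vanishes; otherwise write $X_{i,k} = \sqrt{\E{X_{i,k}^2}}\, g_i({\bf S}_i)$ with $g_i$ of zero mean and unit variance. By the very definition of the Hirschfeld--Gebelein--R\'{e}nyi maximal correlation $\rho_{\textnormal{max}}({\bf S}_1;{\bf S}_2)$ between the vectors ${\bf S}_1$ and ${\bf S}_2$, this yields $\E{X_{1,k}X_{2,k}} \leq \rho_{\textnormal{max}}({\bf S}_1;{\bf S}_2)\sqrt{\E{X_{1,k}^2}\,\E{X_{2,k}^2}}$. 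The key input is the identification $\rho_{\textnormal{max}}({\bf S}_1;{\bf S}_2) = \rho$: since the source outputs $\{(S_{1,k},S_{2,k})\}$ are i.i.d., the maximal correlation of the pair of $n$-blocks equals the single-letter maximal correlation of one pair $(S_{1,1},S_{2,1})$ --- this is the tensorization property of maximal correlation over independent pairs, due to Witsenhausen --- and the latter equals $|\rho| = \rho$ for a zero-mean bivariate Gaussian \cite[Lemma~10.2, p.~182]{rozanov67} (recall that after Reduction~\ref{rdc:source-normalization} we have $\rho \in [0,1]$).

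It then remains only to sum over $k$ and apply the Cauchy--Schwarz inequality: $\frac{1}{n}\sum_{k=1}^n \E{X_{1,k}X_{2,k}} \leq \frac{\rho}{n}\sum_{k=1}^n \sqrt{\E{X_{1,k}^2}\,\E{X_{2,k}^2}} \leq \rho \sqrt{\Big(\frac{1}{n}\sum_{k=1}^n \E{X_{1,k}^2}\Big)\Big(\frac{1}{n}\sum_{k=1}^n \E{X_{2,k}^2}\Big)} \leq \rho\sqrt{P_1 P_2}$, where the last step again invokes \eqref{eq:power-constraint}. Combining this with the bound on the own-power terms gives \eqref{eq:ub-power}. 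I expect the only non-routine step to be the identification $\rho_{\textnormal{max}}({\bf S}_1;{\bf S}_2) = \rho$, i.e.\ the tensorization of maximal correlation together with its single-letter Gaussian value; the rest is two applications of Cauchy--Schwarz and the power constraints.
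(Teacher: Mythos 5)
Your proposal is correct and follows essentially the same route as the paper: the paper also expands the square, bounds the per-time cross-correlation by $\rho\sqrt{\E{X_{1,k}^2}\E{X_{2,k}^2}}$ via Witsenhausen's tensorization of maximal correlation together with the Gaussian value $\rho_{\max}=|\rho|$ from Rozanov (stated there as Lemma~\ref{lm:max-EX1X2}), and then sums over $k$ using Cauchy--Schwarz and the power constraints. No gaps.
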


\begin{proof}
  See Appendix \ref{sec:prf-ub-power}.
\end{proof}

Based on Lemma \ref{lm:ub-power}, the proof of Theorem
\ref{thm:nofb-nec-cond} is now obtained by relaxing the original
problem as follows. First, the power constraint of
\eqref{eq:power-constraint} is replaced by the power constraint of
\eqref{eq:ub-power}. Then, under the power constraint of
\eqref{eq:ub-power}, the two transmitters are allowed to fully
cooperate. These two relaxations reduce the original multiple-access
problem to a point-to-point problem where the source sequence $\{
(S_{1,k}, S_{2,k}) \}$ is to be transmitted over an AWGN channel of
power constraint $P_1 + P_2 + 2 \rho \sqrt{P_1 P_2}$ and noise
variance $N$. For this point-to-point problem, a necessary condition
for the achievability of a distortion pair $(D_1,D_2)$ follows by
source-channel separation, and is
\begin{IEEEeqnarray}{rCl}\label{eq:nec-cond}
  R_{S_1,S_2}(D_1,D_2) & \leq & \frac{1}{2} \log_2 \left( 1 +
    \frac{P_1 + P_2 + 2 \rho \sqrt{P_1 P_2}}{N} \right).
\end{IEEEeqnarray}
It is now easy to conclude that \eqref{eq:nec-cond} is also a
necessary condition for the achievability of a distortion pair
$(D_1,D_2)$ in the original multiple-access problem. This simply
follows since \eqref{eq:nec-cond} is a necessary condition for the
achievability of a distortion pair $(D_1,D_2)$ in a relaxed version of
the multiple-access problem. This concludes the proof of Theorem
\ref{thm:nofb-nec-cond}. \hfill $\Box$

\subsection{Proof of Lemma \ref{lm:ub-power}}\label{sec:prf-ub-power}

The key to Lemma \ref{lm:ub-power} is as follows:
\begin{lm}\label{lm:max-EX1X2}
  For any coding scheme with encoding functions of the form
  \eqref{eq:encoders-nofb} that satisfy the power constraints
  \eqref{eq:power-constraint} and condition
  \eqref{eq:red-avg-zero-signal} of Reduction \ref{rdc:zero-avg}, and
  where the encoder input sequences are jointly Gaussian as in
  \eqref{eq:source-law} with non-negative correlation coefficient
  $\rho$ and equal variances $\sigma_1^2 =\sigma_2^2 = \sigma^2$
  (Reduction \ref{rdc:source-normalization}), any time-$k$ encoder
  outputs $X_{1,k}$ and $X_{2,k}$ satisfy
  \begin{IEEEeqnarray}{rCl}\label{eq:max-EX1X2}
    \E{X_{1,k} X_{2,k}} & \leq & \rho \sqrt{\E{X_{1,k}^2}}
    \sqrt{\E{X_{2,k}^2}}.
  \end{IEEEeqnarray}
\end{lm}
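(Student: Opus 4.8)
The plan is to use that, by Reduction~\ref{rdc:zero-avg}, the time-$k$ channel inputs are \emph{zero-mean} functions of \emph{disjoint} blocks of source symbols: $X_{1,k}$ is the $k$-th coordinate of $f_1^{(n)}(\mathbf{S}_1)$, hence a function $g(\mathbf{S}_1)$ of $\mathbf{S}_1$ alone, and likewise $X_{2,k}=h(\mathbf{S}_2)$, with $\E{g(\mathbf{S}_1)}=\E{h(\mathbf{S}_2)}=0$ and both in $L^2$ by the power constraint~\eqref{eq:power-constraint}. By the very definition of the Hirschfeld--Gebelein--R\'enyi maximal correlation $\rho_{\textnormal{max}}(\mathbf{S}_1;\mathbf{S}_2)$ between the two blocks (normalize $g$ and $h$ to unit variance, unless one of them vanishes, in which case both sides of~\eqref{eq:max-EX1X2} are zero), it follows that $\E{X_{1,k}X_{2,k}}\le\rho_{\textnormal{max}}(\mathbf{S}_1;\mathbf{S}_2)\sqrt{\E{X_{1,k}^2}}\sqrt{\E{X_{2,k}^2}}$. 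Thus the lemma reduces to the identity $\rho_{\textnormal{max}}(\mathbf{S}_1;\mathbf{S}_2)=\rho$.

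I would establish this identity in two steps. First, since the pairs $\{(S_{1,k},S_{2,k})\}$ are IID, the classical tensorization property of maximal correlation for independent pairs (due to Witsenhausen) gives $\rho_{\textnormal{max}}(\mathbf{S}_1;\mathbf{S}_2)=\rho_{\textnormal{max}}(S_{1,1};S_{2,1})$. Second, for a jointly Gaussian pair of correlation coefficient $\rho$ one has $\rho_{\textnormal{max}}=|\rho|$ \cite[Lemma~10.2, p.~182]{rozanov67}, which under the standing assumption $\rho\in[0,1]$ of Reduction~\ref{rdc:source-normalization} equals $\rho$. Combining the two yields $\rho_{\textnormal{max}}(\mathbf{S}_1;\mathbf{S}_2)=\rho$ and hence~\eqref{eq:max-EX1X2}.

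The only genuine obstacle is the block (rather than single-letter) nature of $\mathbf{S}_1,\mathbf{S}_2$, i.e.\ justifying the tensorization step; everything else is bookkeeping. If one prefers a self-contained argument, the same bound can be obtained directly: since the source is memoryless and $\sigma_1^2=\sigma_2^2=\sigma^2$, write $\mathbf{S}_2=\rho\,\mathbf{S}_1+\mathbf{V}$, where $\mathbf{V}$ consists of IID zero-mean Gaussians of variance $\sigma^2(1-\rho^2)$ that are independent of $\mathbf{S}_1$ (the coordinates of $\mathbf{V}=\mathbf{S}_2-\rho\,\mathbf{S}_1$ are easily checked to be uncorrelated with $\mathbf{S}_1$ and with each other, and joint Gaussianity upgrades this to independence). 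Then $\E{X_{1,k}X_{2,k}}=\E{X_{1,k}\,\E{X_{2,k}\mid\mathbf{S}_1}}$, and the map $h\mapsto\E{h(\mathbf{S}_2)\mid\mathbf{S}_1}=\E{h(\rho\,\mathbf{S}_1+\mathbf{V})\mid\mathbf{S}_1}$ is the Ornstein--Uhlenbeck (Gaussian-noise) operator $P_\rho$ on $L^2$ of the $n$-dimensional law $\mathcal N(\mathbf 0,\sigma^2\mathbf I)$, whose eigenfunctions are the tensorized Hermite polynomials with eigenvalues that are nonnegative powers of $\rho$, the eigenvalue $1$ occurring only for the constant function. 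Since $\E{X_{2,k}}=0$, $h$ is mean-zero, so $\E{(\E{X_{2,k}\mid\mathbf{S}_1})^2}=\|P_\rho h\|_2^2\le\rho^2\|h\|_2^2=\rho^2\,\E{X_{2,k}^2}$, and Cauchy--Schwarz applied to $\E{X_{1,k}\,\E{X_{2,k}\mid\mathbf{S}_1}}$ gives~\eqref{eq:max-EX1X2} directly. Either route then feeds into Lemma~\ref{lm:ub-power} by summing~\eqref{eq:max-EX1X2} over $k$ and applying Cauchy--Schwarz once more to $\sum_k\sqrt{\E{X_{1,k}^2}\,\E{X_{2,k}^2}}$ together with the power constraints~\eqref{eq:power-constraint}.
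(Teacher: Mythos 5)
Your primary route is exactly the paper's proof: bound $\E{X_{1,k}X_{2,k}}$ by the maximal correlation of the blocks, then use Witsenhausen's tensorization theorem for independent pairs together with Rozanov's identity $\rho_{\textnormal{max}}=|\rho|$ for a Gaussian pair, with the zero-mean reduction supplying the normalization of the encoder outputs. Your self-contained Ornstein--Uhlenbeck/Hermite alternative is also correct and yields the same bound directly, but it is not needed to reproduce the paper's argument.
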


\begin{proof}
  Lemma \ref{lm:max-EX1X2} follows from two results from Maximum
  Correlation Theory. These results are stated now.
  \begin{thm}[Witsenhausen \cite{witsenhausen75}]\label{thm:witsenhausen}
    Consider a sequence of pairs of random variables $\{
    (W_{1,k},W_{2,k}) \}$, where the pairs are independent (not
    necessarily identically distributed). Then,
    \begin{IEEEeqnarray}{rCl}\label{eq:witsenhausen}
      \sup_{g_1^{(n)},g_2^{(n)}} \E{g_1^{(n)}({\bf W}_1) g_2^{(n)}({\bf
          W}_2)} & \leq & \sup_{\substack{1 \leq k \leq n\\ g_{1,k},
          g_{2,k}}} \E{g_{1,k}(W_{1,k}) g_{2,k}(W_{2,k})},
    \end{IEEEeqnarray}
    where the supremum on the LHS of \eqref{eq:witsenhausen} is taken
    over all functions $g_i^{(n)}: \Reals^n \rightarrow \Reals$, satisfying
    \begin{IEEEeqnarray*}{rCl}
      \E{g_i^{(n)} \left( {\bf W}_i \right)} = 0 \qquad \E{\left( g_i^{(n)}
          \left( {\bf W}_i \right) \right)^2} = 1 \qquad i \in \{ 1,2 \},
    \end{IEEEeqnarray*}
    and the supremum on the RHS of \eqref{eq:witsenhausen} is taken over
    all functions $g_{i,k}: \Reals \rightarrow \Reals$, satisfying
    \begin{IEEEeqnarray*}{rCl}
      \E{g_{i,k}^{(n)} \left( W_{i,k} \right)} = 0 \qquad \E{\left(
          g_{i,k}^{(n)} \left( W_{i,k} \right) \right)^2} = 1 \qquad i
      \in \{ 1,2 \}.
    \end{IEEEeqnarray*}
  \end{thm}
  
  \begin{proof}
    See \cite[Theorem 1, p.~105]{witsenhausen75}.
  \end{proof}

  \begin{lm}\label{lm:max-corr-gauss}
    Consider two jointly Gaussian random variables $W_{1,k}$ and $W_{2,k}$ with
    correlation coefficient $\rho_k$. Then,
    \begin{IEEEeqnarray*}{rCl}
      \sup_{g_{1,k},g_{2,k}} \E{g_{1,k}(W_{1,k})g_{2,k}(W_{2,k})} = |
      \rho_k |,
    \end{IEEEeqnarray*}
    where the supremum is taken over all functions $g_{i,k}: \Reals
    \rightarrow \Reals$, satisfying
    \begin{IEEEeqnarray*}{rCl}
      \E{g_{i,k}(W_{i,k})} = 0 & \qquad & \E{(g_{i,k}(W_{i,k}))^2} = 1
      \qquad i \in \{ 1,2 \}.
    \end{IEEEeqnarray*}
  \end{lm}
  
  \begin{proof}
    See \cite[Lemma 10.2, p.~182]{rozanov67}.
  \end{proof}

  Lemma \ref{lm:max-EX1X2} is now merely a consequence of Theorem
  \ref{thm:witsenhausen} and Lemma \ref{lm:max-corr-gauss} applied to
  our setup. To see this, substitute ${\bf W}_1$ and ${\bf W}_2$ by the
  source sequences ${\bf S}_1$ and ${\bf S}_2$, and let the functions
  $g_1^{(n)} (\cdot)$ and $g_2^{(n)} (\cdot)$ be the encoding
  sub-functions that produce the time-$k$ channel inputs $X_{1,k}$ and
  $X_{2,k}$, i.e., $g_i^{(n)} ({\bf S}_i) = X_{i,k}$.  Then, for every $k
  \in \{ 1,2, \ldots ,n\}$,
  \begin{IEEEeqnarray}{rCl}
    \frac{\E{X_{1,k} X_{2,k}}}{\sqrt{\E{X_{1,k}^2}}
      \sqrt{\E{X_{2,k}^2}}} & \leq & \sup_{g_1^{(n)},g_2^{(n)}}
    \E{g_1^{(n)}({\bf S}_1) g_2^{(n)}({\bf S}_2)} \nonumber\\
    & \stackrel{a)}{\leq} & \sup_{\substack{1 \leq k \leq n\\ g_{1,k}, g_{2,k}}}
    \E{g_{1,k}(S_{1,k}) g_{2,k}(S_{2,k})} \nonumber\\
    & \stackrel{b)}{\leq} & \rho,
  \end{IEEEeqnarray}
  where $a)$ follows from Theorem \ref{thm:witsenhausen} and $b)$
  follows from Lemma \ref{lm:max-corr-gauss} and from our assumption
  that $\rho \geq 0$ (Reduction \ref{rdc:source-normalization}). Thus,
  for every time $k$,
  \begin{IEEEeqnarray}{rCl}\label{eq:bdEX1X2}
    \E{X_{1,k} X_{2,k}} & \leq & \rho \sqrt{\E{X_{1,k}^2}}
    \sqrt{\E{X_{2,k}^2}},
  \end{IEEEeqnarray}
  which is the bound of Lemma \ref{lm:max-EX1X2}. \hfill $\Box$

  Using Lemma \ref{lm:max-EX1X2} we can now prove the bound of Lemma
  \ref{lm:ub-power} as follows:
  \begin{IEEEeqnarray}{rCl}\label{eq:bd-sum-power}
    \frac{1}{n} \sum_{k=1}^n \E{(X_{1,k} + X_{2,k})^2} & = & \frac{1}{n}
    \sum_{k=1}^n \E{X_{1,k}^2} + \frac{1}{n} \sum_{k=1}^n \E{X_{1,k}^2}
    + 2 \frac{1}{n} \sum_{k=1}^n \E{X_{1,k}X_{2,k}}\nonumber\\
    & \leq & P_1 + P_2 + 2 \frac{1}{n} \sum_{k=1}^n \E{X_{1,k}X_{2,k}}\nonumber\\
    & \stackrel{a)}{\leq} & P_1 + P_2 + 2 \rho \frac{1}{n} \sum_{k=1}^n
    \sqrt{\E{X_{1,k}^2}} \sqrt{\E{X_{2,k}^2}}\nonumber\\
    & \stackrel{b)}{\leq} & P_1 + P_2 + 2 \rho \frac{1}{n} \sqrt{\sum_{k=1}^n
      \E{X_{1,k}^2}} \sqrt{\sum_{k=1}^n \E{X_{2,k}^2}}\nonumber\\
    & \leq & P_1 + P_2 + 2 \rho \frac{1}{n} \sqrt{nP_1} \sqrt{nP_2}\nonumber\\
    & = & P_1 + P_2 + 2 \rho \sqrt{P_1P_2},
  \end{IEEEeqnarray}
  where Inequality $a)$ follows by Lemma \ref{lm:max-EX1X2} and from our
  assumption that $\rho \geq 0$, and where Inequality $b)$ follows by
  Cauchy-Schwarz. This concludes the proof of Lemma
  \ref{lm:ub-power}.
\end{proof}

\section{Distortions $(D_1^{\textnormal{u}}, D_2^{\textnormal{u}})$ of
the Uncoded Scheme} \label{appx:prf-thm-uncoded-main}

The expression for $D_i^{\textnormal{u}}$, $i \in \{ 1,2 \}$, is
obtained as follows
\begin{IEEEeqnarray*}{rCl}
  D_i^{\textnormal{u}} & = & \frac{1}{n} \sum_{k=1}^n \E{(S_{i,k} -
    \hat{S}_{i,k}^{\textnormal{u}})^2}\\[2mm] 
  & = & \frac{1}{n} \left( \E{S_{i,k}^2} - 2
    \E{S_{i,k}\hat{S}_{i,k}^{\textnormal{u}}} + \E{ \left(
        \hat{S}_{i,k}^{\textnormal{u}} \right)^2} \right)\\[2mm]
  & \stackrel{a)}{=} & \frac{1}{n} \left( \E{S_{i,k}^2} -
    \E{ \left( \hat{S}_{i,k}^{\textnormal{u}} \right)^2} \right)\\[2mm]
  & \stackrel{b)}{=} & \frac{1}{n} \left( \E{S_{i,k}^2} -
    \frac{(\E{S_{i,k} Y_k})^2}{\E{Y_k^2}} \right)\\[2mm]
  & \stackrel{c)}{=} & \sigma^2 - \sigma^2 \frac{P_1 + 2 \rho
    \sqrt{P_1P_2} + \rho^2 P_2}{P_1 + 2 \rho \sqrt{P_1 P_2} + P_2 + N}\\[2mm]
  & = & \sigma^2 \frac{P_1 (1-\rho^2) + N}{P_1 + P_2 + 2 \rho
    \sqrt{P_1 P_2} + N},
\end{IEEEeqnarray*}
where in $a)$ we have used that $\hat{S}_{i,k}^{\textnormal{u}} = \E{
  S_{i,k} | Y_k }$ satisfies the Orthogonality Principle; in $b)$ we
have used the explicit form of the conditional mean for jointly
Gaussians
\begin{IEEEeqnarray*}{rCl}
  \hat{S}_{i,k}^{\textnormal{u}} & = & \E{ S_{i,k} | Y_k } =
  \frac{\E{S_{1,k} Y_k}}{\E{Y_k^2}} \cdot Y_k;
\end{IEEEeqnarray*}
and in $c)$ we have used the calculation
\begin{IEEEeqnarray*}{rCl}
  \hspace{35mm} \frac{(\E{S_{1,k} Y_k})^2}{\E{Y_k^2}} & = & \frac{ \sigma^2 \left(
      \sqrt{P_1} + \rho \sqrt{P_2} \right)^2 }{P_1 + P_2 + 2 \rho
    \sqrt{P_1P_2} + N}. \hspace{34mm} \qed
\end{IEEEeqnarray*}


\section{Proof of Theorem \ref{thm:vq-achv}} \label{appx:prf-thm-vq}

In this appendix we analyze the distortions achievable by the
vector-quantizer scheme that was presented in Section
\ref{sec:mac-vq}. To start, we give a thorough description of the
corresponding coding scheme.

\subsection{Coding Scheme}

Fix some $\epsilon > 0$ and rates $R_1$ and $R_2$.\\[3mm]
{\bf Code Construction:} Two codebooks $\mathcal{C}_1$ and
$\mathcal{C}_2$ are generated independently. Codebook $\mathcal{C}_i$,
$i \in \{1,2\}$, consists of $2^{nR_i}$ codewords $\{ {\bf U}_i(1),
{\bf U}_i(2), \ldots ,{\bf U}_i(2^{nR_i}) \}$. The codewords are drawn
independently uniformly over the surface of the centered
$\Reals^n$-sphere $\mathcal{S}_i$ of radius $r_i =
\sqrt{n\sigma^2(1-2^{-2R_i})}$.\\[3mm]
{\bf Encoding:} Based on the observed source sequence ${\bf S}_i$ each
encoder produces its channel input ${\bf X}_i$ by first
vector-quantizing the source sequence ${\bf S}_i$ to a codeword ${\bf
  U}_i^{\ast} \in \mathcal{C}_i$ and then scaling ${\bf U}_i^{\ast}$
to satisfy the average power constraint. To describe the
vector-quantizer precisely, denote for every ${\bf w}, {\bf v} \in
\Reals^n$ where neither ${\bf w}$ nor ${\bf v}$ are the zero-sequence,
the angle between ${\bf w}$ and ${\bf v}$ by $\sphericalangle ({\bf
  w},{\bf v})$, i.e.
\begin{IEEEeqnarray}{rCl}
  \cos \sphericalangle ({\bf w},{\bf v}) & \triangleq & 
    \frac{\inner{{\bf w}}{{\bf v}}}{\|{\bf w}\| \|{\bf v}\|}.
\end{IEEEeqnarray}
Let
\begin{IEEEeqnarray}{rCl}\label{eq:encoding}
  \mathcal{F}({\bf s}_i,\mathcal{C}_i) & = & \left\{ {\bf u}_i \in
    \mathcal{C}_i: \sqrt{1-2^{-2R_i}}(1-\epsilon) \leq \cos \sphericalangle
    ({\bf s}_i, {\bf U}_i) \leq \sqrt{1-2^{-2R_i}}(1+\epsilon) \right\}. \hspace{6mm}
\end{IEEEeqnarray}
If $\mathcal{F}({\bf s}_i,\mathcal{C}_i) \neq \emptyset$, the
vector-quantizer output ${\bf U}_i^{\ast}$ is the codeword ${\bf
  U}_i(j) \in \mathcal{F}({\bf s}_i,\mathcal{C}_i)$ that minimizes
$|\cos \sphericalangle ({\bf u}_i(j),{\bf s}_i) -
\sqrt{1-2^{-2R_i}}|$, and if $\mathcal{F}({\bf s}_i,\mathcal{C}_i) =
\emptyset$ the vector-quantizer output ${\bf U}_i^{\ast}$ is the
all-zero sequence. Thus,
\begin{IEEEeqnarray*}{rCl}
  {\bf U}_i^{\ast} & = & \left\{ \begin{array}{l l}
      \argmin_{\substack{{\bf U}_i \in \mathcal{C}_i:\\[1mm]
          {\bf U}_i \in \mathcal{F}({\bf s}_i,\mathcal{C}_i)}} \Big|\cos
      \sphericalangle({\bf u}_i(j),{\bf S}_i) - \sqrt{1-2^{-2R_i}}\Big| &
      \text{if } \mathcal{F}({\bf s}_i,\mathcal{C}_i) \neq \emptyset,\\[5mm]
      {\bf 0} & \text{otherwise}.
    \end{array} \right.
\end{IEEEeqnarray*}
More formally, ${\bf U}_i^{\ast}$ should be written as
${\bf U}_i^{\ast}({\bf S}_i, \mathcal{C}_i)$, but we shall usually
make these dependencies implicit. The channel input is now given by
\begin{IEEEeqnarray}{rCl}
  {\bf X}_i & = & \alpha_i {\bf U}_i^{\ast} \qquad \qquad \text{where
  }\alpha_i = \sqrt{\frac{P_i}{\sigma^2 (1-2^{-2R_i})}} \qquad i \in
  \{ 1,2 \}.
\end{IEEEeqnarray}
Since the codebook $\mathcal{C}_i$ is drawn over the centered
$\Reals^n$-sphere of radius $r_i = \sqrt{\sigma^2(1-2^{-2R_i})}$, each
channel input ${\bf X}_i$ satisfies the average power constraint
individually.\\[3mm]
{\bf Reconstruction:} The receiver's estimate $(\hat{\bf S}_1,
\hat{\bf S}_2)$ of the source pair $({\bf S}_1, {\bf S}_2)$ is derived
from the channel output ${\bf Y}$ in two steps. First, the receiver
makes a guess $(\hat{\bf U}_1, \hat{\bf U}_2)$ of the pair $({\bf
  U}_1^{\ast}, {\bf U}_2^{\ast})$ by choosing among all ``jointly
typical pairs'' $({\bf U}_1, {\bf U}_2) \in \mathcal{C}_1 \times
\mathcal{C}_2$ the pair whose linear combination $\alpha_1 {\bf U}_1 +
\alpha_2 {\bf U}_2$ has the smallest distance to the received sequence
${\bf Y}$. More precisely,
\begin{IEEEeqnarray}{rCl}\label{eq:decoding}
  (\hat{\bf U}_1, \hat{\bf U}_2) & = & \argmin_{\substack{({\bf
        U}_1, {\bf U}_2) \in \mathcal{C}_1 \times \mathcal{C}_2:\\
      \left| \tilde{\rho} - \cos \sphericalangle ({\bf u}_1, {\bf
          u}_2) \right| \leq 7 \epsilon}} \| {\bf Y} - (\alpha_1{\bf
    U}_1 + \alpha_2{\bf U}_2) \|^2,
\end{IEEEeqnarray}
where
\begin{IEEEeqnarray*}{rCl}
  \tilde{\rho} & = & \rho \sqrt{(1-2^{-2R_1})(1-2^{-2R_2})}.
\end{IEEEeqnarray*}
If the channel output ${\bf Y}$ and the codebooks $\mathcal{C}_1$ and
$\mathcal{C}_2$ are such that there does not exist a pair $({\bf
  U}_1, {\bf U}_2) \in \mathcal{C}_1 \times \mathcal{C}_2$ that
satisfies
\begin{IEEEeqnarray}{rCl}\label{eq:vq-jnt-typ-U1U2}
  \left| \tilde{\rho} - \cos \sphericalangle ({\bf u}_1, {\bf u}_2)
  \right| & \leq & 7 \epsilon,
\end{IEEEeqnarray}
then $\hat{\bf U}_1$ and $\hat{\bf U}_2$ are chosen to be all-zero.

In the second step, the receiver computes the estimates $(\hat{\bf
  S}_1, \hat{\bf S}_2)$ from the guess $(\hat{\bf U}_1, \hat{\bf
  U}_2)$ by setting
\begin{IEEEeqnarray}{rCl}
  \hat{\bf S}_1 & = & \gamma_{11} \hat{\bf U}_1 + \gamma_{12} \hat{\bf
    U}_2 \label{eq:lin-est-S1h}\\
  \hat{\bf S}_2 & = & \gamma_{21} \hat{\bf U}_2 + \gamma_{22} \hat{\bf U}_1,
\end{IEEEeqnarray}
where
\begin{IEEEeqnarray}{rCl}
  \gamma_{11} = \frac{1-\rho^2(1-2^{-2R_2})}{1-\tilde{\rho}^2} & \qquad
  \qquad & \gamma_{12} = \rho 2^{-2R_1} \label{eq:vq-gamma1}\\[3mm]
  \gamma_{21} = \frac{1-\rho^2(1-2^{-2R_1})}{1-\tilde{\rho}^2} & &
  \gamma_{22} = \rho 2^{-2R_2}. \label{eq:vq-gamma2}
\end{IEEEeqnarray}
Note that
\begin{IEEEeqnarray}{rCl}\label{eq:vq-bounds-gamma}
  0 < \gamma_{i1} \leq 1 & \quad \text{and} \quad & 0 < \gamma_{i2}
  \leq \rho, \qquad i \in \{ 1,2 \}.
\end{IEEEeqnarray}


\subsection{Expected Distortion}

To analyze the expected distortion we use a genie-aided argument. We
first show that, under certain rate constraints, the asymptotic
normalized distortion of the proposed scheme remains the same when a
certain help from a genie is provided. To derive the achievable
distortions it then suffices to analyze the genie-aided version.

\subsubsection{Genie-Aided Scheme}

In the genie-aided scheme, the genie's help is provided to the
decoder.  An illustration of this genie-aided decoder is given in
Figure \ref{fig:vq-dec-genie}.
\begin{figure}[h]
  \centering
  \psfrag{y}[cc][cc]{${\bf Y}$}
  \psfrag{u1}[cc][cc]{$\hat{\bf U}_1$}
  \psfrag{u2}[cc][cc]{$\hat{\bf U}_2$}
  \psfrag{u1g}[cc][cc]{${\bf U}_1^{\ast}$}
  \psfrag{u2g}[cc][cc]{${\bf U}_2^{\ast}$}
  \psfrag{s1}[cc][cc]{$\hat{\bf S}_1^{\textnormal{G}}$}
  \psfrag{s2}[cc][cc]{$\hat{\bf S}_2^{\textnormal{G}}$}
  \epsfig{file=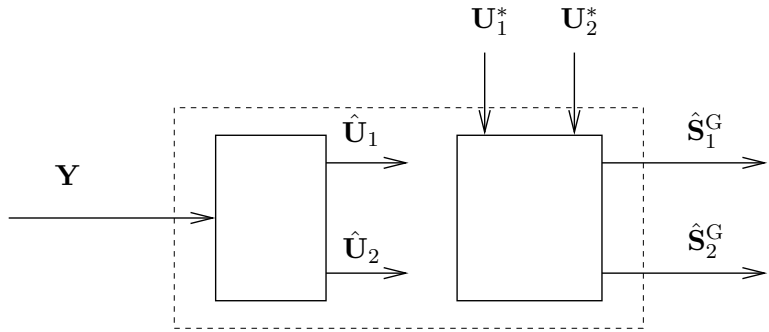, width=0.7\textwidth}
  \caption{Genie-aided decoder.}
  \label{fig:vq-dec-genie}
\end{figure}
The genie provides the decoder with the codeword pair $({\bf
  U}_1^{\ast}, {\bf U}_2^{\ast})$. The decoder then estimates the
source pair $({\bf S}_1, {\bf S}_2)$ based on $({\bf U}_1^{\ast}, {\bf
  U}_2^{\ast})$ and ignores the guess $(\hat{\bf U}_1, \hat{\bf U}_2)$
produced in the first decoding step.  The estimate of this genie-aided
decoder is denoted by $(\hat{\bf S}_1^{\textnormal{G}}, \hat{\bf
  S}_2^{\textnormal{G}})$, where
\begin{IEEEeqnarray}{rCl}
  \hat{\bf S}_1^{\textnormal{G}} & = & \gamma_{11} {\bf U}_1^{\ast} +
  \gamma_{12} {\bf U}_2^{\ast} \label{eq:vq-genie-lin-est-S1h}\\
  \hat{\bf S}_2^{\textnormal{G}} & = & \gamma_{21} {\bf U}_2^{\ast} +
  \gamma_{22} {\bf U}_1^{\ast},
\end{IEEEeqnarray}
with $\gamma_{11}$, $\gamma_{12}$, $\gamma_{21}$, $\gamma_{22}$ as in
\eqref{eq:vq-gamma1} and \eqref{eq:vq-gamma2}. Under certain rate
constraints, the normalized asymptotic distortion of this genie-aided
scheme is the same as for the originally proposed scheme. This is
stated more precisely in the following proposition.

\begin{prp}\label{prp:vq-D1-eql-genie}
  For every $\delta > 0$ and $0 < \epsilon < 0.3$ there exists an
  $n'(\delta,\epsilon)>0$ such that for all $n > n'(\delta,\epsilon)$,
  \begin{IEEEeqnarray*}{rCl}
    \frac{1}{n}\E{\| {\bf S}_1 - \hat{\bf S}_1 \|^2} & \leq &
    \frac{1}{n}\E{\| {\bf S}_1 - \hat{\bf S}_1^{\textnormal{G}} \|^2}
    + 2 \sigma^2 \left( \epsilon + \left( 44 \sqrt{1+\epsilon} + 61
      \right) \delta \right).
  \end{IEEEeqnarray*}
  whenever $(R_1,R_2)$ is in the rate region $\mathcal{R}(\epsilon)$
  given by
  \begin{IEEEeqnarray*}{rClrCl}
    \mathcal{R}(\epsilon) & = & \bigg\{ & R_1 & \leq & \frac{1}{2} \log_2
    \left( \frac{P_1 (1-\tilde{\rho}^2) + N}{N(1-\tilde{\rho}^2)} -
      \kappa_1 \epsilon \right),\\[2mm]
    & & & R_2 & \leq & \frac{1}{2} \log_2
    \left( \frac{P_2 (1-\tilde{\rho}^2) + N}{N(1-\tilde{\rho}^2)}
      - \kappa_2 \epsilon \right),\\[2mm]
    & & & R_1 + R_2 & \leq & \frac{1}{2} \log_2 \left( \frac{P_1 + P_2 +
        2\tilde{\rho} \sqrt{P_1 P_2} + N }{N (1-\tilde{\rho}^2)} - \kappa_3
      \epsilon \right) \bigg\},
  \end{IEEEeqnarray*}
  where $\kappa_1$, $\kappa_2$ and $\kappa_3$ depend only on $P_1$,
  $P_2$, $N$, $\zeta_1$ and $\zeta_2$, where 
  \begin{IEEEeqnarray*}{rCl}
    \zeta_1 = \frac{N\tilde{\rho}}{P_1(1-\tilde{\rho}^2) + N}
    \sqrt{\frac{P_1}{P_2}} & \qquad \text{and} \qquad &
    \zeta_2 = \frac{P_1(1-\tilde{\rho}^2)}{P_1(1-\tilde{\rho}^2) + N}.
  \end{IEEEeqnarray*}
\end{prp}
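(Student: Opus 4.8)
The plan is to establish Proposition~\ref{prp:vq-D1-eql-genie} by a genie-aided (error-event) argument. Fix $\epsilon$, $\delta$ and a rate pair $(R_1,R_2)\in\mathcal{R}(\epsilon)$, and introduce a ``bad event'' $\mathcal{E}$, taken to be the union of: (a)~the source-norm atypicality events $\bigl\{\,\bigl|\frac1n\|{\bf S}_i\|^2-\sigma^2\bigr|>\sigma^2\epsilon\,\bigr\}$, $i\in\{1,2\}$; (b)~the quantization-failure events $\{\mathcal{F}({\bf S}_i,\mathcal{C}_i)=\emptyset\}$, i.e.\ $\{{\bf U}_i^{\ast}={\bf 0}\}$; (c)~the event $\bigl\{\,|\tilde{\rho}-\cos\sphericalangle({\bf U}_1^{\ast},{\bf U}_2^{\ast})|>7\epsilon\,\bigr\}$ that the transmitted codeword pair itself fails the decoder's typicality test; and (d)~the decoding-error event $\{(\hat{\bf U}_1,\hat{\bf U}_2)\neq({\bf U}_1^{\ast},{\bf U}_2^{\ast})\}$. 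On $\mathcal{E}^{\mathrm c}$ the decoder returns the transmitted codewords, $\hat{\bf U}_i={\bf U}_i^{\ast}$; and since the source estimator \eqref{eq:lin-est-S1h} uses exactly the coefficients $\gamma_{11},\gamma_{12}$ appearing in the genie-aided estimator \eqref{eq:vq-genie-lin-est-S1h}, we get $\hat{\bf S}_1=\hat{\bf S}_1^{\mathrm G}$ \emph{identically} on $\mathcal{E}^{\mathrm c}$. Consequently the entire distortion gap is carried by $\mathcal{E}$:
\[
  \E{\|{\bf S}_1-\hat{\bf S}_1\|^2}-\E{\|{\bf S}_1-\hat{\bf S}_1^{\mathrm G}\|^2}\;\le\;\E{\|{\bf S}_1-\hat{\bf S}_1\|^2\,;\,\mathcal{E}},
\]
and it remains to bound the right-hand side by $2\sigma^2\bigl(\epsilon+(44\sqrt{1+\epsilon}+61)\delta\bigr)$ and to bound $\Pr[\mathcal{E}]$ by $\delta$.

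The gap on $\mathcal{E}$ I would control using only crude, deterministic norm estimates together with Cauchy--Schwarz. Every codeword lies on a sphere of squared radius $r_i^2=n\sigma^2(1-2^{-2R_i})\le n\sigma^2$, so $\|{\bf U}_i^{\ast}\|,\|\hat{\bf U}_i\|\le\sqrt{n\sigma^2}$; by the bounds $0<\gamma_{i1}\le1$, $0<\gamma_{i2}\le\rho\le1$ of \eqref{eq:vq-bounds-gamma} we get $\|\hat{\bf S}_1\|,\|\hat{\bf S}_1^{\mathrm G}\|\le 2\sqrt{n\sigma^2}$; and on the source-typical set $\|{\bf S}_1\|\le\sqrt{n\sigma^2(1+\epsilon)}$. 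On the part of $\mathcal{E}$ lying in the source-typical set, $\|{\bf S}_1-\hat{\bf S}_1\|^2$ is therefore at most a fixed multiple of $n\sigma^2(1+\epsilon)$; multiplying by $\Pr[\mathcal{E}]\le\delta$ and dividing by $n$ gives a contribution $O\bigl(\sigma^2\delta\sqrt{1+\epsilon}\bigr)$. On the source-atypical part one instead uses $\E{\|{\bf S}_1\|^2\,;\,\text{atypical}}\le(\E{\|{\bf S}_1\|^4})^{1/2}(\Pr[\text{atypical}])^{1/2}\le\sqrt3\,n\sigma^2\,\delta$, since the $\chi^2$-tail $\Pr[\text{atypical}]$ can be made $\le\delta^2$ for $n$ large. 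Collecting the pieces and tracking the (deliberately crude) numerical constants yields a bound of the announced form; I would not belabour the bookkeeping of $44$ and $61$.

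The heart of the matter is showing $\Pr[\mathcal{E}]\le\delta$ for all $n>n'(\delta,\epsilon)$ whenever $(R_1,R_2)\in\mathcal{R}(\epsilon)$. Events (a) are handled by a Chernoff bound for a sum of $n$ squared Gaussians; event (b) by the sphere-covering argument underlying the achievability of the Gaussian rate-distortion function (since $\mathcal{C}_i$ is $2^{nR_i}$ i.i.d.\ uniform points on the $r_i$-sphere, the probability that none falls in the cone $\mathcal{F}({\bf S}_i,\mathcal{C}_i)$ of \eqref{eq:encoding} vanishes as $n\to\infty$); event (c) by a concentration argument showing $\cos\sphericalangle({\bf U}_1^{\ast},{\bf U}_2^{\ast})$ lies within $7\epsilon$ of $\tilde{\rho}$ with high probability, the factor $7\epsilon$ absorbing the slacks in $\cos\sphericalangle({\bf S}_i,{\bf U}_i^{\ast})\approx\sqrt{1-2^{-2R_i}}$, in $\cos\sphericalangle({\bf S}_1,{\bf S}_2)\approx\rho$, and in the near-orthogonality of the residual components. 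The decoding-error event (d) is treated by a random-coding union bound over the three classes of competing pairs --- $({\bf U}_1\neq{\bf U}_1^{\ast},{\bf U}_2={\bf U}_2^{\ast})$, $({\bf U}_1={\bf U}_1^{\ast},{\bf U}_2\neq{\bf U}_2^{\ast})$, and $({\bf U}_1\neq{\bf U}_1^{\ast},{\bf U}_2\neq{\bf U}_2^{\ast})$ --- using the fact that, conditionally, codewords other than the transmitted ones are fresh uniform draws. In each class one estimates, for such a draw, the probability of simultaneously obeying the angular constraint $|\tilde{\rho}-\cos\sphericalangle({\bf u}_1,{\bf u}_2)|\le 7\epsilon$ and of $\alpha_1{\bf u}_1+\alpha_2{\bf u}_2$ being at least as close to ${\bf Y}$ as the transmitted sum. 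The angular constraint confines the competitor to a sphere-cap of exponentially small relative measure $\approx(1-\tilde{\rho}^2)^{n/2}$ --- this is precisely the origin of the $(1-\tilde{\rho}^2)$ factors in $\mathcal{R}(\epsilon)$ --- and a law-of-cosines estimate of the ``closest-to-${\bf Y}$'' cap contributes, respectively, $\bigl(\frac{N}{P_1(1-\tilde\rho^2)+N}\bigr)^{n/2}$, $\bigl(\frac{N}{P_2(1-\tilde\rho^2)+N}\bigr)^{n/2}$, and $\bigl(\frac{N}{P_1+P_2+2\tilde\rho\sqrt{P_1P_2}+N}\bigr)^{n/2}$. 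Multiplying by the $2^{nR_1}$, $2^{nR_2}$, $2^{n(R_1+R_2)}$ competitors, the three exponents are negative exactly under the three inequalities defining $\mathcal{R}(\epsilon)$, with the corrections $\kappa_1\epsilon,\kappa_2\epsilon,\kappa_3\epsilon$ absorbing the polynomial-in-$n$ prefactors and the $\epsilon$-widths of the typicality shells.

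I expect the geometric estimates in the third step to be the main obstacle: making the sphere-cap volume bounds and the ``fresh uniform draw'' reduction rigorous with explicit non-asymptotic constants, and checking that the resulting exponents are strictly negative throughout $\mathcal{R}(\epsilon)$. The genie reduction of the first step and the crude estimates of the second step are routine, and the bound for ${\bf S}_2$ follows by the symmetric argument.
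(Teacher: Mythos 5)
Your overall architecture (bad-event decomposition, genie reduction on the complement, crude norm bounds for the gap, union bound for the decoding error) matches the paper's, and the bookkeeping steps are fine. But the step you describe as the heart of the matter contains a genuine gap, and it is exactly the point the paper identifies as the non-standard difficulty. You assert that, conditionally, the codewords other than the transmitted ones are ``fresh uniform draws.'' They are not: the index selected by the vector-quantizer depends on the whole realization of the codebook (conditioned on codeword $1$ being selected, the other codewords are tilted away from the acceptance cone around ${\bf S}_1$), and, given the two codebooks, the indices chosen by the two encoders are dependent. This is why the conventional MAC random-coding argument cannot be invoked as is. The paper's substitute is Lemma~\ref{lm:density-wrong-codeword}: conditional on the selected index, the density of every other codeword is bounded by \emph{twice} the uniform density on the sphere, which is what licenses the subsequent cap-measure union bound (at the cost of harmless factors of $2$). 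Without this lemma, or some equivalent control of the conditional codeword law, your union bound over competitors is unjustified.

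A second, related problem is your treatment of the single-error class (wrong ${\bf u}_1$, correct ${\bf u}_2^{\ast}$). You bound the per-competitor probability as a product of two cap measures, one of size $(1-\tilde{\rho}^2)^{n/2}$ from the angular constraint with ${\bf u}_2^{\ast}$ and one that you announce as $\bigl(N/(P_1(1-\tilde{\rho}^2)+N)\bigr)^{n/2}$ from the metric constraint with ${\bf Y}$. These two constraints are on the \emph{same} random codeword, measured against two positively correlated reference directions (${\bf u}_2^{\ast}$ and ${\bf y}-\alpha_2{\bf u}_2^{\ast}=\alpha_1{\bf u}_1^{\ast}+{\bf z}$), so their joint probability is not the product of the marginals; moreover a direct law-of-cosines estimate of the metric cap gives exponent $N/(P_1+N)$, not $N/(P_1(1-\tilde{\rho}^2)+N)$, so the stated factorization seems fitted to the answer rather than derived. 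The paper avoids the issue by folding both constraints into membership in a \emph{single} cap around the MMSE-like direction ${\bf w}=\zeta_1({\bf y}-\alpha_2{\bf u}_2^{\ast})+\zeta_2\alpha_2{\bf u}_2^{\ast}$ of \eqref{eq:w} (Lemma~\ref{lm:ub-E41-B41}); this construction is precisely where the constants $\zeta_1,\zeta_2$ appearing in the proposition come from, and it yields the correct per-competitor exponent $N(1-\tilde{\rho}^2)/(P_1(1-\tilde{\rho}^2)+N)$. (For the double-error class the product of caps \emph{is} legitimate, because there the two angles involve the two independent codewords, as in Lemma~\ref{lm:caps-double-error}.) Finally, a minor omission: your bad event should also include an atypical-noise event (as in $\mathcal{E}_{\bf Z}$ of the paper), since the geometric reduction of the decoding metric needs $\|{\bf Z}\|^2\approx nN$ and small correlations $\langle \alpha_i{\bf u}_i^{\ast},{\bf Z}\rangle$; this is easily added.
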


\begin{proof}
  See Section \ref{sec:vq-prf-prp-genie}.
\end{proof}

\begin{cor}\label{cor:vq-genie}
  If $(R_1,R_2)$ satisfy
  \begin{IEEEeqnarray*}{rCl}
    R_1 & < & \frac{1}{2} \log_2 \left( \frac{P_1 (1-\tilde{\rho}^2) +
        N}{N(1-\tilde{\rho}^2)} \right)\\[2mm]
    R_2 & < & \frac{1}{2} \log_2 \left( \frac{P_2 (1-\tilde{\rho}^2) +
        N}{N(1-\tilde{\rho}^2)} \right)\\[2mm]
    R_1 + R_2 & < & \frac{1}{2} \log_2 \left( \frac{P_1 + P_2 +
        2\tilde{\rho} \sqrt{P_1 P_2} + N }{N (1-\tilde{\rho}^2)} \right),
  \end{IEEEeqnarray*}
  then
  \begin{IEEEeqnarray*}{rCl}
    \varlimsup_{n \rightarrow \infty} \frac{1}{n}\E{\| {\bf S}_1 -
      \hat{\bf S}_1 \|^2} & \leq & \varlimsup_{n \rightarrow
      \infty}\frac{1}{n}\E{\| {\bf S}_1 - \hat{\bf S}_1^{\textnormal{G}}
      \|^2}.
  \end{IEEEeqnarray*}
\end{cor}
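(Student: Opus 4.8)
The plan is to obtain Corollary~\ref{cor:vq-genie} from Proposition~\ref{prp:vq-D1-eql-genie} by relaxing the rate region through sending the auxiliary parameters $\epsilon$ and $\delta$ to zero. The first point to check is that the strict rate inequalities in the corollary force $(R_1,R_2)$ into $\mathcal{R}(\epsilon)$ for all sufficiently small $\epsilon>0$. This is immediate once one notes that the constants $\kappa_1,\kappa_2,\kappa_3$ appearing in $\mathcal{R}(\epsilon)$ depend only on $P_1,P_2,N,\zeta_1,\zeta_2$ and \emph{not} on $\epsilon$; hence each of the three right-hand sides defining $\mathcal{R}(\epsilon)$ --- for instance $\tfrac12\log_2\!\big(\tfrac{P_1(1-\tilde{\rho}^2)+N}{N(1-\tilde{\rho}^2)}-\kappa_1\epsilon\big)$, and likewise for $R_2$ and for $R_1+R_2$ --- increases monotonically to the corresponding right-hand side of the corollary as $\epsilon\downarrow 0$. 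Therefore there is some $\epsilon_0\in(0,0.3)$ with $(R_1,R_2)\in\mathcal{R}(\epsilon)$ for every $\epsilon\in(0,\epsilon_0)$, so that Proposition~\ref{prp:vq-D1-eql-genie} is applicable for each such $\epsilon$.

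Next I would fix $\epsilon\in(0,\epsilon_0)$ and an arbitrary $\delta>0$ and apply Proposition~\ref{prp:vq-D1-eql-genie}: for all $n>n'(\delta,\epsilon)$,
\begin{equation*}
  \frac{1}{n}\E{\| {\bf S}_1 - \hat{\bf S}_1 \|^2}\ \le\ \frac{1}{n}\E{\| {\bf S}_1 - \hat{\bf S}_1^{\textnormal{G}} \|^2} + 2\sigma^2\big(\epsilon + (44\sqrt{1+\epsilon}+61)\,\delta\big).
\end{equation*}
Since the additive correction does not depend on $n$, taking $\varlimsup_{n\rightarrow\infty}$ on both sides and then letting $\delta\downarrow 0$ gives
\begin{equation*}
  \varlimsup_{n\rightarrow\infty}\frac{1}{n}\E{\| {\bf S}_1 - \hat{\bf S}_1 \|^2}\ \le\ \varlimsup_{n\rightarrow\infty}\frac{1}{n}\E{\| {\bf S}_1 - \hat{\bf S}_1^{\textnormal{G}} \|^2} + 2\sigma^2\epsilon .
\end{equation*}
Finally I would let $\epsilon\downarrow 0$. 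Both sides depend on $\epsilon$ only through the quantization rule $\mathcal{F}(\cdot,\cdot)$ of \eqref{eq:encoding}; but as $n\rightarrow\infty$ the quantizer produces, with probability tending to one, a codeword with $\cos\sphericalangle({\bf U}_i^{\ast},{\bf S}_i)\rightarrow\sqrt{1-2^{-2R_i}}$ and $\cos\sphericalangle({\bf U}_1^{\ast},{\bf U}_2^{\ast})\rightarrow\tilde{\rho}$ irrespective of $\epsilon\in(0,\epsilon_0)$, and the contribution of the event $\mathcal{F}({\bf S}_i,\mathcal{C}_i)=\emptyset$ to the normalized expected distortion vanishes (Cauchy--Schwarz together with the finite fourth moment of the Gaussian source). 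Hence $\varlimsup_{n}\tfrac1n\E{\| {\bf S}_1 - \hat{\bf S}_1^{\textnormal{G}} \|^2}$ equals, for every $\epsilon\in(0,\epsilon_0)$, the single value that is computed explicitly in the remainder of this appendix; in particular it is independent of $\epsilon$. Passing to the limit $\epsilon\downarrow 0$ therefore absorbs the residual $2\sigma^2\epsilon$ and yields the asserted inequality.

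The one step that needs genuine care --- and which I expect to be the only real obstacle --- is this last passage: controlling the three successive limits $n\rightarrow\infty$, $\delta\rightarrow 0$, $\epsilon\rightarrow 0$ in the right order, and verifying that the genie-aided $\varlimsup$-distortion indeed stabilizes as $\epsilon\downarrow 0$ (so that the $2\sigma^2\epsilon$ slack left behind by Proposition~\ref{prp:vq-D1-eql-genie} can be discarded). Everything preceding it is a routine chain of inequalities built directly on the proposition.
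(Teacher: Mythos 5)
Your proposal is correct and takes essentially the same route as the paper, whose entire proof is to apply Proposition~\ref{prp:vq-D1-eql-genie} and then let $n \rightarrow \infty$, $\epsilon \rightarrow 0$ and $\delta \rightarrow 0$. The extra care you devote to the order of the limits and to the $\epsilon$-dependence of the scheme (through the quantization and decoding tolerances) is a refinement of a point the paper glosses over, but it does not alter the argument.
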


\begin{proof}
  Follows from Proposition \ref{prp:vq-D1-eql-genie} by first letting
  $n \rightarrow \infty$ and then $\epsilon \rightarrow 0$ and $\delta
  \rightarrow 0$.
\end{proof}

By Corollary \ref{cor:vq-genie}, to analyze the distortion achievable
by our scheme it suffices to analyze the genie-aided scheme. This is
done in Section \ref{sec:vq-ub-D1}.

\subsection{Proof of Proposition
  \ref{prp:vq-D1-eql-genie}}\label{sec:vq-prf-prp-genie}

The main step in the proof of Proposition \ref{prp:vq-D1-eql-genie} is
to show that for every $(R_1,R_2) \in \mathcal{R}(\epsilon)$ and
sufficiently large $n$, the probability of a decoding error, and thus
the probability of $\hat{\bf S}_1 \neq \hat{\bf
  S}_1^{\textnormal{G}}$, can be made very small. This step is done in
the following section. The proof of Proposition
\ref{prp:vq-D1-eql-genie} is then completed in Section
\ref{sec:prf-prp-genie}.


\subsubsection{Upper Bound on Probability of a Decoding Error} \label{sec:vq-ub-PrE4}

In this section we show that for every $(R_1,R_2) \in
\mathcal{R}(\epsilon)$ and sufficiently large $n$, the probability of
a decoding error, and thus the probability of $\hat{\bf S}_1 \neq
\hat{\bf S}_1^{\textnormal{G}}$, can be made very small. The hitch is
that to upper bound the probability of a decoding error for the
proposed scheme, we cannot proceed by the method conventionally used
for the multiple-access channel. The reason is that in the conventional
analysis of the multiple-access channel it is assumed that the
probability of the codewords ${\bf u}_i(j)$ does not depend on the
realization of the codebook $\mathcal{C}_i$. However, in our combined
source-channel coding scheme, the probability of encoder $i \in \{ 1,2
\}$ producing the channel input of index $j \in \{ 1,2, \ldots
,2^{nR_i}\}$ depends not only on the source sequence ${\bf s}_i$, but
also on the realization of $\mathcal{C}_i$. Another reason the
conventional analysis fails is that, conditional on the
codebooks $\mathcal{C}_1$ and $\mathcal{C}_2$, the indices produces by
the vector-quantizers are dependent.

To address these difficulties, we proceed by a geometric approach. To
this end, we introduce an error event related to a decoding error at
the receiver. This event is denoted by $\mathcal{E}_{\hat{\bf U}}$ and consists of
all tuples $({\bf s}_1, {\bf s}_2, \mathcal{C}_1, \mathcal{C}_2, {\bf
  z})$ for which there exists a pair $(\tilde{\bf u}_1, \tilde{\bf
  u}_2) \neq ({\bf u}_1^{\ast}, {\bf u}_2^{\ast})$ in $\mathcal{C}_1
\times \mathcal{C}_2$ that satisfies Condition
\eqref{eq:vq-jnt-typ-U1U2} of the reconstructor, and for which the
Euclidean distance between $\alpha_1 \tilde{\bf u}_1 + \alpha_2
\tilde{\bf u}_2$ and ${\bf y}$ is smaller or equal to the Euclidean
distance between $\alpha_1 {\bf u}_1^{\ast} + \alpha_2 {\bf
  u}_2^{\ast}$ and ${\bf y}$. More formally, $\mathcal{E}_{\hat{\bf U}} =
\mathcal{E}_{\hat{\bf U}_1} \cup \mathcal{E}_{\hat{\bf U}_2} \cup
\mathcal{E}_{(\hat{\bf U}_1,\hat{\bf U}_2)}$ where
\begin{IEEEeqnarray}{l}
  \mathcal{E}_{\hat{\bf U}_1} = \bigg\{ ({\bf s}_1, {\bf s}_2, \mathcal{C}_1,
  \mathcal{C}_2, {\bf z}) : \exists \tilde{\bf u}_1 \in \mathcal{C}_1
  \setminus \{ {\bf u}_1^{\ast} \} \text{ s.t. } \left| \tilde{\rho} -
    \cos \sphericalangle (\tilde{\bf u}_1, {\bf u}_2^{\ast}) \right|
  \leq 7 \epsilon, \nonumber\\
  \hspace{35mm} \text{and} \quad \| {\bf y} - (\alpha_1 \tilde{\bf
    u}_1 + \alpha_2 {\bf u}_2^{\ast}) \|^2 \leq \| {\bf y} - (\alpha_1
  {\bf u}_1^{\ast} + \alpha_2 {\bf u}_2^{\ast})
  \|^2 \bigg\} \quad \label{eq:E41}\\[3mm]
  \mathcal{E}_{\hat{\bf U}_2} = \bigg\{ ({\bf s}_1, {\bf s}_2, \mathcal{C}_1,
  \mathcal{C}_2, {\bf z}) : \exists \tilde{\bf u}_2 \in \mathcal{C}_2
  \setminus \{ {\bf u}_2^{\ast} \} \text{ s.t. } \left| \tilde{\rho} -
    \cos \sphericalangle ({\bf u}_1^{\ast},\tilde{\bf u}_2) \right|
  \leq 7 \epsilon, \nonumber\\
  \hspace{35mm} \text{and} \quad \| {\bf y} - (\alpha_1 {\bf
    u}_1^{\ast} + \alpha_2 \tilde{\bf u}_2) \|^2 \leq \| {\bf y} -
  (\alpha_1 {\bf u}_1^{\ast} + \alpha_2 {\bf u}_2^{\ast})
  \|^2 \bigg\} \label{eq:E42}\\[3mm]
  \mathcal{E}_{(\hat{\bf U}_1,\hat{\bf U}_2)} = \bigg\{ ({\bf s}_1, {\bf s}_2, \mathcal{C}_1,
  \mathcal{C}_2, {\bf z}) : \exists \tilde{\bf u}_1 \in \mathcal{C}_1
  \setminus \{ {\bf u}_1^{\ast} \} \; \; \text{and} \; \; \exists
  \tilde{\bf u}_2 \in \mathcal{C}_2 \setminus \{ {\bf
    u}_2^{\ast} \} \text{ s.t.} \nonumber\\
  \hspace{55mm} \left| \tilde{\rho} - \cos \sphericalangle
    (\tilde{\bf u}_1, \tilde{\bf u}_2 ) \right| \leq 7 \epsilon, \nonumber\\
  \hspace{35mm} \text{and} \quad\| {\bf y} - (\alpha_1 \tilde{\bf
    u}_1 + \alpha_2 \tilde{\bf u}_2) \|^2 \leq \| {\bf y} - (\alpha_1
  {\bf u}_1^{\ast} + \alpha_2 {\bf u}_2^{\ast}) \|^2 \bigg\}, \label{eq:E43}
\end{IEEEeqnarray}
and where ${\bf y} \triangleq \alpha_1 {\bf u}_1^{\ast} + \alpha_1 {\bf
  u}_1^{\ast} + {\bf z}$. Note that a decoding error occurs only if
$({\bf s}_1, {\bf s}_2, \mathcal{C}_1, \mathcal{C}_2, {\bf z}) \in
\mathcal{E}_{\hat{\bf U}}$. The main result of this section can now be stated as
follows.

\begin{lm}\label{lm:vq-Pr-E4}
  For every $\delta > 0$ and $0.3 > \epsilon > 0$, there exists an
  $n_4'(\delta,\epsilon) \in \Naturals$ such that for all $n >
  n_4'(\delta,\epsilon)$
  \begin{IEEEeqnarray*}{rCl}
    \qquad \qquad \Prv{\mathcal{E}_{\hat{\bf U}}} & < & 11 \delta, \qquad \qquad \text{whenever }
    (R_1,R_2) \in \mathcal{R}(\epsilon).
  \end{IEEEeqnarray*}
\end{lm}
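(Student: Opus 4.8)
The plan is to bound $\Prv{\mathcal{E}_{\hat{\bf U}}}$ through the union bound $\Prv{\mathcal{E}_{\hat{\bf U}}}\le\Prv{\mathcal{E}_{\hat{\bf U}_1}}+\Prv{\mathcal{E}_{\hat{\bf U}_2}}+\Prv{\mathcal{E}_{(\hat{\bf U}_1,\hat{\bf U}_2)}}$ and to show that each of the three terms can be driven below $\delta$ for all sufficiently large $n$, provided $(R_1,R_2)\in\mathcal{R}(\epsilon)$. The three resulting rate constraints will turn out to be precisely the three inequalities defining $\mathcal{R}(\epsilon)$: decoding ${\bf U}_1^{\ast}$ when ${\bf U}_2^{\ast}$ is known costs (essentially) $\frac12\log_2\frac{P_1(1-\tilde\rho^2)+N}{N(1-\tilde\rho^2)}$, decoding ${\bf U}_2^{\ast}$ when ${\bf U}_1^{\ast}$ is known costs the symmetric quantity, and decoding both at once costs $\frac12\log_2\frac{P_1+P_2+2\tilde\rho\sqrt{P_1P_2}+N}{N(1-\tilde\rho^2)}$; the $-\kappa_i\epsilon$ corrections in $\mathcal{R}(\epsilon)$ will absorb the various $\epsilon$-slacks introduced below.

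The first step is to isolate a high-probability ``typical'' event $\mathcal{T}$ of realizations $({\bf S}_1,{\bf S}_2,{\bf Z},\mathcal{C}_1,\mathcal{C}_2)$ on which the relevant geometric quantities are pinned down: $\|{\bf S}_i\|^2$ and $\|{\bf Z}\|^2$ lie within a factor $1\pm\epsilon$ of $n\sigma^2$ and $nN$; $\cos\sphericalangle({\bf S}_1,{\bf S}_2)$ is within $\epsilon$ of $\rho$; the quantizers do not fail, i.e. $\mathcal{F}({\bf S}_i,\mathcal{C}_i)\neq\emptyset$ so that $\cos\sphericalangle({\bf S}_i,{\bf U}_i^{\ast})$ is within $\epsilon$ of $\sqrt{1-2^{-2R_i}}$; and, as a deterministic consequence of these, $\cos\sphericalangle({\bf U}_1^{\ast},{\bf U}_2^{\ast})$ is within $O(\epsilon)$ of $\tilde\rho$. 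The norm and correlation statements are laws of large numbers together with concentration of measure on the sphere; that each quantizer succeeds with high probability is the covering estimate for the spherical cap of half-angle $\arccos\sqrt{1-2^{-2R_i}}$ underlying the encoder construction; and the hypothesis $\epsilon<0.3$ is what keeps these spherical estimates valid. Thus $\Prv{\mathcal{T}^c}$, a sum of finitely many probabilities each tending to $0$, is below $\delta$ for $n$ large. On $\mathcal{T}$ one also uses the exact identity $\|{\bf Y}-(\alpha_1{\bf U}_1^{\ast}+\alpha_2{\bf U}_2^{\ast})\|^2=\|{\bf Z}\|^2$, since ${\bf Y}=\alpha_1{\bf U}_1^{\ast}+\alpha_2{\bf U}_2^{\ast}+{\bf Z}$, so that conditions (b) in \eqref{eq:E41}--\eqref{eq:E43} read ``at least as close to ${\bf Y}$ as the noise-only distance $\|{\bf Z}\|$''.

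The second and central step is the counting argument, carried out on $\mathcal{T}$. For $\mathcal{E}_{\hat{\bf U}_1}$, a wrong codeword $\tilde{\bf u}_1\in\mathcal{C}_1$ can trigger the event only if $\alpha_1\tilde{\bf u}_1$ is at least as close to ${\bf Y}-\alpha_2{\bf U}_2^{\ast}=\alpha_1{\bf U}_1^{\ast}+{\bf Z}$ as $\alpha_1{\bf U}_1^{\ast}$ is, \emph{and} $\cos\sphericalangle(\tilde{\bf u}_1,{\bf U}_2^{\ast})$ is within $7\epsilon$ of $\tilde\rho$; this confines $\tilde{\bf u}_1$ to the intersection of a spherical cap around the direction of $\alpha_1{\bf U}_1^{\ast}+{\bf Z}$ with a thin angular shell around ${\bf U}_2^{\ast}$. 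Projecting a uniform point of the sphere $\mathcal{S}_1$ onto $\mathrm{span}({\bf U}_1^{\ast},{\bf U}_2^{\ast},{\bf Z})$ and bounding the density of the projection shows that the normalized surface measure of this region is at most $2^{-n(C_1(\epsilon)-o_n(1))}$, with $C_1(\epsilon)\to\frac12\log_2\frac{P_1(1-\tilde\rho^2)+N}{N(1-\tilde\rho^2)}$ as $\epsilon\to0$ (e.g. in the degenerate case $\tilde\rho=0$ the cap has cosine-half-angle $\sqrt{P_1/(P_1+N)}$, whose measure is $2^{-\frac n2\log_2\frac{P_1+N}{N}}$). The dependence of ${\bf U}_1^{\ast}$ on $({\bf S}_1,\mathcal{C}_1)$ — explicitly flagged in Section~\ref{sec:vq-ub-PrE4} as what breaks the textbook MAC argument — is handled by writing $\Prv{\mathcal{E}_{\hat{\bf U}_1}\cap\mathcal{T}}\le\sum_{j}\Prv{\{j\neq J_1^{\ast}\}\cap\{{\bf U}_1(j)\text{ confusable}\}\cap\mathcal{T}}$ and, for each $j$, conditioning on all codewords of $\mathcal{C}_1$ other than the $j$-th (on $\{j\neq J_1^{\ast}\}$ this already fixes ${\bf U}_1^{\ast}$, ${\bf U}_2^{\ast}$ and ${\bf Y}$) and then, for an upper bound, dropping the ``non-winning'' restriction on ${\bf U}_1(j)$ so that it is uniform on $\mathcal{S}_1$; this yields $\Prv{\mathcal{E}_{\hat{\bf U}_1}\cap\mathcal{T}}\le 2^{nR_1}2^{-n(C_1(\epsilon)-o_n(1))}$, which vanishes for $R_1<C_1(\epsilon)$. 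The term $\Prv{\mathcal{E}_{\hat{\bf U}_2}\cap\mathcal{T}}$ is identical with the indices interchanged, and $\Prv{\mathcal{E}_{(\hat{\bf U}_1,\hat{\bf U}_2)}\cap\mathcal{T}}$ is the two-dimensional analogue: one bounds the normalized measure of pairs $(\tilde{\bf u}_1,\tilde{\bf u}_2)\in\mathcal{S}_1\times\mathcal{S}_2$ whose mutual angle is within $7\epsilon$ of $\arccos\tilde\rho$ and whose combination $\alpha_1\tilde{\bf u}_1+\alpha_2\tilde{\bf u}_2$ is at least as close to ${\bf Y}$ as $\alpha_1{\bf U}_1^{\ast}+\alpha_2{\bf U}_2^{\ast}$, getting the exponent $C_{12}(\epsilon)\to\frac12\log_2\frac{P_1+P_2+2\tilde\rho\sqrt{P_1P_2}+N}{N(1-\tilde\rho^2)}$, and then union-bounds over the $2^{n(R_1+R_2)}$ index pairs (again conditioning on all other codewords of both codebooks).

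The final step assembles the pieces: $\Prv{\mathcal{E}_{\hat{\bf U}}}\le\Prv{\mathcal{T}^c}+\Prv{\mathcal{E}_{\hat{\bf U}_1}\cap\mathcal{T}}+\Prv{\mathcal{E}_{\hat{\bf U}_2}\cap\mathcal{T}}+\Prv{\mathcal{E}_{(\hat{\bf U}_1,\hat{\bf U}_2)}\cap\mathcal{T}}$; bounding $\Prv{\mathcal{T}^c}$ by the sum of its finitely many atypicality probabilities, each below $\delta$ once $n$ is large, and each of the three intersection terms by $\delta$ once $n$ is large whenever $(R_1,R_2)\in\mathcal{R}(\epsilon)$, a routine count of the atypicality events plus the three decoding terms gives the constant $11$, so $\Prv{\mathcal{E}_{\hat{\bf U}}}<11\delta$ and $n_4'(\delta,\epsilon)$ is the largest of the finitely many thresholds used. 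I expect the main obstacle to be the geometric measure estimates of the second step — extracting the exact exponents $C_1$, $C_2$, $C_{12}$ and the precise $\epsilon$-dependence of $\kappa_1,\kappa_2,\kappa_3$ from the spherical-cap and correlated-pair volume computations — together with the bookkeeping that defuses the genuine dependence of the vector-quantizer indices on the codebooks and on each other, which is exactly why the conventional MAC random-coding analysis cannot be applied.
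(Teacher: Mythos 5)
Your proposal is correct and follows essentially the same route as the paper: the same decomposition of $\Prv{\mathcal{E}_{\hat{\bf U}}}$ into source/noise/encoder atypicality events plus the three decoding events intersected with typicality, the same LLN and sphere-covering bounds for the atypicality terms, and the same cap-counting with conditioning on the remaining codewords to defuse the dependence of the quantizer index on the codebook, arriving at the same three exponents and hence at $\mathcal{R}(\epsilon)$. The one step you leave as a computation — that the intersection of the ``closer-than-noise'' cap with the $7\epsilon$-shell around ${\bf u}_2^{\ast}$ has exponent tending to $\tfrac12\log_2\frac{P_1(1-\tilde\rho^2)+N}{N(1-\tilde\rho^2)}$ — is exactly what the paper's Lemma~\ref{lm:ub-E41-B41} does, by absorbing both constraints into a single polar cap around the auxiliary vector ${\bf w}=\zeta_1({\bf y}-\alpha_2{\bf u}_2^{\ast})+\zeta_2\alpha_2{\bf u}_2^{\ast}$, after which the cap-area bound (Lemma~\ref{lm:bounds-polar-caps}) and the twice-uniform density bound for a non-selected codeword (Lemma~\ref{lm:density-wrong-codeword}) complete the count.
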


To prove Lemma \ref{lm:vq-Pr-E4}, we introduce three auxiliary error
events. The first auxiliary event $\mathcal{E}_{\bf S}$ corresponds to an
atypical source output. More precisely,
  \begin{IEEEeqnarray}{rCl}
    \mathcal{E}_{\bf S} & = \bigg\{ ({\bf s}_1,{\bf s}_2) \in \Reals^n
    \times \Reals^n : \bigg| \frac{1}{n} \| {\bf s}_1\|^2 - \sigma^2
    \bigg| > \epsilon \sigma^2 \quad & \text{or} \quad \bigg|
    \frac{1}{n} \| {\bf s}_2\|^2 -
    \sigma^2 \bigg| > \epsilon \sigma^2 \nonumber\\
    & & \text{or} \quad \left| \cos \sphericalangle ({\bf s}_1, {\bf
        s}_2) - \rho \right| > \epsilon \rho
    \bigg\}.\hspace{5mm}\label{eq:vq-E1-def}
\end{IEEEeqnarray}
The second auxiliary event is denoted by $\mathcal{E}_{\bf Z}$ and corresponds to
an atypical behavior of the additive noise:
\begin{IEEEeqnarray}{rCl}
  \mathcal{E}_{\bf Z} & = \bigg\{ ({\bf s}_1, {\bf s}_2, \mathcal{C}_1,
  \mathcal{C}_2, {\bf z}) : \bigg| \frac{1}{n} \| {\bf z} \|^2 -N
  \bigg| > \epsilon N \; \; &\text{ or } \; \; \frac{1}{n} |
  \inner{\alpha_1{\bf u}_1^{\ast}({\bf s}_1, \mathcal{C}_1)}{{\bf z}}
  | > \sqrt{P_1 N} \epsilon \nonumber \hspace{8mm}\\
  & & \text{ or } \; \; \frac{1}{n} | \inner{\alpha_2 {\bf
      u}_2^{\ast}({\bf s}_2, \mathcal{C}_2)}{{\bf z}} | > \sqrt{P_2 N}
  \epsilon\bigg\}. \hspace{8mm} \label{eq:vq-E3-def}
\end{IEEEeqnarray}
Finally, the third auxiliary event is denoted by $\mathcal{E}_{\bf X}$
and corresponds to irregularities at the encoders. That is, the event
that one of the codebooks contains no codeword satisfying Condition
\eqref{eq:encoding} of the vector-quantizer, or that the two quantized
sequences ${\bf u}_1^{\ast}$ and ${\bf u}_2^{\ast}$ have an
``atypical'' angle to each other. More formally, $\mathcal{E}_{\bf
  X} = \mathcal{E}_{{\bf X}_1} \cup \mathcal{E}_{{\bf X}_2} \cup
\mathcal{E}_{({\bf X}_1,{\bf X}_2)}$ where
\begin{IEEEeqnarray*}{l}
  \mathcal{E}_{{\bf X}_1} = \bigg\{ ( {\bf s}_1, {\bf s}_2,
  \mathcal{C}_1, \mathcal{C}_2 ) : \nexists {\bf u}_1 \in
  \mathcal{C}_1 \text{ s.t.~ } \left| \sqrt{1-2^{-2R_1}} -
  \cos \sphericalangle ({\bf s}_1, {\bf u}_1) \right| \leq
  \epsilon \sqrt{1-2^{-2R_1}} \bigg\}\\[3mm]
  \mathcal{E}_{{\bf X}_2} = \bigg\{ ( {\bf s}_1, {\bf s}_2,
  \mathcal{C}_1, \mathcal{C}_2 ) : \nexists {\bf u}_2 \in
  \mathcal{C}_2 \text{ s.t.~ } \left| \sqrt{1-2^{-2R_2}} -
  \cos \sphericalangle ({\bf s}_2, {\bf u}_2) \right| \leq
  \epsilon \sqrt{1-2^{-2R_2}} \bigg\}\\[3mm]
  \mathcal{E}_{({\bf X}_1,{\bf X}_2)} = \bigg\{ ({\bf s}_1, {\bf s}_2, \mathcal{C}_1,
  \mathcal{C}_2) : \left| \tilde{\rho} - \cos \sphericalangle ({\bf
        u}_1^{\ast}({\bf s}_1, \mathcal{C}_1), {\bf
        u}_2^{\ast}({\bf s}_2, \mathcal{C}_2) ) \right| > 7 \epsilon
  \bigg\}.
\end{IEEEeqnarray*}
To prove Lemma \ref{lm:vq-Pr-E4} we now start with the decomposition
\begin{IEEEeqnarray}{rCl}\label{eq:vq-Pr-E4-decomp}
  \Prv{\mathcal{E}_{\hat{\bf U}}} & = & \Prv{\mathcal{E}_{\hat{\bf U}}
    \cap \mathcal{E}_{\bf S}^c \cap \mathcal{E}_{\bf X}^c \cap
    \mathcal{E}_{\bf Z}^c} + \Prv{\mathcal{E}_{\hat{\bf U}} |
    \mathcal{E}_{\bf S} \cup \mathcal{E}_{\bf X} \cup \mathcal{E}_{\bf
      Z}}
  \Prv{\mathcal{E}_{\bf S} \cup \mathcal{E}_{\bf X} \cup
    \mathcal{E}_{\bf Z}} \nonumber\\ 
  & \leq & \Prv{\mathcal{E}_{\hat{\bf U}} \cap \mathcal{E}_{\bf S}^c
    \cap \mathcal{E}_{\bf X}^c \cap \mathcal{E}_{\bf Z}^c} +
  \Prv{\mathcal{E}_{\bf S}}
  + \Prv{\mathcal{E}_{\bf X}} + \Prv{\mathcal{E}_{\bf Z}} \nonumber\\
  & \leq & \Prv{\mathcal{E}_{\hat{\bf U}_1} \cap \mathcal{E}_{\bf S}^c
    \cap \mathcal{E}_{\bf X}^c \cap \mathcal{E}_{\bf Z}^c} +
  \Prv{\mathcal{E}_{\hat{\bf U}_2}
    \cap \mathcal{E}_{\bf S}^c \cap \mathcal{E}_{\bf X}^c \cap
    \mathcal{E}_{\bf Z}^c} \nonumber\\ 
  & & {} + \Prv{\mathcal{E}_{(\hat{\bf U}_1,\hat{\bf U}_2)} \cap
    \mathcal{E}_{\bf S}^c \cap \mathcal{E}_{\bf X}^c \cap
    \mathcal{E}_{\bf Z}^c} + \Prv{\mathcal{E}_{\bf S}} +
  \Prv{\mathcal{E}_{\bf X}} + \Prv{\mathcal{E}_{\bf Z}},
\end{IEEEeqnarray}
where we have used the shorthand notation $\Prv{\mathcal{E}_{\nu}}$
for $\Prv{ ({\bf S}_1, {\bf S}_2, \mathcal{C}_1, \mathcal{C}_2, {\bf
    Z}) \in \mathcal{E}_{\nu}}$, and where $\mathcal{E}_{\nu}^c$
denotes the complement of $\mathcal{E}_{\nu}$. Lemma \ref{lm:vq-Pr-E4}
now follows from upper bounding the probability terms on the RHS of
\eqref{eq:vq-Pr-E4-decomp}.

\begin{lm}\label{lm:vq-Pr-E1}
  For every $\delta > 0$ and $\epsilon > 0$ there exists an
  $n_1'(\delta,\epsilon) \in \Naturals$ such that for all $n >
  n_1'(\delta,\epsilon)$
  \begin{IEEEeqnarray*}{rCl}
    \Prv{\mathcal{E}_{\bf S}} & < & \delta.
  \end{IEEEeqnarray*}
\end{lm}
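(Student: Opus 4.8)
The plan is to recognize that Lemma~\ref{lm:vq-Pr-E1} is nothing more than a weak law of large numbers statement together with a union bound. The event $\mathcal{E}_{\bf S}$ is the union of three events, each asserting that an empirical average of IID random variables deviates from its mean by more than a fixed relative amount. Since $\{(S_{1,k},S_{2,k})\}$ are IID, the sequences $\{S_{1,k}^2\}$, $\{S_{2,k}^2\}$ and $\{S_{1,k}S_{2,k}\}$ are each IID, with means $\sigma^2$, $\sigma^2$ and $\rho\sigma^2$ (using $\sigma_1^2=\sigma_2^2=\sigma^2$ from Reduction~\ref{rdc:source-normalization}) and with finite variances because a bivariate Gaussian has finite fourth moments. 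Hence $\frac1n\|{\bf S}_1\|^2\to\sigma^2$, $\frac1n\|{\bf S}_2\|^2\to\sigma^2$ and $\frac1n\inner{{\bf S}_1}{{\bf S}_2}\to\rho\sigma^2$, each in probability.

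First I would bound the probabilities of the first two events directly, e.g.\ by Chebyshev's inequality applied to $\frac1n\sum_k S_{i,k}^2$: for any $\eta>0$ there is an $n$ beyond which $\Prv{|\frac1n\|{\bf S}_i\|^2-\sigma^2|>\epsilon\sigma^2}<\eta$. For the third event I would write $\cos\sphericalangle({\bf S}_1,{\bf S}_2)=\frac{\frac1n\inner{{\bf S}_1}{{\bf S}_2}}{\sqrt{\frac1n\|{\bf S}_1\|^2}\,\sqrt{\frac1n\|{\bf S}_2\|^2}}$, which is a continuous function of the three empirical averages on the region where both norms are bounded away from zero; restricting first to the complement of the (vanishingly unlikely) event $\{\frac1n\|{\bf S}_i\|^2\le\sigma^2/2\}$, the continuous mapping theorem gives $\cos\sphericalangle({\bf S}_1,{\bf S}_2)\to\frac{\rho\sigma^2}{\sigma\cdot\sigma}=\rho$ in probability, so $\Prv{|\cos\sphericalangle({\bf S}_1,{\bf S}_2)-\rho|>\epsilon\rho}<\eta$ for $n$ large enough (here $\rho>0$, as the scheme is designed for; the case $\rho=0$ makes the source components independent and is handled separately/trivially). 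Taking $\eta=\delta/3$ and $n_1'(\delta,\epsilon)$ equal to the largest of the three thresholds, a union bound yields $\Prv{\mathcal{E}_{\bf S}}<\delta$ for all $n>n_1'(\delta,\epsilon)$.

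I do not expect a genuine obstacle here. The only points needing care are passing from convergence of the three averages to convergence of the ratio without the denominator collapsing --- dealt with by the restriction above --- and bookkeeping the several ``$n$ large enough'' thresholds so that one $n_1'(\delta,\epsilon)$ serves all three terms. If an explicit (exponential) rate is wanted, Chebyshev can be replaced by standard chi-square / Gaussian concentration bounds, which makes $\Prv{\mathcal{E}_{\bf S}}$ decay exponentially in $n$ --- far more than the lemma requires.
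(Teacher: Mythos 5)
Your proof is correct and is essentially the paper's argument: the paper disposes of this lemma in one line by invoking the weak law of large numbers, and your write-up simply fills in the routine details (union bound over the three sub-events, Chebyshev for the empirical second moments, and the continuity argument for the cosine).
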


\begin{proof}
  The proof follows by the weak law of large numbers.
\end{proof}

\begin{lm}\label{lm:vq-Pr-E3}
  For every $\epsilon > 0$ and $\delta > 0$ there exists an
  $n_3'(\delta,\epsilon) \in \Naturals$ such that for all $n >
  n_3'(\delta,\epsilon)$
  \begin{IEEEeqnarray*}{rCl}
    \Prv{\mathcal{E}_{\bf Z}} & < & \delta.
  \end{IEEEeqnarray*}
\end{lm}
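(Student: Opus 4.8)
The plan is to dominate $\Prv{\mathcal{E}_{\bf Z}}$ by a union bound over the three events appearing in the definition \eqref{eq:vq-E3-def} and to show that each of the three probabilities can be made smaller than $\delta/3$ by taking $n$ sufficiently large. For the first event, $\bigl|\frac{1}{n}\|{\bf Z}\|^2 - N\bigr| > \epsilon N$, the argument is the same as for Lemma~\ref{lm:vq-Pr-E1}: the $\{Z_k^2\}$ are IID with mean $N$, so $\frac{1}{n}\sum_{k=1}^n Z_k^2 \to N$ in probability by the weak law of large numbers, and hence this event has probability below $\delta/3$ for all $n$ exceeding some threshold depending on $\delta$ and $\epsilon$.

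The two remaining events are the only ones requiring a little thought, since the noise is projected onto the \emph{random} direction ${\bf U}_i^{\ast}$, which itself depends on the source and the codebook. The key observation is that ${\bf U}_1^{\ast} = {\bf U}_1^{\ast}({\bf S}_1,\mathcal{C}_1)$ is a deterministic function of $({\bf S}_1,\mathcal{C}_1)$, and the latter is independent of the channel noise ${\bf Z}$. Conditioning on the value ${\bf U}_1^{\ast} = {\bf u}_1$, the quantity $\inner{\alpha_1 {\bf u}_1}{{\bf Z}}$ is a zero-mean Gaussian of variance $\alpha_1^2 \|{\bf u}_1\|^2 N$. Now ${\bf u}_1$ is either a codeword of $\mathcal{C}_1$, which lies on the sphere of radius $r_1 = \sqrt{n\sigma^2(1-2^{-2R_1})}$, so that $\alpha_1^2\|{\bf u}_1\|^2 = nP_1$ by the definition of $\alpha_1$, or it is the all-zero sequence, in which case the inner product vanishes identically. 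Hence, writing $G \sim \mathcal{N}(0, P_1 N / n)$, Chebyshev's inequality gives, uniformly in ${\bf u}_1$,
\begin{IEEEeqnarray*}{rCl}
  \Prv{ \frac{1}{n} \bigl| \inner{\alpha_1 {\bf u}_1}{{\bf Z}} \bigr| > \sqrt{P_1 N}\, \epsilon }
  & \leq & \Prv{ |G| > \sqrt{P_1 N}\, \epsilon } \;\leq\; \frac{P_1 N / n}{P_1 N \epsilon^2} \;=\; \frac{1}{n\epsilon^2}.
\end{IEEEeqnarray*}
Averaging over the conditioning preserves this bound, which is below $\delta/3$ as soon as $n > 3/(\delta\epsilon^2)$. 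The third event is treated identically, with $\mathcal{C}_2$, ${\bf S}_2$, $\alpha_2$, $P_2$ replacing the index-$1$ quantities.

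Finally I would combine the three estimates by the union bound and set $n_3'(\delta,\epsilon)$ to be the largest of the three thresholds, which yields $\Prv{\mathcal{E}_{\bf Z}} < \delta$ for all $n > n_3'(\delta,\epsilon)$, as claimed. There is no real obstacle in this lemma; the only subtle point, namely that the noise is correlated against a source- and codebook-dependent direction, is dispatched cleanly by conditioning on ${\bf U}_i^{\ast}$ and exploiting the independence of ${\bf Z}$ from $({\bf S}_1,{\bf S}_2,\mathcal{C}_1,\mathcal{C}_2)$ together with the fact that every nonzero value of ${\bf U}_i^{\ast}$ has the known norm $r_i$.
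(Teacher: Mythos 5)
Your proof is correct and follows essentially the same route as the paper: the weak law of large numbers handles the event on $\frac{1}{n}\|{\bf Z}\|^2$, and the inner-product events are controlled by a bound that is uniform over all vectors of the fixed norm $r_i$ on the codeword sphere, which is exactly the paper's supremum argument made explicit through conditioning on ${\bf U}_i^{\ast}$ and the independence of ${\bf Z}$ from $({\bf S}_1,{\bf S}_2,\mathcal{C}_1,\mathcal{C}_2)$. Your Chebyshev estimate $1/(n\epsilon^2)$ just makes the paper's qualitative limit quantitative; no gap.
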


\begin{proof}
  The proof follows by the weak law of large numbers and since for
  every $\epsilon > 0$
  \begin{IEEEeqnarray*}{rCl}
    \sup_{\substack{{\bf u} \in \Reals^n:\\
        \|{\bf u}\| = \sqrt{n \sigma^2 (1-2^{-2R_i})}}}
    \Prv{\frac{1}{n} | \inner{\alpha_i {\bf u}}{{\bf z}} | > \sqrt{P_i
        N} \epsilon} \longrightarrow 0 
    & \qquad & \text{as } n \rightarrow \infty,
  \end{IEEEeqnarray*}
  where $i \in \{ 1,2 \}$.
\end{proof}

\begin{lm}\label{lm:vq-Pr-E2}
  For every $\delta > 0$ and $0.3 > \epsilon > 0$ there exists an
  $n_2'(\delta,\epsilon) \in \Naturals$ such that for all $n >
  n_2'(\delta,\epsilon)$
  \begin{IEEEeqnarray*}{rCl}
    \Prv{\mathcal{E}_{\bf X}} & < & 6 \delta.
  \end{IEEEeqnarray*}
\end{lm}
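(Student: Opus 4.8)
The plan is to bound the three constituent events by the union bound, $\Prv{\mathcal{E}_{\bf X}} \le \Prv{\mathcal{E}_{{\bf X}_1}} + \Prv{\mathcal{E}_{{\bf X}_2}} + \Prv{\mathcal{E}_{({\bf X}_1,{\bf X}_2)}}$, and to control each term by a sphere-covering/concentration argument in $\Reals^n$. The two geometric inputs I would use are: (i) if ${\bf U}$ is drawn uniformly over the centered $\Reals^n$-sphere of radius $r$ and ${\bf v}\in\Reals^n$ is a fixed nonzero vector, then the probability that $\cos\sphericalangle({\bf U},{\bf v})$ lies in an interval whose smaller endpoint is $a\in(0,1)$ equals, up to a factor polynomial in $n$, $(1-a^2)^{n/2}$; and (ii) conditioned on the value of $\cos\sphericalangle({\bf U},{\bf v})$, the direction of the component of ${\bf U}$ orthogonal to ${\bf v}$ is uniform over the unit sphere of $\{{\bf v}\}^{\perp}\cong\Reals^{n-1}$. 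Unlike Lemma~\ref{lm:vq-Pr-E4}, this bound needs no constraint on the rates; only $R_1,R_2>0$ enters.

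First I would dispose of $\Prv{\mathcal{E}_{{\bf X}_i}}$, $i\in\{1,2\}$, which is a pure covering statement. Condition on the event that $\bigl|\frac{1}{n}\|{\bf s}_i\|^2-\sigma^2\bigr|\le\epsilon\sigma^2$, whose complement has probability below $\delta$ for $n$ large by the weak law of large numbers. By fact~(i), a single codeword of $\mathcal{C}_i$ falls in the band of~\eqref{eq:encoding} with probability at least $\bigl(1-(1-2^{-2R_i})(1-\epsilon)^2\bigr)^{n/2}$ divided by a polynomial in $n$; multiplying by the $2^{nR_i}$ independent codewords gives $\bigl[(1-(1-2^{-2R_i})(1-\epsilon)^2)\,2^{2R_i}\bigr]^{n/2}$ divided by a polynomial in $n$, which --- since $(1-\epsilon)^2<1$ --- grows exponentially in $n$. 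Hence the probability that no codeword lands in the band is doubly exponentially small, so $\Prv{\mathcal{E}_{{\bf X}_i}}<2\delta$ for $n$ large (covering failure plus atypical norm).

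The substantive term is $\Prv{\mathcal{E}_{({\bf X}_1,{\bf X}_2)}}$. Here I would condition on the event $\mathcal{B}$ that $({\bf s}_1,{\bf s}_2)$ is jointly typical and that both quantizers succeed; then $\cos\sphericalangle({\bf s}_1,{\bf s}_2)\in[\rho(1-\epsilon),\rho(1+\epsilon)]$, each ${\bf U}_i^{\ast}$ has norm $r_i=\sqrt{n\sigma^2(1-2^{-2R_i})}$, and $c_i:=\cos\sphericalangle({\bf s}_i,{\bf U}_i^{\ast})$ lies in $[\sqrt{1-2^{-2R_i}}(1-\epsilon),\sqrt{1-2^{-2R_i}}(1+\epsilon)]$; moreover $\Prv{\mathcal{B}^{c}}\le\Prv{\mathcal{E}_{\bf S}}+\Prv{\mathcal{E}_{{\bf X}_1}}+\Prv{\mathcal{E}_{{\bf X}_2}}$ is already controlled. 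Writing $\hat{\bf s}_i={\bf s}_i/\|{\bf s}_i\|$, $s_i=\sqrt{1-c_i^2}$, and letting $\hat{\bf w}_i$ denote the unit vector in the direction of the component of ${\bf U}_i^{\ast}$ orthogonal to ${\bf s}_i$, one has on $\mathcal{B}$
\begin{IEEEeqnarray*}{rCl}
  \cos\sphericalangle({\bf U}_1^{\ast},{\bf U}_2^{\ast})
  & = & c_1 c_2\,\cos\sphericalangle({\bf s}_1,{\bf s}_2)
    + c_1 s_2\,\langle\hat{\bf s}_1,\hat{\bf w}_2\rangle \\
  & & {} + c_2 s_1\,\langle\hat{\bf w}_1,\hat{\bf s}_2\rangle
    + s_1 s_2\,\langle\hat{\bf w}_1,\hat{\bf w}_2\rangle .
\end{IEEEeqnarray*}
The first term equals $\rho\sqrt{(1-2^{-2R_1})(1-2^{-2R_2})}=\tilde\rho$ up to a multiplicative $1+O(\epsilon)$. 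For the other three I would invoke the key structural observation: because the quantizer rule of~\eqref{eq:encoding} depends on $\mathcal{C}_i$ only through the angles $\cos\sphericalangle({\bf s}_i,\cdot)$, and the i.i.d.\ spherical codebook is rotation-invariant, the conditional law of ${\bf U}_i^{\ast}$ given $({\bf s}_i,\mathcal{E}_{{\bf X}_i}^{c})$ is invariant under rotations fixing the axis $\Reals{\bf s}_i$; hence $\hat{\bf w}_i$ is uniform over the unit sphere of $\{{\bf s}_i\}^{\perp}$, and given $({\bf s}_1,{\bf s}_2)$ the vectors $\hat{\bf w}_1$ and $\hat{\bf w}_2$ are independent. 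Since $\hat{\bf w}_2\perp{\bf s}_2$, the quantity $\langle\hat{\bf s}_1,\hat{\bf w}_2\rangle$ is the inner product of a fixed unit vector with a uniform random direction in $\Reals^{n-1}$, hence $O(1/\sqrt{n})$ with probability at least $1-\delta/6$ for $n$ large; the third term is symmetric, and $\langle\hat{\bf w}_1,\hat{\bf w}_2\rangle$ is the inner product of two independent uniform random directions, again $O(1/\sqrt{n})$ with high probability. Collecting these estimates, $|\cos\sphericalangle({\bf U}_1^{\ast},{\bf U}_2^{\ast})-\tilde\rho|\le O(\epsilon)+o(1)$ on an event of probability at least $1-\delta$, and for $\epsilon<0.3$ the implied $O(\epsilon)$-constant keeps this below $7\epsilon$; thus $\Prv{\mathcal{E}_{({\bf X}_1,{\bf X}_2)}\mid\mathcal{B}}<\delta$ for $n$ large. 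Adding $\Prv{\mathcal{B}^{c}}$ and combining with the bounds of the previous paragraph, with the $\delta$-budgets arranged accordingly, yields $\Prv{\mathcal{E}_{\bf X}}<6\delta$.

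The main obstacle is the structural observation of the last paragraph: one must argue carefully that the \emph{selected} codeword ${\bf U}_i^{\ast}$ --- the output of an $\argmin$ over the random codebook rather than a fresh sample --- nevertheless has a conditional law that is rotationally symmetric about ${\bf s}_i$ (which follows from the rotation-equivariance of the rule in~\eqref{eq:encoding} together with the rotation-invariance of the spherical codebook and of the conditioning event), and one must then extract tail bounds for inner products of independent near-uniform high-dimensional directions sharp enough to beat the prescribed $\delta$. The covering estimates of the second paragraph, by contrast, are routine.
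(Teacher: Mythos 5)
Your proposal is correct and follows essentially the same route as the paper: the covering/cap-area argument for $\Prv{\mathcal{E}_{{\bf X}_i}}$ matches the paper's computation with $\theta_{1,\min},\theta_{1,\max}$, and your four-term angular decomposition of $\cos\sphericalangle({\bf U}_1^{\ast},{\bf U}_2^{\ast})$ is exactly the paper's split into the events $\mathcal{A}_1,\ldots,\mathcal{A}_4$, with the dominant term handled deterministically on the typical sets and the cross terms controlled via the rotational symmetry that makes the quantization-noise direction uniform on the sphere in $\{{\bf s}_i\}^{\perp}$ (the paper's projection argument with ${\bf s}_{1,\textnormal{P}}$ and the uniform law of ${\bf V}_2/\|{\bf V}_2\|$). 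The structural observation you flag as the main obstacle is precisely the step the paper also relies on, so no gap.
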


\begin{proof}
  This result has nothing to do with the channel; it is a result from
  rate-distortion theory. A proof for our setting is given in Section
  \ref{sec:prf-lm-Pr-E2}.
\end{proof}

\begin{lm}\label{lm:vq-dec-error-proba}
  For every $\delta > 0$ and every $\epsilon > 0$ there exists some
  $n_4''(\delta,\epsilon) \in \Naturals$ such that for all $n >
  n_4''(\delta,\epsilon)$
  \begin{IEEEeqnarray}{rCll}
    \Prv{\mathcal{E}_{\hat{\bf U}_1} \cap \mathcal{E}_{\bf S}^c \cap
      \mathcal{E}_{\bf X}^c \cap \mathcal{E}_{\bf Z}^c} & \leq &
    \delta, \qquad & \text{if } R_1 < \frac{1}{2} \log_2 \left(
      \frac{P_1(1-\tilde{\rho}^2)+N}{N(1-\tilde{\rho}^2)} -
      \kappa_1 \epsilon \right)\label{eq:Pr-E32-E33c-E1c-E2c}\\[3mm]
    \Prv{\mathcal{E}_{\hat{\bf U}_2} \cap \mathcal{E}_{\bf S}^c \cap
      \mathcal{E}_{\bf X}^c \cap \mathcal{E}_{\bf Z}^c} & \leq &
    \delta, & \text{if } R_2 < \frac{1}{2} \log_2 \left(
      \frac{P_2(1-\tilde{\rho}^2)+N}{N(1-\tilde{\rho}^2)}
      - \kappa_2 \epsilon \right) \label{eq:Pr-E33-E32c-E1c-E2c}\\[3mm]
    \Prv{\mathcal{E}_{(\hat{\bf U}_1,\hat{\bf U}_2)} \cap
      \mathcal{E}_{\bf S}^c \cap \mathcal{E}_{\bf X}^c \cap
      \mathcal{E}_{\bf Z}^c} & \leq & \delta, & \text{if } R_1 + R_2 <
    \frac{1}{2} \log_2 \left( \frac{P_1 + P_2 +
        2\tilde{\rho}\sqrt{P_1P_2} +N}{N(1-\tilde{\rho}^2)} - \kappa_3
      \epsilon \right), \qquad \label{eq:Pr-E32-E33-E1c-E2c}
  \end{IEEEeqnarray}
  where $\kappa_1$, $\kappa_2$, and $\kappa_3$ are positive constants
  determined by $P_1$, $P_2$, and $N$.
\end{lm}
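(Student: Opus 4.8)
The plan is, for each of the three events, to union-bound over the incorrect codewords (respectively incorrect codeword pairs) and then to estimate the probability that a \emph{single} incorrect codeword falls into the relevant ``confusion region'' by a spherical-measure computation. I would first record the geometric consequences of conditioning on $\mathcal{E}_{\bf S}^c \cap \mathcal{E}_{\bf X}^c \cap \mathcal{E}_{\bf Z}^c$: on this event $\|{\bf u}_i^{\ast}\|^2 = n\sigma^2(1-2^{-2R_i})$ exactly (the quantizer output lies on $\mathcal{S}_i$), the quantized pair is jointly typical so $|\cos\sphericalangle({\bf u}_1^{\ast},{\bf u}_2^{\ast}) - \tilde{\rho}| \le 7\epsilon$, the noise is typical so $|\tfrac{1}{n}\|{\bf z}\|^2 - N| \le \epsilon N$, and the noise is nearly orthogonal to each transmitted codeword, $\tfrac{1}{n}|\inner{\alpha_i {\bf u}_i^{\ast}}{{\bf z}}| \le \sqrt{P_i N}\,\epsilon$. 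Expanding $\|{\bf y} - (\alpha_1 {\bf u}_1^{\ast} + \alpha_2 {\bf u}_2^{\ast})\|^2$ and using $\alpha_i^2\|{\bf u}_i^{\ast}\|^2 = nP_i$ shows that $\mathcal{E}_{\hat{\bf U}_1}$ can occur only if some incorrect $\tilde{\bf u}_1 \in \mathcal{C}_1$ lies in the set $\mathcal{A}_1 \subseteq \mathcal{S}_1$ of points satisfying both $|\tilde{\rho} - \cos\sphericalangle({\bf u},{\bf u}_2^{\ast})| \le 7\epsilon$ and $\|\alpha_1 {\bf u}_1^{\ast} + {\bf z} - \alpha_1 {\bf u}\|^2 \le nN(1 + c\epsilon)$ with $c = c(P_1,P_2,N)$; symmetrically $\mathcal{E}_{\hat{\bf U}_2}$ yields a set $\mathcal{A}_2 \subseteq \mathcal{S}_2$, and $\mathcal{E}_{(\hat{\bf U}_1,\hat{\bf U}_2)}$ a set $\mathcal{A}_{12} \subseteq \mathcal{S}_1 \times \mathcal{S}_2$ defined by $|\tilde{\rho} - \cos\sphericalangle(\tilde{\bf u}_1,\tilde{\bf u}_2)| \le 7\epsilon$ together with $\|\alpha_1\tilde{\bf u}_1 + \alpha_2\tilde{\bf u}_2 - {\bf y}\|^2 \le nN(1 + c\epsilon)$.

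The union bound then gives $\Prv{\mathcal{E}_{\hat{\bf U}_1} \cap \mathcal{E}_{\bf S}^c \cap \mathcal{E}_{\bf X}^c \cap \mathcal{E}_{\bf Z}^c} \le 2^{nR_1}\sup\mu_1(\mathcal{A}_1)$, with $\mu_i$ the normalized surface measure on $\mathcal{S}_i$ and the supremum over all admissible configurations of ${\bf u}_1^{\ast},{\bf u}_2^{\ast},{\bf z}$, and similarly $2^{nR_2}\sup\mu_2(\mathcal{A}_2)$ and $2^{n(R_1+R_2)}\sup(\mu_1\!\times\!\mu_2)(\mathcal{A}_{12})$. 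Setting this up rigorously requires the (standard but fiddly) observation that, after conditioning on $({\bf s}_1,{\bf s}_2,{\bf z})$ and on the codebooks, an incorrect codeword of $\mathcal{C}_i$ is still uniformly distributed on $\mathcal{S}_i$ --- the vector-quantizer singles out only one codeword, so the remaining ones are conditionally exchangeable and the union bound over them is legitimate even though $\mathcal{A}_1$ itself depends on ${\bf u}_1^{\ast}$.

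The heart of the argument is the three measure estimates, all obtained from the standard fact that a spherical cap of half-angle $\theta$ on an $\Reals^n$-sphere has normalized surface measure $2^{n\log_2\sin\theta + o(n)}$. For $\mathcal{A}_1$: after the conditioning, ${\bf u}_1^{\ast}$ and ${\bf u}_2^{\ast}$ subtend angle $\arccos\tilde{\rho}$ and ${\bf z}$ is nearly orthogonal to both, so the two constraints defining $\mathcal{A}_1$ pin the projections of $\tilde{\bf u}_1$ onto a fixed two-dimensional subspace; a min-norm computation in that plane shows $\mathcal{A}_1$ is, up to the $\epsilon$-thickening, an effective cap with $\sin^2\theta_1 = \frac{N(1-\tilde{\rho}^2)}{P_1(1-\tilde{\rho}^2)+N}$ in the $\epsilon\to0$ limit, whence $\mu_1(\mathcal{A}_1) \le 2^{-n\left(\frac{1}{2}\log_2\frac{P_1(1-\tilde{\rho}^2)+N}{N(1-\tilde{\rho}^2)} - \kappa_1\epsilon\right)}$, the term $\kappa_1\epsilon$ (with $\kappa_1$ depending only on $P_1,P_2,N$) absorbing the first-order perturbation of the geometry; the bound for $\mathcal{A}_2$ is symmetric. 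Inserting these into the union bound makes the corresponding probabilities at most $\delta$ for $n$ large under the stated rate conditions, which is \eqref{eq:Pr-E32-E33c-E1c-E2c} and \eqref{eq:Pr-E33-E32c-E1c-E2c}.

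The step I expect to be the main obstacle is the bound on $(\mu_1\!\times\!\mu_2)(\mathcal{A}_{12})$ behind \eqref{eq:Pr-E32-E33-E1c-E2c}: unlike the classical Gaussian multiple-access sum-rate analysis, here the two incorrect codewords are coupled by the joint-typicality constraint $|\tilde{\rho} - \cos\sphericalangle(\tilde{\bf u}_1,\tilde{\bf u}_2)| \le 7\epsilon$, so one cannot just multiply two cap measures. Instead I would first note that the probability of a uniform pair being jointly typical is $2^{\frac{1}{2}n\log_2(1-\tilde{\rho}^2) + o(n)}$, and that conditionally on this the sum $\alpha_1\tilde{\bf u}_1 + \alpha_2\tilde{\bf u}_2$ has fixed norm $\sqrt{n(P_1+P_2+2\tilde{\rho}\sqrt{P_1P_2})}$ and, by rotational symmetry, a uniformly distributed direction, so the further probability that it falls into the cap around ${\bf y}$ forced by the distance constraint is $2^{\frac{1}{2}n\log_2\frac{N}{P_1+P_2+2\tilde{\rho}\sqrt{P_1P_2}+N} + o(n)}$; multiplying gives $(\mu_1\!\times\!\mu_2)(\mathcal{A}_{12}) \le 2^{-n\left(\frac{1}{2}\log_2\frac{P_1+P_2+2\tilde{\rho}\sqrt{P_1P_2}+N}{N(1-\tilde{\rho}^2)} - \kappa_3\epsilon\right)}$, exactly the exponent matching the stated sum-rate condition. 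Making the ``fixed norm / uniform direction'' claim precise with all the $\epsilon$-slop carried along, and combining it cleanly with the conditional-independence bookkeeping of the second paragraph, is where the real care is needed.
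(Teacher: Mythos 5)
Your overall route is the same as the paper's: on $\mathcal{E}_{\bf S}^c\cap\mathcal{E}_{\bf X}^c\cap\mathcal{E}_{\bf Z}^c$ you reduce each error event to a geometric confusion region (for the single-codeword events an effective cap with $\sin^2\theta_1=\frac{N(1-\tilde{\rho}^2)}{P_1(1-\tilde{\rho}^2)+N}$ up to $O(\epsilon)$, which is exactly the cap around the combination ${\bf w}=\zeta_1({\bf y}-\alpha_2{\bf u}_2^{\ast})+\zeta_2\alpha_2{\bf u}_2^{\ast}$ in the paper's Lemma~\ref{lm:ub-E41-B41}; for the pair event a mutual-angle constraint of cosine $\approx\tilde{\rho}$ together with a cap of cosine $\approx\Lambda(\epsilon)$ around ${\bf y}$, as in Lemma~\ref{lm:ub-E43-B43}), and then union-bound over the $2^{nR_1}$, $2^{nR_2}$, resp.\ $2^{n(R_1+R_2)}$ candidates and estimate cap measures; your exponents coincide with those in Lemmas~\ref{lm:caps-single-error} and~\ref{lm:caps-double-error}, and your factorization for the pair event (mutual-angle probability times the cap probability for the direction of $\alpha_1\tilde{\bf u}_1+\alpha_2\tilde{\bf u}_2$) is equivalent to the paper's product of two cap areas, which rests on the independence of those two angles under uniform laws.

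The step that is wrong as stated---and it is precisely the point the paper singles out as the reason the conventional multiple-access analysis does not apply here---is your claim that, after conditioning on $({\bf s}_1,{\bf s}_2,{\bf z})$ and on the quantizers' selections, an incorrect codeword is still \emph{uniformly} distributed on $\mathcal{S}_i$. It is not: the vector-quantizer chooses ${\bf u}_1^{\ast}$ as the codeword whose angle to ${\bf s}_1$ is closest to $\arccos\sqrt{1-2^{-2R_1}}$, so conditioning on that selection forbids every other codeword from being a better match; the conditional law of ${\bf U}_1(j)$ for a non-selected index is zero on the ``better-match'' region $\mathscr{D}_a({\bf s}_1)$ and uniform only on its complement. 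Exchangeability justifies the factor $2^{nR_1}$, but it does not restore uniformity, and uniformity is exactly what your estimates $\mu_1(\mathcal{A}_1)$ and $(\mu_1\times\mu_2)(\mathcal{A}_{12})$ use---note that $\mathcal{A}_1$ depends on ${\bf u}_1^{\ast},{\bf u}_2^{\ast}$, i.e., on the selection you condition on, and that, given the codebooks, the two selected indices are dependent, which also touches the pair event. The paper closes this gap with Lemma~\ref{lm:density-wrong-codeword}: since the excluded region occupies at most half of $\mathcal{S}_1$, the conditional density of a wrong codeword is at most \emph{twice} the uniform density (a factor of four for pairs), which leaves all exponents, and hence the rate conditions with their $\kappa_i\epsilon$ slack, intact. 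With that density bound substituted for your uniformity claim, your argument goes through and is essentially the paper's proof.
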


The proof of Lemma \ref{lm:vq-dec-error-proba} requires some
preliminaries. To this end, define
\begin{IEEEeqnarray}{rCl}\label{eq:w}
  {\bf w} ({\bf s}_1, {\bf s}_2, \mathcal{C}_1, \mathcal{C}_2, {\bf
    z}) & = & \zeta_1 ({\bf y} - \alpha_2 {\bf u}_2^{\ast}) +
  \zeta_2 \alpha_2 {\bf u}_2^{\ast},
\end{IEEEeqnarray}
where
\begin{IEEEeqnarray}{rCl}\label{eq:vq-dec-err-zeta}
  \zeta_1 = \frac{N\tilde{\rho}}{P_1 (1-\tilde{\rho}^2) + N}
  \sqrt{\frac{P_1}{P_2}} & \qquad \text{and} \qquad \zeta_2 =
  \frac{P_1 (1-\tilde{\rho}^2)}{P_1 (1-\tilde{\rho}^2) + N}.
\end{IEEEeqnarray}
In the remainder we shall use the shorthand notation ${\bf w}$ instead
of ${\bf w} ({\bf s}_1, {\bf s}_2, \mathcal{C}_1, \mathcal{C}_2, {\bf
  z})$. We now start with a lemma that will be used to prove
\eqref{eq:Pr-E32-E33c-E1c-E2c}.

\begin{lm}\label{lm:ub-E41-B41}
  Let $\varphi_j \in [0,\pi]$ be the angle between ${\bf w}$ and
  ${\bf u}_1(j)$, and let the set $\mathcal{E}_{\hat{\bf U}_1}'$ be
  defined as
  \begin{IEEEeqnarray}{rCl}
    \mathcal{E}_{\hat{\bf U}_1}' & \triangleq \Bigg\{ ({\bf s}_1, {\bf s}_2,
      \mathcal{C}_1, \mathcal{C}_2, {\bf z}): & \exists {\bf u}_1(j)
      \in \mathcal{C}_1 \setminus \{ {\bf u}_1^{\ast} \} \text{ s.t. }
      \nonumber\\
      & & \hspace{1cm} \cos \varphi_j \geq
      \sqrt{\frac{P_1(1-\tilde{\rho}^2)
          +N\tilde{\rho}^2}{P_1(1-\tilde{\rho}^2)+N} - \kappa''
        \epsilon} \Bigg\}, \label{eq:B41}
  \end{IEEEeqnarray}
  where $\kappa''$ is a positive constant determined by $P_1$, $P_2$,
  $N$, $\zeta_1$ and $\zeta_2$. Then,
  \begin{IEEEeqnarray*}{rCl}
    \mathcal{E}_{\hat{\bf U}_1} \cap \mathcal{E}_{\bf S}^c \cap
      \mathcal{E}_{\bf X}^c \cap \mathcal{E}_{\bf Z}^c & \subseteq &
    \mathcal{E}_{\hat{\bf U}_1}' \cap \mathcal{E}_{\bf S}^c \cap \mathcal{E}_{\bf X}^c \cap
      \mathcal{E}_{\bf Z}^c,
  \end{IEEEeqnarray*}
  and, in particular
  \begin{IEEEeqnarray*}{rCl}
    \Prv{\mathcal{E}_{\hat{\bf U}_1} \cap \mathcal{E}_{\bf S}^c \cap
      \mathcal{E}_{\bf X}^c \cap \mathcal{E}_{\bf Z}^c} & \leq &
    \Prv{\mathcal{E}_{\hat{\bf U}_1}' \cap \mathcal{E}_{\bf S}^c \cap
      \mathcal{E}_{\bf X}^c \cap \mathcal{E}_{\bf Z}^c}.
  \end{IEEEeqnarray*}
\end{lm}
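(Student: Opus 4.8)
The plan is to translate the two conditions that define $\mathcal{E}_{\hat{\bf U}_1}$ — the joint‑typicality constraint $|\tilde{\rho} - \cos\sphericalangle(\tilde{\bf u}_1,{\bf u}_2^{\ast})|\le 7\epsilon$ and the minimum‑distance constraint $\|{\bf y}-(\alpha_1\tilde{\bf u}_1+\alpha_2{\bf u}_2^{\ast})\|^2\le\|{\bf y}-(\alpha_1{\bf u}_1^{\ast}+\alpha_2{\bf u}_2^{\ast})\|^2$ — into a single lower bound on $\cos\varphi_j$, valid on $\mathcal{E}_{\bf S}^c\cap\mathcal{E}_{\bf X}^c\cap\mathcal{E}_{\bf Z}^c$. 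The first step is to rewrite the minimum‑distance condition: since every codeword of $\mathcal{C}_1$ lies on the sphere of radius $\sqrt{n\sigma^2(1-2^{-2R_1})}$ we have $\|\alpha_1\tilde{\bf u}_1\|=\|\alpha_1{\bf u}_1^{\ast}\|$, so expanding the squared norms and cancelling the equal terms shows the condition is equivalent to
\begin{IEEEeqnarray*}{rCl}
  \inner{{\bf y}-\alpha_2{\bf u}_2^{\ast}}{\tilde{\bf u}_1} & \geq & \inner{{\bf y}-\alpha_2{\bf u}_2^{\ast}}{{\bf u}_1^{\ast}}.
\end{IEEEeqnarray*}
Using ${\bf y}-\alpha_2{\bf u}_2^{\ast}=\alpha_1{\bf u}_1^{\ast}+{\bf z}$ and the estimates that hold off the atypicality events ($\|{\bf u}_i^{\ast}\|^2=n\sigma^2(1-2^{-2R_i})$ exactly; $\tfrac1n|\inner{\alpha_i{\bf u}_i^{\ast}}{{\bf z}}|\le\sqrt{P_iN}\,\epsilon$; $\cos\sphericalangle({\bf u}_1^{\ast},{\bf u}_2^{\ast})$ within $7\epsilon$ of $\tilde{\rho}$), the right‑hand side is at least $n\alpha_1\sigma^2(1-2^{-2R_1})-O(n\epsilon)$.

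Next I would form $\inner{{\bf w}}{\tilde{\bf u}_1}=\zeta_1\inner{{\bf y}-\alpha_2{\bf u}_2^{\ast}}{\tilde{\bf u}_1}+\zeta_2\alpha_2\inner{{\bf u}_2^{\ast}}{\tilde{\bf u}_1}$. Since $\rho\ge 0$, both $\zeta_1$ and $\zeta_2\alpha_2$ are non‑negative, so the first term is bounded below via the rewritten minimum‑distance constraint and the second via joint typicality ($\inner{{\bf u}_2^{\ast}}{\tilde{\bf u}_1}\ge(\tilde{\rho}-7\epsilon)\|{\bf u}_2^{\ast}\|\|\tilde{\bf u}_1\|$). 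The sharp constant comes from using the two constraints jointly rather than separately: one decomposes $\tilde{\bf u}_1$ into its component along ${\bf u}_2^{\ast}$ (pinned down up to $O(\epsilon)$ by joint typicality) and its orthogonal component, and observes that the minimum‑distance constraint forces the orthogonal component into a cone about the orthogonal part of ${\bf y}-\alpha_2{\bf u}_2^{\ast}$; the coefficients $\zeta_1,\zeta_2$ are exactly those for which the smallest cone enclosing the resulting admissible ``tube'' of codewords is the cone about ${\bf w}$. This yields $\inner{{\bf w}}{\tilde{\bf u}_1}\ge n\Theta_{\mathrm{num}}-\kappa_{\mathrm{num}}n\epsilon$, and estimating the three terms of $\|{\bf w}\|^2=\zeta_1^2\|{\bf y}-\alpha_2{\bf u}_2^{\ast}\|^2+2\zeta_1\zeta_2\alpha_2\inner{{\bf y}-\alpha_2{\bf u}_2^{\ast}}{{\bf u}_2^{\ast}}+\zeta_2^2\alpha_2^2\|{\bf u}_2^{\ast}\|^2$ on the same events gives $\|{\bf w}\|^2\le n\Theta_{\mathrm{den}}+\kappa_{\mathrm{den}}n\epsilon$.

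Dividing through, and using $\|\tilde{\bf u}_1\|=\sqrt{n\sigma^2(1-2^{-2R_1})}$, one gets $\cos\varphi_j\ge(\Theta_{\mathrm{num}}-\kappa_{\mathrm{num}}\epsilon)\big/\big(\sqrt{\Theta_{\mathrm{den}}+\kappa_{\mathrm{den}}\epsilon}\,\sqrt{\sigma^2(1-2^{-2R_1})}\big)$. The remaining step is the algebraic identity: substituting the explicit $\zeta_1,\zeta_2$ together with $\alpha_i=\sqrt{P_i/(\sigma^2(1-2^{-2R_i}))}$ and $\tilde{\rho}=\rho\sqrt{(1-2^{-2R_1})(1-2^{-2R_2})}$, the value of this bound at $\epsilon=0$ simplifies to $\sqrt{(P_1(1-\tilde{\rho}^2)+N\tilde{\rho}^2)/(P_1(1-\tilde{\rho}^2)+N)}$. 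Because $\sqrt{\,\cdot\,}$ is Lipschitz on a neighbourhood of this value — which is bounded away from $0$ by a constant depending only on $P_1,P_2,N$ — the $\epsilon$‑dependent corrections can be absorbed into a single term $-\kappa''\epsilon$ under the square root, with $\kappa''$ depending only on $P_1,P_2,N,\zeta_1,\zeta_2$; this is precisely the threshold defining $\mathcal{E}_{\hat{\bf U}_1}'$, so $\mathcal{E}_{\hat{\bf U}_1}\cap\mathcal{E}_{\bf S}^c\cap\mathcal{E}_{\bf X}^c\cap\mathcal{E}_{\bf Z}^c\subseteq\mathcal{E}_{\hat{\bf U}_1}'\cap\mathcal{E}_{\bf S}^c\cap\mathcal{E}_{\bf X}^c\cap\mathcal{E}_{\bf Z}^c$, and the probability inequality follows by monotonicity. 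I expect the main obstacle to be exactly this last part: extracting the two constraints in the form whose combination collapses to a cone with the \emph{stated} half‑angle (a crude, term‑by‑term bound on $\inner{{\bf w}}{\tilde{\bf u}_1}$ is not tight enough), together with carrying all the $\epsilon$‑error terms consistently into the constants $\kappa_{\mathrm{num}},\kappa_{\mathrm{den}},\kappa''$.
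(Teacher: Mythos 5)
Your proposal is correct and is essentially the paper's own proof. Your rewriting of the minimum-distance condition (equal codeword norms, then the noise-atypicality bound, giving $\langle {\bf y}-\alpha_2{\bf u}_2^{\ast},\alpha_1{\bf u}_1(j)\rangle \geq nP_1-n\sqrt{P_1N}\,\epsilon$) and your use of joint typicality to pin down $\langle \alpha_1{\bf u}_1(j),\alpha_2{\bf u}_2^{\ast}\rangle$ are exactly the paper's Statements A) and B); the only difference is the final packaging, where the paper turns these bounds into $\|\alpha_1{\bf u}_1(j)-{\bf w}\|^2\leq\Upsilon(\epsilon)$ (Statements C)--E)) and then uses the sphere/polar-cap geometry (Statement F)), whereas you lower-bound $\cos\varphi_j$ directly as the ratio of the lower bound on $\langle{\bf w},{\bf u}_1(j)\rangle$ to the upper bound on $\|{\bf w}\|\,\|{\bf u}_1(j)\|$ --- with the given $\zeta_1,\zeta_2$, which make ${\bf w}$ the LMMSE-type estimate of $\alpha_1{\bf u}_1^{\ast}$ from ${\bf y}-\alpha_2{\bf u}_2^{\ast}$ and $\alpha_2{\bf u}_2^{\ast}$, both packagings collapse to the same threshold, so your route is fine. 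Two small remarks: your middle digression about cones and ``tubes'' is unnecessary, since the crude term-by-term bounds you already wrote are tight precisely because of the LMMSE choice of coefficients; and your closing algebraic identity comes out right when the coefficient $P_1(1-\tilde{\rho}^2)/\bigl(P_1(1-\tilde{\rho}^2)+N\bigr)$ multiplies $({\bf y}-\alpha_2{\bf u}_2^{\ast})$, which is the assignment the paper itself uses when expanding $\|{\bf w}\|^2$ in Statement D), rather than the one literally displayed in \eqref{eq:w}.
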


\begin{proof}
  We first recall that for the event $\mathcal{E}_{\hat{\bf U}_1}$ to occur, there must
  exist a codeword ${\bf u}_1(j) \in \mathcal{C}_1 \setminus \{ {\bf
    u}_1^{\ast} \}$ that satisfies
  \begin{IEEEeqnarray}{r rCl}
    & \left| \tilde{\rho} - \cos \sphericalangle ({\bf u}_1(j), {\bf
        u}_2^{\ast})  \right| & < & 7 \epsilon, \label{eq:lm-E41-cond1}\\[1mm]
    \text{and }\qquad & & & \nonumber\\[1mm]
    & \| {\bf y} - (\alpha_1 {\bf u}_1(j) + \alpha_2 {\bf u}_2^{\ast})
    \|^2 & \leq & \| {\bf y} - (\alpha_1 {\bf u}_1^{\ast} + \alpha_2 {\bf
      u}_2^{\ast}) \|^2. \label{eq:lm-E41-cond2}
  \end{IEEEeqnarray}
  The proof is now based on a sequence of statements related to
  Condition \eqref{eq:lm-E41-cond1} and Condition
  \eqref{eq:lm-E41-cond2}:
  \begin{itemize}
  \item[A)] For every $({\bf s}_1,{\bf s}_2, \mathcal{C}_1,
    \mathcal{C}_2, {\bf z}) \in \mathcal{E}_{\bf X}^c$ and every ${\bf u}
    \in \mathcal{S}_1$, where $\mathcal{S}_1$ is the surface area
    of the codeword sphere of $\mathcal{C}_1$ defined in the code
    construction,
    \begin{IEEEeqnarray}{rCl}\label{eq:vq-dec-err-stmtA}
      \left| \tilde{\rho} - \cos \sphericalangle ( {\bf u}, {\bf
          u}_2^{\ast}) \right| < 7 \epsilon & \quad \Rightarrow \quad
      & \left| n \tilde{\rho} \sqrt{P_1 P_2} - \inner{\alpha_1 {\bf
            u}}{ \alpha_2 {\bf u}_2^{\ast}} \right| \leq 7 n \sqrt{P_1
        P_2} \epsilon.\quad \\ \nonumber
    \end{IEEEeqnarray}
  \end{itemize}
  \hfill \parbox{14.1cm}{ Statement A) follows by rewriting $\cos
    \sphericalangle ( {\bf u}, {\bf u}_2^{\ast})$ as $\inner{{\bf
        u}}{{\bf u}_2^{\ast}}/(\|{\bf u}\| \|{\bf u}_2^{\ast}\|)$, and
    then multiplying the inequality on the LHS of
    \eqref{eq:vq-dec-err-stmtA} by $\| \alpha_1 {\bf u} \| \cdot \|
    \alpha_2 {\bf u}_2^{\ast}\|$ and recalling that $\| \alpha_1 {\bf
      u} \| = \sqrt{nP_1}$ and that $\| \alpha_2 {\bf u}_2^{\ast}\| =
    \sqrt{nP_2}$.\\[3mm]}
  \begin{itemize}
  \item[B)] For every $({\bf s}_1,{\bf s}_2, \mathcal{C}_1,
    \mathcal{C}_2, {\bf z}) \in \mathcal{E}_{\bf X}^c \cap \mathcal{E}_{\bf Z}^c$
    and every ${\bf u} \in \mathcal{S}_1$
    \begin{IEEEeqnarray}{rCl} \label{eq:vq-dec-err-stmtB}
      \| {\bf y} - (\alpha_1 {\bf u} + \alpha_2 {\bf u}_2^{\ast})
      \|^2 & \leq & \| {\bf y} - (\alpha_1 {\bf u}_1^{\ast} + \alpha_2
      {\bf u}_2^{\ast}) \|^2 \nonumber\\[3mm]
      & & \Rightarrow \quad \inner{{\bf y} - \alpha_2 {\bf
          u}_2^{\ast}}{\alpha_1 {\bf u}} \geq nP_1 -
      n\sqrt{P_1N}\epsilon.\\ \nonumber
    \end{IEEEeqnarray}
  \end{itemize}
  \hfill \parbox{14.1cm}{ Statement B) follows from rewriting the
    inequality on the LHS of \eqref{eq:vq-dec-err-stmtB} as $\| ({\bf
      y} - \alpha_2 {\bf u}_2^{\ast}) - \alpha_1 {\bf u}_1(j) \|^2
    \leq \| ({\bf y} - \alpha_2 {\bf u}_2^{\ast}) - \alpha_1 {\bf
      u}_1^{\ast} \|^2$ or equivalently as
    \begin{IEEEeqnarray}{rCl}
      \inner{{\bf y} - \alpha_2 {\bf u}_2^{\ast}}{\alpha_1 {\bf u}_1(j)} &
      \geq & \inner{{\bf y} - \alpha_2 {\bf u}_2^{\ast}}{\alpha_1 {\bf
          u}_1^{\ast}} \nonumber\\
      & = & \inner{\alpha_1 {\bf u}_1^{\ast} + {\bf z}}{\alpha_1 {\bf
          u}_1^{\ast}} \nonumber\\
      & = & \| \alpha_1 {\bf u}_1^{\ast} \|^2 + \inner{{\bf z}}{\alpha_1
        {\bf u}_1^{\ast}}. \label{eq:dec-error-cd2-intmed}
    \end{IEEEeqnarray}
    It now follows from the equivalence of the first inequality in
    \eqref{eq:vq-dec-err-stmtB} with \eqref{eq:dec-error-cd2-intmed}
    that for $({\bf s}_1, {\bf s}_2, \mathcal{C}_1, \mathcal{C}_2, {\bf
      z}) \in \mathcal{E}_{\bf Z}^c$, the first inequality in
    \eqref{eq:vq-dec-err-stmtB} can only hold if
    \begin{IEEEeqnarray}{rCl}
      \inner{{\bf y} - \alpha_2 {\bf u}_2^{\ast}}{\alpha_1 {\bf u}} &
      \geq &  nP_1 - n\sqrt{P_1N}\epsilon, \label{eq:dec-error-cond2}
    \end{IEEEeqnarray}
    thus establishing B).\\[3mm]}

  \begin{itemize}
  \item[C)] For every $({\bf s}_1,{\bf s}_2, \mathcal{C}_1,
    \mathcal{C}_2, {\bf z}) \in \mathcal{E}_{\bf X}^c \cap \mathcal{E}_{\bf Z}^c$
    and every ${\bf u} \in \mathcal{S}_1$,
    \begin{IEEEeqnarray}{C}\label{eq:vq-dec-err-stmtC}
      \left( \left| \tilde{\rho} - \cos \sphericalangle ({\bf u}, {\bf
            u}_2^{\ast}) \right| < 7 \epsilon \quad \text{and} \quad
        \| {\bf y} - (\alpha_1 {\bf u} + \alpha_2 {\bf u}_2^{\ast})
        \|^2 \leq \| {\bf y} - (\alpha_1 {\bf u}_1^{\ast} + \alpha_2
        {\bf u}_2^{\ast})
        \|^2 \right) \nonumber\\[6mm]
      \Rightarrow \\[6mm]
      \bigg( \| \alpha_1 {\bf u} - {\bf w}\|^2 \leq nP_1 - 2 \left(
        \zeta_1 n \left( P_1 - \sqrt{P_1N} \epsilon \right) + \zeta_2
        n\sqrt{P_1 P_2}(\tilde{\rho}-7\epsilon) \right) + \| {\bf w}
      \|^2 \bigg). \nonumber \\ \nonumber
    \end{IEEEeqnarray}
  \end{itemize}
  \hfill \parbox{14.1cm}{ Statement C) is obtained as follows:
  \begin{IEEEeqnarray*}{rCl}
    \| \alpha_1 {\bf u} - {\bf w}\|^2 & = & \| \alpha_1 {\bf
      u} \|^2 -2 \inner{\alpha_1 {\bf u}}{{\bf w}} + \| {\bf
      w} \|^2 \\
    & = & \| \alpha_1 {\bf u} \|^2 -2 \left( \zeta_1
      \inner{\alpha_1 {\bf u}}{{\bf y} - \alpha_2 {\bf u}_2^{\ast}}
      + \zeta_2 \inner{\alpha_1 {\bf u}}{\alpha_2 {\bf u}_2^{\ast}}
    \right) + \| {\bf w} \|^2 \\
    & \stackrel{a)}{\leq} & nP_1 - 2 \left( \zeta_1 n \left( P_1 -
        \sqrt{P_1N} \epsilon \right) + \zeta_2 n\sqrt{P_1
        P_2}(\tilde{\rho}-7\epsilon) \right) + \| {\bf w} \|^2,
  \end{IEEEeqnarray*}
  where in $a)$ we have used Statement A) and Statement B).\\[3mm]}

  \begin{itemize}  
  \item[D)] For every $({\bf s}_1,{\bf s}_2, \mathcal{C}_1,
    \mathcal{C}_2, {\bf z}) \in \mathcal{E}_{\bf X}^c \cap \mathcal{E}_{\bf Z}^c$
    \begin{IEEEeqnarray}{rCl}\label{eq:vq-dec-err-stmtD}
      \| {\bf w} \|^2 & \leq & n \left( \zeta_1^2 P_1 + 2 \zeta_1 \zeta_2
        \sqrt{P_1P_2}\tilde{\rho} + \zeta_2^2 \left( P_1 + N \right) + \kappa
        \epsilon \right),
    \end{IEEEeqnarray}
    where $\kappa$ depends on $P_1$, $P_2$, $N$, $\zeta_1$ and
    $\zeta_2$ only.\\[3mm]
  \end{itemize}
  \hfill \parbox{14.1cm}{ Statement D) is obtained as follows
  \begin{IEEEeqnarray*}{rCl}
    \| {\bf w} \|^2 & = & \zeta_1^2 \| \alpha_2 {\bf u}_2^{\ast} \|^2 + 2
    \zeta_1 \zeta_2 \inner{\alpha_2 {\bf u}_2^{\ast}}{{\bf y} - \alpha_2
      {\bf u}_2^{\ast}} + \zeta_2^2 \| {\bf y} - \alpha_2 {\bf u}_2^{\ast}
    \|^2\\
    & = & \zeta_1^2 n P_2 + 2 \zeta_1 \zeta_2 \left( \inner{\alpha_2 {\bf
          u}_2^{\ast}}{\alpha_1 {\bf u}_1^{\ast}} - \inner{\alpha_2 {\bf
          u}_2^{\ast}}{{\bf z}} \right) + \zeta_2^2 \left( \| \alpha_1
      {\bf u}_1^{\ast} \|^2 + 2\inner{\alpha_1 {\bf u}_1^{\ast}}{{\bf z}} +
      \| {\bf z} \|^2 \right)\\
    & \stackrel{a)}{\leq} & \zeta_1^2 n P_1 + 2 \zeta_1 \zeta_2 \left( n
      \sqrt{P_1P_2}(\tilde{\rho}+7\epsilon) + n \sqrt{P_2N}\epsilon
    \right) + \zeta_2^2 \left( nP_1 + 2 n\sqrt{P_1N}\epsilon +
      nN(1+\epsilon) \right)\\
    & \leq & n \left( \zeta_1^2 P_1 + 2 \zeta_1 \zeta_2
      \sqrt{P_1P_2}\tilde{\rho} + \zeta_2^2 \left( P_1 + N \right) + \kappa
      \epsilon \right),
  \end{IEEEeqnarray*}
  where in $a)$ we have used that $({\bf s}_1,{\bf s}_2,
  \mathcal{C}_1, \mathcal{C}_2, {\bf z}) \in \mathcal{E}_{\bf
    Z}^c$.\\[3mm]}

  \begin{itemize}
  \item[E)] For every $({\bf s}_1,{\bf s}_2, \mathcal{C}_1,
    \mathcal{C}_2, {\bf z}) \in \mathcal{E}_{\bf X}^c \cap \mathcal{E}_{\bf Z}^c$
    and an arbitrary ${\bf u} \in \mathcal{S}_1$,
    \begin{IEEEeqnarray}{C}\label{eq:vq-dec-err-stmtE}
      \left( \left| \tilde{\rho} - \cos \sphericalangle ({\bf u}, {\bf
        u}_2^{\ast}) \right| <
      7 \epsilon \quad \text{and} \quad \| {\bf y} - (\alpha_1 {\bf
        u} + \alpha_2 {\bf u}_2^{\ast}) \|^2 \leq \| {\bf y} - (\alpha_1
      {\bf u}_1^{\ast} + \alpha_2 {\bf u}_2^{\ast})
      \|^2 \right) \nonumber\\[6mm]
      \Rightarrow \\[6mm]
      \bigg( \| \alpha_1 {\bf u} - {\bf w}\|^2 \leq \Upsilon
      (\epsilon) \bigg), \nonumber
    \end{IEEEeqnarray}
    where
    \begin{IEEEeqnarray*}{rCl}
      \Upsilon (\epsilon) & = & n \frac{P_1 N (1-\tilde{\rho}^2)}{P_1
        (1-\tilde{\rho}^2) + N} + n \kappa' \epsilon,
    \end{IEEEeqnarray*}
    and where $\kappa'$ only depends on $P$, $N_1$, $N_2$, $\zeta_1$
    and $\zeta_2$.\\
  \end{itemize}
  \hfill \parbox{14.1cm}{ Statement E) follows from combining
    Statement C) with Statement D) and the explicit values of
    $\zeta_1$ and $\zeta_2$ given in
    \eqref{eq:vq-dec-err-zeta}.\\[3mm]}

  \begin{itemize}
  \item[F)] For every ${\bf u} \in \mathcal{S}_1$, denote
    by $\varphi \in [0,\pi]$ the angle between ${\bf u}$ and ${\bf
      w}$, and let  
    \begin{IEEEeqnarray*}{rCl}
      \mathcal{B}({\bf s}_1, {\bf s}_2, {\bf u}_1^{\ast}, {\bf
        u}_2^{\ast}, {\bf z}) & \triangleq \Bigg\{ {\bf u} \in
      \mathcal{S}_1^{(n)}: & \cos \varphi \geq \sqrt{\frac{P_1(1-\tilde{\rho}^2)
          +N\tilde{\rho}^2}{P_1(1-\tilde{\rho}^2)+N} - \kappa''
        \epsilon} \Bigg\},
    \end{IEEEeqnarray*}
    where $\kappa''$ only depends on $P$, $N_1$, $N_2$, $\zeta_1$ and
    $\zeta_2$, and where we assume $\epsilon$ sufficiently small such
    that
    \begin{IEEEeqnarray*}{rCl}
      \frac{P_1(1-\tilde{\rho}^2)
        +N\tilde{\rho}^2}{P_1(1-\tilde{\rho}^2)+N} - \kappa'' \epsilon
      & > & 0.
    \end{IEEEeqnarray*}
    Then, for every $({\bf s}_1,{\bf s}_2, \mathcal{C}_1,
    \mathcal{C}_2, {\bf z}) \in \mathcal{E}_{\bf X}^c \cap \mathcal{E}_{\bf Z}^c$,
    \begin{IEEEeqnarray}{C}\label{eq:vq-dec-err-stmtF}
      \left( \left| \tilde{\rho} - \cos \sphericalangle ({\bf u}, {\bf
            u}_2^{\ast}) \right| < 7 \epsilon \quad \text{and} \quad
        \| {\bf y} - (\alpha_1 {\bf u} + \alpha_2 {\bf u}_2^{\ast})
        \|^2 \leq \| {\bf y} - (\alpha_1 {\bf u}_1^{\ast} + \alpha_2
        {\bf u}_2^{\ast})
        \|^2 \right) \nonumber\\[6mm]
      \Rightarrow \qquad {\bf u} \in \mathcal{B}({\bf s}_1, {\bf s}_2,
      {\bf u}_1^{\ast}, {\bf u}_2^{\ast}, {\bf z}).\\ \nonumber
    \end{IEEEeqnarray}
  \end{itemize}
  Statement F) follows from Statement E) by noting that if ${\bf w}
  \neq {\bf 0}$ and $1-\Upsilon(\epsilon)/(nP_1)>0$, then
  \begin{IEEEeqnarray*}{rCl}
    \left. \begin{array}{l}
        \| \alpha_1 {\bf u} \|^2 = n P_1\\[2mm]
        \| \alpha_1 {\bf u} - {\bf w} \|^2 \leq \Upsilon(\epsilon)
      \end{array} \right\} \quad \Rightarrow \quad \cos
    \sphericalangle ({\bf u}, {\bf w}) \geq
    \sqrt{1-\frac{\Upsilon(\epsilon)}{nP_1}}.
  \end{IEEEeqnarray*}
  To see this, first note that for every $\alpha_1 {\bf u}$, where
  ${\bf u} \in \mathcal{S}_1$, satisfying the condition on the LHS of
  \eqref{eq:vq-dec-err-stmtF} lies within a sphere of radius
  $\sqrt{\Upsilon (\epsilon)}$ centered at ${\bf w}$. And for every
  ${\bf u} \in \mathcal{S}_1$ we have that $\alpha_1 {\bf u}$ also
  lies on the centered $\Reals^n$-sphere of radius
  $\sqrt{nP_1}$. Hence, every ${\bf u} \in \mathcal{S}_1^{(n)}$
  satisfying the condition on the LHS of \eqref{eq:vq-dec-err-stmtF}
  lies in the intersection of these two regions, which is a polar cap
  on the centered sphere of radius $\sqrt{nP_1}$. An illustration of
  such a polar cap is given in Figure \ref{fig:polar-cap}.
  \begin{figure}[h]
    \centering
    \psfrag{1}[cr][cc]{$\sqrt{nP_1}$}
    \psfrag{rn}[cc][cc]{$\Reals^n$}
    \psfrag{p}[cc][cc]{$\varphi$}
    \epsfig{file=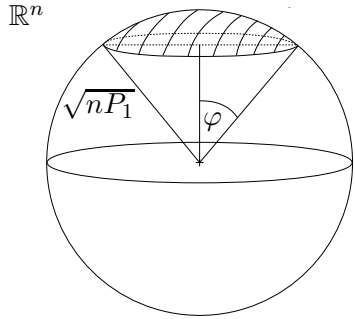, width=0.3\textwidth}
    \caption{Polar cap of half angle $\varphi$ on an $\Reals^n$-sphere of
      radius $\sqrt{nP_1}$.}
    \label{fig:polar-cap}
  \end{figure}
  The area of this polar cap is outer bounded as follows. Let ${\bf
    r}$ be an arbitrary point on the boundary of this polar cap. The
  half-angle of the polar cap would be maximized if ${\bf w}$ and
  ${\bf r} - {\bf w}$ would lie perpendicular to each other, as is
  illustrated in Subplot b) of Figure \ref{fig:cap-max}.
  \begin{figure}[h]
      \centering
      \psfrag{0}[cc][cc]{${\bf 0}$}
      \psfrag{w}[cc][cc]{${\bf w}$}
      \psfrag{r}[cc][cc]{${\bf r}$}
      \psfrag{p1}[cc][cc]{$\sqrt{nP_1}$}
      \psfrag{ps}[cc][cc]{$\sqrt{\Upsilon(\epsilon)}$}
      \psfrag{a}[cc][cc]{a)}
      \psfrag{b}[cc][cc]{b)}
      \psfrag{c}[cc][cc]{c)}
      \epsfig{file=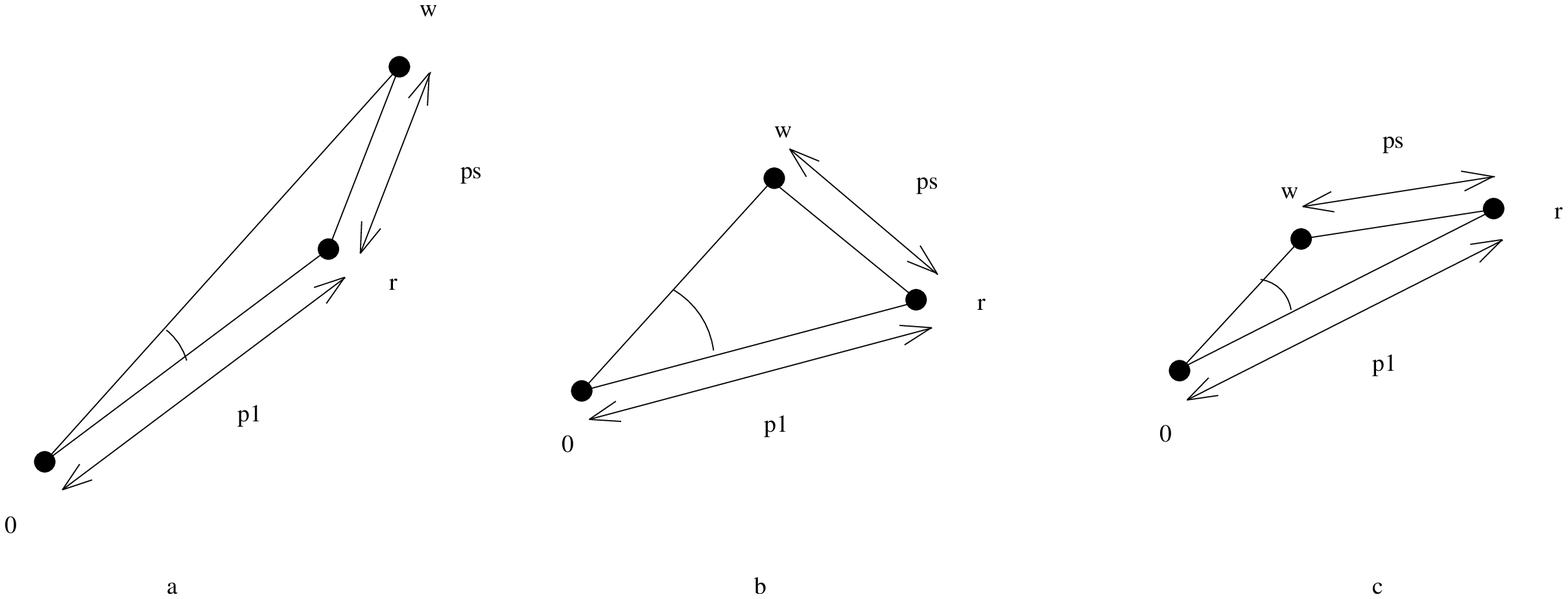, width=0.85\textwidth}
      \caption{Half-angle of cap for different constellations of ${\bf w}$
        and ${\bf r}$.}
      \label{fig:cap-max}
    \end{figure}
    Hence, every ${\bf u} \in \mathcal{S}_1^{(n)}$ satisfying the
    upper conditions of \eqref{eq:vq-dec-err-stmtF} also satisfies
    \begin{IEEEeqnarray*}{rCl}
      \cos \varphi & \geq & \sqrt{1 - \frac{\Upsilon (\epsilon)}{nP_1}}\\
      & = &  \sqrt{\frac{P_1(1-\tilde{\rho}^2) +
          N\tilde{\rho}^2}{P_1(1-\tilde{\rho}^2)+N} - \kappa'' \epsilon},
    \end{IEEEeqnarray*}
    where we assume $\epsilon$ sufficiently small such that
    $1-\Upsilon(\epsilon)/(nP_1) > 0$ and where $\kappa'' =
    \kappa'/P_1$.\\[3mm]

  The proof of Lemma \ref{lm:ub-E41-B41} is now concluded by noticing
  that the set $\mathcal{E}_{\hat{\bf U}_1}'$, defined in \eqref{eq:B41}, is the
  set of tuples $({\bf s}_1, {\bf s}_2, \mathcal{C}_1, \mathcal{C}_2,
  {\bf z})$ for which there exists a ${\bf u}_1(j) \in \mathcal{C}_1
  \setminus \{ {\bf u}_1^{\ast} \}$ such that ${\bf u}_1(j) \in
  \mathcal{B}({\bf s}_1, {\bf s}_2, {\bf u}_1^{\ast}, {\bf
    u}_2^{\ast}, {\bf z})$. Thus, by Statement F) and by the
  definition of $\mathcal{E}_{\hat{\bf U}_1}$ in \eqref{eq:E41} it follows that
  \begin{IEEEeqnarray*}{rCl}
    \mathcal{E}_{\hat{\bf U}_1} \cap \mathcal{E}_{\bf Z}^c \cap \mathcal{E}_{\bf S}^c \cap
    \mathcal{E}_{\bf X}^c & \subseteq & \mathcal{E}_{\hat{\bf U}_1}' \cap
    \mathcal{E}_{\bf Z}^c \cap \mathcal{E}_{\bf S}^c \cap \mathcal{E}_{\bf X}^c, 
  \end{IEEEeqnarray*}
  and therefore
  \begin{IEEEeqnarray*}{rCl}
    \qquad \qquad \qquad \; \, \Prv{\mathcal{E}_{\hat{\bf U}_1} \cap
      \mathcal{E}_{\bf Z}^c \cap \mathcal{E}_{\bf S}^c \cap \mathcal{E}_{\bf X}^c} & \leq
    & \Prv{\mathcal{E}_{\hat{\bf U}_1}' \cap \mathcal{E}_{\bf Z}^c
      \cap \mathcal{E}_{\bf S}^c \cap 
      \mathcal{E}_{\bf X}^c}. \qquad \qquad \qquad \; \, \qedhere
  \end{IEEEeqnarray*}
\end{proof}

We now state one more lemma that will be used for the proof of
\eqref{eq:Pr-E32-E33c-E1c-E2c}.

\begin{lm}\label{lm:caps-single-error}
  For every $\Delta \in (0,1]$, let the set $\mathcal{G}$ be given by
  \begin{IEEEeqnarray*}{rCl}
    \mathcal{G} & = & \left\{ ({\bf s}_1, {\bf s}_2, \mathcal{C}_1,
      \mathcal{C}_2, {\bf z}) : \exists {\bf u}_1(j) \in \mathcal{C}_1
      \setminus \{ {\bf u}_1^{\ast} \} \text{ s.t.~} \cos
      \sphericalangle ({\bf w}, {\bf u}_1(j)) \geq
      \Delta \right\},
  \end{IEEEeqnarray*}
  where ${\bf w}$ is defined in \eqref{eq:w}.  Then,
  \begin{IEEEeqnarray}{rCl}\label{eq:lm-wrong-codeword}
    \left( R_1 < - \frac{1}{2} \log_2 (1 - \Delta^2) \right) & \quad
    \Rightarrow \quad & \left( \lim_{n \rightarrow \infty}
      \Prv{\mathcal{G} | \mathcal{E}_{{\bf X}_1}^c} = 0, \quad
      \epsilon > 0 \right).
  \end{IEEEeqnarray}
\end{lm}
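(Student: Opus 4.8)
The plan is to prove Lemma~\ref{lm:caps-single-error} by a union bound over the codewords of $\mathcal{C}_1$ combined with the standard estimate for the measure of a spherical cap; the one genuinely new ingredient is a decoupling step that frees the reference vector ${\bf w}$ from the particular codeword being tested. First I would dispose of the trivial case $R_1=0$: then $\mathcal{C}_1\setminus\{{\bf u}_1^{\ast}\}=\emptyset$, $\mathcal{G}$ is the empty event, and there is nothing to prove. So assume $R_1>0$. Since $\mathcal{E}_{{\bf X}_1}\subseteq\mathcal{E}_{\bf X}$, Lemma~\ref{lm:vq-Pr-E2} gives $\Prv{\mathcal{E}_{{\bf X}_1}}\to 0$, hence $\Prv{\mathcal{E}_{{\bf X}_1}^c}\geq 1/2$ for all large $n$ and $\Prv{\mathcal{G}\mid\mathcal{E}_{{\bf X}_1}^c}\leq 2\,\Prv{\mathcal{G}\cap\mathcal{E}_{{\bf X}_1}^c}$. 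It therefore suffices to show $\Prv{\mathcal{G}\cap\mathcal{E}_{{\bf X}_1}^c}\to 0$ whenever $R_1<-\tfrac12\log_2(1-\Delta^2)$.

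Next I would union-bound over the index of the offending codeword,
\begin{IEEEeqnarray*}{rCl}
  \Prv{\mathcal{G}\cap\mathcal{E}_{{\bf X}_1}^c} & \leq & \sum_{j=1}^{2^{nR_1}} \Prv{\cos\sphericalangle({\bf w},{\bf u}_1(j))\geq\Delta,\ {\bf u}_1(j)\neq{\bf u}_1^{\ast},\ \mathcal{E}_{{\bf X}_1}^c},
\end{IEEEeqnarray*}
and analyse a single term. Write $\mathcal{C}_1^{(j)}=\mathcal{C}_1\setminus\{{\bf u}_1(j)\}$. The key observation is that on the event $\{{\bf u}_1(j)\neq{\bf u}_1^{\ast}\}\cap\mathcal{E}_{{\bf X}_1}^c$ the codeword ${\bf u}_1^{\ast}$ that the vector-quantizer selects from $\mathcal{C}_1$ equals the codeword it would select from the reduced codebook $\mathcal{C}_1^{(j)}$: since $\mathcal{F}({\bf s}_1,\mathcal{C}_1)=\mathcal{F}({\bf s}_1,\mathcal{C}_1^{(j)})\cup\bigl(\{{\bf u}_1(j)\}\cap\mathcal{F}({\bf s}_1,\mathcal{C}_1)\bigr)$, a minimizer of $\bigl|\cos\sphericalangle(\cdot,{\bf s}_1)-\sqrt{1-2^{-2R_1}}\bigr|$ over $\mathcal{F}({\bf s}_1,\mathcal{C}_1)$ that is not ${\bf u}_1(j)$ is necessarily the minimizer over $\mathcal{F}({\bf s}_1,\mathcal{C}_1^{(j)})$, a set that is nonempty (it contains ${\bf u}_1^{\ast}$) precisely on $\mathcal{E}_{{\bf X}_1}^c$. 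Consequently, on that event ${\bf u}_1^{\ast}$, and hence ${\bf w}=\zeta_1(\alpha_1{\bf u}_1^{\ast}+{\bf z})+\zeta_2\alpha_2{\bf u}_2^{\ast}$, equals a fixed measurable function ${\bf w}^{(j)}$ of $({\bf s}_1,{\bf s}_2,{\bf z},\mathcal{C}_2,\mathcal{C}_1^{(j)})$ alone, and in particular is independent of ${\bf u}_1(j)$. Dropping the constraints ${\bf u}_1(j)\neq{\bf u}_1^{\ast}$ and $\mathcal{E}_{{\bf X}_1}^c$ (after replacing ${\bf w}$ by ${\bf w}^{(j)}$) only enlarges the probability, so the $j$-th term is at most $\Prv{\cos\sphericalangle({\bf w}^{(j)},{\bf u}_1(j))\geq\Delta}$.

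Finally I would condition on $({\bf s}_1,{\bf s}_2,{\bf z},\mathcal{C}_2,\mathcal{C}_1^{(j)})$: given these, ${\bf w}^{(j)}$ is determined while ${\bf u}_1(j)$ remains uniformly distributed on the sphere $\mathcal{S}_1$, so this conditional probability equals the normalized surface measure of a polar cap of half-angle $\arccos\Delta$, which I would bound by
\begin{IEEEeqnarray*}{rCl}
  \frac{\int_0^{\arccos\Delta}\sin^{n-2}\phi\,d\phi}{\int_0^{\pi}\sin^{n-2}\phi\,d\phi} & \leq & c\sqrt{n}\,(1-\Delta^2)^{(n-2)/2},
\end{IEEEeqnarray*}
for an absolute constant $c$, using that $\sin$ is increasing on $[0,\pi/2]\ni\arccos\Delta$ and that the denominator is $\Theta(n^{-1/2})$. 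Summing over the $2^{nR_1}$ indices yields $\Prv{\mathcal{G}\cap\mathcal{E}_{{\bf X}_1}^c}\leq c'\sqrt{n}\,2^{\,n(R_1+\frac12\log_2(1-\Delta^2))}$, which tends to $0$ exactly when $R_1<-\tfrac12\log_2(1-\Delta^2)$. Combined with the first reduction, this proves the lemma.

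The step I expect to be the only real obstacle is the decoupling in the second paragraph. The reference vector ${\bf w}$ depends on ${\bf u}_1^{\ast}$, which is itself one of the random codewords of $\mathcal{C}_1$ — this is exactly the ``the transmitted codeword depends on the whole codebook'' phenomenon flagged in Section~\ref{sec:vq-ub-PrE4} that breaks the conventional multiple-access analysis. Making the union bound legitimate hinges on the observation that, whenever the tested codeword ${\bf u}_1(j)$ is not the one actually selected, the quantizer's choice (and hence ${\bf w}$) is a function of $\mathcal{C}_1^{(j)}$ only, so ${\bf u}_1(j)$ is still conditionally uniform and independent of ${\bf w}$. Once this is established, everything else is the routine cap-volume estimate.
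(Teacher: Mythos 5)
Your argument is correct, but it resolves the key difficulty — the dependence between the reference vector ${\bf w}$ and the tested codeword — by a different device than the paper. The paper first uses permutation symmetry of the codebook to write $\Prv{\mathcal{G} | \mathcal{E}_{{\bf X}_1}^c} = \Prv{\mathcal{G} | \varsigma_1({\bf S}_1,\mathscr{C}_1)=1}$, i.e., it conditions on the event that the \emph{first} codeword was the one selected, and then invokes Lemma~\ref{lm:density-wrong-codeword}, which bounds the conditional density of each non-selected codeword by twice the uniform density; the union bound then costs a factor of $2$ in front of the cap area, which is estimated via Shannon's bound (Lemma~\ref{lm:bounds-polar-caps}) and the Gamma-ratio asymptotics (Lemma~\ref{lm:gamma-series}). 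You instead pass from the conditional to the joint probability (using $\Prv{\mathcal{E}_{{\bf X}_1}^c}\geq 1/2$ for large $n$, which does hold for every $\epsilon>0$ by \eqref{eq:Pr-E2i} and monotonicity in $\epsilon$) and then decouple by a leave-one-out argument: on $\{{\bf u}_1(j)\neq{\bf u}_1^{\ast}\}\cap\mathcal{E}_{{\bf X}_1}^c$ the quantizer's output, and hence ${\bf w}$, coincides with the one computed from the reduced codebook $\mathcal{C}_1\setminus\{{\bf u}_1(j)\}$, so that ${\bf U}_1(j)$ remains exactly uniform and independent of the reference vector, and each term is bounded by the exact normalized cap measure; your elementary integral estimate of the cap gives the same exponent $\tfrac{n}{2}\log_2(1-\Delta^2)$ up to a harmless $\sqrt{n}$ factor, so the threshold $R_1<-\tfrac12\log_2(1-\Delta^2)$ comes out identically. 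What each route buys: yours dispenses with Lemma~\ref{lm:density-wrong-codeword} and the factor-of-two density bound altogether, making the cap computation exact, at the price of the extra conditional-to-joint reduction and a little care with argmin ties (handled by a fixed tie-breaking rule, under which removing a non-selected codeword indeed leaves the selection unchanged, as you argue); the paper's route keeps the exact conditional law given the selection event and reuses the same density lemma verbatim in the two-codeword error event (Lemma~\ref{lm:caps-double-error}), so its bookkeeping amortizes across both decoding-error bounds.
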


\begin{proof}
  The proof follows from upper bounding in every point on
  $\mathcal{S}_1$ the density of every ${\bf u}_1(j) \in \mathcal{C}_1
  \setminus \{ {\bf u}_1^{\ast} \}$ and then using a standard argument
  from sphere-packing. The proof is given in Section
  \ref{app:proof-dec-single-error}.
\end{proof}

We next state two lemmas for the proof of
\eqref{eq:Pr-E32-E33-E1c-E2c}. These lemmas are similar to
Lemma~\ref{lm:ub-E41-B41} and Lemma~\ref{lm:caps-single-error}.

\begin{lm}\label{lm:ub-E43-B43}
  For every sufficiently small $\epsilon > 0$, define the set
  $\mathcal{E}_{(\hat{\bf U}_1,\hat{\bf U}_2)}'$ as
  \begin{IEEEeqnarray*}{rCl}
    \mathcal{E}_{(\hat{\bf U}_1,\hat{\bf U}_2)}' & \triangleq & \big\{
    ({\bf s}_1, {\bf s}_2, \mathcal{C}_1, \mathcal{C}_2, {\bf z}):
    \exists {\bf u}_1(j) \in \mathcal{C}_1 \setminus \{ {\bf
      u}_1^{\ast} \} \quad \text{and} \quad \exists {\bf u}_2(\ell)
    \in
    \mathcal{C}_2 \setminus \{ {\bf u}_2^{\ast} \} \text{ s.t.~}\\[2mm]
    & & \quad \cos \sphericalangle ({\bf u}_1(j), {\bf u}_2(\ell))
    \geq \tilde{\rho} - 7 \epsilon \quad \text{and} \quad \cos
    \sphericalangle ({\bf y},\alpha_1 {\bf u}_1(j) + \alpha_2 {\bf
      u}_2(\ell)) \geq \Lambda(\epsilon) \big\},
  \end{IEEEeqnarray*}
  where
  \begin{IEEEeqnarray*}{rCl}
    \Lambda (\epsilon) & = & \sqrt{\frac{P_1 + P_2 +
        2\tilde{\rho}\sqrt{P_1P_2} - \xi'\epsilon}{P_1 + P_2 +
        2\tilde{\rho} \sqrt{P_1P_2} +N + \xi_2\epsilon}},
  \end{IEEEeqnarray*}
  and where $\xi'$ and $\xi_2$ depend only on $P_1$, $P_2$ and
  $N$. Then, for every sufficiently small $\epsilon > 0$
  \begin{IEEEeqnarray*}{rCl}
    \mathcal{E}_{(\hat{\bf U}_1,\hat{\bf U}_2)} \cap \mathcal{E}_{\bf
      Z}^c \cap \mathcal{E}_{\bf S}^c \cap \mathcal{E}_{\bf X}^c &
    \subseteq & \mathcal{E}_{(\hat{\bf U}_1,\hat{\bf U}_2)}' \cap
    \mathcal{E}_{\bf Z}^c \cap \mathcal{E}_{\bf S}^c \cap
    \mathcal{E}_{\bf X}^c,
  \end{IEEEeqnarray*}
  and, in particular
  \begin{IEEEeqnarray*}{rCl}
    \Prv{\mathcal{E}_{(\hat{\bf U}_1,\hat{\bf U}_2)} \cap
      \mathcal{E}_{\bf Z}^c \cap \mathcal{E}_{\bf S}^c \cap
      \mathcal{E}_{\bf X}^c} & \leq & \Prv{\mathcal{E}_{(\hat{\bf
          U}_1,\hat{\bf U}_2)}' \cap \mathcal{E}_{\bf Z}^c \cap
      \mathcal{E}_{\bf S}^c \cap \mathcal{E}_{\bf X}^c}.
  \end{IEEEeqnarray*}
\end{lm}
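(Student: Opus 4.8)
The plan is to mimic the proof of Lemma~\ref{lm:ub-E41-B41} almost verbatim, the one structural change being that here both decoded codewords differ from the quantizer outputs, so the reference direction against which the offending pair is measured is the channel output ${\bf y}$ itself rather than the auxiliary vector ${\bf w}$. Recall that for a tuple in $\mathcal{E}_{(\hat{\bf U}_1,\hat{\bf U}_2)}$ there exist $\tilde{\bf u}_1 \in \mathcal{C}_1 \setminus \{ {\bf u}_1^{\ast}\}$ and $\tilde{\bf u}_2 \in \mathcal{C}_2 \setminus \{ {\bf u}_2^{\ast}\}$ with $\cos\sphericalangle(\tilde{\bf u}_1,\tilde{\bf u}_2)$ within $7\epsilon$ of $\tilde{\rho}$ and with $\| {\bf y} - (\alpha_1\tilde{\bf u}_1 + \alpha_2\tilde{\bf u}_2)\|^2 \leq \| {\bf y} - (\alpha_1{\bf u}_1^{\ast} + \alpha_2{\bf u}_2^{\ast})\|^2$, the right-hand side equalling $\|{\bf z}\|^2$ since ${\bf y} = \alpha_1{\bf u}_1^{\ast} + \alpha_2{\bf u}_2^{\ast} + {\bf z}$. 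The first defining condition of $\mathcal{E}_{(\hat{\bf U}_1,\hat{\bf U}_2)}'$ is inherited immediately, so everything reduces to proving the angle bound $\cos\sphericalangle({\bf y},\, \alpha_1\tilde{\bf u}_1 + \alpha_2\tilde{\bf u}_2) \geq \Lambda(\epsilon)$ on the intersection with $\mathcal{E}_{\bf S}^c \cap \mathcal{E}_{\bf X}^c \cap \mathcal{E}_{\bf Z}^c$.

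First I would estimate, exactly as in Statement~A) of Lemma~\ref{lm:ub-E41-B41} and via \eqref{eq:vq-dec-err-stmtA}, the squared norm $\| \alpha_1\tilde{\bf u}_1 + \alpha_2\tilde{\bf u}_2\|^2$: since $\|\alpha_1\tilde{\bf u}_1\| = \sqrt{nP_1}$ and $\|\alpha_2\tilde{\bf u}_2\| = \sqrt{nP_2}$ by the code construction and the joint-typicality condition pins down $\inner{\alpha_1\tilde{\bf u}_1}{\alpha_2\tilde{\bf u}_2}$ to within $7n\sqrt{P_1P_2}\,\epsilon$ of $n\tilde{\rho}\sqrt{P_1P_2}$, this norm equals $n(P_1 + P_2 + 2\tilde{\rho}\sqrt{P_1P_2})$ up to an $O(n\epsilon)$ error. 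Next, on $\mathcal{E}_{\bf X}^c$ (which controls the angle between ${\bf u}_1^{\ast}$ and ${\bf u}_2^{\ast}$) and $\mathcal{E}_{\bf Z}^c$ (which controls $\|{\bf z}\|^2$ and the noise cross-terms), expanding ${\bf y} = \alpha_1{\bf u}_1^{\ast} + \alpha_2{\bf u}_2^{\ast} + {\bf z}$ shows $\|{\bf y}\|^2 = n(P_1 + P_2 + 2\tilde{\rho}\sqrt{P_1P_2} + N)$ up to $O(n\epsilon)$. Finally, the Euclidean-distance condition rearranges to $\inner{{\bf y}}{\alpha_1\tilde{\bf u}_1 + \alpha_2\tilde{\bf u}_2} \geq \frac{1}{2}\left(\|{\bf y}\|^2 + \|\alpha_1\tilde{\bf u}_1 + \alpha_2\tilde{\bf u}_2\|^2 - \|{\bf z}\|^2\right)$; plugging in the two norm estimates and $\|{\bf z}\|^2 \leq n(N+N\epsilon)$ gives $\inner{{\bf y}}{\alpha_1\tilde{\bf u}_1 + \alpha_2\tilde{\bf u}_2} \geq n(P_1 + P_2 + 2\tilde{\rho}\sqrt{P_1P_2}) - n\xi'\epsilon$, and dividing through by $\|{\bf y}\|$ and $\|\alpha_1\tilde{\bf u}_1 + \alpha_2\tilde{\bf u}_2\|$ — using the norm estimates once more to bound the denominator from above — yields exactly $\cos\sphericalangle({\bf y},\, \alpha_1\tilde{\bf u}_1 + \alpha_2\tilde{\bf u}_2) \geq \Lambda(\epsilon)$, with $\xi'$ and $\xi_2$ functions of $P_1,P_2,N$ only. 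This exhibits $(\tilde{\bf u}_1,\tilde{\bf u}_2)$ as a witness of membership in $\mathcal{E}_{(\hat{\bf U}_1,\hat{\bf U}_2)}'$, giving the set inclusion, and the probability inequality then follows by monotonicity of $\Prv{\cdot}$.

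The only genuinely delicate point is the bookkeeping of the $O(\epsilon)$ error terms: each must be kept on the correct side so that the final chain is a true lower bound of the stated closed form, and $\epsilon$ must be taken small enough that the radicand $P_1 + P_2 + 2\tilde{\rho}\sqrt{P_1P_2} - \xi'\epsilon$ and the denominators stay positive, so that $\Lambda(\epsilon)$ is well defined. Beyond that the computation is line-for-line the one already carried out for $\mathcal{E}_{\hat{\bf U}_1}$, with ${\bf y}$ replacing ${\bf w}$ and $(\alpha_1\tilde{\bf u}_1, \alpha_2\tilde{\bf u}_2)$ replacing $\alpha_1{\bf u}_1(j)$; I expect no new obstacle.
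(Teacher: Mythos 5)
Your proposal is correct and follows essentially the same route as the paper's proof: the typicality condition on the offending pair gives the norm estimate for $\alpha_1\tilde{\bf u}_1+\alpha_2\tilde{\bf u}_2$, the events $\mathcal{E}_{\bf X}^c\cap\mathcal{E}_{\bf Z}^c$ control $\|{\bf y}\|$ and $\|{\bf z}\|$, and the rearranged nearest-neighbour condition lower-bounds $\inner{{\bf y}}{\alpha_1\tilde{\bf u}_1+\alpha_2\tilde{\bf u}_2}$, after which dividing by the norm bounds yields $\Lambda(\epsilon)$. Your identity $2\inner{{\bf y}}{\cdot}\geq\|{\bf y}\|^2+\|\cdot\|^2-\|{\bf z}\|^2$ is the same computation the paper carries out via $\inner{{\bf y}}{\alpha_1{\bf u}_1^{\ast}+\alpha_2{\bf u}_2^{\ast}}$, so there is no substantive difference.
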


\begin{proof}
  We first recall that for the event $\mathcal{E}_{(\hat{\bf
      U}_1,\hat{\bf U}_2)}$ to occur, there must exist codewords ${\bf
    u}_1(j) \in \mathcal{C}_1 \setminus \{ {\bf u}_1^{\ast} \}$ and
  ${\bf u}_2(\ell) \in \mathcal{C}_2 \setminus \{ {\bf u}_2^{\ast} \}$
  such that
  \begin{IEEEeqnarray}{rrCl}
    & \left| \tilde{\rho} - \cos \sphericalangle ({\bf u}_1(j), {\bf
      u}_2(\ell)) \right| & < &
    7 \epsilon ,\label{eq:lm-E43-cond1}\\[1mm]
    \text{and } \qquad & & & \nonumber\\[1mm]
    & \| {\bf y} - (\alpha_1 {\bf u}_1(j) + \alpha_2 {\bf u}_2(\ell)) \|^2
    & \leq & \| {\bf y} - (\alpha_1 {\bf u}_1^{\ast} + \alpha_2 {\bf
      u}_2^{\ast}) \|^2.\label{eq:lm-E43-cond2}
  \end{IEEEeqnarray}
  The proof is now based on a sequence of statements related to
  Condition \eqref{eq:lm-E43-cond1} and Condition
  \eqref{eq:lm-E43-cond2}.


  \begin{itemize}
  \item[A)] For every $({\bf s}_1, {\bf s}_2, \mathcal{C}_1,
    \mathcal{C}_2, {\bf z}) \in \mathcal{E}_{\bf X}^c \cap \mathcal{E}_{\bf Z}^c$,
    \begin{IEEEeqnarray}{C}
      \| {\bf y} - (\alpha_1 {\bf u}_1(j) + \alpha_2 {\bf u}_2(\ell))
      \|^2 \leq \| {\bf y} - (\alpha_1 {\bf u}_1^{\ast} + \alpha_2
      {\bf u}_2^{\ast}) \|^2 \nonumber\\[3mm]
      \Rightarrow \label{eq:vq-E43-cond-B}\\[3mm]
      \inner{{\bf y}}{\alpha_1 {\bf u}_1(j) + \alpha_2 {\bf
          u}_2(\ell)} \geq n \left( P_1 + P_2 +
        2\tilde{\rho}\sqrt{P_1P_2} - \xi_1\epsilon \right), \nonumber
    \end{IEEEeqnarray}
    where $\xi_1$ only depends on $P_1$, $P_2$ and $N$.\\
  \end{itemize}
  \hfill \parbox{14.1cm}{ Statement A) follows by rewriting the LHS of
    \eqref{eq:vq-E43-cond-B} as
    \begin{IEEEeqnarray}{rCl}
      2 \inner{{\bf y}}{\alpha_1 {\bf u}_1(j) + \alpha_2 {\bf u}_2(\ell)}
      & \geq & 2 \inner{{\bf y}}{\alpha_1 {\bf u}_1^{\ast} + \alpha_2 {\bf
          u}_2^{\ast}} + \| \alpha_1 {\bf u}_1(j) + \alpha_2 {\bf
        u}_2(\ell) \|^2 - \| \alpha_1 {\bf u}_1^{\ast} + \alpha_2 {\bf
        u}_2^{\ast} \|^2 \nonumber\\ 
      & = & \| \alpha_1 {\bf u}_1^{\ast} + \alpha_2 {\bf u}_2^{\ast} \|^2
      + 2 \inner{{\bf z}}{\alpha_1 {\bf u}_1^{\ast} + \alpha_2 {\bf
          u}_2^{\ast}} + \| \alpha_1 {\bf u}_1(j) + \alpha_2 {\bf
        u}_2(\ell) \|^2 \nonumber\\
      & \stackrel{a)}{\geq} & 2n \left( P_1 + 2 \tilde{\rho}\sqrt{P_1P_2}(1-7\epsilon) +
        P_2 + \sqrt{P_1N}\epsilon + \sqrt{P_2N} \epsilon \right) \nonumber\\
      & = & 2n \left( P_1 + P_2 + 2\tilde{\rho}\sqrt{P_1P_2} -
        \xi_1\epsilon \right), \label{eq:two-dec-errors-cond2}
    \end{IEEEeqnarray}
    where in $a)$ we have used that $({\bf s}_1, {\bf s}_2,
    \mathcal{C}_1, \mathcal{C}_2, {\bf z}) \in \mathcal{E}_{\bf X}^c \cap
    \mathcal{E}_{\bf Z}^c$ and that $\| \alpha_1 {\bf u}_1(j) + \alpha_2 {\bf
      u}_2(\ell) \|^2 \geq 0$.\\[3mm]}

  \begin{itemize}
  \item[B)] For every $({\bf s}_1, {\bf s}_2, \mathcal{C}_1,
    \mathcal{C}_2, {\bf z}) \in \mathcal{E}_{\bf X}^c \cap \mathcal{E}_{\bf Z}^c$,
    \begin{IEEEeqnarray*}{rCl}
      \| {\bf y} \|^2 & \leq & n \left( P_1 +
        2\tilde{\rho}\sqrt{P_1P_2} + P_2 + N + \xi_2 \epsilon \right),
    \end{IEEEeqnarray*}
    where $\xi_2$ only depends on $P_1$, $P_2$ and $N$.\\
  \end{itemize}
  \hfill \parbox{14.1cm}{ Statement B) is obtained as follows:
    \begin{IEEEeqnarray*}{rCl}
      \| {\bf y} \|^2 & = & \| \alpha_1 {\bf u}_1^{\ast} \|^2 + 2
      \inner{\alpha_1 {\bf u}_1^{\ast}}{\alpha_2 {\bf u}_2^{\ast}} + \|
      \alpha_2 {\bf u}_2^{\ast} \|^2 + 2 \left( \inner{\alpha_1 {\bf
            u}_1^{\ast}}{{\bf z}} + \inner{\alpha_2 {\bf u}_2^{\ast}}{{\bf
            z}} \right) + \| {\bf z} \|^2\\
      & \stackrel{a)}{\leq} & nP_1 + 2n \tilde{\rho} \sqrt{P_1P_2} (1+7\epsilon) + nP_2 +
      2n\sqrt{P_1N} \epsilon + 2n \sqrt{P_2N} \epsilon + nN(1+\epsilon)\\
      & \leq & n \left( P_1 + 2\tilde{\rho}\sqrt{P_1P_2} + P_2 + N + \xi_2
        \epsilon \right),
    \end{IEEEeqnarray*}
    where in $a)$ we have used that $({\bf s}_1, {\bf s}_2,
    \mathcal{C}_1, \mathcal{C}_2, {\bf z}) \in \mathcal{E}_{\bf X}^c \cap
    \mathcal{E}_{\bf Z}^c$.\\[3mm]}

  \begin{itemize}
  \item[C)] For every $({\bf s}_1, {\bf s}_2, \mathcal{C}_1,
    \mathcal{C}_2, {\bf z})$,
    \begin{IEEEeqnarray}{C}
      \bigg| \tilde{\rho} - \inner{\frac{{\bf u}_1(j)}{\| {\bf u}_1(j)
          \|}}{\frac{{\bf u}_2(\ell)}{\| {\bf u}_2(\ell) \|}} \bigg| <
      7 \epsilon \nonumber\\[4mm]
      \Rightarrow \label{eq:vq-E43-cond-D}\\[3mm]
      \| \alpha_1 {\bf u}_1(j) + \alpha_2 {\bf u}_2(\ell) \|^2 \leq n
      \left( P_1 + 2\tilde{\rho}\sqrt{P_1P_2} + P_2 + \xi_3 \epsilon
      \right). \nonumber\\ \nonumber
    \end{IEEEeqnarray}
  \end{itemize}
  \hfill \parbox{14.1cm}{ Statement C) follows by
    \begin{IEEEeqnarray*}{rCl}
      \| \alpha_1 {\bf u}_1(j) + \alpha_2 {\bf u}_2(\ell) \|^2 & = & \|
      \alpha_1 {\bf u}_1(j) \|^2 + 2\inner{\alpha_1 {\bf u}_1(j)}{\alpha_2
        {\bf u}_2(\ell)} + \| \alpha_2 {\bf u}_2(\ell) \|^2\\
      & \stackrel{a)}{\leq} & nP_1 + 2n\tilde{\rho} \sqrt{P_1P_2} (1+7\epsilon) + nP_2\\
      & = & n \left( P_1 + 2\tilde{\rho}\sqrt{P_1P_2} + P_2 + \xi_3 \epsilon
      \right).
    \end{IEEEeqnarray*}
    where in $a)$ we have used that multiplying the inequality on the
    LHS of \eqref{eq:vq-E43-cond-D} by $\| \alpha_1 {\bf u}_1(j) \|
    \cdot \| \alpha_2 {\bf u}_2(\ell) \|$ and recalling that $\|
    \alpha_1 {\bf u}_1(j) \| \leq \sqrt{nP_1}$ and that $\| \alpha_2
    {\bf u}_2(\ell) \| \leq \sqrt{nP_2}$ gives
    \begin{IEEEeqnarray*}{rCl}
      \big| n \sqrt{P_1P_2} \tilde{\rho} - \inner{\alpha_1 {\bf
          u}_1(j)}{\alpha_2 {\bf u}_2(\ell)} \big| & < & 7 n
      \sqrt{P_1P_2} \epsilon,
    \end{IEEEeqnarray*}
    and thus
    \begin{IEEEeqnarray*}{rCl}
      \inner{\alpha_1 {\bf u}_1(j)}{\alpha_2 {\bf u}_2(\ell)} & < & n
      \sqrt{P_1P_2} \tilde{\rho} (1+7\epsilon),
    \end{IEEEeqnarray*}
    thus establishing C).\\[3mm]}

  \begin{itemize}
  \item[D)] For every $({\bf s}_1, {\bf s}_2, \mathcal{C}_1,
    \mathcal{C}_2, {\bf z}) \in \mathcal{E}_{\bf X}^c \cap \mathcal{E}_{\bf Z}^c$,
    \begin{IEEEeqnarray*}{rl}
      \Big( & \left| \tilde{\rho} - \cos \sphericalangle ({\bf
          u}_1(j), {\bf u}_2(\ell)) \right| < 7 \epsilon\\
        & \qquad \qquad \qquad \text{and} \quad \| {\bf y}
        - (\alpha_1 {\bf u}_1(j) + \alpha_2 {\bf u}_2(\ell)) \|^2 \leq
        \| {\bf y} - (\alpha_1 {\bf u}_1^{\ast} + \alpha_2 {\bf
          u}_2^{\ast}) \|^2 \Big)\\[4mm]
      & \qquad \qquad \Rightarrow \qquad \cos \sphericalangle ({\bf
        y}, \alpha_1 {\bf u}_1(j)+\alpha_2 {\bf u}_2(\ell)) \geq \Lambda(\epsilon).\\
    \end{IEEEeqnarray*}
  \end{itemize}
  \hfill \parbox{14.1cm}{ Statement D) follows by rewriting $\cos
    \sphericalangle ({\bf y}, \alpha_1 {\bf u}_1(j)+\alpha_2 {\bf
      u}_2(\ell))$ as 
    \begin{IEEEeqnarray*}{rCl}
      \cos \sphericalangle ({\bf y}, \alpha_1 {\bf u}_1(j)+\alpha_2 {\bf
        u}_2(\ell)) & = & \frac{\inner{{\bf y}}{\alpha_1 {\bf u}_1(j) +
          \alpha_2 {\bf u}_2(\ell)}}{\| {\bf y} \| \cdot \| \alpha_1
        {\bf u}_1(j) + \alpha_2 {\bf u}_2(\ell) \|},
    \end{IEEEeqnarray*}
    and then lower bounding $\inner{{\bf y}}{\alpha_1 {\bf u}_1(j) +
      \alpha_2 {\bf u}_2(\ell)}$ using A) and upper bounding $\| {\bf y}
    \|$ and $\| \alpha_1 {\bf u}_1(j) + \alpha_2 {\bf u}_2(\ell) \|$
    using B) and C) respectively. This, yields that for every $({\bf
      s}_1, {\bf s}_2, \mathcal{C}_1, \mathcal{C}_2, {\bf z}) \in
    \mathcal{E}_{\bf X}^c \cap \mathcal{E}_{\bf Z}^c$,
    \begin{IEEEeqnarray*}{l}
      \cos \sphericalangle ({\bf y}, \alpha_1 {\bf u}_1(j)+\alpha_2 {\bf
        u}_2(\ell)) \\
      \hspace{20mm} \geq \frac{P_1 + P_2 +
        2\tilde{\rho}\sqrt{P_1P_2} - \xi_1\epsilon}{\sqrt{P_1 + P_2 +
          2\tilde{\rho} \sqrt{P_1P_2} +N + \xi_2\epsilon} \sqrt{P_1 + P_2
          + 2\tilde{\rho}\sqrt{P_1P_2} + \xi_3\epsilon}}\\[3mm]
      \hspace{20mm} \geq \sqrt{\frac{P_1 + P_2 + 2\tilde{\rho}\sqrt{P_1P_2} -
          \xi'\epsilon}{P_1 + P_2 + 2\tilde{\rho} \sqrt{P_1P_2} +N
          + \xi_2\epsilon}}\\[3mm]
      \hspace{20mm} = \Lambda (\epsilon).\\
    \end{IEEEeqnarray*}}
  Lemma \ref{lm:ub-E43-B43} now follows by D) which gives
  \begin{IEEEeqnarray*}{rCl}
    \mathcal{E}_{(\hat{\bf U}_1,\hat{\bf U}_2)} \cap \mathcal{E}_{\bf
      Z}^c \cap \mathcal{E}_{\bf S}^c \cap \mathcal{E}_{\bf X}^c &
    \subseteq & \mathcal{E}_{(\hat{\bf U}_1,\hat{\bf U}_2)}' \cap
    \mathcal{E}_{\bf Z}^c \cap \mathcal{E}_{\bf S}^c \cap
    \mathcal{E}_{\bf X}^c,
  \end{IEEEeqnarray*}
  and therefore
  \begin{IEEEeqnarray*}{rCl}
    \qquad \qquad \; \; \; \; \; \Prv{\mathcal{E}_{(\hat{\bf
          U}_1,\hat{\bf U}_2)} | \mathcal{E}_{\bf Z}^c \cap
      \mathcal{E}_{\bf S}^c \cap \mathcal{E}_{\bf X}^c} & \leq &
    \Prv{\mathcal{E}_{(\hat{\bf U}_1,\hat{\bf U}_2)}' |
      \mathcal{E}_{\bf Z}^c \cap \mathcal{E}_{\bf S}^c \cap
      \mathcal{E}_{\bf X}^c}. \qquad \qquad \; \; \; \; \;
    \qedhere
  \end{IEEEeqnarray*}
\end{proof}

We now state the second lemma needed for the proof of
\eqref{eq:Pr-E32-E33-E1c-E2c}.

\begin{lm}\label{lm:caps-double-error}
  For every $\Theta \in (0,1]$ and $\Delta \in (0,1]$, let the set
  $\mathcal{G}$ be given by
  \begin{IEEEeqnarray*}{rCl}
    \mathcal{G} & = \big\{ ({\bf s}_1, {\bf s}_2, \mathcal{C}_1,
    \mathcal{C}_2, {\bf z}): & \exists {\bf u}_1(j) \in \mathcal{C}_1
    \setminus \left\{ {\bf u}_1^{\ast} \right\}, {\bf u}_2(\ell) \in
    \mathcal{C}_2 \setminus \left\{ {\bf u}_2^{\ast} \right\} \text{ s.t.~}\\
      & & \cos \sphericalangle
      ({\bf u}_1(j), {\bf u}_2(\ell)) \geq
      \Theta, \cos \sphericalangle ({\bf y}, \alpha_1 {\bf u}_1(j) +
      \alpha_2{\bf u}_2(\ell)) \geq \Delta \big\}.\label{eq:pr1-wrong-cdw-pair}
  \end{IEEEeqnarray*}
  Then,
  \begin{IEEEeqnarray}{C}
    \left(R_1 + R_2 < - \frac{1}{2} \log_2 \left( (1-\Theta^2)
      (1-\Delta^2)\right) \right) \hspace{50mm} \nonumber\\
  \hspace{30mm}\Rightarrow \quad \left(\lim_{n \rightarrow \infty} \Prv{\mathcal{G} |
      \mathcal{E}_{{\bf X}_1}^c \cap \mathcal{E}_{{\bf X}_2}^c} = 0,
    \quad \epsilon > 0 \right).
  \end{IEEEeqnarray}
\end{lm}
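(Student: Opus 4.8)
\emph{Proof proposal for Lemma~\ref{lm:caps-double-error}.}

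The plan is to prove Lemma~\ref{lm:caps-double-error} by a random-coding union bound over codeword \emph{pairs}, following the same pattern as the proof of Lemma~\ref{lm:caps-single-error}, but carrying the geometric estimate out for the two conditions that define $\mathcal{G}$. Since $\mathcal{G}$ asks for the \emph{existence} of a bad pair, the first step is
\[
  \Prv{\mathcal{G} | \mathcal{E}_{{\bf X}_1}^c \cap \mathcal{E}_{{\bf X}_2}^c}
  \;\leq\;
  \sum_{j:\, {\bf u}_1(j)\neq{\bf u}_1^{\ast}}\ \sum_{\ell:\, {\bf u}_2(\ell)\neq{\bf u}_2^{\ast}}
  \Prv{\mathcal{G}_{j,\ell} | \mathcal{E}_{{\bf X}_1}^c \cap \mathcal{E}_{{\bf X}_2}^c},
\]
where $\mathcal{G}_{j,\ell}$ is the event that the specific pair $({\bf u}_1(j),{\bf u}_2(\ell))$ simultaneously satisfies $\cos\sphericalangle({\bf u}_1(j),{\bf u}_2(\ell))\geq\Theta$ and $\cos\sphericalangle({\bf y},\alpha_1{\bf u}_1(j)+\alpha_2{\bf u}_2(\ell))\geq\Delta$. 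There are at most $2^{n(R_1+R_2)}$ such pairs, so it suffices to show each summand is at most $\mathrm{poly}(n)\cdot\big((1-\Theta^2)(1-\Delta^2)\big)^{n/2}$; the strict inequality $R_1+R_2 < -\frac{1}{2}\log_2\!\big((1-\Theta^2)(1-\Delta^2)\big)$ then leaves a positive exponential gap that overwhelms the polynomial factor together with the $2^{n(R_1+R_2)}$ prefactor, and the whole sum tends to $0$ as $n\to\infty$ for every $\epsilon>0$.

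I would first remove the conditioning, exactly as in Section~\ref{app:proof-dec-single-error}. Conditioning on $\mathcal{E}_{{\bf X}_1}^c\cap\mathcal{E}_{{\bf X}_2}^c$ and on the source pair $({\bf s}_1,{\bf s}_2)$ leaves the non-selected codewords ${\bf u}_1(j)$ and ${\bf u}_2(\ell)$ independent of one another, independent of ${\bf y}=\alpha_1{\bf u}_1^{\ast}+\alpha_2{\bf u}_2^{\ast}+{\bf z}$, and with conditional densities on $\mathcal{S}_1$ and $\mathcal{S}_2$ bounded by a polynomial in $n$ times the respective uniform densities (the events $\mathcal{E}_{{\bf X}_1}^c$ and $\mathcal{E}_{{\bf X}_2}^c$ are functions of $\mathcal{C}_1$ and $\mathcal{C}_2$ separately, the two codebooks are independent, and the single-codebook reweighting argument supplies the polynomial density bound). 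Hence, up to polynomial factors, it is enough to bound $\Prv{\mathcal{G}_{j,\ell}}$ when ${\bf u}_1(j)$ and ${\bf u}_2(\ell)$ are drawn independently and uniformly from $\mathcal{S}_1$ and $\mathcal{S}_2$ and ${\bf y}$ is an arbitrary fixed vector.

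The crux — the only genuinely new ingredient relative to the single-codebook lemma — is that under this uniform law the two cap exponents multiply, by a rotational-invariance argument. Writing ${\bf t}=\alpha_1{\bf u}_1(j)+\alpha_2{\bf u}_2(\ell)$ and denoting the two conditions of $\mathcal{G}_{j,\ell}$ by $A=\{\cos\sphericalangle({\bf u}_1(j),{\bf u}_2(\ell))\geq\Theta\}$ and $B=\{\cos\sphericalangle({\bf y},{\bf t})\geq\Delta\}$, we have $\Prv{\mathcal{G}_{j,\ell}}=\Prv{A}\cdot\Prv{B\,|\,A}$. For the first factor, fixing ${\bf u}_1(j)$ and using the uniformity of ${\bf u}_2(\ell)$ on $\mathcal{S}_2$, the event $A$ is precisely a polar cap of half-angle $\arccos\Theta$ (as in Figure~\ref{fig:polar-cap}), whose normalized area is at most $\mathrm{poly}(n)\,(1-\Theta^2)^{n/2}$. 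For the second factor, note that $B$ depends on ${\bf t}$ only through its direction ${\bf t}/\|{\bf t}\|$; moreover the joint law of $({\bf u}_1(j),{\bf u}_2(\ell))$ is invariant under the diagonal rotation $({\bf u}_1,{\bf u}_2)\mapsto(Q{\bf u}_1,Q{\bf u}_2)$, and the conditioning event $A$ is invariant under the same action, so the conditional direction ${\bf t}/\|{\bf t}\|$ given $A$ is still uniform on the unit sphere of $\Reals^n$. Consequently $\Prv{B\,|\,A}$ is the normalized area of a polar cap of half-angle $\arccos\Delta$ about ${\bf y}/\|{\bf y}\|$, hence at most $\mathrm{poly}(n)\,(1-\Delta^2)^{n/2}$. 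Multiplying gives $\Prv{\mathcal{G}_{j,\ell}}\leq\mathrm{poly}(n)\big((1-\Theta^2)(1-\Delta^2)\big)^{n/2}$, and substituting into the union bound of the first paragraph finishes the proof.

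I expect the fiddliest part to be the bookkeeping in the second paragraph — checking that conditioning on $\mathcal{E}_{{\bf X}_1}^c\cap\mathcal{E}_{{\bf X}_2}^c$ (rather than on some weaker typicality event) distorts the law of the non-selected codewords by at most a polynomial factor and that ${\bf y}$ remains independent of them — but this is a direct two-codebook transcription of what is already done for a single codebook in the proof of Lemma~\ref{lm:caps-single-error}. The substantive new step, the product factorization of the two cap exponents, is short once one observes that the direction of the sum $\alpha_1{\bf u}_1+\alpha_2{\bf u}_2$ stays rotationally uniform even after conditioning on the mutual-angle constraint.
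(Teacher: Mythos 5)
Your proposal follows essentially the same route as the paper's proof: a union bound over the at most $2^{n(R_1+R_2)}$ wrong codeword pairs, a reduction---via the bound on the conditional density of the non-selected codewords given the selected ones and the sources (the paper's Lemma~\ref{lm:density-wrong-codeword}, which gives a constant factor $2$ per codebook rather than a polynomial)---to independent uniform codewords with ${\bf y}$ treated as fixed, and then the product of two spherical-cap areas bounded via Lemma~\ref{lm:bounds-polar-caps}. Your rotational-invariance argument for factorizing the two cap probabilities is simply a proof of the independence of the angles $\sphericalangle({\bf u}_1(j),{\bf u}_2(\ell))$ and $\sphericalangle({\bf y},\alpha_1{\bf u}_1(j)+\alpha_2{\bf u}_2(\ell))$ that the paper invokes in its step $b)$, so the arguments coincide.
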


\begin{proof}
  The proof follows from upper bounding in every point on
  $\mathcal{S}_i$, $i \in \{ 1,2 \}$, the density of every ${\bf
    u}_i(j) \in \mathcal{C}_i \setminus \{ {\bf u}_i^{\ast} \}$ and
  then using a standard argument from sphere-packing. The proof is
  given in Section \ref{app:proof-dec-double-error}.
\end{proof}

\begin{proof}[Proof of Lemma \ref{lm:vq-dec-error-proba}]
  We first prove \eqref{eq:Pr-E32-E33c-E1c-E2c}.
  \begin{IEEEeqnarray}{rCl}
    \Prv{\mathcal{E}_{\hat{\bf U}_1} \cap \mathcal{E}_{\bf Z}^c \cap \mathcal{E}_{\bf S}^c
      \cap \mathcal{E}_{\bf X}^c} & \stackrel{a)}{\leq} &
    \Prv{\mathcal{E}_{\hat{\bf U}_1}' \cap \mathcal{E}_{\bf Z}^c \cap
      \mathcal{E}_{\bf S}^c \cap \mathcal{E}_{\bf X}^c}\nonumber\\
    & \stackrel{b)}{\leq} & \Prv{\mathcal{E}_{\hat{\bf U}_1}' \Big|
      \mathcal{E}_{{\bf X}_1}^c },\label{eq:E41-B41}
  \end{IEEEeqnarray}
  where $a)$ follows by Lemma \ref{lm:ub-E41-B41} and $b)$ follows
  because $\mathcal{E}_{{\bf X}}^c \subseteq \mathcal{E}_{{\bf X}_1}^c$. The proof of
  \eqref{eq:Pr-E32-E33c-E1c-E2c} is now completed by combining
  \eqref{eq:E41-B41} with Lemma \ref{lm:caps-single-error}. This gives
  that for every $\delta > 0$ and every $\epsilon > 0$ there exists
  some $n'_{41}(\delta,\epsilon)$ such that for all $n >
  n'_{41}(\delta,\epsilon)$ we have $\Prv{ \mathcal{E}_{\hat{\bf U}_1} \cap
    \mathcal{E}_{\bf Z}^c \cap \mathcal{E}_{\bf S}^c \cap \mathcal{E}_{\bf X}^c } <
  \delta$ whenever
  \begin{IEEEeqnarray*}{rCl}
    R_1 & < & - \frac{1}{2} \log_2 \left( \frac{N
        (1-\tilde{\rho}^2)}{P_1 (1-\tilde{\rho}^2) + N} + \kappa''
      \epsilon \right)\\[2mm]
    & \leq & \frac{1}{2} \log_2 \left( \frac{P_1 (1-\tilde{\rho}^2) + N}{N
        (1-\tilde{\rho}^2)} - \kappa_1 \epsilon \right),
  \end{IEEEeqnarray*}
  where $\kappa_1$ is a positive constant determined by $P_1$, $P_2$,
  $N$, $\zeta_1$ and $\zeta_2$. A similar argument establishes
  \eqref{eq:Pr-E33-E32c-E1c-E2c}.


  We turn to the proof of \eqref{eq:Pr-E32-E33-E1c-E2c}.
  \begin{IEEEeqnarray}{rCl}
    \Prv{\mathcal{E}_{(\hat{\bf U}_1,\hat{\bf U}_2)} \cap
      \mathcal{E}_{\bf Z}^c \cap \mathcal{E}_{\bf S}^c \cap
      \mathcal{E}_{\bf X}^c} & \stackrel{a)}{\leq} &
    \Prv{\mathcal{E}_{(\hat{\bf U}_1,\hat{\bf U}_2)}' \cap
      \mathcal{E}_{\bf Z}^c \cap
      \mathcal{E}_{\bf S}^c \cap \mathcal{E}_{\bf X}^c}\nonumber\\
    & \stackrel{b)}\leq & \Prv{\mathcal{E}_{(\hat{\bf U}_1,\hat{\bf U}_2)}' |
      \mathcal{E}_{{\bf X}_1}^c \cap \mathcal{E}_{{\bf
          X}_2}^c},\label{eq:E43-decomp}
  \end{IEEEeqnarray}
  where $a)$ follows by Lemma \ref{lm:ub-E43-B43} and $b)$ follows
  because $\mathcal{E}_{{\bf X}}^c \subseteq \left( \mathcal{E}_{{\bf
        X}_1}^c \cap \mathcal{E}_{{\bf X}_2}^c\right)$. The proof of
  \eqref{eq:Pr-E32-E33-E1c-E2c} is now completed by combining
  \eqref{eq:E43-decomp} with Lemma \ref{lm:caps-double-error}, which
  gives that for every $\delta > 0$ and every $\epsilon > 0$ there
  exists some $n'_{43}(\delta,\epsilon)$ such that for all $n >
  n'_{43}(\delta,\epsilon)$ we have $\Prv{ \mathcal{E}_{(\hat{\bf
        U}_1,\hat{\bf U}_2)} \cap \mathcal{E}_{\bf Z}^c \cap
    \mathcal{E}_{\bf S}^c \cap \mathcal{E}_{\bf X}^c } < \delta$
  whenever
  \begin{IEEEeqnarray*}{rCl}
    R_1 + R_2 & < & \frac{1}{2} \log_2 \left( \frac{P_1 + P_2 +
        2\tilde{\rho} \sqrt{P_1P_2} +N +\xi_2 \epsilon}{ \left( N +
          (\xi' + \xi_2) \epsilon \right) \left( 1- \tilde{\rho}^2
          +\xi' \epsilon \right)} \right) \\[3mm]
    & \leq & \frac{1}{2} \log_2 \left( \frac{P_1 + P_2 + 2
        \tilde{\rho}\sqrt{P_1P_2}+N }{ N (1-\tilde{\rho}^2)} -
      \kappa_3 \epsilon \right),
  \end{IEEEeqnarray*}
  where $\kappa_3$ is is a positive constant determined by $P_1$,
  $P_2$ and $N$.
\end{proof}

The proof of Lemma \ref{lm:vq-Pr-E4} now follows straight forwardly.

\begin{proof}[Proof of Lemma \ref{lm:vq-Pr-E4}]
  Combining \eqref{eq:vq-Pr-E4-decomp} with Lemma \ref{lm:vq-Pr-E1},
  Lemma \ref{lm:vq-Pr-E3}, Lemma \ref{lm:vq-Pr-E2} and Lemma
  \ref{lm:vq-dec-error-proba}, yields that for every $\delta>0$ and
  $0.3 > \epsilon > 0$ there exists some $n_4'(\delta,\epsilon) \in
  \Naturals$ such that for all $n > n_4'(\delta,\epsilon)$
  \begin{IEEEeqnarray*}{rCl}
    \hspace{48mm}
    \Prv{\mathcal{E}_{\hat{\bf U}}} \leq 11 \delta & \qquad \qquad & \text{if }
    (R_1,R_2) \in \mathcal{R}(\epsilon). \hspace{20mm} \qedhere
  \end{IEEEeqnarray*}
\end{proof}


\subsubsection{Concluding the Proof of Proposition
  \ref{prp:vq-D1-eql-genie}} \label{sec:prf-prp-genie} 

We start with four lemmas. The first lemma upper bounds the impact of
atypical source outputs on the expected distortion.

\begin{lm}\label{lm:vq-D-E1}
  For every $\epsilon > 0$
  \begin{IEEEeqnarray*}{rCl}
    \frac{1}{n} \E{\|{\bf S}_1\|^2 \Big| \mathcal{E}_{\bf S}}
    \Prv{\mathcal{E}_{\bf S}} & \leq & \sigma^2 \left( \epsilon +
      \Prv{\mathcal{E}_{\bf S}} \right).
  \end{IEEEeqnarray*}
\end{lm}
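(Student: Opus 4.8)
The plan is to obtain Lemma~\ref{lm:vq-D-E1} directly from the law of total expectation, using nothing more than the defining property of the typical set $\mathcal{E}_{\bf S}$. First I would record that, since by \eqref{eq:source-law} and \eqref{eq:source-normalization} the entries of ${\bf S}_1$ are IID zero-mean variance-$\sigma^2$ Gaussians, we have $\E{\frac{1}{n}\|{\bf S}_1\|^2} = \sigma^2$. The key observation is then that on the complement event $\mathcal{E}_{\bf S}^c$ the first clause in the definition \eqref{eq:vq-E1-def} of $\mathcal{E}_{\bf S}$ must fail, so that $\bigl|\frac{1}{n}\|{\bf S}_1\|^2 - \sigma^2\bigr| \le \epsilon\sigma^2$ and in particular $\frac{1}{n}\|{\bf S}_1\|^2 \ge \sigma^2(1-\epsilon)$ holds \emph{pointwise} on $\mathcal{E}_{\bf S}^c$.

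Concretely, I would split
\[
  \sigma^2 = \E{\frac{1}{n}\|{\bf S}_1\|^2 \Big| \mathcal{E}_{\bf S}}\Prv{\mathcal{E}_{\bf S}} + \E{\frac{1}{n}\|{\bf S}_1\|^2 \Big| \mathcal{E}_{\bf S}^c}\Prv{\mathcal{E}_{\bf S}^c},
\]
lower bound the second summand using the pointwise estimate above by $\sigma^2(1-\epsilon)\Prv{\mathcal{E}_{\bf S}^c} = \sigma^2(1-\epsilon)\bigl(1 - \Prv{\mathcal{E}_{\bf S}}\bigr)$, and solve for the first summand:
\[
  \E{\frac{1}{n}\|{\bf S}_1\|^2 \Big| \mathcal{E}_{\bf S}}\Prv{\mathcal{E}_{\bf S}} \le \sigma^2 - \sigma^2(1-\epsilon)\bigl(1 - \Prv{\mathcal{E}_{\bf S}}\bigr) = \sigma^2\bigl(\epsilon + (1-\epsilon)\Prv{\mathcal{E}_{\bf S}}\bigr).
\]
Finally, since $\epsilon \ge 0$ and $\Prv{\mathcal{E}_{\bf S}} \ge 0$, we have $(1-\epsilon)\Prv{\mathcal{E}_{\bf S}} \le \Prv{\mathcal{E}_{\bf S}}$, which gives the asserted bound $\sigma^2\bigl(\epsilon + \Prv{\mathcal{E}_{\bf S}}\bigr)$.

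I do not expect any genuine obstacle here; the statement is essentially a one-line consequence of total expectation. The only point worth a moment's thought is to bound the mass of $\frac{1}{n}\|{\bf S}_1\|^2$ on $\mathcal{E}_{\bf S}$ \emph{indirectly} — via $\E{\frac{1}{n}\|{\bf S}_1\|^2} = \sigma^2$ together with the pointwise lower bound on $\mathcal{E}_{\bf S}^c$ — rather than trying to estimate $\E{\frac{1}{n}\|{\bf S}_1\|^2 \mid \mathcal{E}_{\bf S}}$ head-on, which would require controlling the upper tail of a chi-squared variable and is unnecessary. It is also worth noting that the hypothesis $\epsilon > 0$ (and, in the regime in which the lemma is applied, $\epsilon < 1$) is exactly what keeps the last inequality clean.
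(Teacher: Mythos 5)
Your proof is correct and is essentially identical to the paper's: both apply total expectation to $\frac{1}{n}\E{\|{\bf S}_1\|^2}=\sigma^2$, lower-bound the conditional expectation on $\mathcal{E}_{\bf S}^c$ by $\sigma^2(1-\epsilon)$ via the pointwise bound in the definition of $\mathcal{E}_{\bf S}$, and then drop the factor $(1-\epsilon)$ in front of $\Prv{\mathcal{E}_{\bf S}}$. No gaps.
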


\begin{proof}
  \begin{IEEEeqnarray*}{rCll}
    \hspace{19mm} \frac{1}{n} \E{\|{\bf S}_1\|^2 \Big| \mathcal{E}_{\bf S}}
    \Prv{\mathcal{E}_{\bf S}} & = & \frac{1}{n} \E{\|{\bf S}_1\|^2 } -
    \frac{1}{n} \E{\|{\bf S}_1\|^2 \Big| \mathcal{E}_{\bf S}^c}
    \Prv{\mathcal{E}_{\bf S}^c}\\[1mm]
    & \leq & \sigma^2 - \sigma^2 (1-\epsilon)
    \Prv{\mathcal{E}_{\bf S}^c} & \\[1mm]
    & = & \sigma^2 - \sigma^2 (1-\epsilon) \left( 1 -
      \Prv{\mathcal{E}_{\bf S}} \right) & \\[1mm]
    & = & \sigma^2 \epsilon + \sigma^2 (1-\epsilon)
    \Prv{\mathcal{E}_{\bf S}} & \\[1mm]
    & \leq & \sigma^2 \left( \epsilon + \Prv{\mathcal{E}_{\bf S}} \right). &
    \hspace{19mm} \qedhere
  \end{IEEEeqnarray*}
\end{proof}

The second lemma gives upper bounds on norms related to the
reconstructions $\hat{\bf s}_1$ and $\hat{\bf s}_1^{\textnormal{G}}$.

\begin{lm}\label{lm:vq-ub-norm-s1h}
  Let the reconstructions $\hat{\bf s}_1$ and $\hat{\bf
    s}_1^{\textnormal{G}}$ be as defined in \eqref{eq:lin-est-S1h} and
  \eqref{eq:vq-genie-lin-est-S1h}. Then,
  \begin{IEEEeqnarray*}{rCl}
    \|\hat{\bf s}_1\|^2 \leq 4 n \sigma^2 \qquad & \qquad
    \|\hat{\bf s}_1^{\textnormal{G}}\|^2 \leq 4 n \sigma^2
    \qquad & \qquad \|\hat{\bf s}_1^{\textnormal{G}} - \hat{\bf s}_1\|^2
    \leq 16 n \sigma^2.
  \end{IEEEeqnarray*}
\end{lm}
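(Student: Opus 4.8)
The plan is to prove all three bounds by the triangle inequality, relying on only two facts. First, every codeword of $\mathcal{C}_1$ and $\mathcal{C}_2$ lies on a sphere of radius $\sqrt{n\sigma^2(1-2^{-2R_i})} \le \sqrt{n\sigma^2}$, and the all-zero fallback sequence has norm $0 \le \sqrt{n\sigma^2}$; hence $\|\hat{\bf u}_1\|$, $\|\hat{\bf u}_2\|$, $\|{\bf u}_1^{\ast}\|$, $\|{\bf u}_2^{\ast}\|$ are all at most $\sqrt{n\sigma^2}$. Second, by \eqref{eq:vq-bounds-gamma} the estimation coefficients satisfy $0 < \gamma_{11} \le 1$ and $0 < \gamma_{12} \le \rho \le 1$, where $\rho \in [0,1]$ by the source normalization.

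First I would bound $\|\hat{\bf s}_1\|$. Since $\hat{\bf s}_1 = \gamma_{11}\hat{\bf u}_1 + \gamma_{12}\hat{\bf u}_2$, the triangle inequality and the two facts above give
$\|\hat{\bf s}_1\| \le \gamma_{11}\|\hat{\bf u}_1\| + \gamma_{12}\|\hat{\bf u}_2\| \le \sqrt{n\sigma^2} + \sqrt{n\sigma^2} = 2\sqrt{n\sigma^2}$,
and therefore $\|\hat{\bf s}_1\|^2 \le 4n\sigma^2$. The identical computation applied to $\hat{\bf s}_1^{\textnormal{G}} = \gamma_{11}{\bf u}_1^{\ast} + \gamma_{12}{\bf u}_2^{\ast}$ yields $\|\hat{\bf s}_1^{\textnormal{G}}\|^2 \le 4n\sigma^2$. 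For the last bound I would simply combine the two, $\|\hat{\bf s}_1^{\textnormal{G}} - \hat{\bf s}_1\| \le \|\hat{\bf s}_1^{\textnormal{G}}\| + \|\hat{\bf s}_1\| \le 4\sqrt{n\sigma^2}$, hence $\|\hat{\bf s}_1^{\textnormal{G}} - \hat{\bf s}_1\|^2 \le 16n\sigma^2$. (Alternatively, write the difference as $\gamma_{11}({\bf u}_1^{\ast}-\hat{\bf u}_1) + \gamma_{12}({\bf u}_2^{\ast}-\hat{\bf u}_2)$ and bound each bracketed norm by $2\sqrt{n\sigma^2}$; this gives the same constant $16$.)

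There is no genuine obstacle here: the lemma is a crude deterministic estimate whose only role is to control the contribution of low-probability events to the expected distortion, so loose constants are harmless. The single point deserving a word of care is that $\hat{\bf u}_i$ may equal the all-zero sequence when the joint-typicality decoder fails, and ${\bf u}_i^{\ast}$ may equal the all-zero sequence when $\mathcal{F}({\bf s}_i,\mathcal{C}_i)=\emptyset$; since $\|{\bf 0}\| = 0 \le \sqrt{n\sigma^2}$, the codeword-norm bound used above remains valid in every case, so no separate treatment of these degenerate cases is needed.
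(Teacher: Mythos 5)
Your proof is correct and follows essentially the same route as the paper: bound $\|\hat{\bf u}_i\|,\|{\bf u}_i^{\ast}\|$ by $\sqrt{n\sigma^2}$ and the coefficients via \eqref{eq:vq-bounds-gamma}, giving $\|\hat{\bf s}_1\|\le\gamma_{11}\|\hat{\bf u}_1\|+\gamma_{12}\|\hat{\bf u}_2\|\le 2\sqrt{n\sigma^2}$ (the paper reaches the same bound by expanding the square and applying Cauchy--Schwarz), and then the difference bound by combining the two norms. Your remark about the all-zero fallback sequences is a sound observation and does not affect the argument.
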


\begin{proof}
  We start by upper bounding the squared norm of $\hat{\bf s}_1$
  \begin{IEEEeqnarray*}{rCl}
    \|\hat{\bf s}_1\|^2 & = & \| \gamma_{11} \hat{\bf u}_1 + \gamma_{12}
    \hat{\bf u}_2 \|^2\\
    & = & \gamma_{11}^2 \|\hat{\bf u}_1\|^2 + 2 \gamma_{11} \gamma_{12}
    \inner{\hat{\bf u}_1}{\hat{\bf u}_2} + \gamma_{12}^2 \|\hat{\bf
      u}_2\|^2\\
    & \leq & \gamma_{11}^2 \|\hat{\bf u}_1\|^2 + 2 \gamma_{11} \gamma_{12}
    \|\hat{\bf u}_1\| \|\hat{\bf u}_2\| + \gamma_{12}^2 \|\hat{\bf
      u}_2\|^2\\
    & = & \left( \gamma_{11} \|\hat{\bf u}_1\| +\gamma_{12} \|\hat{\bf
        u}_2\| \right)^2\\
    & \stackrel{a)}{\leq} & n \sigma^2 (1+\rho)^2\\
    & \leq & 4 n \sigma^2,
  \end{IEEEeqnarray*}
  where in $a)$ we have used \eqref{eq:vq-bounds-gamma}, i.e., that
  $\gamma_{11} < 1$ and $\gamma_{12} < \rho$, and that $\|\hat{\bf
    u}_i\| \leq \sqrt{n\sigma^2}$, $i \in \{ 1,2 \}$. The upper bound
  on the squared norm of $\hat{\bf s}_1^{\textnormal{G}}$ is obtained
  similarly. Its proof is therefore omitted. The upper bound on the
  squared norm of the difference between $\hat{\bf s}_1$ and $\hat{\bf
    s}_1^{\textnormal{G}}$ now follows easily:
  \begin{IEEEeqnarray*}{rCll}
      \qquad \qquad \qquad \qquad \quad \|\hat{\bf
        s}_1^{\textnormal{G}} - \hat{\bf s}_1\|^2 & \leq & \|\hat{\bf
        s}_1^{\textnormal{G}}\|^2 + 2 \|\hat{\bf
        s}_1^{\textnormal{G}}\| \|\hat{\bf s}_1\| + \|\hat{\bf s}_1\|^2 & \\
      & = & \left( \|\hat{\bf s}_1^{\textnormal{G}}\| + \|\hat{\bf s}_1\|
      \right)^2 & \\
      & \leq & 16 n \sigma^2. & \qquad \qquad \qquad \qquad
      \quad \qedhere
    \end{IEEEeqnarray*}
\end{proof}

The next two lemmas are used directly in the upcoming proof of
Proposition \ref{prp:vq-D1-eql-genie}. They rely on Lemma
\ref{lm:vq-D-E1} and Lemma \ref{lm:vq-ub-norm-s1h}.

\begin{lm}\label{lm:vq-bd-inner-S1-S1h-S1G}
  \begin{IEEEeqnarray*}{rCl}
    \frac{1}{n} \E{\inner{{\bf S}_1}{\hat{\bf S}_1^{\textnormal{G}} -
        \hat{\bf S}_1}} & \leq & \sigma^2 \left( \epsilon + 17
      \Prv{\mathcal{E}_{\bf S}} + 4 \sqrt{1+\epsilon} \Prv{\mathcal{E}_{\hat{\bf U}}}
    \right).
  \end{IEEEeqnarray*}
\end{lm}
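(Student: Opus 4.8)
The plan is to exploit that the genie-aided estimate $\hat{\bf S}_1^{\textnormal{G}}$ and the actual estimate $\hat{\bf S}_1$ can differ only when the receiver decodes the wrong codeword pair. Since $\hat{\bf S}_1 = \gamma_{11}\hat{\bf U}_1 + \gamma_{12}\hat{\bf U}_2$ and $\hat{\bf S}_1^{\textnormal{G}} = \gamma_{11}{\bf U}_1^{\ast} + \gamma_{12}{\bf U}_2^{\ast}$, a nonzero difference $\hat{\bf S}_1^{\textnormal{G}} - \hat{\bf S}_1$ forces $(\hat{\bf U}_1,\hat{\bf U}_2) \neq ({\bf U}_1^{\ast},{\bf U}_2^{\ast})$, which, as observed right after the definition of $\mathcal{E}_{\hat{\bf U}}$, happens only when $({\bf S}_1,{\bf S}_2,\mathcal{C}_1,\mathcal{C}_2,{\bf Z}) \in \mathcal{E}_{\hat{\bf U}}$. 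Hence $\hat{\bf S}_1^{\textnormal{G}} - \hat{\bf S}_1$ is supported on $\mathcal{E}_{\hat{\bf U}}$, so I may write $\frac{1}{n}\E{\inner{{\bf S}_1}{\hat{\bf S}_1^{\textnormal{G}} - \hat{\bf S}_1}}$ as $\frac{1}{n}\E{\inner{{\bf S}_1}{\hat{\bf S}_1^{\textnormal{G}} - \hat{\bf S}_1} \,\Big|\, \mathcal{E}_{\hat{\bf U}}}\Prv{\mathcal{E}_{\hat{\bf U}}}$, bound this by the expectation of $|\inner{{\bf S}_1}{\hat{\bf S}_1^{\textnormal{G}} - \hat{\bf S}_1}|$, and then split according to whether the source output is typical. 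This gives one term conditioned on $\mathcal{E}_{\hat{\bf U}} \cap \mathcal{E}_{\bf S}^c$ and one conditioned on $\mathcal{E}_{\hat{\bf U}} \cap \mathcal{E}_{\bf S}$; in the latter I drop the conditioning on $\mathcal{E}_{\hat{\bf U}}$ (the integrand is nonnegative), bounding it by $\frac{1}{n}\E{|\inner{{\bf S}_1}{\hat{\bf S}_1^{\textnormal{G}} - \hat{\bf S}_1}| \,\Big|\, \mathcal{E}_{\bf S}}\Prv{\mathcal{E}_{\bf S}}$.

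For the typical-source term I apply Cauchy-Schwarz: on $\mathcal{E}_{\bf S}^c$ the definition of $\mathcal{E}_{\bf S}$ gives $\|{\bf S}_1\|^2 \le n\sigma^2(1+\epsilon)$, while Lemma~\ref{lm:vq-ub-norm-s1h} gives $\|\hat{\bf S}_1^{\textnormal{G}} - \hat{\bf S}_1\| \le 4\sqrt{n}\,\sigma$; together these yield $|\inner{{\bf S}_1}{\hat{\bf S}_1^{\textnormal{G}} - \hat{\bf S}_1}| \le 4n\sigma^2\sqrt{1+\epsilon}$ on $\mathcal{E}_{\hat{\bf U}}\cap\mathcal{E}_{\bf S}^c$, and hence a contribution of at most $4\sigma^2\sqrt{1+\epsilon}\,\Prv{\mathcal{E}_{\hat{\bf U}} \cap \mathcal{E}_{\bf S}^c} \le 4\sigma^2\sqrt{1+\epsilon}\,\Prv{\mathcal{E}_{\hat{\bf U}}}$. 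This produces the $4\sqrt{1+\epsilon}\,\Prv{\mathcal{E}_{\hat{\bf U}}}$ term of the claim.

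For the atypical-source term I instead use the elementary inequality $|\inner{a}{b}| \le \frac{1}{2}\|a\|^2 + \frac{1}{2}\|b\|^2$, so that $|\inner{{\bf S}_1}{\hat{\bf S}_1^{\textnormal{G}} - \hat{\bf S}_1}| \le \frac{1}{2}\|{\bf S}_1\|^2 + \frac{1}{2}\|\hat{\bf S}_1^{\textnormal{G}} - \hat{\bf S}_1\|^2 \le \frac{1}{2}\|{\bf S}_1\|^2 + 8n\sigma^2$, again by Lemma~\ref{lm:vq-ub-norm-s1h}. Taking the expectation conditioned on $\mathcal{E}_{\bf S}$ and invoking Lemma~\ref{lm:vq-D-E1} to bound $\frac{1}{n}\E{\|{\bf S}_1\|^2 \,\Big|\, \mathcal{E}_{\bf S}}\Prv{\mathcal{E}_{\bf S}} \le \sigma^2(\epsilon + \Prv{\mathcal{E}_{\bf S}})$, this term is at most $\frac{\sigma^2}{2}(\epsilon + \Prv{\mathcal{E}_{\bf S}}) + 8\sigma^2\Prv{\mathcal{E}_{\bf S}} = \frac{\sigma^2}{2}\epsilon + \frac{17}{2}\sigma^2\Prv{\mathcal{E}_{\bf S}}$. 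Adding the two contributions gives $\frac{\sigma^2}{2}\epsilon + \frac{17}{2}\sigma^2\Prv{\mathcal{E}_{\bf S}} + 4\sigma^2\sqrt{1+\epsilon}\,\Prv{\mathcal{E}_{\hat{\bf U}}}$, which is at most $\sigma^2\bigl(\epsilon + 17\Prv{\mathcal{E}_{\bf S}} + 4\sqrt{1+\epsilon}\,\Prv{\mathcal{E}_{\hat{\bf U}}}\bigr)$, as required.

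The only point requiring care is the opening observation that $\hat{\bf S}_1^{\textnormal{G}} - \hat{\bf S}_1$ is supported on $\mathcal{E}_{\hat{\bf U}}$; after that the argument is just Cauchy-Schwarz and the inequality $2|\inner{a}{b}|\le\|a\|^2+\|b\|^2$ combined with Lemmas~\ref{lm:vq-ub-norm-s1h} and~\ref{lm:vq-D-E1}, and I expect no real difficulty. The generous constants in the statement leave ample slack (indeed one obtains $\frac{1}{2}\epsilon$ and $\frac{17}{2}$ in place of $\epsilon$ and $17$).
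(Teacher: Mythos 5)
Your proof is correct and takes essentially the same route as the paper's: the same partition into the atypical-source event $\mathcal{E}_{\bf S}$ (handled via $2|\inner{{\bf v}}{{\bf w}}| \leq \|{\bf v}\|^2 + \|{\bf w}\|^2$ with Lemma~\ref{lm:vq-D-E1} and Lemma~\ref{lm:vq-ub-norm-s1h}) and the typical-source decoding-error event (handled via Cauchy--Schwarz with Lemma~\ref{lm:vq-ub-norm-s1h} and $\Prv{\mathcal{E}_{\bf S}^c \cap \mathcal{E}_{\hat{\bf U}}} \leq \Prv{\mathcal{E}_{\hat{\bf U}}}$), with the remaining contribution vanishing because $\hat{\bf S}_1 = \hat{\bf S}_1^{\textnormal{G}}$ off $\mathcal{E}_{\hat{\bf U}}$. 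The only differences are cosmetic: you restrict to the support $\mathcal{E}_{\hat{\bf U}}$ at the outset instead of splitting it off as a third conditioning event, and you retain the factor $1/2$, giving the marginally tighter constants $\epsilon/2$ and $17/2$.
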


\begin{proof}
  
  \begin{IEEEeqnarray}{rCl}
    \frac{1}{n} \E{\inner{{\bf S}_1}{\hat{\bf S}_1^{\textnormal{G}}
        - \hat{\bf S}_1}} & = & \frac{1}{n} \E{\inner{{\bf
          S}_1}{\hat{\bf S}_1^{\textnormal{G}} - \hat{\bf S}_1}
      \Big| \mathcal{E}_{\bf S}}
    \Prv{\mathcal{E}_{\bf S}} \nonumber\\
    & & {} + \frac{1}{n} \E{\inner{{\bf S}_1}{\hat{\bf
          S}_1^{\textnormal{G}} - \hat{\bf S}_1} \Big|
      \mathcal{E}_{\bf S}^c
      \cap \mathcal{E}_{\hat{\bf U}}} \Prv{\mathcal{E}_{\bf S}^c \cap
      \mathcal{E}_{\hat{\bf U}}} \nonumber\\ 
    & & {} + \frac{1}{n} \underbrace{\E{\inner{{\bf S}_1}{\hat{\bf
          S}_1^{\textnormal{G}} - \hat{\bf S}_1} \Big|
      \mathcal{E}_{\bf S}^c \cap \mathcal{E}_{\hat{\bf U}}^c}}_{=0} \Prv{\mathcal{E}_{\bf S}^c
      \cap \mathcal{E}_{\hat{\bf U}}^c} \nonumber\\[3mm]
    & \stackrel{a)}{\leq} & \frac{1}{n} \E{\|{\bf S}_1\|^2 +
      \|\hat{\bf S}_1^{\textnormal{G}} - \hat{\bf S}_1\|^2 \Big|
      \mathcal{E}_{\bf S}} \Prv{\mathcal{E}_{\bf S}} \nonumber\\
    & & {} + \frac{1}{n} \E{\|{\bf S}_1\| \|\hat{\bf
        S}_1^{\textnormal{G}} - \hat{\bf S}_1\| \Big|
      \mathcal{E}_{\bf S}^c \cap \mathcal{E}_{\hat{\bf U}}}
    \Prv{\mathcal{E}_{\hat{\bf U}}} \nonumber\\[3mm]
    & \stackrel{b)}{\leq} & \frac{1}{n} \E{\|{\bf S}_1\|^2 \Big| \mathcal{E}_{\bf S}}
    \Prv{\mathcal{E}_{\bf S}} + 16 \sigma^2 \Prv{\mathcal{E}_{\bf S}} \nonumber\\
    & & {} + \sqrt{\sigma^2(1+\epsilon)} \sqrt{16\sigma^2}
    \Prv{\mathcal{E}_{\hat{\bf U}}} \nonumber\\[3mm]
    & \stackrel{c)}{\leq} & \sigma^2 (\epsilon + \Prv{\mathcal{E}_{\bf S}})
    + 16 \sigma^2 \Prv{\mathcal{E}_{\bf S}} \nonumber\\
    & & {} + 4 \sigma^2 \sqrt{1+\epsilon}
    \Prv{\mathcal{E}_{\hat{\bf U}}} \nonumber\\[3mm]
    & \leq & \sigma^2 \left( \epsilon + 17 \Prv{\mathcal{E}_{\bf S}} + 4
      \sqrt{1+\epsilon} \Prv{\mathcal{E}_{\hat{\bf U}}} \right). \label{eq:vq-inner-genie}
  \end{IEEEeqnarray}
  In the first equality the third expectation term equals zero because
  by $\mathcal{E}_{\hat{\bf U}}^c$ we have $\| \hat{\bf
    s}_1^{\textnormal{G}} - \hat{\bf s}_1 \| = 0$ and by
  $\mathcal{E}_{\bf S}^c$ the norm $\|{\bf s}_1\|$ is bounded.
  In $a)$ we have used two inequalities: in the first term, the
  inner product is upper bounded by using that for any two vectors
  ${\bf v} \in \Reals^n$ and ${\bf w} \in \Reals^n$
  \begin{IEEEeqnarray}{rCl}\label{eq:vq-bound inner}
    |\inner{{\bf v}}{{\bf w}}| & \leq & \| {\bf v} \| \cdot \| {\bf w}
    \| \nonumber\\
    & \leq & \frac{1}{2} \left( \| {\bf v} \|^2 + \| {\bf w} \|^2
    \right) \nonumber\\
    & \leq & \| {\bf v} \|^2 + \| {\bf w} \|^2.
  \end{IEEEeqnarray}
  The second term is upper bounded by the Cauchy-Schwarz inequality
  and by $\Prv{\mathcal{E}_{\bf S}^c \cap \mathcal{E}_{\hat{\bf U}}} \leq
  \Prv{\mathcal{E}_{\hat{\bf U}}}$.  In $b)$ we have used
  Lemma~\ref{lm:vq-ub-norm-s1h} and in $c)$ we have used
  Lemma~\ref{lm:vq-D-E1}.
\end{proof}

\begin{lm}\label{lm:vq-bd-S1h-S1G}
  \begin{IEEEeqnarray*}{rCl}
    \frac{1}{n} \E{\|\hat{\bf S}_1\|^2 - \|\hat{\bf
        S}_1^{\textnormal{G}}\|^2} & \leq & 8 \sigma^2
    \Prv{\mathcal{E}_{\hat{\bf U}}}.
  \end{IEEEeqnarray*}
\end{lm}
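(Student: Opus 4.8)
The plan is to prove this by a simple error-event decomposition, exactly parallel to the proof of Lemma~\ref{lm:vq-bd-inner-S1-S1h-S1G}. The starting point is the observation, already noted in Section~\ref{sec:vq-ub-PrE4}, that a decoding error occurs only on $\mathcal{E}_{\hat{\bf U}}$; hence on the complement $\mathcal{E}_{\hat{\bf U}}^c$ the joint-typicality decoder of \eqref{eq:decoding} recovers the transmitted codeword pair, $(\hat{\bf U}_1,\hat{\bf U}_2) = ({\bf U}_1^{\ast},{\bf U}_2^{\ast})$. Comparing the reconstruction rule \eqref{eq:lin-est-S1h} of the proposed scheme with the genie-aided rule \eqref{eq:vq-genie-lin-est-S1h}, this gives $\hat{\bf S}_1 = \hat{\bf S}_1^{\textnormal{G}}$ on $\mathcal{E}_{\hat{\bf U}}^c$, so the integrand $\|\hat{\bf S}_1\|^2 - \|\hat{\bf S}_1^{\textnormal{G}}\|^2$ vanishes there.

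First I would write, by the law of total expectation,
\[
\frac{1}{n}\E{\|\hat{\bf S}_1\|^2 - \|\hat{\bf S}_1^{\textnormal{G}}\|^2}
= \frac{1}{n}\E{\|\hat{\bf S}_1\|^2 - \|\hat{\bf S}_1^{\textnormal{G}}\|^2 \,\Big|\, \mathcal{E}_{\hat{\bf U}}}\,\Prv{\mathcal{E}_{\hat{\bf U}}}
+ \frac{1}{n}\underbrace{\E{\|\hat{\bf S}_1\|^2 - \|\hat{\bf S}_1^{\textnormal{G}}\|^2 \,\Big|\, \mathcal{E}_{\hat{\bf U}}^c}}_{=\,0}\Prv{\mathcal{E}_{\hat{\bf U}}^c},
\]
and then bound the surviving conditional expectation by a deterministic constant. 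Using Lemma~\ref{lm:vq-ub-norm-s1h}, which gives $\|\hat{\bf S}_1\|^2 \leq 4 n \sigma^2$ and $\|\hat{\bf S}_1^{\textnormal{G}}\|^2 \leq 4 n \sigma^2$, we have pointwise $\|\hat{\bf S}_1\|^2 - \|\hat{\bf S}_1^{\textnormal{G}}\|^2 \leq \|\hat{\bf S}_1\|^2 + \|\hat{\bf S}_1^{\textnormal{G}}\|^2 \leq 8 n \sigma^2$. Substituting this and cancelling the factor $n$ yields $\frac{1}{n}\E{\|\hat{\bf S}_1\|^2 - \|\hat{\bf S}_1^{\textnormal{G}}\|^2} \leq 8\sigma^2\,\Prv{\mathcal{E}_{\hat{\bf U}}}$, as claimed. (In fact, simply dropping the nonnegative term $\|\hat{\bf S}_1^{\textnormal{G}}\|^2$ already gives the stronger constant $4\sigma^2$, but the weaker bound is all that is used in the sequel.)

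There is no real obstacle here — the lemma is a routine error-event estimate. The only step that warrants a sentence of justification is the identity $(\hat{\bf U}_1,\hat{\bf U}_2) = ({\bf U}_1^{\ast},{\bf U}_2^{\ast})$ on $\mathcal{E}_{\hat{\bf U}}^c$, which follows directly from the definition of $\mathcal{E}_{\hat{\bf U}} = \mathcal{E}_{\hat{\bf U}_1}\cup\mathcal{E}_{\hat{\bf U}_2}\cup\mathcal{E}_{(\hat{\bf U}_1,\hat{\bf U}_2)}$ as the event that some wrong codeword pair satisfying the joint-typicality constraint \eqref{eq:vq-jnt-typ-U1U2} lies at least as close to ${\bf Y}$ as the transmitted pair $\alpha_1{\bf U}_1^{\ast}+\alpha_2{\bf U}_2^{\ast}$; everything else is the law of total expectation together with the deterministic norm bounds already established in Lemma~\ref{lm:vq-ub-norm-s1h}.
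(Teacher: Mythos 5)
Your proposal is correct and follows essentially the same route as the paper: condition on $\mathcal{E}_{\hat{\bf U}}$ and its complement, note that $\hat{\bf S}_1 = \hat{\bf S}_1^{\textnormal{G}}$ on $\mathcal{E}_{\hat{\bf U}}^c$ so that term vanishes, and bound the remaining conditional expectation by $8n\sigma^2$ via Lemma~\ref{lm:vq-ub-norm-s1h}. The only cosmetic difference is your parenthetical remark that dropping $\|\hat{\bf S}_1^{\textnormal{G}}\|^2$ would give the constant $4\sigma^2$; the paper keeps the symmetric bound, but nothing in the sequel depends on this.
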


\begin{proof}
  \begin{IEEEeqnarray*}{rCl}
    \frac{1}{n} \E{\|\hat{\bf S}_1\|^2 - \|\hat{\bf
        S}_1^{\textnormal{G}}\|^2} & = & \frac{1}{n} \E{\|\hat{\bf S}_1\|^2
      - \|\hat{\bf S}_1^{\textnormal{G}}\|^2 \big| \mathcal{E}_{\hat{\bf U}}}
    \Prv{\mathcal{E}_{\hat{\bf U}}}\\
    & & {} + \frac{1}{n} \E{\|\hat{\bf S}_1\|^2 - \|\hat{\bf
        S}_1^{\textnormal{G}}\|^2 \big| \mathcal{E}_{\hat{\bf U}}^c}
    \Prv{\mathcal{E}_{\hat{\bf U}}^c}\\
    & \stackrel{a)}{\leq} & \frac{1}{n} \E{\|\hat{\bf S}_1\|^2 + \|\hat{\bf
        S}_1^{\textnormal{G}}\|^2 \big| \mathcal{E}_{\hat{\bf U}}}
    \Prv{\mathcal{E}_{\hat{\bf U}}}\\[2mm]
    & \stackrel{b)}{\leq} & 8 \sigma^2 \Prv{\mathcal{E}_{\hat{\bf U}}},
  \end{IEEEeqnarray*}
  where $a)$ follows since conditional on $\mathcal{E}_{\hat{\bf U}}^c$ we have
  $\hat{\bf s}_1 = \hat{\bf s}_1^{\textnormal{G}}$ and therefore
  $\|\hat{\bf s}_1\|^2 - \|\hat{\bf s}_1^{\textnormal{G}}\|^2 = 0$,
  and where $b)$ follows by Lemma \ref{lm:vq-ub-norm-s1h}.
\end{proof}


\begin{proof}[Proof of Proposition \ref{prp:vq-D1-eql-genie}]
  
  We show that the asymptotic normalized distortion resulting from the
  proposed vector-quantizer scheme is the same as the asymptotic
  normalized distortion resulting from the genie-aided version of this
  scheme.
  
  \begin{IEEEeqnarray}{l}
    \frac{1}{n}\E{\| {\bf S}_1 - \hat{\bf S}_1 \|^2} -
    \frac{1}{n}\E{\| {\bf S}_1 - \hat{\bf S}_1^{\textnormal{G}} \|^2} \nonumber\\[2mm]
    \hspace{35mm} = \frac{1}{n} \bigg( \E{\|{\bf S}_1\|^2} - 2 \E{\inner{{\bf S}_1}{\hat{\bf
          S}_1}} + \E{\|\hat{\bf S}_1\|^2} \nonumber\\
    \hspace{44mm} {} - \E{\|{\bf S}_1\|^2}  + 2 \E{\inner{{\bf S}_1}{\hat{\bf
          S}_1^{\textnormal{G}}}} - \E{\|\hat{\bf
        S}_1^{\textnormal{G}}\|^2} \bigg) \nonumber\\[2mm]
    \hspace{35mm} = 2 \frac{1}{n} \E{\inner{{\bf S}_1}{\hat{\bf
          S}_1^{\textnormal{G}} - \hat{\bf S}_1}} + \frac{1}{n}
    \E{\|\hat{\bf S}_1\|^2 - \|\hat{\bf S}_1^{\textnormal{G}}\|^2} \nonumber\\[2mm]
    \hspace{35mm} \stackrel{a)}{\leq} 2 \sigma^2 \left( \epsilon + 17
      \Prv{\mathcal{E}_{\bf S}} + 4 \sqrt{1+\epsilon} \Prv{\mathcal{E}_{\hat{\bf U}}}
    \right) \nonumber\\
    \hspace{39mm} {} + 8 \sigma^2 \Prv{\mathcal{E}_{\hat{\bf U}}} \nonumber\\[2mm]
    \hspace{35mm} = 2 \sigma^2 \left( \epsilon + 17 \Prv{\mathcal{E}_{\bf S}} + 4
      \left( \sqrt{1+\epsilon} + 1 \right) \Prv{\mathcal{E}_{\hat{\bf U}}}
    \right), \qquad \label{eq:vq-diff-genie}
  \end{IEEEeqnarray}
  where in step $a)$ we have used Lemma
  \ref{lm:vq-bd-inner-S1-S1h-S1G} and Lemma
  \ref{lm:vq-bd-S1h-S1G}. Combining \eqref{eq:vq-diff-genie} with
  Lemma \ref{lm:vq-Pr-E1} and Lemma \ref{lm:vq-Pr-E4} gives that for
  every $\delta > 0$ and $0.3 > \epsilon > 0$, there exists an
  $n'(\delta,\epsilon) > 0$ such that for all $(R_1,R_2) \in
  \mathcal{R}(\epsilon)$ and $n > n'(\delta,\epsilon)$
  \begin{IEEEeqnarray*}{rCl}
    \hspace{12mm} \frac{1}{n}\E{\| {\bf S}_1 - \hat{\bf S}_1 \|^2} -
    \frac{1}{n}\E{\| {\bf S}_1 - \hat{\bf S}_1^{\textnormal{G}} \|^2}
    & < & 2 \sigma^2 \left( \epsilon + \left(
        44 \sqrt{1+\epsilon} + 61 \right) \delta
    \right). \hspace{12mm} \qedhere
  \end{IEEEeqnarray*}
\end{proof}


\subsection{Upper Bound on Expected Distortion}\label{sec:vq-ub-D1}

We now derive an upper bound on the achievable distortion for the
proposed vector-quantizer scheme. By Corollary \ref{cor:vq-genie}, it
suffices to analyze the genie-aided scheme. Using that $\hat{\bf
  S}_1^{\textnormal{G}} = \gamma_{11} {\bf U}_1^{\ast} + \gamma_{12}
{\bf U}_2^{\ast}$, we have
\begin{IEEEeqnarray}{rCl}\label{eq:distortion-general}
  \frac{1}{n} \E{\| {\bf S}_1 - \hat{\bf S}_1^{\textnormal{G}} \|^2} &
  = & \frac{1}{n} \Big( \E{\| {\bf S}_1 \|^2} -2\gamma_{11}
  \E{\inner{{\bf S}_1}{{\bf U}_1^{\ast}}} -2\gamma_{12} \E{\inner{{\bf
        S}_1}{{\bf U}_2^{\ast}}} \nonumber \\
  & & \quad \; + \gamma_{11}^2 \E{\| {\bf U}_1^{\ast} \|^2} + 2
  \gamma_{11} \gamma_{12} \E{\inner{{\bf U}_1^{\ast}}{{\bf U}_2^{\ast}}} +
  \gamma_{12}^2 \E{\| {\bf U}_2^{\ast} \|^2} \Big) \nonumber \\
  & = & \sigma^2 -2\gamma_{11} \frac{1}{n}\E{\inner{{\bf
        S}_1}{{\bf U}_1^{\ast}}} -2\gamma_{12} \frac{1}{n}\E{\inner{{\bf
        S}_1}{{\bf U}_2^{\ast}}} \nonumber \\
  & & \quad \; + \gamma_{11}^2 \sigma^2(1-2^{-2R_1}) + 2 \gamma_{11}
  \gamma_{12} \frac{1}{n}\E{\inner{{\bf U}_1^{\ast}}{{\bf U}_2^{\ast}}}\nonumber\\
  & & \quad + \gamma_{12}^2 \sigma^2(1-2^{-2R_2}),
\end{IEEEeqnarray}
where in the last equality all expected squared norms have been
replaced by their explicit values, i.e., $\E{\| {\bf S}_1 \|^2} =
n\sigma^2$ and $\E{\| {\bf U}_i \|^2} = n\sigma^2 (1-2^{-2R_i})$ for
$i \in \{1,2\}$. The remaining expectations of the inner products are
bounded in the following three lemmas.

\begin{lm}\label{lm1:vq-bd-D1genie}
  For every $\delta > 0$ and $0.3 > \epsilon > 0$ and every positive integer $n$
  \begin{IEEEeqnarray}{rCl}
    \frac{1}{n} \E{\inner{{\bf S}_1}{{\bf U}_1^{\ast}}} & \geq &
    \sigma^2 (1-2^{-2R_1})(1-2\epsilon) (1 - 7\delta ).
  \end{IEEEeqnarray}
\end{lm}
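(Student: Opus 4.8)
The plan is to bound $\frac1n\E{\inner{{\bf S}_1}{{\bf U}_1^{\ast}}}$ from below by restricting attention to the ``good'' event $\mathcal{E}_{\bf S}^c\cap\mathcal{E}_{{\bf X}_1}^c$, on which each factor of the inner product can be controlled deterministically, and then discarding the contribution of the complementary event using the fact that the inner product is nonnegative. The key structural observation, which I would record first, is that by the definition of the vector quantizer via \eqref{eq:encoding} the output ${\bf U}_1^{\ast}$ is either the all-zero sequence (precisely on $\mathcal{E}_{{\bf X}_1}$) or a codeword of $\mathcal{C}_1$ lying on the sphere $\mathcal{S}_1$ and satisfying $\cos\sphericalangle({\bf S}_1,{\bf U}_1^{\ast})\ge\sqrt{1-2^{-2R_1}}\,(1-\epsilon)>0$. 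In either case $\inner{{\bf S}_1}{{\bf U}_1^{\ast}}\ge 0$, so
\[
  \frac1n\E{\inner{{\bf S}_1}{{\bf U}_1^{\ast}}}\;\ge\;\frac1n\,\E{\inner{{\bf S}_1}{{\bf U}_1^{\ast}}\,\Big|\,\mathcal{E}_{\bf S}^c\cap\mathcal{E}_{{\bf X}_1}^c}\,\Prv{\mathcal{E}_{\bf S}^c\cap\mathcal{E}_{{\bf X}_1}^c}.
\]

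Next I would lower bound the conditional expectation. On $\mathcal{E}_{\bf S}^c\cap\mathcal{E}_{{\bf X}_1}^c$ one has the three deterministic estimates $\|{\bf S}_1\|^2\ge n\sigma^2(1-\epsilon)$ (from $\mathcal{E}_{\bf S}^c$), $\|{\bf U}_1^{\ast}\|^2=n\sigma^2(1-2^{-2R_1})$ (since ${\bf U}_1^{\ast}\in\mathcal{S}_1$), and $\cos\sphericalangle({\bf S}_1,{\bf U}_1^{\ast})\ge\sqrt{1-2^{-2R_1}}\,(1-\epsilon)$ (from $\mathcal{E}_{{\bf X}_1}^c$ together with the quantizer rule). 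Writing $\inner{{\bf S}_1}{{\bf U}_1^{\ast}}=\|{\bf S}_1\|\,\|{\bf U}_1^{\ast}\|\,\cos\sphericalangle({\bf S}_1,{\bf U}_1^{\ast})$ and multiplying these bounds gives, pointwise on this event,
\[
  \frac1n\inner{{\bf S}_1}{{\bf U}_1^{\ast}}\;\ge\;\sigma^2(1-2^{-2R_1})\,(1-\epsilon)^{3/2}\;\ge\;\sigma^2(1-2^{-2R_1})\,(1-2\epsilon),
\]
where the last step uses $(1-\epsilon)^{3/2}\ge(1-\epsilon)^2\ge 1-2\epsilon$.

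Finally I would deal with the probability factor. Since $\mathcal{E}_{{\bf X}_1}\subseteq\mathcal{E}_{\bf X}$ we have $\Prv{\mathcal{E}_{\bf S}^c\cap\mathcal{E}_{{\bf X}_1}^c}\ge 1-\Prv{\mathcal{E}_{\bf S}}-\Prv{\mathcal{E}_{\bf X}}$, and Lemma~\ref{lm:vq-Pr-E1} and Lemma~\ref{lm:vq-Pr-E2} give $\Prv{\mathcal{E}_{\bf S}}<\delta$ and $\Prv{\mathcal{E}_{\bf X}}<6\delta$ once $n$ is large enough, so the good event has probability at least $1-7\delta$; combining the three displays then yields the claimed bound. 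I expect the only point needing real care to be the nonnegativity of $\inner{{\bf S}_1}{{\bf U}_1^{\ast}}$ invoked in the first display: it is exactly what lets us drop the contribution of the bad event rather than having to bound a possibly negative term, and it holds only because the quantizer returns either a codeword making an acute angle with the source or else the zero sequence. Everything else is bookkeeping.
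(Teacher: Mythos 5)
Your proposal is correct and follows essentially the same route as the paper: split off the event $\mathcal{E}_{\bf S}\cup\mathcal{E}_{\bf X}$ (whose contribution is nonnegative because ${\bf U}_1^{\ast}$ is either zero or at an acute angle to ${\bf S}_1$), bound $\|{\bf S}_1\|$, $\|{\bf U}_1^{\ast}\|$, and the cosine deterministically on the good event, and control the good event's probability via Lemmas~\ref{lm:vq-Pr-E1} and~\ref{lm:vq-Pr-E2} to get the $(1-7\delta)$ factor. Your final step, like the paper's own proof, actually needs $n$ large enough for those two probability lemmas to apply, so the ``every positive integer $n$'' phrasing is a quirk shared with the original rather than a defect of your argument.
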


\begin{proof}
  \begin{IEEEeqnarray*}{rCl}
    \frac{1}{n} \E{\inner{{\bf S}_1}{{\bf U}_1^{\ast}}} & = &
    \frac{1}{n} \underbrace{\E{\|{\bf S}_1\| \|{\bf U}_1^{\ast}\|
        \cos \sphericalangle ({\bf S}_1,{\bf U}_1^{\ast}) \Big|
        \mathcal{E}_{\bf S} \cup \mathcal{E}_{\bf X}}}_{\geq 0} \cdot
    \Prv{\mathcal{E}_{\bf S} \cup \mathcal{E}_{\bf X}}\\
    & & {} + \frac{1}{n} \E{\|{\bf S}_1\| \|{\bf U}_1^{\ast}\| \cos
      \sphericalangle ({\bf S}_1,{\bf U}_1^{\ast})
      \Big| \mathcal{E}_{\bf S}^c \cap \mathcal{E}_{\bf X}^c} \cdot
    \Prv{\mathcal{E}_{\bf S}^c \cap \mathcal{E}_{\bf X}^c}\\[3mm]
    & \geq & \frac{1}{n} \sqrt{n \sigma^2(1-\epsilon)}
      \sqrt{n\sigma^2(1-2^{-2R_1})} \sqrt{1-2^{-2R_1}}(1-\epsilon)
    \Prv{\mathcal{E}_{\bf S}^c \cap \mathcal{E}_{\bf X}^c}\\[3mm]
    & \geq & \sigma^2 (1-2^{-2R_1})(1-\epsilon)^2 \left( 1 -
      \Prv{\mathcal{E}_{\bf S} \cup \mathcal{E}_{\bf X}} \right)\\
    & \geq & \sigma^2 (1-2^{-2R_1})(1-2\epsilon) \left( 1 -
      \Prv{\mathcal{E}_{\bf S}} - \Prv{\mathcal{E}_{\bf X}} \right),
  \end{IEEEeqnarray*}
  where in the first equality the first expectation term is
  non-negative because conditioned on $\mathcal{E}_{\bf X}$ either
  ${\bf U}_1^{\ast} = {\bf 0}$ or, if ${\bf U}_1^{\ast} \neq {\bf 0}$,
  then $\cos \left( \sphericalangle ({\bf S}_1,{\bf U}_1^{\ast})
  \right) > 0$.

  By Lemma \ref{lm:vq-Pr-E1} and Lemma \ref{lm:vq-Pr-E2} it now
  follows that for every $\delta > 0$ and $0.3 > \epsilon > 0$ there
  exists an $n'(\delta,\epsilon) \in \Naturals$ such that for all $n >
  n'(\delta,\epsilon)$
  \begin{IEEEeqnarray*}{rCl}
    \hspace{30mm} \frac{1}{n} \E{\inner{{\bf S}_1}{{\bf
          U}_1^{\ast}}} & \geq & \sigma^2 (1-2^{-2R_1})(1-2\epsilon) (1
    - 7\delta ). \hspace{30mm} \qedhere
  \end{IEEEeqnarray*}
\end{proof}

\begin{lm}\label{lm2:vq-bd-D1genie}
  For every $\delta > 0$ and $0.3 > \epsilon > 0$, there exists an
  $n_2'(\delta,\epsilon) \in \Naturals$ such that for all $n > n_2'(\delta,\epsilon)$
  \begin{IEEEeqnarray*}{rCl}
    \frac{1}{n} \E{\inner{{\bf U}_1^{\ast}}{{\bf U}_2^{\ast}}} & \leq
    &\sigma^2 6 \delta + \sigma^2 \rho (1-2^{-2R_1})
    (1-2^{-2R_2}) (1+7\epsilon).
  \end{IEEEeqnarray*}
\end{lm}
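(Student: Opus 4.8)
The plan is to condition on the encoder error event $\mathcal{E}_{\bf X}$ and handle the two cases separately. Concretely, I would start from
\[
\frac{1}{n}\E{\inner{{\bf U}_1^{\ast}}{{\bf U}_2^{\ast}}} = \frac{1}{n}\E{\inner{{\bf U}_1^{\ast}}{{\bf U}_2^{\ast}}\,\big|\,\mathcal{E}_{\bf X}}\Prv{\mathcal{E}_{\bf X}} + \frac{1}{n}\E{\inner{{\bf U}_1^{\ast}}{{\bf U}_2^{\ast}}\,\big|\,\mathcal{E}_{\bf X}^c}\Prv{\mathcal{E}_{\bf X}^c}.
\]
Note that, unlike Lemma~\ref{lm1:vq-bd-D1genie}, this bound does not require conditioning on $\mathcal{E}_{\bf S}$: the inner product $\inner{{\bf U}_1^{\ast}}{{\bf U}_2^{\ast}}$ depends on the source only through $({\bf U}_1^{\ast},{\bf U}_2^{\ast})$, whose norms and mutual angle are governed entirely by $\mathcal{E}_{\bf X}$.

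For the first term I would use the crude pointwise bound $\inner{{\bf U}_1^{\ast}}{{\bf U}_2^{\ast}}\le\|{\bf U}_1^{\ast}\|\,\|{\bf U}_2^{\ast}\|\le n\sigma^2$, which holds because each quantizer output is either the all-zero sequence or lies on the sphere of radius $\sqrt{n\sigma^2(1-2^{-2R_i})}\le\sqrt{n\sigma^2}$. Combined with $\Prv{\mathcal{E}_{\bf X}}<6\delta$ from Lemma~\ref{lm:vq-Pr-E2}, valid for all $n>n_2'(\delta,\epsilon)$ with the $n_2'(\delta,\epsilon)$ furnished by that lemma, this contributes at most $6\sigma^2\delta$ — precisely the first summand in the claim.

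For the second term, the point is that on $\mathcal{E}_{\bf X}^c=\mathcal{E}_{{\bf X}_1}^c\cap\mathcal{E}_{{\bf X}_2}^c\cap\mathcal{E}_{({\bf X}_1,{\bf X}_2)}^c$ both codewords are genuine quantizer outputs, so $\|{\bf U}_i^{\ast}\|=\sqrt{n\sigma^2(1-2^{-2R_i})}$ exactly, and their mutual angle is pinned near $\tilde{\rho}$: by $\mathcal{E}_{({\bf X}_1,{\bf X}_2)}^c$ the quantity $\cos\sphericalangle({\bf U}_1^{\ast},{\bf U}_2^{\ast})$ is bounded above by $\tilde{\rho}$ up to the $7\epsilon$ slack. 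Writing $\inner{{\bf U}_1^{\ast}}{{\bf U}_2^{\ast}}=\|{\bf U}_1^{\ast}\|\,\|{\bf U}_2^{\ast}\|\cos\sphericalangle({\bf U}_1^{\ast},{\bf U}_2^{\ast})$, substituting $\tilde{\rho}=\rho\sqrt{(1-2^{-2R_1})(1-2^{-2R_2})}$, using $\Prv{\mathcal{E}_{\bf X}^c}\le 1$, and bounding the residual radical factors by $1$ where needed, the second term is at most $\sigma^2\rho(1-2^{-2R_1})(1-2^{-2R_2})(1+7\epsilon)$. Adding the two contributions gives the lemma.

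The only real care needed — and hence where I would be most explicit in the full write-up — is the bookkeeping around degenerate codewords: one must check that on $\mathcal{E}_{\bf X}^c$ neither ${\bf U}_1^{\ast}$ nor ${\bf U}_2^{\ast}$ collapses to ${\bf 0}$ (this is exactly why $\mathcal{E}_{{\bf X}_1}$ and $\mathcal{E}_{{\bf X}_2}$ are folded into $\mathcal{E}_{\bf X}$), and that on the complementary event the crude bound $\inner{{\bf U}_1^{\ast}}{{\bf U}_2^{\ast}}\le n\sigma^2$ remains legitimate even when one or both codewords are zero. There is no further probabilistic difficulty: everything beyond this is the norm/angle bookkeeping just described plus the single invocation of Lemma~\ref{lm:vq-Pr-E2}.
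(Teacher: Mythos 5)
Your proposal is correct and follows essentially the same route as the paper's own proof: the same decomposition on $\mathcal{E}_{\bf X}$ versus $\mathcal{E}_{\bf X}^c$, the same crude bound $\inner{{\bf U}_1^{\ast}}{{\bf U}_2^{\ast}}\le\|{\bf U}_1^{\ast}\|\,\|{\bf U}_2^{\ast}\|\le n\sigma^2$ combined with $\Prv{\mathcal{E}_{\bf X}}<6\delta$ from Lemma~\ref{lm:vq-Pr-E2}, and the same use of the angle condition from $\mathcal{E}_{({\bf X}_1,{\bf X}_2)}^c$ together with the exact codeword norms on the complementary event. The minor imprecision in passing from the additive $7\epsilon$ slack on $\cos\sphericalangle({\bf U}_1^{\ast},{\bf U}_2^{\ast})$ to the multiplicative $(1+7\epsilon)$ factor is present in the paper's derivation as well and is immaterial since $\epsilon$ is later sent to zero.
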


\begin{proof}
  \begin{IEEEeqnarray*}{rCl}
    \frac{1}{n} \E{\inner{{\bf U}_1^{\ast}}{{\bf U}_2^{\ast}}} & = &
    \frac{1}{n} \E{\inner{{\bf U}_1^{\ast}}{{\bf U}_2^{\ast}} \big| \mathcal{E}_{\bf X}}
    \Prv{\mathcal{E}_{\bf X}} + \frac{1}{n} \E{\inner{{\bf U}_1^{\ast}}{{\bf
          U}_2^{\ast}} \big| \mathcal{E}_{\bf X}^c} \Prv{\mathcal{E}_{\bf X}^c}\\[2mm]
    & \leq & \frac{1}{n} \E{\|{\bf U}_1^{\ast}\| \|{\bf U}_2^{\ast}\| \big|
      \mathcal{E}_{\bf X}} \Prv{\mathcal{E}_{\bf X}} + \frac{1}{n} \E{\inner{{\bf
          U}_1^{\ast}}{{\bf U}_2^{\ast}} \big| \mathcal{E}_{\bf X}^c}\\[2mm]
    & \leq & \sigma^2 \sqrt{(1-2^{-2R_1})(1-2^{-2R_2})}
    \Prv{\mathcal{E}_{\bf X}}\\
    & & {} + \frac{1}{n} \E{ \tilde{\rho}(1+7\epsilon) \sqrt{n \sigma^2
        (1-2^{-2R_1})} \sqrt{n \sigma^2 (1-2^{-2R_2})} \bigg|
      \mathcal{E}_{\bf X}^c}\\[2mm]
    & \leq & \sigma^2 \Prv{\mathcal{E}_{\bf X}} + \sigma^2 \rho
    (1-2^{-2R_1}) (1-2^{-2R_2}) (1+7\epsilon).
  \end{IEEEeqnarray*}
  Thus, it follows by Lemma \ref{lm:vq-Pr-E2} that for every $\delta >
  0$ and $0.3 > \epsilon > 0$ there exists an $n_2'(\delta,\epsilon)
  \in \Naturals$ such that for all $n > n_2'(\delta,\epsilon)$
  \begin{IEEEeqnarray*}{rCl}
    \hspace{21mm} \E{\inner{{\bf U}_1^{\ast}}{{\bf U}_2^{\ast}}} & \leq & \sigma^2
    6 \delta + \sigma^2 \rho (1-2^{-2R_1}) (1-2^{-2R_2})
    (1+7\epsilon). \hspace{21mm} \qedhere
  \end{IEEEeqnarray*}
\end{proof}

\begin{lm} \label{lm3:vq-bd-D1genie}
  For every $\delta > 0$ and $0.3 > \epsilon > 0$, there exists an
  $n'(\delta,\epsilon) \in \Naturals$ such that for all $n > n'(\delta,\epsilon)$
  \begin{IEEEeqnarray*}{rCl}
    \frac{1}{n} \E{\inner{{\bf S}_1}{{\bf U}_2^{\ast}}} & \geq &
    \sigma^2 \rho (1-2^{-2R_2}) (1-\epsilon)^3 - \sigma^2 \left(
      \epsilon + 21 \delta +  6 \delta \epsilon \right).
  \end{IEEEeqnarray*}
\end{lm}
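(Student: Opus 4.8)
The plan is to exploit the fact that the quantizer output ${\bf U}_2^{\ast}$ is a deterministic function of $({\bf S}_2,\mathcal{C}_2)$ alone, while the codebook $\mathcal{C}_2$ is drawn independently of $({\bf S}_1,{\bf S}_2)$ and in a way that is invariant under rotations of ${\bf S}_2$ and (trivially) under positive scalings of ${\bf S}_2$. Let $\bar{\bf u}({\bf s}_2) \triangleq \E{{\bf U}_2^{\ast} | {\bf S}_2 = {\bf s}_2}$. Since ${\bf U}_2^{\ast}$ depends neither on ${\bf S}_1$ nor on $\mathcal{C}_1$, we have $\E{{\bf U}_2^{\ast} | {\bf S}_1,{\bf S}_2} = \bar{\bf u}({\bf S}_2)$, hence $\E{\inner{{\bf S}_1}{{\bf U}_2^{\ast}}} = \E{\inner{{\bf S}_1}{\bar{\bf u}({\bf S}_2)}}$. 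The acceptance rule \eqref{eq:encoding} and the subsequent selection depend on ${\bf s}_2$ only through the cosines $\cos\sphericalangle({\bf s}_2,{\bf u}_2)$, so ${\bf U}_2^{\ast}(\cdot)$ is equivariant under joint rotations of $({\bf s}_2,\mathcal{C}_2)$ and invariant under positive scalings of ${\bf s}_2$, and ${\bf U}_2^{\ast}$ is always either ${\bf 0}$ or a codeword with $\cos\sphericalangle({\bf s}_2,{\bf U}_2^{\ast})>0$. It follows that $\bar{\bf u}({\bf s}_2) = \nu\,{\bf s}_2/\|{\bf s}_2\|$ for a single constant $\nu\ge 0$ independent of ${\bf s}_2$, so that $\E{\inner{{\bf S}_1}{{\bf U}_2^{\ast}}} = \nu\,\E{\inner{{\bf S}_1}{{\bf S}_2}/\|{\bf S}_2\|}$, and the proof reduces to lower-bounding the two factors separately.

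For $\nu$: note $\nu = \E{\|{\bf U}_2^{\ast}\|\cos\sphericalangle({\bf s}_2,{\bf U}_2^{\ast}) | {\bf S}_2 = {\bf s}_2}$. Conditionally on $\mathcal{E}_{{\bf X}_2}^c$ (the codebook $\mathcal{C}_2$ contains a codeword passing the test \eqref{eq:encoding}) we have $\|{\bf U}_2^{\ast}\| = \sqrt{n\sigma^2(1-2^{-2R_2})}$ and $\cos\sphericalangle({\bf s}_2,{\bf U}_2^{\ast}) \ge \sqrt{1-2^{-2R_2}}(1-\epsilon)$, so the integrand is at least $\sqrt{n\sigma^2}(1-2^{-2R_2})(1-\epsilon)$; conditionally on $\mathcal{E}_{{\bf X}_2}$ it is $0$. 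By the rotational and scale invariance just noted, $\Prv{\mathcal{E}_{{\bf X}_2} | {\bf S}_2 = {\bf s}_2}$ is the same for every ${\bf s}_2$ and hence equals $\Prv{\mathcal{E}_{{\bf X}_2}}$, which by Lemma \ref{lm:vq-Pr-E2} (and $\mathcal{E}_{{\bf X}_2}\subseteq\mathcal{E}_{\bf X}$) is below $6\delta$ for all $n$ past a threshold. Therefore $\nu \ge \sqrt{n\sigma^2}(1-2^{-2R_2})(1-\epsilon)(1-6\delta)$.

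For the source factor: since $(S_{1,k},S_{2,k})$ are jointly Gaussian with equal variances (Reduction \ref{rdc:source-normalization}), $\E{{\bf S}_1 | {\bf S}_2} = \rho\,{\bf S}_2$, so $\E{\inner{{\bf S}_1}{{\bf S}_2}/\|{\bf S}_2\|} = \rho\,\E{\|{\bf S}_2\|}$; restricting this expectation to $\mathcal{E}_{\bf S}^c$ and using $\sqrt{1-\epsilon}\ge 1-\epsilon$ together with Lemma \ref{lm:vq-Pr-E1} gives $\E{\|{\bf S}_2\|} \ge \sqrt{n\sigma^2}(1-\epsilon)(1-\delta)$. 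Multiplying the two nonnegative lower bounds, dividing by $n$, and using $(1-6\delta)(1-\delta)\ge 1-7\delta$, $(1-\epsilon)^2 \ge (1-\epsilon)^3$ and $\rho(1-2^{-2R_2})\le 1$ yields $\frac{1}{n}\E{\inner{{\bf S}_1}{{\bf U}_2^{\ast}}} \ge \sigma^2\rho(1-2^{-2R_2})(1-\epsilon)^3 - 7\sigma^2\delta$, which implies (in fact is somewhat stronger than) the asserted bound; the slightly weaker free error terms $\epsilon+21\delta+6\delta\epsilon$ are what one gets if, instead of the exact identity $\E{{\bf S}_1|{\bf S}_2}=\rho{\bf S}_2$, one bounds $\inner{{\bf S}_1}{{\bf S}_2}$ by splitting over $\mathcal{E}_{\bf S}$ and $\mathcal{E}_{\bf S}^c$ and controls the $\mathcal{E}_{\bf S}$-part via Lemma \ref{lm:vq-D-E1}, staying closer to the style of the proofs of Lemmas \ref{lm1:vq-bd-D1genie}--\ref{lm2:vq-bd-D1genie}. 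I expect the only genuine obstacle to be the symmetry step: one must argue carefully that the component of ${\bf U}_2^{\ast}$ transverse to ${\bf S}_2$ averages to zero over $\mathcal{C}_2$, since a crude worst-case bound on that transverse component would contribute an error of order $2^{-R_2}$, which does not vanish at finite rate and so cannot be absorbed into the $\epsilon$'s and $\delta$'s.
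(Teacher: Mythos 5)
Your proposal is correct, and in fact it yields a slightly stronger bound (an error term of order $7\sigma^2\delta$ rather than $\sigma^2(\epsilon+21\delta+6\delta\epsilon)$). The crucial insight you flag—that the component of ${\bf U}_2^{\ast}$ transverse to ${\bf S}_2$ averages to zero over the random codebook—is precisely the paper's key step, but the execution is genuinely different. The paper writes ${\bf u}_2^{\ast}=\nu_2{\bf s}_2+{\bf v}_2$, splits the expectation over $\mathcal{E}_{\bf S}\cup\mathcal{E}_{{\bf X}_2}$ and its complement, pays for the atypical part via Lemma~\ref{lm:vq-D-E1} and the bound $|\inner{{\bf v}}{{\bf w}}|\le\|{\bf v}\|^2+\|{\bf w}\|^2$, kills $\E{\inner{{\bf s}_1}{{\bf V}_2}}$ by a reflection symmetry of the codebook conditioned on $\mathcal{E}_{{\bf X}_2}^c$, and lower-bounds the surviving term through the empirical typicality bounds on $\|{\bf S}_1\|$ and $\cos\sphericalangle({\bf S}_1,{\bf S}_2)$ valid on $\mathcal{E}_{\bf S}^c$. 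You instead use exact conditional expectations: rotation-equivariance of the quantizer plus the rotation-invariant codebook law gives $\E{{\bf U}_2^{\ast}\mid{\bf S}_2={\bf s}_2}=\nu\,{\bf s}_2/\|{\bf s}_2\|$ with a single constant $\nu\ge0$, and joint Gaussianity gives $\E{{\bf S}_1\mid{\bf S}_2}=\rho{\bf S}_2$ exactly, so the quantity collapses to $\nu\rho\E{\|{\bf S}_2\|}/n$ and you only need the quantizer-failure probability and a crude lower bound on $\E{\|{\bf S}_2\|}$. This buys a cleaner argument (no conditioning of $\inner{{\bf S}_1}{{\bf S}_2}$ on typical sets) and a better constant; the paper's conditional-typicality route buys uniformity of technique, since the same decomposition over $\mathcal{E}_{\bf S},\mathcal{E}_{\bf X}$ is reused in Lemmas~\ref{lm1:vq-bd-D1genie} and \ref{lm2:vq-bd-D1genie} and in the superposition analysis, where exact conditional-mean identities are not available. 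If you write yours up, make two small points explicit: the equivariance step needs $n\ge2$ (so rotations fixing ${\bf s}_2$ force the conditional mean onto the span of ${\bf s}_2$) and almost-sure uniqueness of the quantizer's argmin; and the identity $\Prv{\mathcal{E}_{{\bf X}_2}\mid{\bf S}_2={\bf s}_2}=\Prv{\mathcal{E}_{{\bf X}_2}}$, which you use to import the bound of Lemma~\ref{lm:vq-Pr-E2}, is exactly the invariance the paper itself exploits in proving that lemma, so it is legitimately available.
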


\begin{proof}
  We begin with the following decomposition:
  \begin{IEEEeqnarray}{rCl}\label{eq:decmp-ES1U2*}
    \frac{1}{n} \E{\inner{{\bf S}_1}{{\bf U}_2^{\ast}}} & = &
    \frac{1}{n} \E{\inner{{\bf S}_1}{{\bf U}_2^{\ast}} \big|
      \mathcal{E}_{\bf S} \cup \mathcal{E}_{{\bf X}_2}}
    \Prv{\mathcal{E}_{\bf S} \cup \mathcal{E}_{{\bf X}_2}} \nonumber\\
    && {} + \frac{1}{n} \E{\inner{{\bf S}_1}{{\bf U}_2^{\ast}} \big|
      \mathcal{E}_{\bf S}^c \cap \mathcal{E}_{{\bf X}_2}^c}
    \Prv{\mathcal{E}_{\bf S}^c \cap \mathcal{E}_{{\bf X}_2}^c}.
  \end{IEEEeqnarray}
  The first term on the RHS of \eqref{eq:decmp-ES1U2*} is lower bounded as follows.
  \begin{IEEEeqnarray}{rCl}
    \frac{1}{n} \E{\inner{{\bf S}_1}{{\bf U}_2^{\ast}} \big|
      \mathcal{E}_{\bf S} \cup \mathcal{E}_{{\bf X}_2}} & &
    \Prv{\mathcal{E}_{\bf S} \cup \mathcal{E}_{{\bf X}_2}}\nonumber\\
    & & \stackrel{a)}{\geq} - \frac{1}{n} \E{\|{\bf S}_1\|^2 + \|{\bf U}_2^{\ast}\|^2
      \big| \mathcal{E}_{\bf S} \cup \mathcal{E}_{{\bf X}_2}} \Prv{\mathcal{E}_{\bf S} \cup
      \mathcal{E}_{{\bf X}_2}} \nonumber\\[2mm]
    & & \stackrel{b)}{\geq} - \frac{1}{n} \Big( \E{\|{\bf S}_1\|^2 \big|
      \mathcal{E}_{\bf S}} \Prv{\mathcal{E}_{\bf S}} \nonumber\\
    & & \qquad \quad {} + \E{\|{\bf S}_1\|^2 \big| \mathcal{E}_{\bf S}^c
      \cap \mathcal{E}_{{\bf X}_2}}
    \Prv{\mathcal{E}_{\bf S}^c \cap \mathcal{E}_{\bf X}} \nonumber\\[2mm]
    & & \qquad \quad {} + \|{\bf U}_2^{\ast}\|^2 (\Prv{\mathcal{E}_{\bf S}} +
    \Prv{\mathcal{E}_{\bf X}})\Big) \nonumber\\[2mm]
    & & \stackrel{c)}{\geq} - \Big( \sigma^2 \left( \epsilon + \Prv{\mathcal{E}_{\bf S}}
    \right) + \sigma^2 (1+\epsilon) \Prv{\mathcal{E}_{\bf X}} \nonumber\\[2mm]
    & & \hspace{9mm} {} + \sigma^2 (1-2^{-2R_2}) \left(
      \Prv{\mathcal{E}_{\bf S}} + \Prv{\mathcal{E}_{\bf X}}\right)\Big) \nonumber\\[3mm]
    & & \geq - \sigma^2 \big( \epsilon + 2\Prv{\mathcal{E}_{\bf S}} + (2 +
      \epsilon)\Prv{\mathcal{E}_{\bf X}} \big), \label{eq:bd-trm1-decmpES1U2*}
  \end{IEEEeqnarray}
  where in $a)$ we have used \eqref{eq:vq-bound inner}, in $b)$ we
  have used that $\mathcal{E}_{\bf X} \supseteq \mathcal{E}_{{\bf
      X}_2}$, and in $c)$ we have used Lemma \ref{lm:vq-D-E1}.

  We now turn to lower bounding the second term on the RHS of
  \eqref{eq:decmp-ES1U2*}. The probability term is lower bounded as follows
  \begin{IEEEeqnarray}{rCl}
    \Prv{\mathcal{E}_{\bf S}^c \cap \mathcal{E}_{{\bf X}_2}^c} & = & 1 -
    \Prv{\mathcal{E}_{\bf S} \cup \mathcal{E}_{{\bf X}_2}} \nonumber\\ 
    & \geq & 1 - \left( \Prv{\mathcal{E}_{\bf S}} +
      \Prv{\mathcal{E}_{\bf X}}\right). \label{eq:bd-trm21-decmpES1U2*}
  \end{IEEEeqnarray}
  To lower bound the expectation term, we represent ${\bf u}_i^{\ast}$
  as a scaled version of ${\bf s}_i$ corrupted by an additive
  ``quantization noise'' ${\bf v}_i$. More precisely,
  \begin{IEEEeqnarray}{rCl}\label{eq:decomp-u^ast}
    {\bf u}_i^{\ast} & = & \nu_i {\bf s}_i + {\bf v}_i \qquad \qquad
    \text{where} \qquad \nu_i = \frac{\| {\bf u}_i^{\ast} \|}{\| {\bf
        s}_i \|} \cos \sphericalangle ({\bf s}_i, {\bf u}_i^{\ast})
    \qquad i \in \{ 1,2 \}.
  \end{IEEEeqnarray}
  With this choice of $\nu_i$, the vector ${\bf v}_i$ is always
  orthogonal to ${\bf s}_i$. By \eqref{eq:decomp-u^ast}, the inner
  product $\inner{{\bf S}_1}{{\bf U}_2^{\ast}}$ can now be rewritten
  as $\nu_2 \inner{{\bf S}_1}{{\bf S}_2} + \inner{{\bf S}_1}{{\bf
      V}_2}$. Hence,
  \begin{IEEEeqnarray}{rl}
    \E{\inner{{\bf S}_1}{{\bf U}_2^{\ast}} \Big| \mathcal{E}_{\bf S}^c
      \cap
      \mathcal{E}_{{\bf X}_2}^c } & \nonumber\\
    & \hspace{-2cm} \stackrel{a)}{=} \textsf{E}_{{\bf S}_1,{\bf
        S}_2}\Bigg[ \textsf{E}_{\mathscr{C}_1,\mathscr{C}_2} \left[
      \nu_2 \inner{{\bf s}_1}{{\bf s}_2} \Big| ({\bf S}_1,{\bf S}_2) =
      ({\bf s}_1, {\bf s}_2), \mathcal{E}_{\bf S}^c \cap
      \mathcal{E}_{{\bf X}_2}^c \right] \nonumber\\
    & \hspace{-2cm} \qquad \qquad \; + \underbrace{
      \textsf{E}_{\mathscr{C}_1,\mathscr{C}_2} \left[ \inner{{\bf
            s}_1}{{\bf V}_2} \Big| ({\bf S}_1, {\bf S}_2) = ({\bf
          s}_1, {\bf s}_2), \mathcal{E}_{\bf S}^c \cap
        \mathcal{E}_{{\bf X}_2}^c \right]}_{=0} \Bigg] \nonumber\\[2mm]
    & \hspace{-2cm} = \textsf{E}_{{\bf S}_1,{\bf S}_2}\Bigg[ \frac{\|
      {\bf U}_2^{\ast} \|}{\| {\bf S}_2 \|} \inner{{\bf S}_1}{{\bf
        S}_2} \textsf{E}_{\mathscr{C}_1,\mathscr{C}_2} \left[ \cos
      \sphericalangle ({\bf s}_2, {\bf U}_2^{\ast}) \big| ({\bf S}_1,
      {\bf S}_2) = ({\bf s}_1, {\bf s}_2), \mathcal{E}_{\bf S}^c \cap
      \mathcal{E}_{{\bf X}_2}^c \right] \Bigg] \nonumber\\[2mm] 
    & \hspace{-2cm} \stackrel{b)}{\geq} \textsf{E}_{{\bf S}_1,{\bf
        S}_2}\Bigg[ \| {\bf U}_2^{\ast} \| \| {\bf S}_1 \| \cos \left(
      \sphericalangle ({\bf S}_1, {\bf S}_2) \right)
    \sqrt{1-2^{-2R_2}}(1-\epsilon) \Bigg|
    \mathcal{E}_{\bf S}^c \cap \mathcal{E}_{{\bf X}_2}^c \Bigg] \nonumber\\[2mm]
    & \hspace{-2cm} \stackrel{c)}{\geq} \sqrt{n \sigma^2
      (1-2^{-2R_2})} \sqrt{n \sigma^2 (1-\epsilon)} \rho (1-\epsilon)
    \sqrt{1-2^{-2R_2}}(1-\epsilon) \nonumber\\[2mm]
    & \hspace{-2cm} \geq n \rho \sigma^2 (1-2^{-2R_2})
    (1-\epsilon)^3, \label{eq:bd-trm22-decmpES1U2*}
  \end{IEEEeqnarray}
  where we have denoted by $\mathscr{C}_i$ the random codebook of user
  $i \in \{ 1,2 \}$, and where in $a)$ the second expectation term is
  zero because for every $({\bf s}_1, {\bf s}_2) \in \mathcal{E}_{\bf
    S}^c$
  \begin{IEEEeqnarray*}{rCl}
    \mat{E}_{\mathscr{C}_2} \left[ \inner{{\bf s}_1}{{\bf V}_2} \Big|
      ({\bf S}_1, {\bf S}_2) = ({\bf s}_1, {\bf s}_2), \mathcal{E}_{{\bf
          X}_2}^c \right] & = & 0.
  \end{IEEEeqnarray*}
  This holds since in the expectation over the codebooks
  $\mathscr{C}_2$ with conditioning on $\mathcal{E}_{{\bf X}_2}^c$,
  for every ${\bf v}_2 \in \Reals^n$ the sequences ${\bf v}_2$ and
  $-{\bf v}_2$ are equiprobable and thus their inner products with
  ${\bf s}_1$ cancel off each other. Inequality b) follows from lower
  bounding $\cos \sphericalangle ({\bf s}_2, {\bf U}_2^{\ast})$
  conditioned on $\mathcal{E}_{\bf X}^c$ combined with the fact that
  conditioned on $\mathcal{E}_{\bf S}^c$ the term $\cos
  \sphericalangle ({\bf S}_1,{\bf S}_2)$ is positive. Inequality c)
  follows from lower bounding $\| {\bf S}_1 \|$ and $\cos
  \sphericalangle ({\bf S}_1, {\bf S}_2)$ conditioned on
  $\mathcal{E}_{\bf S}^c$.

  Combining \eqref{eq:decmp-ES1U2*} with \eqref{eq:bd-trm1-decmpES1U2*},
  \eqref{eq:bd-trm21-decmpES1U2*} and \eqref{eq:bd-trm22-decmpES1U2*} gives
  \begin{IEEEeqnarray*}{rCl}
    \frac{1}{n} \E{\inner{{\bf S}_1}{{\bf U}_2^{\ast}}} & \geq & -
    \sigma^2 \left( \epsilon + 2\Prv{\mathcal{E}_{\bf S}} + (2 +
      \epsilon)\Prv{\mathcal{E}_{\bf X}} \right)\\
    & & {} + \sigma^2 \rho (1-2^{-2R_2}) (1-\epsilon)^3 \left( 1 -
      \left( \Prv{\mathcal{E}_{\bf S}} + \Prv{\mathcal{E}_{\bf X}}\right)
    \right)\\
    & \geq & \sigma^2 \rho (1-2^{-2R_2}) (1-\epsilon)^3 - \sigma^2 \left(
      \epsilon + 3\Prv{\mathcal{E}_{\bf S}} + (3 +
      \epsilon)\Prv{\mathcal{E}_{\bf X}} \right).
  \end{IEEEeqnarray*}
  Thus, by Lemma \ref{lm:vq-Pr-E1} and Lemma \ref{lm:vq-Pr-E2} it
  follows that for every $\delta > 0$ and $0.3 > \epsilon > 0$ there exists
  an $n'(\delta,\epsilon) \in \Naturals$ such that for all $n >
  n'(\delta,\epsilon)$
  \begin{IEEEeqnarray*}{rCl}
    \qquad \qquad \; \; \; \frac{1}{n} \E{\inner{{\bf S}_1}{{\bf
          U}_2^{\ast}}} & \geq & \sigma^2 \rho (1-2^{-2R_2})
    (1-\epsilon)^3 - \sigma^2 \left( \epsilon + 21 \delta +  6 \delta
      \epsilon \right). \qquad \qquad \; \; \; \qedhere
  \end{IEEEeqnarray*}
\end{proof}

The distortion of the genie-aided scheme is now upper bounded as
follows
\begin{IEEEeqnarray*}{rCl}
  \frac{1}{n} \E{\| {\bf S}_1 - \hat{\bf S}_1^{\textnormal{G}} \|^2}
  & = & \sigma^2 -2\gamma_{11} \frac{1}{n}\E{\inner{{\bf
        S}_1}{{\bf U}_1^{\ast}}} -2\gamma_{12} \frac{1}{n}\E{\inner{{\bf
        S}_1}{{\bf U}_2^{\ast}}} \\
  & & {} + \gamma_{11}^2 \sigma^2(1-2^{-2R_1}) + 2 \gamma_{11}
  \gamma_{12} \frac{1}{n}\E{\inner{{\bf U}_1^{\ast}}{{\bf
        U}_2^{\ast}}} \\
  & & {} + \gamma_{12}^2 \sigma^2(1-2^{-2R_2})\\[3mm]
  & \stackrel{a)}{\leq} & \sigma^2 2^{-2R_1}
  \frac{1-\rho^2(1-2^{-2R_2})}{1-\tilde{\rho}^2} +
  \xi'(\delta,\epsilon),
\end{IEEEeqnarray*}
where in $a)$ we have used Lemma \ref{lm1:vq-bd-D1genie}, Lemma
\ref{lm2:vq-bd-D1genie} and Lemma \ref{lm3:vq-bd-D1genie}, and where
\begin{IEEEeqnarray*}{rCl}
  \lim_{\delta,\epsilon \rightarrow 0} \xi'(\delta,\epsilon) & = & 0.
\end{IEEEeqnarray*}

\subsection{Proofs of Lemma \ref{lm:vq-Pr-E2}, Lemma
  \ref{lm:caps-single-error} and Lemma \ref{lm:caps-double-error}}

The proofs in this section rely on bounds from the geometry of sphere
packing. To this end, we denote by $C_n(\varphi)$ the surface area of
a polar cap of half angle $\varphi$ on an $\Reals^n$-sphere of unit
radius. An illustration of $C_n(\varphi)$ is given in Figure
\ref{fig:srf-plr-cp}.
\begin{figure}[h]
 \centering
 \psfrag{1}[cc][cc]{$1$}
 \psfrag{rn}[cc][cc]{$\Reals^n$}
 \psfrag{cn}[cc][cc]{$C_n(\varphi)$}
 \psfrag{p}[cc][cc]{$\varphi$}
 \epsfig{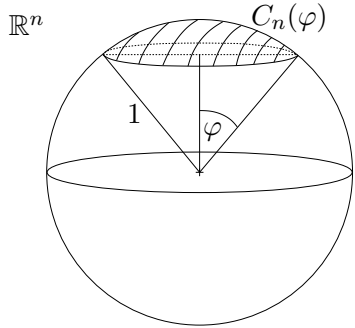}
 \caption{Polar cap of half angle $\varphi$.}
 \label{fig:srf-plr-cp}
\end{figure}
Upper and lower bounds on the surface area $C_n(\varphi)$ are given in
the following lemma.
\begin{lm}\label{lm:bounds-polar-caps}
For any $\varphi \in [0,\pi/2]$,
\begin{IEEEeqnarray*}{rCCCl}
  \frac{\Gamma \left( \frac{n}{2} +1\right) \sin^{(n-1)}\varphi}{n
    \Gamma \left( \frac{n+1}{2}\right) \sqrt{\pi} \cos \varphi} \left( 1 -
    \frac{1}{n} \tan^2 \varphi \right) & \leq &
  \frac{C_n(\varphi)}{C_n(\pi)} & \leq & \frac{\Gamma \left(
      \frac{n}{2} +1\right) \sin^{(n-1)}\varphi}{n \Gamma \left(
      \frac{n+1}{2}\right) \sqrt{\pi} \cos \varphi}.
\end{IEEEeqnarray*}
\end{lm}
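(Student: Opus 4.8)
The plan is to reduce the ratio of surface areas to a one-dimensional integral and then estimate that integral by elementary comparison arguments. Parametrising the unit sphere in $\Reals^n$ by the polar angle $\theta$ measured from the axis of the cap, the ``latitude'' at angle $\theta$ is a scaled copy of the unit sphere in $\Reals^{n-1}$ of radius $\sin\theta$, so $C_n(\varphi)=\omega_{n-2}\int_0^{\varphi}\sin^{n-2}\theta\,d\theta$, where $\omega_{n-2}$ is the surface area of the unit sphere in $\Reals^{n-1}$; in particular $C_n(\pi)=\omega_{n-2}\int_0^{\pi}\sin^{n-2}\theta\,d\theta$. The normalising constant $\omega_{n-2}$ therefore cancels, and since $\int_0^{\pi}\sin^{n-2}\theta\,d\theta=\sqrt{\pi}\,\Gamma((n-1)/2)/\Gamma(n/2)$ (the Wallis, or Beta-function, integral), one is left with
\[
  \frac{C_n(\varphi)}{C_n(\pi)}=\frac{\Gamma(n/2)}{\sqrt{\pi}\,\Gamma((n-1)/2)}\int_0^{\varphi}\sin^{n-2}\theta\,d\theta .
\]
Using $\Gamma(n/2+1)=(n/2)\Gamma(n/2)$ and $\Gamma((n+1)/2)=((n-1)/2)\Gamma((n-1)/2)$ one checks the identity $\frac{\Gamma(n/2)}{(n-1)\Gamma((n-1)/2)}=\frac{\Gamma(n/2+1)}{n\,\Gamma((n+1)/2)}$, so the claim reduces to sandwiching $\int_0^{\varphi}\sin^{n-2}\theta\,d\theta$ between $\frac{\sin^{n-1}\varphi}{(n-1)\cos\varphi}\bigl(1-\frac{\tan^2\varphi}{n}\bigr)$ and $\frac{\sin^{n-1}\varphi}{(n-1)\cos\varphi}$.

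The workhorse is the elementary identity obtained by differentiating $\theta\mapsto\sin^{m}\theta/\cos\theta$, namely $\frac{d}{d\theta}\bigl(\sin^{m}\theta/\cos\theta\bigr)=m\sin^{m-1}\theta+\sin^{m+1}\theta/\cos^{2}\theta$ for integer $m\ge1$. Integrating it from $0$ to $\varphi$ with $m=n-1$ gives
\[
  \frac{\sin^{n-1}\varphi}{\cos\varphi}=(n-1)\int_0^{\varphi}\sin^{n-2}\theta\,d\theta+\int_0^{\varphi}\frac{\sin^{n}\theta}{\cos^{2}\theta}\,d\theta .
\]
Discarding the last, non-negative, term yields at once the upper bound $\int_0^{\varphi}\sin^{n-2}\theta\,d\theta\le\sin^{n-1}\varphi/\bigl((n-1)\cos\varphi\bigr)$, which is precisely the claimed upper bound after multiplying by $\frac{\Gamma(n/2)}{\sqrt{\pi}\,\Gamma((n-1)/2)}$ and invoking the Gamma-function identity above.

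For the lower bound I estimate the remainder term from above. On $[0,\varphi]\subseteq[0,\pi/2]$ we have $\cos\theta\ge\cos\varphi$, hence $\int_0^{\varphi}\sin^{n}\theta/\cos^{2}\theta\,d\theta\le\cos^{-2}\varphi\int_0^{\varphi}\sin^{n}\theta\,d\theta$, and applying the same differentiation identity with $m=n+1$ (again discarding a non-negative term) gives $\int_0^{\varphi}\sin^{n}\theta\,d\theta\le\sin^{n+1}\varphi/\bigl((n+1)\cos\varphi\bigr)$. Combining these two estimates bounds the remainder by $\sin^{n+1}\varphi/\bigl((n+1)\cos^{3}\varphi\bigr)$, so
\[
  (n-1)\int_0^{\varphi}\sin^{n-2}\theta\,d\theta\ge\frac{\sin^{n-1}\varphi}{\cos\varphi}\Bigl(1-\frac{\tan^{2}\varphi}{n+1}\Bigr)\ge\frac{\sin^{n-1}\varphi}{\cos\varphi}\Bigl(1-\frac{\tan^{2}\varphi}{n}\Bigr),
\]
the last step using $1/(n+1)\le1/n$. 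Dividing by $n-1$ and rescaling by the Gamma factor completes the proof.

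I do not anticipate a genuine obstacle: the argument is elementary calculus throughout. The only points needing care are the bookkeeping in the Gamma-function identity, the harmless slack introduced when relaxing $1/(n+1)$ to $1/n$ in the lower bound, and the degenerate case $n=2$, where $\sin^{n-2}\equiv1$, the central integral equals $\varphi$, and the two inequalities reduce to $\varphi\le\tan\varphi$ and $\varphi\ge\tan\varphi\,(1-\tfrac12\tan^{2}\varphi)$ on $[0,\pi/2)$; at $\varphi=\pi/2$ the right-hand sides are $+\infty$ (upper bound) and non-positive (lower bound), so both inequalities hold trivially.
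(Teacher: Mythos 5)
Your proof is correct. The reduction $C_n(\varphi)/C_n(\pi)=\frac{\Gamma(n/2)}{\sqrt{\pi}\,\Gamma((n-1)/2)}\int_0^{\varphi}\sin^{n-2}\theta\,d\theta$ is the standard integral representation of the cap area, the Gamma-function bookkeeping ($\Gamma(n/2+1)=(n/2)\Gamma(n/2)$, $\Gamma((n+1)/2)=\tfrac{n-1}{2}\Gamma((n-1)/2)$) checks out, and the sandwich $\frac{\sin^{n-1}\varphi}{(n-1)\cos\varphi}\bigl(1-\tfrac{\tan^2\varphi}{n}\bigr)\le\int_0^{\varphi}\sin^{n-2}\theta\,d\theta\le\frac{\sin^{n-1}\varphi}{(n-1)\cos\varphi}$ follows cleanly from your identity $\frac{d}{d\theta}\bigl(\sin^{m}\theta/\cos\theta\bigr)=m\sin^{m-1}\theta+\sin^{m+1}\theta/\cos^{2}\theta$, applied once with $m=n-1$ (upper bound) and once more with $m=n+1$ together with $\cos\theta\ge\cos\varphi$ to control the remainder (lower bound, with the harmless relaxation $1/(n+1)\le 1/n$). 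Note, though, that the paper does not prove this lemma at all: it simply cites Shannon's 1959 sphere-packing paper, Inequality (27), where essentially the same bound on the relative cap area is established. So what you have produced is a genuinely self-contained elementary replacement for that citation, which is arguably a gain in completeness; the only (cosmetic) caveats are that your argument implicitly assumes $n\ge 2$ and $\varphi<\pi/2$ (at $\varphi=\pi/2$ both displayed bounds degenerate, as you observe, and the statement is vacuous there), neither of which matters for the way the lemma is used in the paper, where $n$ is a large blocklength and the relevant angles are bounded away from $\pi/2$.
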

\begin{proof}
  See \cite[Inequality (27)]{shannon58}.
\end{proof}

The ratio of the two gamma functions that appears in the upper bound
and the lower bound of Lemma \ref{lm:bounds-polar-caps} has the
following asymptotic series.
\begin{lm}\label{lm:gamma-series}
  \begin{IEEEeqnarray*}{rCl}
    \frac{\Gamma \left( x + \frac{1}{2} \right)}{\Gamma (x)} & = &
    \sqrt{x} \left( 1 - \frac{1}{8x} + \frac{1}{128x^2} +
      \frac{5}{1024x^3} - \frac{21}{32768x^4} + \ldots \right),
  \end{IEEEeqnarray*}
  and in particular
  \begin{IEEEeqnarray*}{rCl}
    \lim_{x \rightarrow \infty} \frac{\Gamma \left( x + \frac{1}{2}
      \right)}{\Gamma (x) \sqrt{x}} & = & 1.
  \end{IEEEeqnarray*}
\end{lm}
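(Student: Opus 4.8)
The plan is to obtain the expansion from Stirling's asymptotic series for the logarithm of the Gamma function: as $z\to\infty$,
\[
  \ln\Gamma(z) = \left(z-\frac12\right)\ln z - z + \frac12\ln(2\pi) + \sum_{k=1}^{m}\frac{B_{2k}}{2k(2k-1)}\,z^{-(2k-1)} + O\!\left(z^{-(2m+1)}\right),
\]
where $B_{2k}$ are the Bernoulli numbers. I would set $L(x)\triangleq \ln\Gamma(x+\frac12)-\ln\Gamma(x)$ and substitute $z=x+\frac12$ and $z=x$ above. The linear terms $-z$ and the constants $\frac12\ln(2\pi)$ cancel; expanding $\ln(x+\frac12)=\ln x+\ln(1+\frac1{2x})$ in powers of $1/x$, the remaining constant ($=\frac12$ from $x\ln(1+\frac1{2x})$) cancels against $-\frac12$, and one is left with
\[
  L(x) = \frac12\ln x + \frac{c_1}{x} + \frac{c_2}{x^2} + \cdots, \qquad c_1 = -\frac18,
\]
for explicit rationals $c_1,c_2,\dots$. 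Since $L(x)-\frac12\ln x\to0$, exponentiating gives at once $\Gamma(x+\frac12)/(\Gamma(x)\sqrt x)\to1$, which is the ``in particular'' assertion and is all that is actually used in the proof of Lemma~\ref{lm:bounds-polar-caps}.

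For the full series I would then exponentiate $L(x)$, writing $\Gamma(x+\frac12)/\Gamma(x)=\sqrt x\,\exp(c_1/x+c_2/x^2+\cdots)$ and expanding the exponential as a power series in $1/x$; the coefficient of $x^{-j}$ is a fixed polynomial in $c_1,\dots,c_j$, and a direct computation produces $-\frac18,\ \frac1{128},\ \frac5{1024},\ -\frac{21}{32768}$ for the coefficients of $x^{-1},\dots,x^{-4}$, as claimed. An independent check is available from the recursion $\Gamma(x+\frac32)/\Gamma(x+1)=\frac{x+1/2}{x}\,\Gamma(x+\frac12)/\Gamma(x)$: the claimed series must satisfy this identity, and matching coefficients of powers of $1/x$ on both sides gives a triangular recursion that pins the coefficients down uniquely, so verifying the first few entries is routine. (Equivalently, one may start from the Legendre duplication formula $\Gamma(2x)=2^{2x-1}\pi^{-1/2}\Gamma(x)\Gamma(x+\frac12)$, which rewrites the ratio as $\sqrt\pi\,\Gamma(2x)/(2^{2x-1}\Gamma(x)^2)$ and again reduces everything to Stirling; and for the limit alone the Beta-integral representation $\Gamma(x)\Gamma(\frac12)/\Gamma(x+\frac12)=\int_0^1 t^{x-1}(1-t)^{-1/2}\,dt\sim\sqrt{\pi/x}$, obtained by Watson's lemma, already suffices.)

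The only genuine difficulty is bookkeeping: the limit is a one-line consequence of $L(x)-\frac12\ln x\to0$, whereas confirming the coefficient $-\frac{21}{32768}$ requires carrying Stirling's series through the $z^{-5}$ term and tracking the expansion of the exponential with care. Since these coefficients are forced by the recursion above and carry no conceptual content, I would present the limit in full detail and simply record the higher-order coefficients, pointing to a standard reference for Stirling's series for their explicit values.
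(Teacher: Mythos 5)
Your proposal is correct, but it follows a genuinely different route from the paper. The paper's proof is a two-line reduction: it rewrites the ratio as $\frac{\Gamma(x+\frac12)}{\Gamma(x)}=\frac{x}{4^x}\binom{2x}{x}\sqrt{\pi}$ (via the double-factorial/duplication identity) and then quotes the known asymptotic series for the central binomial coefficient, $\binom{2x}{x}=\frac{4^x}{\sqrt{\pi x}}\bigl(1-\frac{1}{8x}+\frac{1}{128x^2}+\frac{5}{1024x^3}-\frac{21}{32768x^4}+\ldots\bigr)$, from Graham--Knuth--Patashnik; the stated series and the limit drop out immediately. You instead work with $L(x)=\ln\Gamma(x+\frac12)-\ln\Gamma(x)$ via Stirling's series, which yields the limit $\Gamma(x+\frac12)/(\Gamma(x)\sqrt{x})\to 1$ in a self-contained way (and transparently for non-integer $x$, where the paper's double-factorial identity is integer-flavored and really rests on Legendre duplication), and you correctly observe that this limit is the only part of the lemma used downstream — though it enters the probability bounds that are combined with Lemma~\ref{lm:bounds-polar-caps}, not the proof of that lemma itself. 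The trade-off is that for the explicit coefficients $-\frac18,\frac1{128},\frac5{1024},-\frac{21}{32768}$ you must exponentiate the Stirling expansion and do the bookkeeping (or, as you suggest, pin them down by the recursion $\frac{\Gamma(x+\frac32)}{\Gamma(x+1)}=\frac{x+\frac12}{x}\cdot\frac{\Gamma(x+\frac12)}{\Gamma(x)}$ and cite a standard reference), so in the end both proofs lean on a citation for the higher-order terms; yours buys a self-contained proof of the part that matters, the paper's buys brevity.
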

\begin{proof}
  We first note that
  \begin{IEEEeqnarray}{rCl}\label{eq:ratio-gamma1}
    \frac{\Gamma \left(x + \frac{1}{2}\right)}{\Gamma (x)} & = &
    \frac{(2x-1)!!}{2^x (x-1)!} \sqrt{\pi} \nonumber\\[3mm]
    & = & \frac{x}{4^x} \binom{2x}{x} \sqrt{\pi},
  \end{IEEEeqnarray}
  where $\xi !!$ denotes the double factorial of $\xi$. The proof now
  follows by combining \eqref{eq:ratio-gamma1} with
  \begin{IEEEeqnarray*}{rCl}
    \binom{2x}{x} & = & \frac{4^x}{\sqrt{\pi x}} \left( 1 -
      \frac{1}{8x} + \frac{1}{128x^2} + \frac{5}{1024x^3} -
      \frac{21}{32768x^4} + \ldots \right),
  \end{IEEEeqnarray*}
  which is given in \cite[Problem 9.60, p.~495]{graham-knuth-patashnik94}.
\end{proof}

Before starting with the proofs of this section, we give one more
lemma. To this end, whenever the vector-quantizer of Encoder $1$ does
not produce the all-zero sequence, denote by $\varsigma_1({\bf s}_1,
\mathcal{C}_1)$ the index of ${\bf u}_1^{\ast}$ in its codebook
$\mathcal{C}_1$. And whenever the vector-quantizer of Encoder $1$
produces the all-zero sequence, let $\varsigma_1({\bf s}_1,
\mathcal{C}_1) = 0$. Further, let $\lambda_1(\cdot)$ denote the
measure on the codeword sphere $\mathcal{S}_1$ induced by the uniform
distribution, and let $f^{\lambda_1}(\cdot)$ denote the density on
$\mathcal{S}_1$ with respect to $\lambda_1(\cdot)$. Similarly, for
Encoder 2 define $\varsigma_2({\bf s}_2, \mathcal{C}_2)$ and
$f^{\lambda_2}(\cdot)$.

\begin{lm}\label{lm:density-wrong-codeword}
  Conditional on $\varsigma_1({\bf S}_1, \mathscr{C}_1) = 1$, the
  density of ${\bf U}_1 (j)$ is upper bounded for every $j \in \{ 2,3,
  \ldots , 2^{nR_1}\}$ and at every point ${\bf u} \in \mathcal{S}_1$
  by twice the uniform density:
  \begin{IEEEeqnarray*}{rCl}
    f^{\lambda_1} \left( {\bf U}_1(j) = {\bf u} | \varsigma_1({\bf
        S}_1, \mathscr{C}_1) = 1 \right) & \leq & 2 \cdot
    \frac{1}{r_1^{n-1}C_n(\pi)}, \quad {\bf u} \in \mathcal{S}_1, \; j
    \in \{ 2,3, \ldots , 2^{nR_1}\}.
  \end{IEEEeqnarray*}
  And similarly for Encoder 2.
\end{lm}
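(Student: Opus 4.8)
The plan is to prove the bound pointwise in the source realization. Since the codebook is drawn independently of the source and the bound does not depend on ${\bf s}_1$, it suffices to fix ${\bf s}_1 \in \Reals^n$ and to show that, for every $j \in \{2,\ldots,2^{nR_1}\}$, the conditional law of ${\bf U}_1(j)$ given $\{\varsigma_1({\bf s}_1,\mathscr{C}_1)=1\}$ has density at most twice the uniform (i.e.\ unconditional) density with respect to the uniform law $Q$ on $\mathcal{S}_1$ from which the codewords are drawn; recalling that the uniform density is $1/(r_1^{n-1}C_n(\pi))$, this is exactly the assertion of the lemma. Fix such an ${\bf s}_1$, write $m = 2^{nR_1}$ and $\rho^{\ast} = \sqrt{1-2^{-2R_1}}$, and for ${\bf v} \in \mathcal{S}_1$ set $g({\bf v}) = \bigl|\cos\sphericalangle({\bf v},{\bf s}_1)-\rho^{\ast}\bigr|$ if $\rho^{\ast}(1-\epsilon) \le \cos\sphericalangle({\bf v},{\bf s}_1) \le \rho^{\ast}(1+\epsilon)$, and $g({\bf v}) = +\infty$ otherwise. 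By the definition of the vector-quantizer in \eqref{eq:encoding}, and because the all-zero fallback forces $\varsigma_1 = 0$, the event $\{\varsigma_1 = 1\}$ coincides, up to a $Q$-null set of ties, with $\{g({\bf U}_1(1)) < \infty\}\cap\bigcap_{i\ge2}\{g({\bf U}_1(1)) \le g({\bf U}_1(i))\}$; ties may be discarded because the cosines $\cos\sphericalangle({\bf U}_1(i),{\bf s}_1)$ of the i.i.d.\ uniform codewords are almost surely distinct.

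First I would condition in addition on the position ${\bf U}_1(1) = {\bf u}_1^{\ast}$ of the selected codeword. Let ${\bf V} \sim Q$ and $p({\bf v}) = \Prv{g({\bf V}) \ge g({\bf v})}$. Given ${\bf U}_1(1) = {\bf u}_1^{\ast}$ with $g({\bf u}_1^{\ast}) < \infty$, the event $\{\varsigma_1=1\}$ is the product event $\bigcap_{i\ge2}\{g({\bf U}_1(i)) \ge g({\bf u}_1^{\ast})\}$ over the i.i.d.\ uniform codewords ${\bf U}_1(2),\ldots,{\bf U}_1(m)$; hence, conditionally on $\{\varsigma_1=1\}$ and ${\bf U}_1(1) = {\bf u}_1^{\ast}$, those codewords remain i.i.d., each now uniform over the super-level set $\{{\bf v}\in\mathcal{S}_1 : g({\bf v}) \ge g({\bf u}_1^{\ast})\}$, which has $Q$-measure $p({\bf u}_1^{\ast})$. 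Consequently the conditional density (with respect to $Q$) of ${\bf U}_1(j)$ at ${\bf u}$ is $\mathbf{1}\{g({\bf u}) \ge g({\bf u}_1^{\ast})\}/p({\bf u}_1^{\ast})$, while Bayes' rule gives the conditional density (with respect to $Q$) of ${\bf U}_1(1)$ at ${\bf u}_1^{\ast}$ as $\mathbf{1}\{g({\bf u}_1^{\ast})<\infty\}\,p({\bf u}_1^{\ast})^{m-1}/Z$, with $Z = \Prv{\varsigma_1=1} = \E{\mathbf{1}\{g({\bf V})<\infty\}\,p({\bf V})^{m-1}}$. Integrating the product over ${\bf u}_1^{\ast}$, the density $\varrho_j$ of the conditional law of ${\bf U}_1(j)$ given $\{\varsigma_1=1\}$ with respect to $Q$ satisfies
\begin{equation*}
  \varrho_j({\bf u}) \;=\; \frac{1}{Z}\,\E{\mathbf{1}\{g({\bf V}) \le g({\bf u})\}\,\mathbf{1}\{g({\bf V})<\infty\}\,p({\bf V})^{m-2}}.
\end{equation*}

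The final step is a one-dimensional reduction. Since $\cos\sphericalangle({\bf V},{\bf s}_1)$ has no atoms under $Q$, the distribution function $G(t) = \Prv{g({\bf V}) \le t}$ is continuous, $\mu := \lim_{t\to\infty}G(t) = \Prv{g({\bf V})<\infty}$ (we may assume $\mu>0$, else $Z=0$ and the conditioning is vacuous), and $p({\bf v}) = 1-G(g({\bf v}))$ on $\{g({\bf v})<\infty\}$. The random variable $W = G(g({\bf V}))$, restricted to the event $\{g({\bf V})<\infty\}$ of probability $\mu$, is uniform on $[0,\mu]$; hence $Z = \int_0^{\mu}(1-w)^{m-1}\,dw = \bigl(1-(1-\mu)^m\bigr)/m$, and, with $\tau := \min\{G(g({\bf u})),\mu\} \le \mu$,
\begin{equation*}
  \E{\mathbf{1}\{g({\bf V})\le g({\bf u})\}\,\mathbf{1}\{g({\bf V})<\infty\}\,p({\bf V})^{m-2}} = \int_0^{\tau}(1-w)^{m-2}\,dw \;\le\; \frac{1-(1-\mu)^{m-1}}{m-1}.
\end{equation*}
Combining the two displays,
\begin{equation*}
  \varrho_j({\bf u}) \;\le\; \frac{m}{m-1}\cdot\frac{1-(1-\mu)^{m-1}}{1-(1-\mu)^m} \;\le\; \frac{m}{m-1} \;\le\; 2,
\end{equation*}
where the middle inequality uses $1-x^{m-1} \le 1-x^m$ for $x = 1-\mu \in [0,1]$ and the last uses $m = 2^{nR_1} \ge 2$. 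Translating back to $\lambda_1$ (multiplying by the uniform density $1/(r_1^{n-1}C_n(\pi))$) yields the stated bound, and the identical argument with $R_2$, $\mathcal{C}_2$, $\mathcal{S}_2$, $\lambda_2$ handles Encoder~2. The step I expect to be most delicate is the conditional-independence claim: one must check carefully that conditioning on both $\{\varsigma_1=1\}$ and the realization of the selected codeword restores the independence and the uniformity of the remaining codewords over the super-level set $\{g \ge g({\bf u}_1^{\ast})\}$ — this rests on the almost-sure distinctness of the codeword cosines (so ties are negligible) and on the continuity of $G$ underlying the change of variables $w = G(g({\bf v}))$.
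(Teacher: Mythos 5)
Your proof is correct, and while its skeleton matches the paper's (fix ${\bf s}_1$, condition on the selected codeword and its quantization score, observe that the remaining codewords are then i.i.d.\ uniform on the set of ``no better'' scores), the step that actually produces the constant $2$ is genuinely different. The paper argues geometrically and pointwise in the conditioning: given the winner's cosine value $a$, the excluded region $\mathscr{D}_a({\bf s}_1)$ lies entirely in the hemisphere $\{\cos\sphericalangle({\bf s}_1,\cdot)>0\}$ (here it is essential that $\sqrt{1-2^{-2R_1}}(1-\epsilon)>0$), hence occupies at most half of $\mathcal{S}_1$, so the uniform density on its complement is at most twice the uniform density on the sphere --- a bound valid for every fixed $a$. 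You instead carry out an exact Bayes/disintegration computation, reduce to one dimension via the probability integral transform $W=G(g({\bf V}))$, and evaluate the resulting Beta-type integrals, obtaining the sharper bound $m/(m-1)\le 2$ with $m=2^{nR_1}$. What your route buys: it uses no geometry at all beyond non-atomicity of the cosine (so it would survive even if the quantization window were not confined to a hemisphere), and it gives a constant that tends to $1$; what it costs is the more delicate bookkeeping you flagged --- the product-event structure of $\{\varsigma_1=1\}$ given ${\bf U}_1(1)$, the a.s.\ absence of ties, continuity of $G$, and the averaging over the winner's score (plus the harmless $m\ge 2$ requirement, vacuous otherwise) --- whereas the paper's hemisphere argument bypasses all of this with a single area comparison. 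Both correctly reduce the lemma for random ${\bf S}_1$ to the fixed-${\bf s}_1$ statement, since the bound is uniform in ${\bf s}_1$.
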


\begin{proof}

  We first write the conditional density as an average over $\cos
  \sphericalangle ({\bf S}_1, {\bf U}_1(1))$. Since conditioned on
  $\varsigma_1({\bf S}_1, \mathscr{C}_1) = 1$ we have $\cos
  \sphericalangle ({\bf S}_1, {\bf U}_1(1)) \in
  [\sqrt{1-2^{-2R_1}}(1-\epsilon), \sqrt{1-2^{-2R_1}}(1+\epsilon)]$,
  we obtain
  \begin{IEEEeqnarray}{l}\label{eq:vq-density-uij-1}
    f^{\lambda_1} \left( {\bf U}_1(j) = {\bf u} \big|
      \varsigma_1({\bf S}_1, \mathscr{C}_1) = 1 \right) \nonumber\\[2mm]
    \hspace{2mm}= \int_{{\bf s}_1 \in \Reals^n}
    \int_{\sqrt{1-2^{-2R_1}}(1-\epsilon)}^{\sqrt{1-2^{-2R_1}}(1+\epsilon)}
    f^{\lambda_1} \left({\bf U}_1(j) = {\bf u} \big| {\bf S}_1 = {\bf
        s}_1, \varsigma_1({\bf s}_1, \mathscr{C}_1) = 1,
      \cos \sphericalangle ({\bf s}_1, {\bf U}_1(1)) = a \right)\nonumber\\
    \hspace{45mm} f ({\bf S}_1 = {\bf s}_1, \cos \sphericalangle ({\bf
      s}_1, {\bf U}_1(1)) = a \big| \varsigma_1({\bf S}_1,
    \mathscr{C}_1) = 1 ) \d a \d {\bf s}_1. \qquad
  \end{IEEEeqnarray}
  The proof now follows by upper bounding the conditional density
  \begin{IEEEeqnarray*}{C}
    f^{\lambda_1} \left({\bf U}_1(j) = {\bf u} \big| {\bf S}_1 = {\bf
        s}_1, \varsigma_1({\bf s}_1, \mathscr{C}_1) = 1, \cos
      \sphericalangle ({\bf s}_1, {\bf U}_1(1)) = a \right).
  \end{IEEEeqnarray*}
  To this end, define for every $a \in
  [\sqrt{1-2^{-2R_1}}(1-\epsilon), \sqrt{1-2^{-2R_1}}(1+\epsilon)]$
  \begin{IEEEeqnarray*}{rCl}
    \mathscr{D}_a({\bf s}_1) & \triangleq & \left\{ {\bf u} \in
      \mathcal{S}_1: \left| \cos \sphericalangle ({\bf s}_1, {\bf u})
        - \sqrt{1-2^{-2R_1}} \right| \leq \left| a - \sqrt{1-2^{-2R_1}}
    \right| \right\},
  \end{IEEEeqnarray*}
  and
  \begin{IEEEeqnarray*}{rCl}
    \mathscr{D}_a^c({\bf s}_1) & \triangleq & \left\{ {\bf u} \in
      \mathcal{S}_1: \left| \cos \sphericalangle ({\bf s}_1, {\bf u})
        - \sqrt{1-2^{-2R_1}} \right| > \left| a - \sqrt{1-2^{-2R_1}}
      \right| \right\}.
  \end{IEEEeqnarray*}
  The conditional density can now be upper bounded by distinguishing
  between ${\bf u} \in \mathscr{D}_a({\bf s}_1)$ and ${\bf u} \in
  \mathscr{D}_a^c({\bf s}_1)$. If ${\bf u} \in \mathscr{D}_a({\bf
    s}_1)$, then the conditional density is zero because the fact that
  $\varsigma_1({\bf s}_1, \mathscr{C}_1)$ is $1$ implies that
  \begin{IEEEeqnarray*}{rCl}
    \left| \cos \sphericalangle ({\bf s}_1, {\bf U}_1(j)) -
      \sqrt{1-2^{-2R_1}} \right| & > & \left| a - \sqrt{1-2^{-2R_1}}
    \right|, \qquad \forall j \in \{ 2,3, \ldots ,2^{nR_1} \}.
  \end{IEEEeqnarray*}
  And if ${\bf u} \in \mathscr{D}_a^c({\bf s}_1)$ the conditional
  density is uniform over $\mathscr{D}_a^c({\bf s}_1)$, i.e.
  \begin{IEEEeqnarray*}{rCl}
    f^{\lambda_1} \left({\bf U}_1(j) = {\bf u} \big| {\bf S}_1 = {\bf
        s}_1, \varsigma_1({\bf s}_1,\mathscr{C}_1) = 1, \cos
      \sphericalangle ({\bf s}_1, {\bf U}_1(1)) = a \right) & = &
    \upsilon, \qquad {\bf u} \in \mathscr{D}_a^c({\bf s}_1),
  \end{IEEEeqnarray*}
  for some $\upsilon > 0$. Thus,
  \begin{IEEEeqnarray}{C} \label{eq:prf-vq-ub-cond-densty}
    f^{\lambda_1} \left({\bf U}_1(j) = {\bf u} \big| {\bf S}_1 = {\bf
        s}_1, \varsigma_1({\bf s}_1, \mathscr{C}_1) = 1, \cos
      \sphericalangle ({\bf s}_1, {\bf U}_1(1)) = a \right) \leq
    \upsilon, \nonumber\\[-1mm]
    \\[-1mm]
    \hspace{20mm}\forall {\bf u} \in \mathcal{S}_1, {\bf s}_1 \in
    \Reals^n, a \in [\sqrt{1-2^{-2R_1}}(1-\epsilon),
    \sqrt{1-2^{-2R_1}}(1+\epsilon)]. \nonumber
  \end{IEEEeqnarray}
  It now remains to upper bound $\upsilon$. To this end, notice that
  the surface area of $\mathscr{D}_a({\bf s}_1)$ never exceeds half
  the surface area of $\mathcal{S}_1$. This follows since
  $\sqrt{1-2^{-2R_1}}(1-\epsilon) > 0$, and therefore every ${\bf u}
  \in \mathscr{D}_a({\bf s}_1)$ satisfies $|\sphericalangle ({\bf
    s}_1,{\bf u})| < \pi/2$. Hence, the surface area of
  $\mathscr{D}_a^c({\bf s}_1)$ is always larger than half the surface
  area of $\mathcal{S}_1$ and therefore
  \begin{IEEEeqnarray}{C}\label{eq:vq-density-uij-2}
    \upsilon \leq 2 \cdot \frac{1}{r_1^{n-1}C_n(\pi)}.
  \end{IEEEeqnarray}
  Combining \eqref{eq:vq-density-uij-2} with
  \eqref{eq:prf-vq-ub-cond-densty} and \eqref{eq:vq-density-uij-1}
  proves the lemma.
\end{proof}

\subsubsection{Proof of Lemma \ref{lm:vq-Pr-E2}}\label{sec:prf-lm-Pr-E2}

We begin with the following decomposition
\begin{IEEEeqnarray*}{rCl}
  \Prv{\mathcal{E}_{\bf X}} & = & \Prv{\mathcal{E}_{\bf X} \cap \mathcal{E}_{\bf S}} +
  \Prv{\mathcal{E}_{\bf X} \cap \mathcal{E}_{\bf S}^c}\\
  & \leq & \Prv{\mathcal{E}_{\bf S}} + \Prv{\mathcal{E}_{{\bf X}_1} \cap
    \mathcal{E}_{\bf S}^c} + \Prv{\mathcal{E}_{{\bf X}_2} \cap \mathcal{E}_{\bf S}^c} +
  \Prv{\mathcal{E}_{({\bf X}_1,{\bf X}_2)} \cap \mathcal{E}_{{\bf X}_2}^c \cap
    \mathcal{E}_{{\bf X}_1}^c \cap \mathcal{E}_{\bf S}^c}\\
  & \leq & \Prv{\mathcal{E}_{\bf S}} + \Prv{\mathcal{E}_{{\bf X}_1}} +
  \Prv{\mathcal{E}_{{\bf X}_2}} + \Prv{\mathcal{E}_{({\bf X}_1,{\bf X}_2)} \cap
    \mathcal{E}_{{\bf X}_2}^c \cap \mathcal{E}_{{\bf X}_1}^c \cap \mathcal{E}_{\bf S}^c}.
\end{IEEEeqnarray*}
The proof of Lemma \ref{lm:vq-Pr-E2} now follows by showing that for
every $\delta > 0$ and $0.3 > \epsilon > 0$ there exists an
$n_2'(\delta,\epsilon) > 0$ such that for all $n > n_2'(\delta,\epsilon)$
\begin{IEEEeqnarray}{rCll}
  \Prv{\mathcal{E}_{{\bf X}_i}} & \leq & \delta, \qquad & i \in \{ 1,2 \}
  \label{eq:Pr-E2i}\\
  \Prv{\mathcal{E}_{({\bf X}_1,{\bf X}_2)} \cap \mathcal{E}_{\bf S}^c
    \cap \mathcal{E}_{{\bf X}_1}^c 
    \cap \mathcal{E}_{{\bf X}_2}^c} & \leq & 3
  \delta. \label{eq:Pr-E23-E21c-E22c}
\end{IEEEeqnarray}

\paragraph{Proof of \eqref{eq:Pr-E2i}:}\label{app:proof-RD}

We give the proof for $\mathcal{E}_{{\bf X}_1}$. Due to the symmetry
the proof for $\mathcal{E}_{{\bf X}_2}$ then follows by similar
arguments. 
Let $\mathcal{E}_{{\bf X}_1}(j)$ be the event that ${\bf U}_1(j)$ does not
have a typical angle to ${\bf S}_1$, i.e.
\begin{IEEEeqnarray*}{rCl}
  \mathcal{E}_{{\bf X}_1}(j) & = & \left\{ ({\bf s}_1, {\bf s}_2,
    \mathcal{C}_1, \mathcal{C}_2) : \Big| \cos \sphericalangle ({\bf
      u}_1(j),{\bf s}_1) - \sqrt{1-2^{-2R_1}} \Big| > \epsilon
    \sqrt{1-2^{-2R_1}} \right\}.
\end{IEEEeqnarray*}
Then,
\begin{IEEEeqnarray}{rCl}\label{eq:dec-E2i}
  \Prv{\mathcal{E}_{{\bf X}_1}} & = & \Prv{\mathcal{E}_{{\bf X}_1}
    \big| {\bf S}_1 = {\bf s}_1} \nonumber\\[2mm]
  & = & \Prv{\bigcap_{j=1}^{2^{nR_1}}
    \mathcal{E}_{{\bf X}_1}(j) \Bigg| {\bf S}_1 = {\bf
      s}_1}\nonumber\\
  & = & \prod_{j=1}^{2^{nR_1}}
  \Prv{\mathcal{E}_{{\bf X}_1}(j) \big| {\bf S}_1 = {\bf
      s}_1}\nonumber\\
  & \stackrel{a)}{=} & \prod_{j=1}^{2^{nR_1}}
  \Prv{\mathcal{E}_{{\bf X}_1}(j) }\nonumber\\
  & \stackrel{b)}{=} & \left( \Prv{\mathcal{E}_{{\bf X}_1}(1)}
  \right)^{2^{nR_1}}\nonumber\\[3mm]
  & = & \left(1 - \Prv{\mathcal{E}_{{\bf X}_1}^c(1)}\right)^{2^{nR_1}},
\end{IEEEeqnarray}
where in $a)$ we have used that the probability of $\mathcal{E}_{{\bf
    X}_1}(j)$ does not depend on ${\bf S}_1 = {\bf s}_1$, and in $b)$
we have used that all ${\bf U}_1(j)$ have the same distribution. To
upper bound \eqref{eq:dec-E2i} we now rewrite $\mathcal{E}_{{\bf
    X}_1}^c(1)$ as
\begin{IEEEeqnarray*}{rCl}
  \mathcal{E}_{{\bf X}_1}^c(1) & = & \left\{ ({\bf s}_1, {\bf s}_2,
    \mathcal{C}_1, \mathcal{C}_2) : \Big| \cos \sphericalangle ({\bf
      u}_1(1), {\bf s}_1) - \sqrt{1-2^{-2R_1}} \Big| \leq \epsilon
    \sqrt{1-2^{-2R_1}}
  \right\}\\
  & = & \left\{ ({\bf s}_1, {\bf s}_2, \mathcal{C}_1, \mathcal{C}_2) :
    \sqrt{1-2^{-2R_1}}(1-\epsilon) \leq \cos \sphericalangle ({\bf
      u}_1(1), {\bf s}_1) \leq
    \sqrt{1-2^{-2R_1}}(1+\epsilon) \right\}\\
  & = & \left\{ ({\bf s}_1, {\bf s}_2, \mathcal{C}_1, \mathcal{C}_2) :
    \cos \theta_{1,\text{max}} \leq \cos \sphericalangle ({\bf
      u}_1(1), {\bf s}_1) \leq \cos \theta_{1,\text{min}} \right\},
\end{IEEEeqnarray*}
where we have used the notation
\begin{IEEEeqnarray*}{rCl}
  \cos \theta_{1,\text{max}} \triangleq \sqrt{1-2^{-2R_1}} (1-\epsilon) &
  \quad \text{and} \quad & \cos \theta_{1,\text{min}} \triangleq
  \sqrt{1-2^{-2R_1}} (1+\epsilon).
\end{IEEEeqnarray*}
Hence, since ${\bf U}_1(1)$ is generated independently of ${\bf S}_1$
and distributed uniformly on $\mathcal{S}_1$,
\begin{IEEEeqnarray}{rCl}\label{eq:Pr-E2ic}
  \Prv{\mathcal{E}_{{\bf X}_1}^c(1)} & = & \frac{C_n(\theta_{1,\text{max}}) -
    C_n(\theta_{1,\text{min}})}{C_n(\pi)}.
\end{IEEEeqnarray}
Combining \eqref{eq:Pr-E2ic} with \eqref{eq:dec-E2i} gives
\begin{IEEEeqnarray}{rCl}\label{eq:deriv-bd-E2i}
  \Prv{\mathcal{E}_{{\bf X}_1}} & = & \left( 1 -
    \frac{C_n(\theta_{1,\text{max}}) -
      C_n(\theta_{1,\text{min}})}{C_n(\pi)} \right)^{2^{nR_1}}\nonumber\\
  & \stackrel{a)}{\leq} & \left( \exp \left( -
      \frac{C_n(\theta_{1,\text{max}}) -
        C_n(\theta_{1,\text{min}})}{C_n(\pi)} \right)
  \right)^{2^{nR_1}}\nonumber\\
  & \stackrel{b)}{\leq} & \exp \left( - 2^{nR_1} \frac{\Gamma \left(
        \frac{n}{2} +1\right)}{n \Gamma \left( \frac{n+1}{2}\right)
      \sqrt{\pi}} \left( \frac{\sin^{(n-1)}\theta_{1,\text{max}}}{\cos
        \theta_{1,\text{max}}} \left( 1 - \frac{1}{n} \tan^2
        \theta_{1,\text{max}} \right) -
      \frac{\sin^{(n-1)}\theta_{1,\text{min}}}{ \cos
        \theta_{1,\text{min}}} \right) \right)\nonumber\\
  & = & \exp \Bigg( - \frac{\Gamma \left( \frac{n}{2} +1\right)}{n
    \Gamma \left( \frac{n+1}{2}\right) \sqrt{\pi}} \Bigg( \frac{ 2^{n
      (R_1 + \log_2 (\sin \theta_{1,\text{max}}))}}{\sin
    \theta_{1,\text{max}} \cos \theta_{1,\text{max}}} \left( 1 -
    \frac{1}{n} \tan^2
    \theta_{1,\text{max}} \right) - \nonumber\\
  & & \qquad \qquad \qquad \qquad \qquad \qquad \qquad \qquad \qquad
  \qquad \qquad - \frac{2^{n (R_1 + \log_2 (\sin
      \theta_{1,\text{min}}))}}{\sin \theta_{1,\text{min}} \cos
    \theta_{1,\text{min}}} \Bigg) \Bigg)
\end{IEEEeqnarray}
where in $a)$ we have used that $1-x \leq \exp(-x)$, and in $b)$ we
have lower bounded $C_n(\theta_{1,\text{max}})/C_n(\pi)$ and upper
bounded $C_n(\theta_{1,\text{min}})/C_n(\pi)$ according to Lemma
\ref{lm:bounds-polar-caps}. It now follows from sphere-packing and
-covering, see e.g.~\cite{wyner67}, that for every $\epsilon > 0$ we
have $\Prv{\mathcal{E}_{{\bf X}_1}} \rightarrow 0$ as $n \rightarrow
\infty$. More precisely, this holds since the exponent on the RHS of
\eqref{eq:deriv-bd-E2i} grows exponentially in $n$. This follows since
on the one hand for large $n$
\begin{IEEEeqnarray*}{rCl}
  \frac{\Gamma \left(
        \frac{n}{2} +1\right)}{n \Gamma \left( \frac{n+1}{2}\right)
      \sqrt{\pi}} & \approx & \frac{1}{\sqrt{n 2\pi}},
\end{IEEEeqnarray*}
and on the other hand the term
\begin{IEEEeqnarray*}{rCl}
\frac{ 2^{n (R_1 + \log_2 (\sin \theta_{1,\text{max}}))}}{\sin
  \theta_{1,\text{max}} \cos \theta_{1,\text{max}}} \left( 1 -
  \frac{1}{n} \tan^2 \theta_{1,\text{max}} \right) - \frac{2^{n (R_1 +
    \log_2 (\sin \theta_{1,\text{min}}))}}{\sin \theta_{1,\text{min}}
  \cos \theta_{1,\text{min}}}
\end{IEEEeqnarray*}
grows exponentially in $n$. The latter holds since first of all
\begin{IEEEeqnarray*}{rCl}
  \left( 1 - \frac{1}{n} \tan^2 \theta_{1,\text{max}} \right) &
  \approx & 1 \qquad \text{for large } n,
\end{IEEEeqnarray*}
second, the denominators of the fractions are independent of $n$, and
third since
\begin{IEEEeqnarray*}{rCl}
 R_1 + \log_2 \left( \sin \theta_{1,\max} \right) \geq R_1 + \log_2
 \left( \sin \theta_{1,\min} \right),
\end{IEEEeqnarray*}
with $R_1 + \log_2 (\sin \theta_{1,\text{max}}) > 0$. That $R_1 +
\log_2 (\sin \theta_{1,\text{max}}) > 0$ can be seen as follows.
\begin{IEEEeqnarray*}{rCl}
  - \log_2 (\sin \theta_{1,\text{max}}) & = & -\log_2 ( \sqrt{1 -
    \cos^2 \theta_{1,\text{max}}})\\
& \stackrel{a)}{=} & -\frac{1}{2} \log_2 \left(
    2^{-2R_1} + \epsilon (2-\epsilon) (1-2^{-2R_1}) \right)\\
& < & -\frac{1}{2} \log_2 \left( 2^{-2R_1} \right)\\
& = & R_1,
\end{IEEEeqnarray*}
where in $a)$ we have used the definition of $\cos
\theta_{1,\text{max}}$. \hfill $\qed$

\paragraph{Proof of \eqref{eq:Pr-E23-E21c-E22c}:}\label{app:prf-typ-u1-u2}

By the notation in \eqref{eq:decomp-u^ast} we have
\begin{IEEEeqnarray}{rCl}
  \cos \sphericalangle ({\bf u}_1^{\ast}, {\bf u}_2^{\ast}) & = &
  \frac{\inner{{\bf u}_1^{\ast}}{{\bf u}_2^{\ast}}}{\| {\bf u}_1^{\ast} \|
  \|{\bf u}_2^{\ast}\|}\nonumber\\
  & = & \frac{1}{\| {\bf u}_1^{\ast}\| \| {\bf u}_2^{\ast} \|} \left(
    \nu_1 \nu_2 \inner{{\bf s}_1}{{\bf s}_2} + \nu_1 \inner{{\bf
        s}_1}{{\bf v}_2} + \nu_2 \inner{{\bf v}_1}{{\bf s}_2} +
    \inner{{\bf v}_1}{{\bf v}_2} \right), \qquad \label{eq:prf-inner-Ustar}
\end{IEEEeqnarray}
where we recall that $\nu_1$ is a function of $\| {\bf s}_1 \|$ and
$\cos \sphericalangle ({\bf s}_1, {\bf u}_1^{\ast})$ and similarly
$\nu_2$ is a function of $\| {\bf s}_2 \|$ and $\cos \sphericalangle
({\bf s}_2, {\bf u}_2^{\ast})$. Now, define the four events
\begin{IEEEeqnarray*}{rCl}
  \mathcal{A}_1 & = & \left\{ ({\bf s}_1, {\bf s}_2, \mathcal{C}_1,
    \mathcal{C}_2): \left| \tilde{\rho} - \frac{\nu_1 \nu_2}{\| {\bf
          u}_1^{\ast} \| \| {\bf u}_2^{\ast} \|} \inner{{\bf
          s}_1}{{\bf s}_2} \right| > 4 \epsilon \right\}\\
  \mathcal{A}_2 & = & \left\{ ({\bf s}_1, {\bf s}_2, \mathcal{C}_1,
    \mathcal{C}_2): \left| \frac{\nu_1}{\| {\bf u}_1^{\ast} \| \| {\bf
          u}_2^{\ast} \|} \inner{{\bf s}_1}{{\bf v}_2} \right| >
    \epsilon \right\}\\
  \mathcal{A}_3 & = & \left\{ ({\bf s}_1, {\bf s}_2, \mathcal{C}_1,
    \mathcal{C}_2): \left| \frac{\nu_2}{\| {\bf u}_1^{\ast} \| \| {\bf
          u}_2^{\ast} \|} \inner{{\bf s}_2}{{\bf v}_1} \right| >
    \epsilon \right\}\\
  \mathcal{A}_4 & = & \left\{ ({\bf s}_1, {\bf s}_2, \mathcal{C}_1,
    \mathcal{C}_2): \left| \frac{1}{\| {\bf u}_1^{\ast}
        \| \| {\bf u}_2^{\ast} \|} \inner{{\bf v}_1}{{\bf v}_2}
    \right| > \epsilon \right\}.
\end{IEEEeqnarray*}
Note that by \eqref{eq:prf-inner-Ustar}, $\mathcal{E}_{({\bf X}_1,{\bf X}_2)} \subset
(\mathcal{A}_1 \cup \mathcal{A}_2 \cup \mathcal{A}_3 \cup
\mathcal{A}_4)$. Thus,
\begin{IEEEeqnarray}{rCl}
  \Prv{\mathcal{E}_{({\bf X}_1,{\bf X}_2)} \cap \mathcal{E}_{\bf S}^c
    \cap \mathcal{E}_{{\bf X}_1}^c 
    \cap \mathcal{E}_{{\bf X}_2}^c} & \leq & \Prv{\mathcal{A}_1 \cap
    \mathcal{E}_{\bf S}^c \cap \mathcal{E}_{{\bf X}_1}^c \cap \mathcal{E}_{{\bf X}_2}^c} +
  \Prv{\mathcal{A}_2 \cap \mathcal{E}_{\bf S}^c \cap \mathcal{E}_{{\bf X}_1}^c
    \cap \mathcal{E}_{{\bf X}_2}^c} \nonumber\\
& & + \Prv{\mathcal{A}_3 \cap
    \mathcal{E}_{\bf S}^c \cap \mathcal{E}_{{\bf X}_1}^c \cap \mathcal{E}_{{\bf X}_2}^c} +
  \Prv{\mathcal{A}_4 \cap \mathcal{E}_{\bf S}^c \cap \mathcal{E}_{{\bf X}_1}^c
    \cap \mathcal{E}_{{\bf X}_2}^c} \nonumber\\
& \leq & \Prv{\mathcal{A}_1 | \mathcal{E}_{\bf S}^c \cap
  \mathcal{E}_{{\bf X}_1}^c \cap \mathcal{E}_{{\bf X}_2}^c} + \Prv{\mathcal{A}_2
  | \mathcal{E}_{\bf S}^c } + \Prv{\mathcal{A}_3 | \mathcal{E}_{\bf S}^c } \nonumber\\
& & + \Prv{\mathcal{A}_4 | \mathcal{E}_{\bf S}^c }.\label{eq:E23-dec}
\end{IEEEeqnarray}
The four terms on the RHS of \eqref{eq:E23-dec} are now bounded in the
following two lemmas.
\begin{lm}\label{lm:VQ-innerU-s1s2}
  For $\epsilon \leq 0.3$
  \begin{IEEEeqnarray*}{rCl}
    \Prv{\mathcal{A}_1 | \mathcal{E}_{\bf S}^c \cap
      \mathcal{E}_{{\bf X}_1}^c \cap \mathcal{E}_{{\bf X}_2}^c} & = & 0.
  \end{IEEEeqnarray*}
\end{lm}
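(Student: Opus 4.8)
The plan is to prove the lemma as a purely deterministic statement: I will show that the conditioning event $\mathcal{E}_{\bf S}^c \cap \mathcal{E}_{{\bf X}_1}^c \cap \mathcal{E}_{{\bf X}_2}^c$ is \emph{disjoint} from $\mathcal{A}_1$, whence the conditional probability is zero.

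First I would record what each of the three complementary events gives us pointwise. On $\mathcal{E}_{{\bf X}_i}^c$ the set $\mathcal{F}({\bf s}_i, \mathcal{C}_i)$ is nonempty, so by the definition of the vector-quantizer the codeword ${\bf u}_i^{\ast}$ is selected from $\mathcal{F}({\bf s}_i, \mathcal{C}_i)$ and is in particular nonzero, with
\[
  \sqrt{1-2^{-2R_i}}(1-\epsilon) \;\leq\; \cos \sphericalangle ({\bf s}_i, {\bf u}_i^{\ast}) \;\leq\; \sqrt{1-2^{-2R_i}}(1+\epsilon), \qquad i \in \{1,2\}.
\]
On $\mathcal{E}_{\bf S}^c$ the sequences ${\bf s}_1, {\bf s}_2$ are nonzero and, using $\rho \geq 0$ from Reduction~\ref{rdc:source-normalization},
\[
  \rho(1-\epsilon) \;\leq\; \cos \sphericalangle ({\bf s}_1, {\bf s}_2) \;\leq\; \rho(1+\epsilon).
\]

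Next I would simplify the quantity inside $\mathcal{A}_1$. Substituting $\nu_i = \frac{\| {\bf u}_i^{\ast} \|}{\| {\bf s}_i \|}\cos \sphericalangle ({\bf s}_i, {\bf u}_i^{\ast})$ and $\inner{{\bf s}_1}{{\bf s}_2} = \|{\bf s}_1\|\,\|{\bf s}_2\|\cos \sphericalangle ({\bf s}_1,{\bf s}_2)$ yields the identity
\[
  \frac{\nu_1 \nu_2}{\| {\bf u}_1^{\ast} \| \| {\bf u}_2^{\ast} \|} \inner{{\bf s}_1}{{\bf s}_2} \;=\; \cos \sphericalangle ({\bf s}_1, {\bf u}_1^{\ast})\, \cos \sphericalangle ({\bf s}_2, {\bf u}_2^{\ast})\, \cos \sphericalangle ({\bf s}_1, {\bf s}_2).
\]
On the conditioning event the three factors are nonnegative and lie in the intervals displayed above, so their product lies between $\tilde{\rho}(1-\epsilon)^3$ and $\tilde{\rho}(1+\epsilon)^3$, where $\tilde{\rho} = \rho \sqrt{(1-2^{-2R_1})(1-2^{-2R_2})}$. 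Hence, using $0 \leq \tilde{\rho} \leq 1$,
\[
  \left| \tilde{\rho} - \frac{\nu_1 \nu_2}{\| {\bf u}_1^{\ast} \| \| {\bf u}_2^{\ast} \|} \inner{{\bf s}_1}{{\bf s}_2} \right| \;\leq\; \tilde{\rho}\bigl((1+\epsilon)^3 - 1\bigr) \;\leq\; (1+\epsilon)^3 - 1 \;=\; 3\epsilon + 3\epsilon^2 + \epsilon^3.
\]
For $\epsilon \leq 0.3$ we bound $3\epsilon^2 \leq 0.9\,\epsilon$ and $\epsilon^3 \leq 0.09\,\epsilon$, so the right-hand side is at most $3.99\,\epsilon < 4\epsilon$. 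Therefore no tuple in $\mathcal{E}_{\bf S}^c \cap \mathcal{E}_{{\bf X}_1}^c \cap \mathcal{E}_{{\bf X}_2}^c$ satisfies the defining inequality of $\mathcal{A}_1$, which proves the lemma.

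I do not expect a genuine obstacle here: the claim is an elementary estimate. The only two points that require a little care are (i) ensuring that ${\bf u}_1^{\ast}$ and ${\bf u}_2^{\ast}$ are actually nonzero so that the angles and the $\nu_i$ are well defined — which is precisely what conditioning on $\mathcal{E}_{{\bf X}_1}^c \cap \mathcal{E}_{{\bf X}_2}^c$ provides — and (ii) keeping all quantities nonnegative when taking products of bounds, which rests on the normalization $\rho \geq 0$.
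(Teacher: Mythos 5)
Your proof is correct and follows essentially the same route as the paper: rewrite the quantity in $\mathcal{A}_1$ as the product $\cos \sphericalangle ({\bf s}_1, {\bf u}_1^{\ast})\cos \sphericalangle ({\bf s}_2, {\bf u}_2^{\ast})\cos \sphericalangle ({\bf s}_1, {\bf s}_2)$, sandwich it between $\tilde{\rho}(1-\epsilon)^3$ and $\tilde{\rho}(1+\epsilon)^3$ using the three complementary events, and check that for $\epsilon \leq 0.3$ the deviation from $\tilde{\rho}$ is strictly below $4\epsilon$, so the conditioning event is disjoint from $\mathcal{A}_1$. The only cosmetic difference is your explicit numerical bound $3\epsilon+3\epsilon^2+\epsilon^3 \leq 3.99\,\epsilon$ in place of the paper's $\tilde{\rho}(1+\epsilon)^3 \leq \tilde{\rho}(1+4\epsilon)$, which is immaterial.
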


\begin{proof}
  We first note that the term in the definition of $\mathcal{A}_1$ can
  be rewritten as
  \begin{IEEEeqnarray}{rCl}\label{eq:prf-lm3-rw-A1}
    \frac{\nu_1 \nu_2}{\| {\bf u}_1^{\ast} \| \| {\bf u}_2^{\ast} \|}
    \inner{{\bf s}_1}{{\bf s}_2} & = & \cos \sphericalangle ({\bf
      s}_1, {\bf u}_1^{\ast}) \cos \sphericalangle ({\bf s}_2, {\bf
      u}_2^{\ast}) \cos \sphericalangle ( {\bf s}_1, {\bf s}_2).
  \end{IEEEeqnarray}
  We can now upper and lower bound the RHS of \eqref{eq:prf-lm3-rw-A1}
  for $({\bf s}_1, {\bf s}_2, \mathcal{C}_1, \mathcal{C}_2) \in
  \mathcal{E}_{\bf S}^c \cap \mathcal{E}_{{\bf X}_1}^c \cap
  \mathcal{E}_{{\bf X}_2}^c$ by noticing that $({\bf s}_1, {\bf s}_2,
  \mathcal{C}_1, \mathcal{C}_2) \in \mathcal{E}_{\bf S}^c$ implies
  \begin{IEEEeqnarray*}{rCl}
    \left| \cos \sphericalangle ( {\bf s}_1, {\bf s}_2) - \rho
    \right| & < & \rho \epsilon,
  \end{IEEEeqnarray*}
  that $({\bf s}_1, {\bf s}_2, \mathcal{C}_1,
  \mathcal{C}_2) \in \mathcal{E}_{{\bf X}_1}^c$ implies
  \begin{IEEEeqnarray*}{rCl}
    \left| \sqrt{1-2^{-2R_1}} - \cos \sphericalangle ({\bf s}_1, {\bf
        u}_1^{\ast}) \right| & < & \epsilon \sqrt{1-2^{-2R_1}},
  \end{IEEEeqnarray*}
  and that $({\bf s}_1, {\bf s}_2,
  \mathcal{C}_1, \mathcal{C}_2) \in \mathcal{E}_{{\bf X}_2}^c$ implies
  \begin{IEEEeqnarray*}{rCl}
    \left| \sqrt{1-2^{-2R_2}} - \cos \sphericalangle ({\bf s}_2, {\bf
        u}_2^{\ast}) \right| & < & \epsilon \sqrt{1-2^{-2R_2}}.
  \end{IEEEeqnarray*}
  Hence, combined with \eqref{eq:prf-lm3-rw-A1} this gives
  \begin{IEEEeqnarray*}{rCl}
    \hspace{10mm} \tilde{\rho} (1-\epsilon)^3 & \leq \frac{\nu_1 \nu_2}{\| {\bf
        u}_1^{\ast} \| \| {\bf u}_2^{\ast} \|} \inner{{\bf s}_1}{{\bf
        s}_2} \leq & \tilde{\rho} (1+\epsilon)^3, \hspace{10mm} ({\bf s}_1,
    {\bf s}_2, \mathcal{C}_1, \mathcal{C}_2) \in \mathcal{E}_{{\bf
        S}}^c \cap \mathcal{E}_{{\bf X}_1}^c \cap \mathcal{E}_{{\bf X}_2}^c.
  \end{IEEEeqnarray*}
  The LHS can be lower bounded by $\tilde{\rho} (1-3\epsilon) \leq
  \tilde{\rho}(1-\epsilon)^3$, and the RHS can be upper bounded by
  $\tilde{\rho} (1+\epsilon)^3 \leq \tilde{\rho}(1+4\epsilon)$ whenever
  $\epsilon \leq 0.3$. Hence, for $\epsilon \leq 0.3$
  \begin{IEEEeqnarray*}{rCl}
    \hspace{40mm} \left| \tilde{\rho} - \frac{\nu_1 \nu_2}{\| {\bf u}_1^{\ast} \| \|
        {\bf u}_2^{\ast} \|} \inner{{\bf s}_1}{{\bf s}_2} \right| & \leq &
    4 \tilde{\rho} \epsilon \leq 4 \epsilon. \hspace{39mm} \qedhere\\[2mm]
  \end{IEEEeqnarray*}
\end{proof}

\begin{lm}\label{lm:VQ-innerU-sv}
  For every $\delta > 0$ and $\epsilon > 0$ there exists an
  $n_{\mathcal{A}}'(\delta,\epsilon)$ such that for all $n >
  n_{\mathcal{A}}'(\delta,\epsilon)$
\begin{IEEEeqnarray*}{rCl}
  \Prv{\mathcal{A}_2 | \mathcal{E}_{\bf S}^c} < \delta, & \qquad \qquad
  \Prv{\mathcal{A}_3 | \mathcal{E}_{\bf S}^c} < \delta, & \qquad \qquad
  \Prv{\mathcal{A}_4 | \mathcal{E}_{\bf S}^c} < \delta.
\end{IEEEeqnarray*}
\end{lm}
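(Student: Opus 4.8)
The plan is to show that, conditionally on $\mathcal{E}_{\bf S}^c$, each of the three normalized inner products defining $\mathcal{A}_2$, $\mathcal{A}_3$, $\mathcal{A}_4$ is bounded by a quantity that is $O(1/\sqrt{n})$ with conditional probability tending to one, uniformly in the realization of $({\bf S}_1,{\bf S}_2)$. First I would record the elementary consequences of the decomposition \eqref{eq:decomp-u^ast}: since ${\bf v}_i\perp{\bf s}_i$ we have $\|{\bf u}_i^{\ast}\|^2=\nu_i^2\|{\bf s}_i\|^2+\|{\bf v}_i\|^2$, whence $\|{\bf v}_i\|\leq\|{\bf u}_i^{\ast}\|$ and $|\nu_i|=(\|{\bf u}_i^{\ast}\|/\|{\bf s}_i\|)\,|\cos\sphericalangle({\bf s}_i,{\bf u}_i^{\ast})|\leq\|{\bf u}_i^{\ast}\|/\|{\bf s}_i\|$. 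When $\mathcal{F}({\bf s}_i,\mathcal{C}_i)=\emptyset$ the codeword ${\bf u}_i^{\ast}$, and with it ${\bf v}_i$ and $\nu_i$, is the zero vector, so none of $\mathcal{A}_2,\mathcal{A}_3,\mathcal{A}_4$ occurs; we may therefore assume throughout that ${\bf u}_1^{\ast}\neq{\bf 0}$ and ${\bf u}_2^{\ast}\neq{\bf 0}$.

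The crucial step is to identify the conditional law of the ``quantization-noise direction'' ${\bf v}_i/\|{\bf v}_i\|$. Conditional on ${\bf S}_i={\bf s}_i$, the codebook $\mathcal{C}_i$ is a collection of independent points distributed uniformly on $\mathcal{S}_i$, and the vector-quantizer selects among them by a rule that depends only on the angles $\sphericalangle({\bf s}_i,\cdot)$. Hence the conditional distribution of ${\bf u}_i^{\ast}$ given ${\bf s}_i$ is invariant under all rotations of $\Reals^n$ that fix ${\bf s}_i$, and this invariance survives further conditioning on the symmetry-invariant events $\mathcal{E}_{\bf S}^c$ and $\{\mathcal{F}({\bf s}_i,\mathcal{C}_i)\neq\emptyset\}$. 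Consequently ${\bf v}_i/\|{\bf v}_i\|$ is, conditionally, uniformly distributed on the unit sphere of the $(n-1)$-dimensional subspace orthogonal to ${\bf s}_i$. Moreover, since $\mathcal{C}_1$ and $\mathcal{C}_2$ are independent of each other and of the source, and the $i$-th quantizer uses only $({\bf s}_i,\mathcal{C}_i)$, the pair $({\bf v}_1,{\bf v}_2)$ is conditionally independent given $({\bf S}_1,{\bf S}_2)$.

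Next I would invoke the standard concentration of a uniform-sphere vector against a fixed direction: if ${\bf v}$ is uniform on the unit sphere of an $m$-dimensional subspace and ${\bf w}$ is a fixed vector in that subspace, then $\E{\cos^2\sphericalangle({\bf w},{\bf v})}=1/m$, so by Markov's inequality $\Prv{|\cos\sphericalangle({\bf w},{\bf v})|>t}\leq 1/(mt^2)$ (Lemma~\ref{lm:bounds-polar-caps} would give a sharper exponential bound, but this crude estimate already suffices). Apply this with $m=n-1$, ${\bf v}={\bf v}_2/\|{\bf v}_2\|$, and ${\bf w}=\Pi_2{\bf s}_1$, the orthogonal projection of ${\bf s}_1$ onto the subspace orthogonal to ${\bf s}_2$; using $\inner{{\bf s}_1}{{\bf v}_2}=\inner{\Pi_2{\bf s}_1}{{\bf v}_2}$ together with $\|\Pi_2{\bf s}_1\|\leq\|{\bf s}_1\|$, $\|{\bf v}_2\|\leq\|{\bf u}_2^{\ast}\|$ and $|\nu_1|\leq\|{\bf u}_1^{\ast}\|/\|{\bf s}_1\|$, the quantity defining $\mathcal{A}_2$ is bounded by $|\cos\sphericalangle(\Pi_2{\bf s}_1,{\bf v}_2)|$, which exceeds $\epsilon$ with conditional probability at most $1/((n-1)\epsilon^2)$. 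The bound for $\mathcal{A}_3$ follows by exchanging the roles of the two users. For $\mathcal{A}_4$ I would additionally condition on $\mathcal{C}_1$, hence on ${\bf v}_1$; by the conditional independence noted above ${\bf v}_2/\|{\bf v}_2\|$ is still uniform on its sphere, and the same estimate with ${\bf w}=\Pi_2{\bf v}_1$ (so that $\inner{{\bf v}_1}{{\bf v}_2}=\inner{\Pi_2{\bf v}_1}{{\bf v}_2}$ and $\|\Pi_2{\bf v}_1\|\leq\|{\bf v}_1\|\leq\|{\bf u}_1^{\ast}\|$) bounds the quantity defining $\mathcal{A}_4$ by $|\cos\sphericalangle(\Pi_2{\bf v}_1,{\bf v}_2)|$, again exceeding $\epsilon$ only off an event of conditional probability $\leq 1/((n-1)\epsilon^2)$. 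Taking $n_{\mathcal{A}}'(\delta,\epsilon)$ larger than $1+1/(\delta\epsilon^2)$ delivers all three bounds.

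I expect the only delicate point to be the rigorous justification of the rotational-symmetry claim — in particular verifying that every conditioning event used ($\mathcal{E}_{\bf S}^c$, $\{\mathcal{F}({\bf s}_i,\mathcal{C}_i)\neq\emptyset\}$, the value of the attained quantization angle, and, for $\mathcal{A}_4$, the entire codebook $\mathcal{C}_1$) is invariant under the group of rotations fixing ${\bf s}_i$, so that the conditional law of ${\bf v}_i/\|{\bf v}_i\|$ really is the uniform law on the orthogonal sphere and is independent of the fixed vector against which it is paired. Everything downstream is the routine $O(1/\sqrt{n})$ concentration together with the three norm inequalities collected in the first step.
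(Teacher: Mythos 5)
Your proposal is correct, and its skeleton coincides with the paper's proof: the same norm inequalities reduce the quantities defining $\mathcal{A}_2$, $\mathcal{A}_3$, $\mathcal{A}_4$ to $|\cos\sphericalangle(\cdot,{\bf v}_2)|$ with the first argument projected onto the subspace orthogonal to ${\bf s}_2$ (exactly the bounds \eqref{eq:lm4-bd1} and \eqref{eq:lm4-bd3}), the same rotational-symmetry claim makes ${\bf V}_2/\|{\bf V}_2\|$ conditionally uniform on the unit sphere of that $(n-1)$-dimensional subspace, and for $\mathcal{A}_4$ you condition additionally on $\mathcal{C}_1$, just as the paper does. The only genuine divergence is the final tail estimate: the paper bounds the probability that a uniform direction falls in a cap of half-angle $\pi/2-\epsilon$ via the surface-area bound of Lemma \ref{lm:bounds-polar-caps} and the Gamma-ratio asymptotics of Lemma \ref{lm:gamma-series}, whereas you use $\E{\cos^2\sphericalangle({\bf w},{\bf V})}=1/(n-1)$ plus Chebyshev, giving an explicit $1/((n-1)\epsilon^2)$ bound and an explicit $n_{\mathcal{A}}'(\delta,\epsilon)$. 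Both suffice since the lemma only needs the conditional probabilities to vanish; your route is more elementary and self-contained (and even decays like $1/n$, versus the $O(1/\sqrt{n})$ the paper retains after discarding the $\sin^{n-2}$ factor), while the cap-area machinery is what the paper needs anyway for the sharper exponential estimates in Lemmas \ref{lm:caps-single-error} and \ref{lm:caps-double-error}. Your explicit handling of the all-zero quantizer output and your identification of the rotational-invariance of all conditioning events as the point requiring care are both sound, and in fact slightly more careful than the paper, which asserts the conditional uniformity without further justification.
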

\begin{proof}
  We start derivation of the bound on $\mathcal{A}_2$. To this end, we
  first upper bound the inner product between ${\bf s}_1$ and ${\bf
    v}_2$. Let ${\bf s}_{1,\textnormal{P}}$ denote the projection of ${\bf s}_1$
  onto the subspace of $\Reals^n$ that is orthogonal to ${\bf
    s}_2$, and that thus contains ${\bf v}_2$. Hence,
  \begin{IEEEeqnarray}{rCl}
    \left| \frac{\nu_1}{\| {\bf u}_1^{\ast} \| \| {\bf u}_2^{\ast} \|}
      \inner{{\bf s}_1}{{\bf v}_2} \right| & \stackrel{a)}{=} & \left|
      \cos \sphericalangle ({\bf s}_1, {\bf u}_1^{\ast})
      \inner{\frac{{\bf s}_1}{\| {\bf s}_1
          \|}}{\frac{{\bf v}_2}{\| {\bf u}_2^{\ast} \|}} \right| \nonumber\\
    & \stackrel{b)}{\leq} & \left| \cos \sphericalangle ({\bf s}_1,
      {\bf u}_1^{\ast}) \right| \left| \inner{\frac{{\bf s}_1}{\| {\bf
            s}_1 \|}}{\frac{{\bf v}_2}{\| {\bf
            v}_2 \|}} \right| \nonumber\\
    & \leq & \left| \inner{\frac{{\bf s}_1}{\| {\bf s}_1
          \|}}{\frac{{\bf
            v}_2}{\| {\bf v}_2 \|}} \right| \nonumber\\
    & = & \left| \inner{\frac{{\bf s}_{1,\textnormal{P}}}{\| {\bf
            s}_1 \|}}{\frac{{\bf v}_2}{\| {\bf v}_2 \|}} \right| \nonumber\\
    & \leq & \left| \inner{\frac{{\bf s}_{1,P}}{\| {\bf s}_{1,\textnormal{P}}
          \|}}{\frac{{\bf v}_2}{\| {\bf v}_2 \|}}
    \right|\nonumber\\[3mm]
    & = & \left| \cos \sphericalangle ({\bf s}_{1,\textnormal{P}}, {\bf v}_2)
    \right|,\label{eq:lm4-bd1}
  \end{IEEEeqnarray}
  where a) follows by the definition of $\nu_1$ and b) follows since by
  the definition of ${\bf v}_2$ we have $\| {\bf v}_2 \| \leq \| {\bf
    u}_2^{\ast} \|$. By \eqref{eq:lm4-bd1} it now follows that
  \begin{IEEEeqnarray*}{rCl}
    \Prv{\mathcal{A}_2 | \mathcal{E}_{\bf S}^c } & \leq & \Prv{({\bf S}_1,
      {\bf S}_2, \mathscr{C}_1,\mathscr{C}_2): \left| \cos \sphericalangle ({\bf
          S}_{1,\textnormal{P}}, {\bf V}_2) \right| >
      \epsilon \big| \mathcal{E}_{\bf S}^c }\\[2mm]
    & \leq & \Prv{ ({\bf S}_1, {\bf S}_2, \mathscr{C}_1,\mathscr{C}_2): \left|
        \frac{\pi}{2} - \sphericalangle ({\bf S}_{1,\textnormal{P}}, {\bf V}_2)
      \right| > \epsilon \Big| \mathcal{E}_{\bf S}^c
    }\\[2mm]
    & = & {\textsf E}_{{\bf S}_1, {\bf S}_2} \left[
      {\textnormal{Pr}}_{\mathscr{C}_1,\mathscr{C}_2} \bigg(
          \left| \frac{\pi}{2} - 
            \sphericalangle ({\bf s}_{1,\textnormal{P}}, {\bf V}_2) \right| >
          \epsilon \bigg| ({\bf S}_1, {\bf S}_2) = ({\bf s}_1,
        {\bf s}_2), \mathcal{E}_{\bf S}^c \bigg) \right],
  \end{IEEEeqnarray*}
  where in the last line we have denoted by
  ${\textnormal{Pr}}_{\mathscr{C}_1, \mathscr{C}_2} \left( \cdots |
    \cdots \right)$ the conditional probability of the codebooks
  $\mathscr{C}_1$ and $\mathscr{C}_2$ being such that $| \pi/2 -
  \sphericalangle ({\bf s}_{1,\textnormal{P}}, {\bf V}_2)| > \epsilon$
  given $({\bf S}_1, {\bf S}_2) = ({\bf s}_1, {\bf s}_2)$ and $({\bf
    s}_1, {\bf s}_2) \in \mathcal{E}_{\bf S}^c$. To conclude our bound
  we now notice that conditioned on $({\bf S}_1, {\bf S}_2) = ({\bf
    s}_1, {\bf s}_2)$, the random vector ${\bf V}_2/\| {\bf V}_2 \|$
  is distributed uniformly on the surface of the centered
  $\Reals^{n-1}$-sphere of unit radius that lies in the subspace that
  is orthogonal to ${\bf s}_2$. Hence,
  \begin{IEEEeqnarray*}{rCl}
    \Prv{\mathcal{A}_2 | \mathcal{E}_{\bf S}^c } & \leq & {\textsf E}_{{\bf
        S}_1, {\bf S}_2} \left[
      \frac{2C_{n-1}(\pi/2 - \epsilon)}{C_{n-1}(\pi)} \Big|
      \mathcal{E}_{\bf S}^c \right]\\[2mm]
    & \leq & \frac{2C_{n-1}(\pi/2-\epsilon)}{C_{n-1}(\pi)}\\[2mm]
    & \leq & \frac{2\Gamma \left( \frac{n+1}{2} \right)}{(n-1) \Gamma
      \left( \frac{n}{2}\right) \sqrt{\pi}}
    \frac{\sin^{(n-2)}(\pi/2 - \epsilon)}{\cos (\pi/2-\epsilon)}\\[2mm]
    & \leq & \frac{2\Gamma \left( \frac{n+1}{2} \right)}{(n-1) \Gamma
      \left( \frac{n}{2}\right) \sqrt{\pi} \cos (\pi/2-\epsilon)}.
  \end{IEEEeqnarray*}
  Upper bounding the ratio of Gamma functions by the asymptotic series
  of Lemma \ref{lm:gamma-series}, gives for every $\epsilon > 0$ that
  $\Prv{\mathcal{A}_2 | \mathcal{E}_{\bf S}^c} \rightarrow 0$ as $n
  \rightarrow \infty$. By similar arguments it also follows that
  $\Prv{\mathcal{A}_3 | \mathcal{E}_{\bf S}^c} \rightarrow 0$ as $n
  \rightarrow \infty$.

  To conclude the proof of Lemma \ref{lm:VQ-innerU-sv}, we derive the
  bound on $\mathcal{A}_4$. The derivations are similar to those for
  $\mathcal{A}_2$. First, define by ${\bf v}_{1,\textnormal{P}}$ the
  projection of ${\bf v}_1$ onto the subspace of $\Reals^n$ that is
  orthogonal to ${\bf s}_2$. As in \eqref{eq:lm4-bd1} we can show that
  \begin{IEEEeqnarray}{rCl}
    \left| \frac{1}{\| {\bf u}_1^{\ast} \| \| {\bf u}_2^{\ast} \|}
      \inner{{\bf v}_1}{{\bf v}_2} \right|
    & \leq & \left| \cos \sphericalangle ({\bf v}_{1,\textnormal{P}}, {\bf v}_2)
    \right|, \label{eq:lm4-bd3}
  \end{IEEEeqnarray}
  from which it then follows, using $| \cos x| \leq |\pi/2 - x |$, that
  \begin{IEEEeqnarray*}{rCl}
    \Prv{\mathcal{A}_4 | \mathcal{E}_{\bf S}^c } & \leq & 
    {\textsf E}_{{\bf S}_1, {\bf S}_2, \mathscr{C}_1} \bigg[
    {\textnormal{Pr}}_{\mathscr{C}_2} \bigg( \left| \frac{\pi}{2} -
      \sphericalangle ({\bf v}_{1,\textnormal{P}}, {\bf V}_2) \right|
    > \epsilon \bigg| ({\bf S}_1, {\bf S}_2) = ({\bf s}_1, {\bf s}_2),
    \mathscr{C}_1 = \mathcal{C}_1, \mathcal{E}_{\bf S}^c \bigg) \bigg].
  \end{IEEEeqnarray*}
  The desired bound now follows from noticing that conditioned on
  $({\bf S}_1, {\bf S}_2) = ({\bf s}_1, {\bf s}_2)$ and $\mathscr{C}_1
  = \mathcal{C}_1$, the random vector ${\bf V}_2/\| {\bf V}_2 \|$ is
  distributed uniformly on the surface of the centered
  $\Reals^{n-1}$-sphere of unit radius that lies in the subspace that
  is orthogonal to ${\bf s}_2$. Hence, similarly as in the derivation
  for $\mathcal{A}_2$
  \begin{IEEEeqnarray*}{rCl}
    \Prv{\mathcal{A}_4 | \mathcal{E}_{\bf S}^c } & \leq & {\textsf E}_{{\bf
        S}_1, {\bf S}_2, \mathscr{C}_1} \left[
      \frac{2C_{n-1}(\pi/2 - \epsilon)}{C_{n-1}(\pi)} \bigg|
      \mathcal{E}_{\bf S}^c \right]\\[3mm]
    & \leq & \frac{2\Gamma \left( \frac{n+1}{2} \right)}{(n-1) \Gamma
      \left( \frac{n}{2}\right) \sqrt{\pi} \cos (\pi/2-\epsilon)}.
  \end{IEEEeqnarray*}
  Upper bounding the ratio of Gamma functions by the asymptotic series
  of Lemma \ref{lm:gamma-series}, gives for every $\epsilon > 0$ that
  $\Prv{\mathcal{A}_4 | \mathcal{E}_{\bf S}^c} \rightarrow 0$ as $n
  \rightarrow \infty$.
\end{proof}

Combining Lemma \ref{lm:VQ-innerU-s1s2} and Lemma
\ref{lm:VQ-innerU-sv} with \eqref{eq:E23-dec} gives that for every
$\delta > 0$ and $0.3 > \epsilon > 0$ there exists an
  $n_{\mathcal{A}}'(\delta,\epsilon)$ such that for all $n >
  n_{\mathcal{A}}'(\delta,\epsilon)$
\begin{IEEEeqnarray*}{rCl}
  \hspace{39mm} \Prv{\mathcal{E}_{({\bf X}_1,{\bf X}_2)} \cap
    \mathcal{E}_{\bf S}^c \cap \mathcal{E}_{{\bf X}_1}^c \cap \mathcal{E}_{{\bf X}_2}^c} &
  \leq & 3 \delta. \hspace{39mm} \qed
\end{IEEEeqnarray*}

\subsubsection{Proof of Lemma \ref{lm:caps-single-error}}\label{app:proof-dec-single-error}

The proof follows from upper bounding $\Prv{\mathcal{G} |
  \mathcal{E}_{{\bf X}_1}^c}$ as a function of $R_1$. First, note that
\begin{IEEEeqnarray}{rCl}\label{eq:lm5-symm}
  \Prv{\mathcal{G} | \mathcal{E}_{{\bf X}_1}^c} & = & \Prv{\mathcal{G} | {\bf
      U}_1^{\ast} \neq {\bf 0}} \nonumber\\
  & = & \Prv{\mathcal{G} | \varsigma_1({\bf S}_1, \mathscr{C}_1) = 1},
\end{IEEEeqnarray}
where the second equality holds because the conditional distribution
of the codewords conditional on ${\bf u}_1^{\ast} \neq {\bf 0}$ is
invariant with respect to permutations of the indexing of the
codewords.
The desired upper bound is now obtained by decomposing $\mathcal{G}$
into sub-events $\mathcal{G}_j$, $j \in \{ 2,3, \ldots
,2^{nR_1} \}$, where
\begin{IEEEeqnarray*}{rCl}
  \mathcal{G}_j & \triangleq & \left\{ ({\bf s}_1, {\bf s}_2,
    \mathcal{C}_1, \mathcal{C}_2, {\bf z}): \cos \sphericalangle ({\bf
    w}, {\bf u}_1(j)) \geq \Delta \right\}.
\end{IEEEeqnarray*}
By \eqref{eq:lm5-symm} we now have 
\begin{IEEEeqnarray}{rCl}
  \Prv{\mathcal{G} | \mathcal{E}_{{\bf X}_1}^c } & = &
  \Prv{\bigcup_{j=2}^{2^{nR_1}} \mathcal{G}_j \Bigg| \varsigma_1({\bf
      S}_1, \mathscr{C}_1) = 1}
  \nonumber\\[2mm]
  & \leq & \sum_{j=2}^{2^{nR_1}} \Prv{\mathcal{G}_j | \varsigma_1({\bf
      S}_1, \mathscr{C}_1) = 1} \nonumber\\[2mm]
  & < & 2^{nR_1} \Prv{\mathcal{G}_2 | \varsigma_1({\bf S}_1,
    \mathscr{C}_1) = 1} \nonumber\\[3mm]
  & \leq & 2^{nR_1} \cdot 2 \frac{C_n(\arccos
    \Delta)}{C_n(\pi)}, \label{eq:bd1-single-error}
\end{IEEEeqnarray}
where in the third step we have used that $\Prv{\mathcal{G}_j |
  \varsigma_i({\bf s}_i, \mathcal{C}_i) = 1}$ is the same for all $j
\in \{ 2,3, \ldots 2^{nR_1} \}$ because the conditional distribution
of ${\bf u}_1(j)$ given $\varsigma_i({\bf s}_i, \mathcal{C}_i) = 1$
does not depend on $j \in \left\{ 2,3, \ldots ,2^{nR_1}\right\}$ and
where in the last step we have upper bounded the density of ${\bf
  U}_1(2)$, conditional on $\varsigma_1({\bf S}_1, \mathscr{C}_1) =
1$, by Lemma \ref{lm:density-wrong-codeword}. Thus, combining
\eqref{eq:bd1-single-error} with Lemma \ref{lm:bounds-polar-caps} gives
\begin{IEEEeqnarray*}{rCl}
  \Prv{\mathcal{G} | \mathcal{E}_{{\bf X}_1}^c} & \leq & 2^{nR_1} \cdot
  2 \frac{\Gamma \left( \frac{n}{2} +1\right)
    (1-\Delta^2)^{(n-1)/2}}{n \Gamma \left( \frac{n+1}{2}\right)
    \sqrt{\pi} \Delta}\\[2mm]
  & = & \frac{2 \Gamma \left( \frac{n}{2}
      +1\right)}{n \Gamma \left( \frac{n+1}{2}\right) \sqrt{\pi}
    \Delta \sqrt{1-\Delta^2}} 2^{n (R_1 + 1/2 \log_2(1-\Delta^2))}
\end{IEEEeqnarray*}
Replacing the ratio of the Gamma functions by the asymptotic series of
Lemma \ref{lm:gamma-series} establishes
\eqref{eq:lm-wrong-codeword}. \hfill $\qed$

\subsubsection{Proof of Lemma \ref{lm:caps-double-error}}
\label{app:proof-dec-double-error}


The proof follows by upper bounding $\Prv{\mathcal{G} |
  \mathcal{E}_{{\bf X}_1}^c \cap \mathcal{E}_{{\bf X}_2}^c}$ as a
function of $R_1 + R_2$. To this end, define
\begin{IEEEeqnarray*}{rCl}
  \tilde{\varsigma}({\bf s}_1, {\bf s}_2, \mathcal{C}_1,
  \mathcal{C}_2) & \triangleq & \left( \varsigma_1({\bf
      s}_1,\mathcal{C}_1), \varsigma_2({\bf s}_2,\mathcal{C}_2)\right).
\end{IEEEeqnarray*}
By a symmetry argument, which is similar to the one in the proof of
Lemma \ref{lm:caps-single-error}, we obtain
\begin{IEEEeqnarray}{rCl}\label{eq:dbl-err-symm}
  \Prv{\mathcal{G} | \mathcal{E}_{{\bf X}_1}^c \cap \mathcal{E}_{{\bf X}_2}^c} & = &
  \Prv{\mathcal{G} | \tilde{\varsigma}({\bf S}_1, {\bf S}_2,
    \mathscr{C}_1, \mathscr{C}_2) = (1,1)}.
\end{IEEEeqnarray}
The desired upper bound is now obtained by decomposing $\mathcal{G}$
into subevents $\mathcal{G}_{j,\ell}$, where
\begin{IEEEeqnarray*}{rCl}
  \mathcal{G}_{j,\ell} & = & \left\{ ({\bf s}_1, {\bf s}_2, \mathcal{C}_1,
    \mathcal{C}_2, {\bf z}): \cos \sphericalangle ({\bf u}_1(j), {\bf u}_2(\ell)) \geq
  \Theta, \cos \sphericalangle ({\bf y}, \alpha_1 {\bf u}_1(j) +
  \alpha_2 {\bf u}_2(\ell)) \geq \Delta \right\},
\end{IEEEeqnarray*}
for $j \in \{ 2,3, \ldots 2^{nR_1}\}$ and $\ell \in \{ 2,3, \ldots 2^{nR_2}\}$.
Hence, by \eqref{eq:dbl-err-symm}
\begin{IEEEeqnarray}{rCl}
  \Prv{\mathcal{G} | \mathcal{E}_{{\bf X}_1}^c \cap \mathcal{E}_{{\bf X}_2}^c} & = &
  \Prv{\bigcup_{j=2}^{2^{nR_1}} \bigcup_{\ell=2}^{2^{nR_2}}
    \mathcal{G}_{j,\ell} \Bigg| \tilde{\varsigma}({\bf S}_1, {\bf S}_2,
    \mathscr{C}_1, \mathscr{C}_2) = (1,1) }\nonumber\\
  & \leq & \sum_{j=2}^{2^{nR_1}} \sum_{\ell=2}^{2^{nR_2}}
  \Prv{\mathcal{G}_{j,\ell} \big| \tilde{\varsigma}({\bf S}_1, {\bf
      S}_2, \mathscr{C}_1, \mathscr{C}_2) = (1,1)}\nonumber\\
  & \stackrel{a)}{<} & 2^{n(R_1+R_2)} \Prv{\mathcal{G}_{2,2} |
    \tilde{\varsigma}({\bf S}_1, {\bf S}_2,
    \mathscr{C}_1, \mathscr{C}_2) = (1,1)},\label{eq:bd-jnt-err-prb-1}
\end{IEEEeqnarray}
where $a)$ follows since conditioned on $\tilde{\varsigma}({\bf S}_1,
{\bf S}_2, \mathscr{C}_1, \mathscr{C}_2) = (1,1)$, the laws of
$\sphericalangle ({\bf U}_1(j), {\bf U}_2(\ell))$ and $\sphericalangle
({\bf Y}, \alpha_1 {\bf U}_1(j) + \alpha_2 {\bf U}_2(\ell))$ do not
depend on $j \in \{ 2,3, \ldots , 2^{nR_1} \}$ or $\ell \in \{
2,3, \ldots , 2^{nR_2} \}$. We now rewrite the probability
$\Prv{\mathcal{G}_{2,2} | \tilde{\varsigma}({\bf S}_1, {\bf S}_2,
  \mathscr{C}_1, \mathscr{C}_2) = (1,1)}$:
\begin{IEEEeqnarray*}{l}
  \Prv{\mathcal{G}_{2,2} | \tilde{\varsigma}({\bf S}_1, {\bf S}_2,
    \mathscr{C}_1, \mathscr{C}_2) = (1,1)}\\[3mm]
  \hspace{5mm} = \text{Pr} \big[ ({\bf S}_1, {\bf S}_2, {\bf U}_1(1), {\bf
    U}_2(1), {\bf U}_1(2), {\bf U}_2(2), {\bf Z}) \text{ are such
    that } \mathcal{G}_{2,2} \text{ occurs}\\
  \hspace{90mm} \big| \tilde{\varsigma}({\bf
    S}_1, {\bf S}_2, \mathscr{C}_1, \mathscr{C}_2) = (1,1) \big]\\[3mm]
  \hspace{5mm} = \int_{\begin{subarray}{l}
      ({\bf s}_1, {\bf s}_2) \in \Reals^n \times \Reals^n\\
      ({\bf u}_1, {\bf u}_2) \in \mathcal{S}_1 \times \mathcal{S}_2\\
      {\bf z} \in \Reals^n
    \end{subarray}} f \big( ({\bf S}_1, {\bf S}_2, {\bf U}_1(1), {\bf
    U}_2(1), {\bf Z})= ( {\bf s}_1, {\bf s}_2, {\bf u}_1, {\bf u}_2,
  {\bf z}) \\[-3mm]
  \hspace{90mm} \big| \tilde{\varsigma}({\bf S}_1, {\bf S}_2, \mathscr{C}_1,
  \mathscr{C}_2) = (1,1)\big)\\[3mm]
  \hspace{15mm} \text{Pr} \big[ \cos \sphericalangle ({\bf U}_1(2), {\bf U}_2(2)) \geq
  \Theta, \cos \sphericalangle ({\bf y}, \alpha_1 {\bf U}_1(2) + \alpha_2
  {\bf U}_2(2)) \geq \Delta\\[1mm]
  \hspace{25mm} \big| \tilde{\varsigma}({\bf S}_1, {\bf S}_2, \mathscr{C}_1,
  \mathscr{C}_2) = (1,1), ({\bf S}_1, {\bf S}_2, {\bf U}_1(1), {\bf
    U}_2(1), {\bf Z})= ( {\bf s}_1, {\bf s}_2, {\bf u}_1, {\bf u}_2,
  {\bf z}) \big]\\[1mm]
  \hspace{15mm}\d ({\bf s}_1, {\bf s}_2, {\bf u}_1, {\bf u}_2,
  {\bf z})\\[5mm]
  \hspace{5mm} = \int_{\begin{subarray}{l}
      ({\bf s}_1, {\bf s}_2) \in \Reals^n \times \Reals^n\\
      ({\bf u}_1, {\bf u}_2) \in \mathcal{S}_1 \times \mathcal{S}_2
    \end{subarray}} f \big( ({\bf S}_1, {\bf S}_2, {\bf U}_1(1), {\bf
    U}_2(1))= ( {\bf s}_1, {\bf s}_2, {\bf u}_1, {\bf u}_2) \\[-3mm]
  \hspace{90mm} \big| \tilde{\varsigma}({\bf S}_1, {\bf S}_2, \mathscr{C}_1,
  \mathscr{C}_2) = (1,1)\big)\\[3mm]
  \hspace{15mm} \text{Pr} \big[ \cos \sphericalangle ({\bf U}_1(2), {\bf U}_2(2)) \geq
  \Theta, \cos \sphericalangle ({\bf y}, \alpha_1 {\bf U}_1(2) + \alpha_2
  {\bf U}_2(2)) \geq \Delta\\[1mm]
  \hspace{25mm} \big| \tilde{\varsigma}({\bf S}_1, {\bf S}_2, \mathscr{C}_1,
  \mathscr{C}_2) = (1,1), ({\bf S}_1, {\bf S}_2, {\bf U}_1(1), {\bf
    U}_2(1))= ( {\bf s}_1, {\bf s}_2, {\bf u}_1, {\bf u}_2) \big]\\[1mm]
  \hspace{15mm}\d ({\bf s}_1, {\bf s}_2, {\bf u}_1, {\bf u}_2),
\end{IEEEeqnarray*}
where in the last step we have used that the probability term does not
depend on ${\bf z}$. To upper bound the integral we now upper bound
this probability term.
\begin{IEEEeqnarray}{l}
  \text{Pr} \Big[ \cos \sphericalangle ({\bf U}_1(2), {\bf U}_2(2)) \geq
  \Theta, \cos \sphericalangle ({\bf y}, \alpha_1 {\bf U}_1(2) + \alpha_2
  {\bf U}_2(2)) \geq \Delta\nonumber\\[1mm]
  \hspace{25mm} \Big| \tilde{\varsigma}({\bf S}_1, {\bf S}_2, \mathscr{C}_1,
  \mathscr{C}_2) = (1,1), ({\bf S}_1, {\bf S}_2, {\bf U}_1(1), {\bf
    U}_2(1))= ( {\bf s}_1, {\bf s}_2, {\bf u}_1, {\bf u}_2) \Big]\nonumber\\[3mm]
  \hspace{5mm} = \int_{\begin{subarray}{l}
      (\tilde{\bf u}_1, \tilde{\bf u}_2) \in
      \mathcal{S}_1 \times \mathcal{S}_2:\nonumber\\
      \cos \sphericalangle (\tilde{\bf u}_1, \tilde{\bf u}_2) \geq
      \Theta,\nonumber\\
      \cos \sphericalangle ({\bf y}, \alpha_1 \tilde{\bf u}_1 + \alpha_2
      \tilde{\bf u}_2) \geq \Delta
    \end{subarray}} \hspace{-4mm}f^{\lambda_1 \times \lambda_2}
  \big( ({\bf U}_1(2), {\bf U}_2(2)) = (\tilde{\bf u}_1, \tilde{\bf
    u}_2) \big| \tilde{\varsigma}({\bf S}_1, {\bf S}_2, \mathscr{C}_1,
  \mathscr{C}_2) = (1,1),\nonumber\\[-6mm]
  \hspace{60mm} ({\bf S}_1, {\bf S}_2, {\bf U}_1(1), {\bf
    U}_2(1))= ( {\bf s}_1, {\bf s}_2, {\bf u}_1, {\bf u}_2) \big)
  \d (\tilde{\bf u}_1, \tilde{\bf u}_2)\nonumber\\[5mm]
  \hspace{5mm} = \int_{\begin{subarray}{l}
      (\tilde{\bf u}_1, \tilde{\bf u}_2) \in
      \mathcal{S}_1 \times \mathcal{S}_2:\nonumber\\
      \cos \sphericalangle (\tilde{\bf u}_1, \tilde{\bf u}_2) \geq
      \Theta,\nonumber\\
      \cos \sphericalangle ({\bf y}, \alpha_1 \tilde{\bf u}_1 + \alpha_2
      \tilde{\bf u}_2) \geq \Delta
    \end{subarray}}
  \hspace{-4mm}f^{\lambda_1}
  \big( {\bf U}_1(2) = \tilde{\bf u}_1 \big|
  \tilde{\varsigma}({\bf S}_1, {\bf S}_2, \mathscr{C}_1,
  \mathscr{C}_2) = (1,1),\nonumber\\[-4mm]
  \hspace{68mm}({\bf S}_1, {\bf S}_2, {\bf 
    U}_1(1), {\bf U}_2(1)) = ( {\bf s}_1, {\bf s}_2, {\bf u}_1, {\bf
    u}_2 ) \big)\nonumber\\[3mm]
  \hspace{39mm} \cdot f^{\lambda_2}
  \big( {\bf U}_2(2) = \tilde{\bf u}_2 \big|
  \tilde{\varsigma}({\bf S}_1, {\bf S}_2, \mathscr{C}_1,
  \mathscr{C}_2) = (1,1),\nonumber\\
  \hspace{68mm}({\bf S}_1, {\bf S}_2, {\bf 
    U}_1(1), {\bf U}_2(1)) = ( {\bf s}_1, {\bf s}_2, {\bf u}_1, {\bf
    u}_2 ) \big) \d (\tilde{\bf u}_1,
  \tilde{\bf u}_2)\nonumber\\[5mm]
  \hspace{5mm} = \int_{\begin{subarray}{l}
      (\tilde{\bf u}_1, \tilde{\bf u}_2) \in
      \mathcal{S}_1 \times \mathcal{S}_2:\nonumber\\
      \cos \sphericalangle (\tilde{\bf u}_1, \tilde{\bf u}_2) \geq
      \Theta,\nonumber\\
      \cos \sphericalangle ({\bf y}, \alpha_1 \tilde{\bf u}_1 + \alpha_2
      \tilde{\bf u}_2) \geq \Delta
    \end{subarray}}
  \hspace{-4mm} f^{\lambda_1}
  \left( {\bf U}_1(2) = \tilde{\bf u}_1 |
    \varsigma_1({\bf S}_1, \mathscr{C}_1,) = 1, ({\bf S}_1, {\bf
      U}_1(1))= ( {\bf s}_1, {\bf u}_1 ) \right)\nonumber\\[3mm]
  \hspace{39mm} \cdot f^{\lambda_2}
  \left( {\bf U}_2(2) = \tilde{\bf u}_2 |
    \varsigma_2({\bf S}_2, \mathscr{C}_2,) = 1, ({\bf S}_2, {\bf
      U}_2(1))= ( {\bf s}_2, {\bf u}_2 ) \right) \d (\tilde{\bf u}_1,
  \tilde{\bf u}_2)\nonumber\\[7mm]
  \hspace{5mm} \stackrel{a)}{\leq} \int_{\begin{subarray}{l}
      (\tilde{\bf u}_1, \tilde{\bf u}_2) \in
      \mathcal{S}_1 \times \mathcal{S}_2:\nonumber\\
      \cos \sphericalangle (\tilde{\bf u}_1, \tilde{\bf u}_2) \geq
      \Theta,\nonumber\\
      \cos \sphericalangle ({\bf y}, \alpha_1 \tilde{\bf u}_1 + \alpha_2
      \tilde{\bf u}_2) \geq \Delta
    \end{subarray}} \frac{2}{r_1^{n-1} C_n(\pi)} \cdot
  \frac{2}{r_2^{n-1} C_n(\pi)} \d (\tilde{\bf u}_1, \tilde{\bf
    u}_2)\nonumber\\[8mm]
  \hspace{5mm} \stackrel{b)}{=} 4 \frac{C_n(\arccos \Theta)}{C_n(\pi)}
  \cdot \frac{C_n(\arccos \Delta)}{C_n(\pi)},\label{eq:bd-jnt-err-prb-2}
\end{IEEEeqnarray}
where in $a)$ we have used Lemma \ref{lm:density-wrong-codeword} and
in $b)$ we have used that under distributions of ${\bf U}_1(2)$ and
${\bf U}_2(2)$ that are independent of ${\bf y}$ and uniform over
$\mathcal{S}_1$ and $\mathcal{S}_2$ respectively, the angles
$\sphericalangle ({\bf U}_1(j), {\bf U}_2(\ell))$ and $\sphericalangle
({\bf y}, \alpha_1 {\bf U}_1(j) + \alpha_2 {\bf U}_2(\ell))$ are
independent.
Thus, combining \eqref{eq:bd-jnt-err-prb-2} with
\eqref{eq:bd-jnt-err-prb-1} gives
\begin{IEEEeqnarray}{rCl}
  \Prv{\mathcal{G} | \mathcal{E}_{{\bf X}_1}^c \cap \mathcal{E}_{{\bf X}_2}^c}
  & < & 2^{n(R_1+R_2)} \cdot 4 \frac{C_n(\arccos
    \Theta)}{C_n(\pi)} \cdot \frac{C_n(\arccos
    \Delta)}{C_n(\pi)}. \label{eq:bd-two-cap}
\end{IEEEeqnarray}
And combining \eqref{eq:bd-two-cap} with Lemma
\ref{lm:bounds-polar-caps} gives
\begin{IEEEeqnarray*}{l}
  \Prv{\mathcal{G} | \mathcal{E}_{{\bf X}_1}^c \cap \mathcal{E}_{{\bf
        X}_2}^c}\\[3mm]
  \hspace{8mm} < 2^{n(R_1+R_2)} \cdot 4 \frac{\Gamma \left( \frac{n}{2}
      +1 \right) (1-\Theta^2)^{(n-1)/2}}{n \Gamma \left(
      \frac{n+1}{2}\right) \sqrt{\pi} \Theta} \cdot \frac{\Gamma
    \left( \frac{n}{2} +1\right) (1-\Delta^2)^{(n-1)/2}}{n
    \Gamma \left( \frac{n+1}{2}\right) \sqrt{\pi} \Delta}\\[2mm]
  \hspace{8mm} = 4 \cdot \left( \frac{\Gamma \left( \frac{n}{2} +1\right)}{n
      \Gamma \left( \frac{n+1}{2}\right) \sqrt{\pi}} \right)^2
  \frac{1}{\Theta \sqrt{1-\Theta^2} \Delta \sqrt{1-\Delta^2}} 2^{n
    \left( R_1 + R_2 + \frac{1}{2}\log_2 \left((1-\Theta^2)
        (1-\Delta^2)\right) \right)}.
\end{IEEEeqnarray*}
Replacing the ratios of the Gamma-functions by their asymptotic series
in Lemma \ref{lm:gamma-series} finally establishes
\eqref{eq:pr1-wrong-cdw-pair}. \hfill $\qed$


\section{Proof of Theorem
  \ref{thm:nofb-asymptotics}} \label{appx:prf-thm-asympt-nofb}

The high-SNR asymptotics for the multiple-access problem without
feedback can be obtained from the necessary condition for the
achievability of a distortion pair $(D_1,D_2)$ in Theorem
\ref{thm:nofb-nec-cond}, and from the sufficient conditions for the
achievability of a distortion pair $(D_1,D_2)$ deriving from the
vector-quantizer scheme in Theorem \ref{thm:vq-achv}.

By Theorem \ref{thm:vq-achv} it follows that any distortion pair
$(\bar{D}_1,\bar{D}_2)$ satisfying $\bar{D}_1 \leq \sigma^2$,
$\bar{D}_2 \leq \sigma^2$ and
\begin{IEEEeqnarray}{rCl}
  \bar{D}_1 & \geq & \sigma^2 \frac{N}{P_1}\label{eq:mac-hsnr-achv-D1}\\
  \bar{D}_2 & \geq & \sigma^2 \frac{N}{P_2}\label{eq:mac-hsnr-achv-D2}\\
  \bar{D}_1 \bar{D}_2 & = & \sigma^4
  \frac{N (1-\breve{\rho}^2)}{P_1 + P_2 +
    2\breve{\rho}\sqrt{P_1P_2}}, \label{eq:mac-hsnr-achv-D1D2}
\end{IEEEeqnarray}
where
\begin{IEEEeqnarray}{rCl}\label{eq:mac-hsnr-rho-breve}
  \breve{\rho} & = & \rho \sqrt{\left( 1 -
      \frac{\bar{D}_1}{\sigma^2}\right) \left( 1 -
      \frac{\bar{D}_2}{\sigma^2}\right) },
\end{IEEEeqnarray}
is achievable. If
\begin{IEEEeqnarray}{rCl}\label{eq:D1D2-asymptotic}
  \lim_{N \rightarrow 0} \frac{N}{P_1 \bar{D}_1} = 0 & \hspace{15mm}
  \text{and} \hspace{15mm} & \lim_{N \rightarrow 0} \frac{N}{P_2
    \bar{D}_2} = 0,
\end{IEEEeqnarray}
then for $N$ sufficiently small, \eqref{eq:mac-hsnr-achv-D1} and
\eqref{eq:mac-hsnr-achv-D2} are satisfied. Consequently, for $N$
sufficiently small any pair satisfying \eqref{eq:mac-hsnr-achv-D1D2}
and \eqref{eq:D1D2-asymptotic} is achievable. We next show that if the
pair $(\bar{D}_1, \bar{D}_2)$ satisfies \eqref{eq:mac-hsnr-achv-D1D2}
and \eqref{eq:D1D2-asymptotic}, then $\breve{\rho} \rightarrow \rho$
as $N \rightarrow 0$. To show this, we note that if $(\bar{D}_1,
\bar{D}_2)$ satisfies \eqref{eq:mac-hsnr-achv-D1D2} then
\begin{IEEEeqnarray}{rCl}\label{eq:macfb-hsnr-ub-Dibar}
  \bar{D}_2 \leq \sigma^4 \frac{N}{P_1\bar{D}_1}, & \hspace{15mm} \text{and}
  \hspace{15mm} & \bar{D}_1 \leq \sigma^4 \frac{N}{P_2\bar{D}_2}.
\end{IEEEeqnarray}
Combining \eqref{eq:macfb-hsnr-ub-Dibar} with
\eqref{eq:mac-hsnr-rho-breve} gives that if in addition to
\eqref{eq:mac-hsnr-achv-D1D2} the pair $(\bar{D}_1, \bar{D}_2)$ also
satisfies \eqref{eq:macfb-hsnr-ub-Dibar}, then $\breve{\rho}
\rightarrow \rho$ as $N \rightarrow 0$. Thus, if $(\bar{D}_1,
\bar{D}_2)$ satisfies \eqref{eq:mac-hsnr-achv-D1D2} and
\eqref{eq:D1D2-asymptotic}, then
\begin{IEEEeqnarray}{rCl}\label{eq:mac-hsnr-achv}
  \lim_{N \rightarrow 0} \frac{P_1 + P_2 + 2\rho \sqrt{P_1P_2}}{N}
  \bar{D}_1 \bar{D}_2 & \leq & \sigma^4 (1-\rho^2).
\end{IEEEeqnarray}

Now, let $(D_1^{\ast}(\sigma^2,\rho,P_1,P_2,N),
D_2^{\ast}(\sigma^2,\rho,P_1,P_2,N))$ be a distortion pair resulting
from an optimal scheme and let $(D_1^{\ast}, D_2^{\ast})$ be the
shorthand notation for this distortion pair. By Theorem
\ref{thm:nofb-nec-cond} we have that
\begin{IEEEeqnarray}{rCl}\label{eq:mac-hsnr1}
  R_{S_1,S_2}(D_1^{\ast},D_2^{\ast}) & \leq & \frac{1}{2} \log_2 \left( 1+ \frac{P_1
      + P_2 + 2 \rho \sqrt{P_1P_2}}{N} \right).
\end{IEEEeqnarray}
If $(D_1^{\ast}, D_2^{\ast})$ satisfies
\begin{IEEEeqnarray}{rCl}\label{eq:D1D2*-asymptotic}
  \lim_{N \rightarrow 0} \frac{N}{P_1 D_1^{\ast}} = 0 & \hspace{15mm}
  \text{and} \hspace{15mm} & \lim_{N \rightarrow 0} \frac{N}{P_2
    D_2^{\ast}} = 0,
\end{IEEEeqnarray}
then for $N$ sufficiently small
\begin{IEEEeqnarray}{rCl}\label{eq:mac-hsnr-RS1S2-D2}
  R_{S_1,S_2}(D_1^{\ast},D_2^{\ast}) & = & \frac{1}{2} \log_2^+ \left(
    \frac{\sigma^4(1-\rho^2)}{D_1^{\ast}D_2^{\ast}} \right),
\end{IEEEeqnarray}
by Theorem \ref{thm:rd1d2-main} and because $(D_1^{\ast},D_2^{\ast})
\in \mathscr{D}_2$. From \eqref{eq:mac-hsnr1} and
\eqref{eq:mac-hsnr-RS1S2-D2} we thus get that if
$(D_1^{\ast},D_2^{\ast})$ satisfies \eqref{eq:D1D2*-asymptotic}, then
\begin{IEEEeqnarray}{rCl}\label{eq:mac-hsnr-opt2}
  \lim_{N \rightarrow 0} \frac{P_1 + P_2 + 2\rho
    \sqrt{P_1P_2}}{N} D_1^{\ast} D_2^{\ast} & \geq & \sigma^4
  (1-\rho^2).
\end{IEEEeqnarray}
Combining \eqref{eq:mac-hsnr-achv} with \eqref{eq:mac-hsnr-opt2}
yields Theorem \ref{thm:nofb-asymptotics}. \hfill $\Box$

\section{Proof of Theorem \ref{thm:superimposed}} \label{appx:prf-thm-supimp}

Our analysis of the expected distortion for the superimposed scheme is
based on a genie-aided argument, similar as in the analysis of the
vector-quantizer scheme. This argument is described more precisely
now.

\subsection{Genie-Aided Scheme}

In our genie-aided argument, the genie assists the decoder. An
illustration of this decoder is given in Figure
\ref{fig:si-dec-genie}.
\begin{figure}[h]
  \centering
  \psfrag{y}[cc][cc]{${\bf Y}$}
  \psfrag{u1}[cc][cc]{$\hat{\bf U}_1$}
  \psfrag{u2}[cc][cc]{$\hat{\bf U}_2$}
  \psfrag{u1g}[cc][cc]{${\bf U}_1^{\ast}$}
  \psfrag{u2g}[cc][cc]{${\bf U}_2^{\ast}$}
  \psfrag{s1}[cc][cc]{$\hat{\bf S}_1^{\textnormal{G}}$}
  \psfrag{s2}[cc][cc]{$\hat{\bf S}_2^{\textnormal{G}}$}
  \epsfig{file=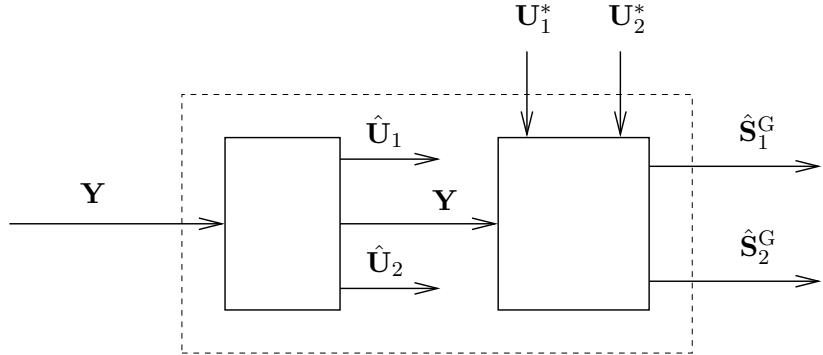, width=0.75\textwidth}
  \caption{Genie-aided decoder.}
  \label{fig:si-dec-genie}
\end{figure}
In addition to the channel output ${\bf Y}$ that is observed
originally, the decoder is now also provided with the transmitted
codeword pair $({\bf U}_1^{\ast}, {\bf U}_2^{\ast})$. Based on $({\bf
  U}_1^{\ast}, {\bf U}_2^{\ast})$ and ${\bf Y}$, the decoder then
estimates the source pair $({\bf S}_1, {\bf S}_2)$ and thereby ignores
the guess $(\hat{\bf U}_1, \hat{\bf U}_2)$ produced in the first step
of the original decoder. The estimate of this genie-aided decoder is
denoted by $(\hat{\bf S}_1^{\textnormal{G}}, \hat{\bf
  S}_2^{\textnormal{G}})$ and is given by
\begin{IEEEeqnarray}{rCl}
  \hat{\bf S}_1^{\textnormal{G}} & = & \gamma_{11} {\bf U}_1^{\ast} +
  \gamma_{12} {\bf U}_2^{\ast} + \gamma_{13} {\bf Y}\label{eq:si-genie-lin-est-S1h}\\[2mm]
  \hat{\bf S}_2^{\textnormal{G}} & = & \gamma_{21} {\bf U}_2^{\ast} +
  \gamma_{22} {\bf U}_1^{\ast} + \gamma_{23} {\bf Y},
\end{IEEEeqnarray}
where the coefficients $\gamma_{ij}$ are as defined in
\eqref{eq:src-estmt-coeff}. We now show that under certain rate
constraints, the normalized asymptotic distortion of this genie-aided
scheme is the same as for the originally proposed scheme. The key
argument is stated in the following proposition.

\begin{prp}\label{prp:si-D1-eql-genie}
  For every $\delta > 0$ and $0 < \epsilon < 0.3$ there exists an
  $n'(\delta,\epsilon)>0$ such that for all $n > n'(\delta,\epsilon)$,
  \begin{IEEEeqnarray}{rCl}\label{eq:prp-D1genie}
    \frac{1}{n}\E{\| {\bf S}_1 - \hat{\bf S}_1 \|^2} & \leq &
    \frac{1}{n}\E{\| {\bf S}_1 - \hat{\bf S}_1^{\textnormal{G}} \|^2}
    + \xi_1' \delta + \xi_2' \epsilon,
  \end{IEEEeqnarray}
  whenever $(R_1,R_2)$ is in the rate region $\mathcal{R}'(\epsilon)$
  given by
  \begin{IEEEeqnarray*}{rClrCl}
    \mathcal{R}'(\epsilon) & = & \bigg\{ & R_1 & \leq & \frac{1}{2} \log_2
    \left( \frac{\beta_1'^2 \|{\bf U}_1\|^2 (1-\tilde{\rho}^2) +
        N'}{N'(1-\tilde{\rho}^2)} - \kappa_1 \epsilon \right),\\[2mm]
    & & & R_2 & \leq & \frac{1}{2} \log_2
    \left( \frac{\beta_2'^2 \|{\bf U}_2\|^2 (1-\tilde{\rho}^2) +
        N'}{N'(1-\tilde{\rho}^2)} - \kappa_2 \epsilon \right),\\[2mm]
    & & & R_1 + R_2 & \leq & \frac{1}{2} \log_2 \left(
      \frac{\beta_1'^2 \|{\bf U}_1\|^2 + \beta_1'^2 \|{\bf U}_1\|^2 +
        2\tilde{\rho} \beta_1' \beta_2' \|{\bf U}_1\| \|{\bf U}_2\| +
        N' }{N' (1-\tilde{\rho}^2)} - \kappa_3 \epsilon \right) \bigg\},
  \end{IEEEeqnarray*}
  where in \eqref{eq:prp-D1genie} $\xi_1'$ and $\xi_2'$ depend only on
  $\sigma^2$, $\gamma_{13}$, $P_1$, $P_2$ and $N$, and where in the
  expression of $\mathcal{R}'(\epsilon)$ the terms $\kappa_1$,
  $\kappa_2$ and $\kappa_3$ depend only on $P_1$, $P_2$, $N'$, and
  $\tilde{\rho}$, and where $N'$ and $\beta_1'$, $\beta_2'$ are as
  given in \eqref{eq:si-N'}, \eqref{eq:si-beta1'} and
  \eqref{eq:si-beta2'} respectively.
\end{prp}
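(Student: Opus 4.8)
The plan is to follow the same genie-aided template used for the vector-quantizer scheme in the proof of Proposition~\ref{prp:vq-D1-eql-genie}. First I would observe that $\hat{\bf S}_1$ and $\hat{\bf S}_1^{\textnormal{G}}$ coincide on the event that the joint-typicality decoder correctly recovers the transmitted codeword pair $({\bf U}_1^\ast,{\bf U}_2^\ast)$, so it suffices to control the probability of a decoding error together with a handful of ``atypicality'' events. Concretely I would introduce the events $\mathcal{E}_{\bf S}$ (atypical source block), $\mathcal{E}_{\bf Z}$ (atypical noise, and atypical inner products of the noise with the transmitted signals), and $\mathcal{E}_{\bf X}$ (a codebook containing no codeword at the prescribed angle, or a ``quantized pair'' of atypical mutual angle), exactly as in \eqref{eq:vq-E1-def}--\eqref{eq:vq-E3-def}, together with a decoding-error event $\mathcal{E}_{\hat{\bf U}}$ tailored to the superimposed channel. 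Lemmas~\ref{lm:vq-Pr-E1}--\ref{lm:vq-Pr-E2} apply essentially verbatim to the first three events, so the crux is a bound on $\Prv{\mathcal{E}_{\hat{\bf U}}}$ valid for all $(R_1,R_2)\in\mathcal{R}'(\epsilon)$.

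Given such a bound, I would reduce the distortion gap to the error probability as in \eqref{eq:vq-diff-genie}: writing $\frac{1}{n}\E{\|{\bf S}_1-\hat{\bf S}_1\|^2}-\frac{1}{n}\E{\|{\bf S}_1-\hat{\bf S}_1^{\textnormal{G}}\|^2}$ as $\frac{2}{n}\E{\inner{{\bf S}_1}{\hat{\bf S}_1^{\textnormal{G}}-\hat{\bf S}_1}}+\frac{1}{n}\E{\|\hat{\bf S}_1\|^2-\|\hat{\bf S}_1^{\textnormal{G}}\|^2}$ and bounding each term by a constant times $\Prv{\mathcal{E}_{\bf S}}+\Prv{\mathcal{E}_{\hat{\bf U}}}$ plus an $O(\epsilon)$ term. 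The only new ingredient relative to Lemma~\ref{lm:vq-ub-norm-s1h} is that $\hat{\bf S}_i^{\textnormal{G}}$ and $\hat{\bf S}_i$ now also carry the term $\gamma_{i3}{\bf Y}$; hence I would need the crude bound $\|{\bf Y}\|^2\le c(P_1,P_2,N)\,n$ on $\mathcal{E}_{\bf Z}^c\cap\mathcal{E}_{\bf X}^c$ (together with $\|\hat{\bf U}_i\|\le\sqrt{n\sigma^2}$) to get $\|\hat{\bf S}_1\|^2,\|\hat{\bf S}_1^{\textnormal{G}}\|^2\le c'\,n$, which is precisely why $\xi_1',\xi_2'$ in \eqref{eq:prp-D1genie} depend on $\sigma^2$, $\gamma_{13}$, $P_1$, $P_2$, and $N$. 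This step is routine.

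The heart of the proof is the bound on $\Prv{\mathcal{E}_{\hat{\bf U}}}$, and this is where I expect the real work to lie. The channel input of Encoder~$i$ is ${\bf X}_i=\alpha_i{\bf S}_i+\beta_i{\bf U}_i^\ast$, so from the decoder's point of view the term $\alpha_1{\bf S}_1+\alpha_2{\bf S}_2$ acts as extra disturbance correlated with the codeword pair. The plan is to split each $\alpha_i{\bf S}_i$, conditioned on the typical events so that the relevant angles are pinned to within $\epsilon$, into its best linear approximation by $({\bf U}_1^\ast,{\bf U}_2^\ast)$ — whose coefficients are exactly the $a_1,a_2$ of \eqref{eq:si-a1}--\eqref{eq:si-a2} — plus an orthogonal residual: the residual contributes the ``effective noise variance'' $N'$ of \eqref{eq:si-N'} (the $\nu_1,\nu_2,\nu_3$ being the residual second moments), while the absorbed part turns the effective channel input of Encoder~$i$ into $\beta_i'{\bf U}_i^\ast$ with $\beta_i'$ as in \eqref{eq:si-beta1'}--\eqref{eq:si-beta2'}. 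After this substitution the received vector is, up to the residual noise, a scaled sum of the two codewords, and the decoding-error analysis becomes formally identical to that of the vector-quantizer scheme: I would re-run Lemma~\ref{lm:ub-E41-B41}, Lemma~\ref{lm:caps-single-error}, Lemma~\ref{lm:ub-E43-B43} and Lemma~\ref{lm:caps-double-error} with $N$ replaced by $N'$, $P_i$ replaced by $\beta_i'^2\|{\bf U}_i\|^2/n$, and $\tilde\rho$ the asymptotic angle between the two codewords, to conclude that $\Prv{\mathcal{E}_{\hat{\bf U}}}$ is arbitrarily small for $(R_1,R_2)\in\mathcal{R}'(\epsilon)$. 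The main obstacle is carrying out this orthogonal decomposition carefully enough that all angles entering the sphere-packing estimates of Lemma~\ref{lm:bounds-polar-caps} are controlled uniformly on the typical event — i.e.\ verifying that the residual is asymptotically orthogonal both to the span of $({\bf U}_1^\ast,{\bf U}_2^\ast)$ and to the noise ${\bf Z}$ up to $O(\epsilon)$, so that the polar-cap bounds transfer with $N\mapsto N'$. Once that is in place, assembling the pieces as in the proof of Lemma~\ref{lm:vq-Pr-E4} and then of Proposition~\ref{prp:vq-D1-eql-genie} yields \eqref{eq:prp-D1genie}.
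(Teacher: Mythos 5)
Your proposal follows essentially the same route as the paper: the paper's proof of Lemma~\ref{lm:si-pr-dec-err} likewise decomposes ${\bf S}_i$ onto the span of $({\bf U}_1^\ast,{\bf U}_2^\ast)$ with the coefficients $a_1,a_2$ of \eqref{eq:si-a1}--\eqref{eq:si-a2}, rewrites ${\bf Y}=\beta_1'{\bf U}_1^\ast+\beta_2'{\bf U}_2^\ast+{\bf Z}'$ with residual noise of per-symbol power $N'$, verifies the residual's norm concentration and near-orthogonality to the codewords, and then imports the vector-quantizer error-probability bound, before closing the distortion gap exactly as in \eqref{eq:si-diff-genie}. The only cosmetic difference is in handling the $\gamma_{13}{\bf Y}$ terms (the paper exploits that they cancel in $\hat{\bf S}_1-\hat{\bf S}_1^{\textnormal{G}}$ and bounds the remaining cross terms via conditional expectations, including the event $\mathcal{E}_{\bf Z}$), which is the routine step you already identified.
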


\begin{proof}
  See Section \ref{sec:si-prf-prp-genie}.
\end{proof}

From Proposition \ref{prp:si-D1-eql-genie} it now follows easily that
the expected distortion asymptotically achievable by the genie-aided
scheme is the same as the expected distortion achievable by the
original scheme.

\begin{cor}\label{cor:si-genie}
  If $(R_1,R_2)$ satisfy
  \begin{IEEEeqnarray*}{rCl}
    R_1 & < & \frac{1}{2} \log_2
    \left( \frac{\beta_1'^2 \|{\bf U}_1\|^2 (1-\tilde{\rho}^2) +
        N'}{N'(1-\tilde{\rho}^2)} \right),\\[2mm]
    R_2 & < & \frac{1}{2} \log_2
    \left( \frac{\beta_2'^2 \|{\bf U}_2\|^2 (1-\tilde{\rho}^2) +
        N'}{N'(1-\tilde{\rho}^2)} \right),\\[2mm]
    R_1 + R_2 & < & \frac{1}{2} \log_2 \left(
      \frac{\beta_1'^2 \|{\bf U}_1\|^2 + \beta_1'^2 \|{\bf U}_1\|^2 +
        2\tilde{\rho} \beta_1' \beta_2' \|{\bf U}_1\| \|{\bf U}_2\| +
        N' }{N' (1-\tilde{\rho}^2)} \right),
  \end{IEEEeqnarray*}
  then
  \begin{IEEEeqnarray*}{rCl}
    \varlimsup_{n \rightarrow \infty} \frac{1}{n}\E{\| {\bf S}_1 -
      \hat{\bf S}_1 \|^2} & \leq & \varlimsup_{n \rightarrow
      \infty}\frac{1}{n}\E{\| {\bf S}_1 - \hat{\bf S}_1^{\textnormal{G}}
      \|^2}.
  \end{IEEEeqnarray*}
\end{cor}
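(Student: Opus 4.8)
The plan is to deduce Corollary~\ref{cor:si-genie} from Proposition~\ref{prp:si-D1-eql-genie} by a routine double limiting argument, exactly as Corollary~\ref{cor:vq-genie} was deduced from Proposition~\ref{prp:vq-D1-eql-genie}. First I would fix a rate pair $(R_1,R_2)$ satisfying the three strict inequalities in the statement of the corollary. Because these inequalities are strict and because the three upper bounds defining $\mathcal{R}'(\epsilon)$ depend continuously on $\epsilon$ and converge, as $\epsilon\downarrow 0$, to the corresponding quantities appearing in the corollary (the $\kappa_j\epsilon$ perturbations vanish, and $\log_2$ is continuous wherever its argument stays positive), there exists $\epsilon_0>0$ such that $(R_1,R_2)\in\mathcal{R}'(\epsilon)$ for every $\epsilon\in(0,\epsilon_0)$.

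Next, for each such $\epsilon$ (with additionally $\epsilon<0.3$) and each $\delta>0$, Proposition~\ref{prp:si-D1-eql-genie} provides an $n'(\delta,\epsilon)$ such that
\begin{IEEEeqnarray*}{rCl}
  \frac{1}{n}\E{\| {\bf S}_1 - \hat{\bf S}_1 \|^2} & \leq &
  \frac{1}{n}\E{\| {\bf S}_1 - \hat{\bf S}_1^{\textnormal{G}} \|^2} +
  \xi_1' \delta + \xi_2' \epsilon
\end{IEEEeqnarray*}
for all $n>n'(\delta,\epsilon)$. Taking $\varlimsup_{n\rightarrow\infty}$ on both sides, and using that $\xi_1'$ and $\xi_2'$ do not depend on $n$, yields
\begin{IEEEeqnarray*}{rCl}
  \varlimsup_{n\rightarrow\infty}\frac{1}{n}\E{\| {\bf S}_1 - \hat{\bf S}_1 \|^2}
  & \leq & \varlimsup_{n\rightarrow\infty}\frac{1}{n}\E{\| {\bf S}_1 - \hat{\bf
      S}_1^{\textnormal{G}} \|^2} + \xi_1' \delta + \xi_2' \epsilon.
\end{IEEEeqnarray*}
Now I would let $\delta\downarrow 0$ to drop the $\xi_1'\delta$ term, and finally let $\epsilon\downarrow 0$ to drop the $\xi_2'\epsilon$ term, which establishes the claimed inequality.

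The one place that needs a little care --- and essentially the only non-bookkeeping step --- is the order of limits: $n'(\delta,\epsilon)$ depends on both $\delta$ and $\epsilon$, so one must first pass to the limit in $n$ (for $\epsilon$ and $\delta$ held fixed), and only then send $\delta$ and $\epsilon$ to zero; this is legitimate precisely because $\xi_1'$ and $\xi_2'$ are constants independent of $n$. A second, purely technical point worth spelling out is the verification that strict membership in the corollary's region forces $(R_1,R_2)\in\mathcal{R}'(\epsilon)$ for all sufficiently small $\epsilon$; this uses only continuity of the bounds in $\epsilon$ together with the fact that the arguments of the logarithms stay bounded away from zero in a neighbourhood of the chosen rate pair. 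I do not anticipate any genuine difficulty here, since all the analytic content is already carried by Proposition~\ref{prp:si-D1-eql-genie}.
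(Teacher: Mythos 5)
Your proposal is correct and follows essentially the same route as the paper, whose proof of this corollary is precisely the one-line argument "apply Proposition~\ref{prp:si-D1-eql-genie}, first let $n \rightarrow \infty$, then $\epsilon \rightarrow 0$ and $\delta \rightarrow 0$"; your additional remarks on the order of limits and on strict inequalities guaranteeing $(R_1,R_2) \in \mathcal{R}'(\epsilon)$ for all sufficiently small $\epsilon$ simply make explicit what the paper leaves implicit.
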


\begin{proof}
  Follows from Proposition \ref{prp:si-D1-eql-genie} by first letting
  $n \rightarrow \infty$ and then $\epsilon \rightarrow 0$ and $\delta
  \rightarrow 0$.
\end{proof}

It follows by Corollary \ref{cor:si-genie} that to analyze the
distortion achievable by our scheme it suffices to analyze the
genie-aided scheme. This is done in Section \ref{sec:si-ub-D1}.

\subsection{Proof of Proposition \ref{prp:si-D1-eql-genie}}\label{sec:si-prf-prp-genie}

The proof of Proposition \ref{prp:si-D1-eql-genie} consists of upper
bounding the difference between ${\frac{1}{n} \mat{E} \big[ \| {\bf
    S}_1 - \hat{\bf S}_1 \|^2 \big]}$ and $\frac{1}{n} \mat{E} \big[
\| {\bf S}_1 - \hat{\bf S}_1^{\textnormal{G}} \|^2 \big]$. Since the
two estimates $\hat{\bf S}_1$ and $\hat{\bf S}_1^{\textnormal{G}}$
differ only if $(\hat{\bf U}_1, \hat{\bf U}_2) \neq ({\bf U}_1^{\ast},
{\bf U}_2^{\ast})$, the main step is to upper bound the
probability of a decoding error. This is what we do now.

Let the error event $\mathcal{E}_{\hat{\bf U}}$ be as defined in
\eqref{eq:E41} -- \eqref{eq:E43} for the vector-quantizer scheme. The
probability of $\mathcal{E}_{\hat{\bf U}}$ is upper bounded in the
following lemma.
\begin{lm}\label{lm:si-pr-dec-err}
For every $\delta > 0$ and $0 < \epsilon < 0.3$, there exists an
$n_4'(\delta,\epsilon) \in \Naturals$ such that for all $n >
n_4'(\delta,\epsilon)$
\begin{IEEEeqnarray*}{rCl}
  \Prv{\mathcal{E}_{\hat{\bf U}}} < 11 \delta & \qquad \qquad &
  \text{whenever } (R_1,R_2) \in \mathcal{R}'(\epsilon).
\end{IEEEeqnarray*}
\end{lm}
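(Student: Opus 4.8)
The plan is to mirror, essentially verbatim, the structure of the proof of Lemma~\ref{lm:vq-Pr-E4} for the vector-quantizer scheme, adapting it to the superimposed channel inputs ${\bf X}_i = \alpha_i {\bf S}_i + \beta_i {\bf U}_i^{\ast}$. The first step is to set up the same auxiliary error events: an atypical-source event $\mathcal{E}_{\bf S}$, an atypical-noise event $\mathcal{E}_{\bf Z}$ (now enlarged so that it also controls the inner products of the effective noise ${\bf N}' = \alpha_1 {\bf V}_1 + \alpha_2 {\bf V}_2 + {\bf Z}$ — where ${\bf V}_i = {\bf S}_i - (\text{scaled}){\bf U}_i^{\ast}$ is the quantization error — against the transmitted codewords), and an encoder-irregularity event $\mathcal{E}_{\bf X}$ exactly as in the vector-quantizer proof since the quantization step is identical. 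Then I would decompose $\Prv{\mathcal{E}_{\hat{\bf U}}}$ as in \eqref{eq:vq-Pr-E4-decomp}, bounding $\Prv{\mathcal{E}_{\bf S}}$, $\Prv{\mathcal{E}_{\bf Z}}$ and $\Prv{\mathcal{E}_{\bf X}}$ by Lemmas~\ref{lm:vq-Pr-E1}, \ref{lm:vq-Pr-E3} and \ref{lm:vq-Pr-E2} (the latter carries over unchanged), leaving the three ``wrong-codeword'' terms $\Prv{\mathcal{E}_{\hat{\bf U}_1} \cap \cdots}$, $\Prv{\mathcal{E}_{\hat{\bf U}_2} \cap \cdots}$, $\Prv{\mathcal{E}_{(\hat{\bf U}_1,\hat{\bf U}_2)} \cap \cdots}$ to be handled.

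The heart of the argument is to prove the analogue of Lemma~\ref{lm:vq-dec-error-proba} with $P_i$ replaced by the effective received power $\beta_i'^2 \|{\bf U}_i\|^2$ of the codeword component and $N$ replaced by the effective noise variance $N'$ from \eqref{eq:si-N'}. The key observation is that, on $\mathcal{E}_{\bf S}^c$, the channel output can be written as ${\bf Y} = \beta_1' {\bf U}_1^{\ast} + \beta_2' {\bf U}_2^{\ast} + {\bf N}'$ up to terms of order $\epsilon$, where the coefficients $\beta_i'$ and the decomposition are precisely engineered via the constants $a_1, a_2$ of \eqref{eq:si-a1}--\eqref{eq:si-a2} so that ${\bf N}'$ is (approximately) orthogonal to both ${\bf U}_1^{\ast}$ and ${\bf U}_2^{\ast}$ and has the variance $N'$. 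Granting that, I would rerun the geometric/sphere-packing machinery: define the ``tilted'' vector ${\bf w}$ (analogue of \eqref{eq:w}) projecting out the interfering codeword, establish the chain of inclusions $\mathcal{E}_{\hat{\bf U}_1} \cap \mathcal{E}_{\bf S}^c \cap \mathcal{E}_{\bf X}^c \cap \mathcal{E}_{\bf Z}^c \subseteq \mathcal{E}'_{\hat{\bf U}_1} \cap (\cdots)$ (analogue of Lemma~\ref{lm:ub-E41-B41}) showing a decoding error forces a spurious codeword into a small polar cap, then apply the counting bounds of Lemma~\ref{lm:caps-single-error} and Lemma~\ref{lm:caps-double-error} (which are stated abstractly and apply directly) to conclude the cap is exponentially unlikely to be hit as long as $(R_1,R_2)$ lies in $\mathcal{R}'(\epsilon)$. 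Summing the bounds gives $\Prv{\mathcal{E}_{\hat{\bf U}}} < 11\delta$ for $n$ large.

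The main obstacle, and the place that genuinely differs from the vector-quantizer case, is the verification that the $a_i$, $\beta_i'$, $\nu_i$, $N'$ defined in \eqref{eq:si-a1}--\eqref{eq:si-N'} really do diagonalize the relevant second-order statistics — i.e. that ${\bf N}' = \alpha_1 {\bf V}_1 + \alpha_2 {\bf V}_2 + {\bf Z}$ has (asymptotically, on the typical set) squared norm $nN'$ and is asymptotically orthogonal to $\beta_1' {\bf U}_1^{\ast}$ and to $\beta_2' {\bf U}_2^{\ast}$. This requires tracking the correlations between the source ${\bf S}_i$, its quantization ${\bf U}_i^{\ast}$, and the quantization error ${\bf V}_i$, all conditioned on the typicality events, and showing the cross terms cancel to within $O(\epsilon)$; the constants $\kappa_1, \kappa_2, \kappa_3$ in $\mathcal{R}'(\epsilon)$ are exactly the slack absorbing these errors. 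Once this algebraic/geometric bookkeeping is done, the rest is a routine transcription of the vector-quantizer argument, and the proof of Lemma~\ref{lm:si-pr-dec-err} follows by combining the decomposition with the adapted versions of Lemmas~\ref{lm:vq-Pr-E1}, \ref{lm:vq-Pr-E3}, \ref{lm:vq-Pr-E2} and the new Lemma~\ref{lm:vq-dec-error-proba} analogue, exactly as in the proof of Lemma~\ref{lm:vq-Pr-E4}.
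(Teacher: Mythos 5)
Your proposal follows essentially the same route as the paper: the paper also reduces the superimposed scheme's error analysis to the vector-quantizer analysis by rewriting the channel output as ${\bf Y}=\beta_1'{\bf U}_1^{\ast}+\beta_2'{\bf U}_2^{\ast}+{\bf Z}'$, verifying that $\tfrac{1}{n}\|{\bf Z}'\|^2\approx N'$ and that the normalized inner products of ${\bf Z}'$ with ${\bf U}_1^{\ast}$, ${\bf U}_2^{\ast}$ are small with high probability, and then simply invoking the bound of Lemma~\ref{lm:vq-Pr-E4} for the rate region $\mathcal{R}'(\epsilon)$ rather than re-deriving the geometric lemmas whose hypotheses these two conditions are.

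One point of precision, though, in the step you yourself flag as the main obstacle: the residual you write as ${\bf V}_i={\bf S}_i-(\text{scaled}){\bf U}_i^{\ast}$ is not the right noise component. Because ${\bf S}_1$ is correlated with ${\bf U}_2^{\ast}$, this single-codeword quantization error retains a component of order $\rho\,2^{-2R_1}$ along ${\bf U}_2^{\ast}$ (and symmetrically for ${\bf V}_2$), so $\alpha_1{\bf V}_1+\alpha_2{\bf V}_2+{\bf Z}$ is not even approximately orthogonal to the codewords and the hypotheses needed to port the vector-quantizer analysis would fail. The paper instead decomposes each source with respect to \emph{both} codewords, ${\bf S}_1=(1-a_1\tilde{\rho}){\bf U}_1^{\ast}+a_1{\bf U}_2^{\ast}+{\bf W}_1$ and ${\bf S}_2=(1-a_2\tilde{\rho}){\bf U}_2^{\ast}+a_2{\bf U}_1^{\ast}+{\bf W}_2$, with $a_1,a_2$ as in \eqref{eq:si-a1}--\eqref{eq:si-a2}; the cross components are thereby counted as signal, which is exactly why $\beta_1'$ in \eqref{eq:si-beta1'} contains the term $\alpha_2 a_2$ and $\beta_2'$ in \eqref{eq:si-beta2'} the term $\alpha_1 a_1$, and the effective noise is ${\bf Z}'=\alpha_1{\bf W}_1+\alpha_2{\bf W}_2+{\bf Z}$ with variance $N'$ as in \eqref{eq:si-N'}. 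Since you do invoke $a_1,a_2$ and the correct $\beta_i'$, your intent clearly matches the paper's argument; just replace the single-codeword residuals by the two-codeword residuals ${\bf W}_i$ when setting up the enlarged atypical-noise event, after which the rest is the transcription you describe.
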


\begin{proof}
  The proof follows from restating the decoding problem for the
  superimposed scheme in the form of the decoding problem for the
  vector-quantizer scheme. That is, we seek to rewrite the channel
  output in the form
  \begin{IEEEeqnarray}{rCl}\label{eq:si-vq-rep-Y}
    {\bf Y} 
    & = & \beta_1' {\bf U}_1^{\ast} + \beta_2' {\bf
      U}_2^{\ast} + {\bf Z}',
  \end{IEEEeqnarray}
  with an additive noise sequence ${\bf Z}'$ that satisfies the
  properties needed for the analysis of the vector-quantizer
  scheme. This representation is obtained by first rewriting the
  source sequences as
  \begin{IEEEeqnarray}{rCl}
    {\bf S}_1 & = & (1 - a_1 \tilde{\rho}) {\bf U}_1^{\ast} + a_1 {\bf
      U}_2^{\ast} + {\bf W}_1 \label{eq:si-decomp-S1}\\
    {\bf S}_2 & = & (1 - a_2 \tilde{\rho}) {\bf U}_2^{\ast} + a_2 {\bf
      U}_1^{\ast} + {\bf W}_2, \label{eq:si-decomp-S2}
  \end{IEEEeqnarray}
  where $a_1$ is defined in \eqref{eq:si-a1}, $a_2$ is defined in
  \eqref{eq:si-a2}, and $\tilde{\rho}$ is defined in
  \eqref{eq:vq-tilde-rho}. Combining \eqref{eq:si-decomp-S1} and
  \eqref{eq:si-decomp-S2} with the expressions for ${\bf X}_1$ and
  ${\bf X}_2$ in \eqref{eq:si-Xi} and with ${\bf Y} = {\bf X}_1 + {\bf
    X}_2 + {\bf Z}$ yields the desired form of \eqref{eq:si-vq-rep-Y}
  with
  \begin{IEEEeqnarray*}{rCl}
    \beta_1' & = & \left( \alpha_1 (1 - a_1 \tilde{\rho}) + \beta_1 +
      \alpha_2 a_2 \right)\\
    \beta_2' & = & \left( \alpha_2 (1 - a_2 \tilde{\rho}) + \beta_2 +
      \alpha_1 a_1 \right),
  \end{IEEEeqnarray*}
  and with
  \begin{IEEEeqnarray*}{rCl}
    {\bf Z}' & = & \alpha_1 {\bf W}_1 + \alpha_2 {\bf W}_2 + {\bf Z}.
  \end{IEEEeqnarray*}
  For the additive noise sequence ${\bf Z}'$ it can now be verified
  that for every $\delta > 0$ and $\epsilon > 0$ there exists an
  $n'(\delta, \epsilon)>0$, such that for $N'$ as in \eqref{eq:si-N'}
  and for all $n > n'(\delta, \epsilon)$ we have that
  \begin{IEEEeqnarray}{rCl}\label{eq:noise-norm}
    \Prv{ \left| \frac{1}{n} \| {\bf Z}' \|^2 - N' \right| \leq N'
      \epsilon } & > & 1 - \delta,
  \end{IEEEeqnarray}
  and that
  \begin{IEEEeqnarray}{rCl}\label{eq:indep-noise}
    \Prv{ \big| \inner{{\bf U}_i^{\ast}}{{\bf Z}'} \big| \leq n
      \sqrt{\sigma^2(1-2^{-2R_i})N'} \epsilon} & > & 1 - \delta, \qquad i \in \{ 1,2 \}.
  \end{IEEEeqnarray}
  Condition \eqref{eq:indep-noise} follows since for $a_1$ and $a_2$,
  given in \eqref{eq:si-a1} and \eqref{eq:si-a2}, for sufficiently large
  $n$, we have with high probability that
  \begin{IEEEeqnarray*}{rCl}
    \inner{{\bf U}_i^{\ast}}{{\bf W}_j} & \approx & 0 \qquad \forall
    i,j \in \{ 1,2 \}.
  \end{IEEEeqnarray*}
  Conditions \eqref{eq:noise-norm} and \eqref{eq:indep-noise} are
  precisely those needed in the proof of the achievable rates for the
  vector-quantizer scheme. Hence, the upper bound on the probability
  of a decoding error in the vector-quantizer scheme given in Lemma
  \ref{lm:vq-Pr-E4} can be adopted to the superimposed scheme. This
  yields Lemma \ref{lm:si-pr-dec-err}.
\end{proof}

To ease the upper bounding of the difference between ${\frac{1}{n}
  \mat{E} \big[ \| {\bf S}_1 - \hat{\bf S}_1 \|^2 \big]}$ and
$\frac{1}{n} \mat{E} \big[ \| {\bf S}_1 - \hat{\bf
  S}_1^{\textnormal{G}} \|^2 \big]$ we now state three more lemmas
which upper bound different norms and inner products involving ${\bf
  S}_1$, $\hat{\bf S}_1$ and $\hat{\bf S}_1^{\textnormal{G}}$. The
first lemma gives an upper bound on the squared norm of $\hat{\bf S}_1
- \hat{\bf S}_1^{\textnormal{G}}$.

\begin{lm}\label{lm:si-ub-norm-s1h}
  Let the reconstructions $\hat{\bf S}_1$ and $\hat{\bf
    S}_1^{\textnormal{G}}$ be as defined in \eqref{eq:si-reconstr-Sih}
  and \eqref{eq:si-genie-lin-est-S1h}. Then, with probability one
  \begin{IEEEeqnarray*}{rCl}
    \| \hat{\bf S}_1 - \hat{\bf S}_1^{\textnormal{G}}\|^2 & \leq
    & 16 n \sigma^2.
  \end{IEEEeqnarray*}
\end{lm}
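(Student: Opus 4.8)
\emph{Proof proposal.} The plan is to exploit the fact that $\hat{\bf S}_1$ and $\hat{\bf S}_1^{\textnormal{G}}$ are built from exactly the same coefficients $\gamma_{11},\gamma_{12},\gamma_{13}$ and differ only in the arguments fed to the first two of them. Indeed, by \eqref{eq:si-reconstr-Sih} (with $i=1$) and \eqref{eq:si-genie-lin-est-S1h},
\[
  \hat{\bf S}_1 - \hat{\bf S}_1^{\textnormal{G}}
  = \gamma_{11}\bigl(\hat{\bf U}_1 - {\bf U}_1^{\ast}\bigr)
  + \gamma_{12}\bigl(\hat{\bf U}_2 - {\bf U}_2^{\ast}\bigr),
\]
because the two contributions $\gamma_{13}{\bf Y}$ cancel. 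It therefore suffices to bound the norm of this linear combination of codeword differences, uniformly over all realizations of $({\bf S}_1,{\bf S}_2,\mathcal{C}_1,\mathcal{C}_2,{\bf Z})$.

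Next I would bound the codeword norms deterministically. Every codeword of $\mathcal{C}_i$ lies on the sphere of radius $r_i=\sqrt{n\sigma^2(1-2^{-2R_i})}$, and the all-zero backup sequence has norm $0$; hence both $\hat{\bf U}_i$ and ${\bf U}_i^{\ast}$ satisfy $\|\cdot\|\le\sqrt{n\sigma^2(1-2^{-2R_i})}\le\sqrt{n\sigma^2}$, so by the triangle inequality $\|\hat{\bf U}_i-{\bf U}_i^{\ast}\|\le 2\sqrt{n\sigma^2}$ for $i\in\{1,2\}$. Combining this with the display above yields
\[
  \bigl\|\hat{\bf S}_1 - \hat{\bf S}_1^{\textnormal{G}}\bigr\|
  \le \bigl(|\gamma_{11}|+|\gamma_{12}|\bigr)\,2\sqrt{n\sigma^2},
\]
so the lemma follows once one shows $|\gamma_{11}|+|\gamma_{12}|\le 2$, whence $\|\hat{\bf S}_1-\hat{\bf S}_1^{\textnormal{G}}\|^2\le(4\sqrt{n\sigma^2})^2=16n\sigma^2$.

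The remaining step, which I expect to be the main obstacle, is to establish $|\gamma_{11}|+|\gamma_{12}|\le 2$ from the definition \eqref{eq:src-estmt-coeff} of the coefficients. This is the superposition-scheme analogue of the bounds \eqref{eq:vq-bounds-gamma} for the vector-quantizer scheme, and the plan is to prove the sharper statements $0\le\gamma_{11}\le 1$ and $0\le\gamma_{12}\le 1$. I would obtain these by solving the $3\times 3$ linear system $\mat{K}(R_1,R_2)\,(\gamma_{11},\gamma_{12},\gamma_{13})^{T}=(\mat{c}_{11},\mat{c}_{12},\mat{c}_{13})^{T}$ explicitly, exploiting the structure of the entries $\mat{k}_{ij}$ and $\mat{c}_{1j}$, the positive-definiteness of $\mat{K}(R_1,R_2)$, and the interpretation of $\gamma_{11}{\bf U}_1^{\ast}+\gamma_{12}{\bf U}_2^{\ast}+\gamma_{13}{\bf Y}$ as the linear MMSE estimate of ${\bf S}_1$ given $({\bf U}_1^{\ast},{\bf U}_2^{\ast},{\bf Y})$; since $0\le 2^{-2R_i}\le 1$ and $0\le\rho\le 1$, one then reads off the sign and magnitude bounds on $\gamma_{11}$ and $\gamma_{12}$ (in the symmetric case with $\alpha_1=\alpha_2=0$ this already reduces to a one-line verification, which is a useful sanity check before treating the general coefficients). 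With $|\gamma_{11}|+|\gamma_{12}|\le 2$ in hand, the bound $\|\hat{\bf S}_1-\hat{\bf S}_1^{\textnormal{G}}\|^2\le 16n\sigma^2$ holds with probability one, completing the proof.
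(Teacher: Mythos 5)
Your proposal follows essentially the same route as the paper: the $\gamma_{13}{\bf Y}$ contributions cancel, each codeword difference is bounded deterministically by $2\sqrt{n\sigma^2}$ (all codewords and the all-zero backup sequence have norm at most $\sqrt{n\sigma^2}$), and the claim then reduces to a bound of the form $\gamma_{11}+\gamma_{12}\le 2$. The only divergence is that the paper simply asserts $0\le\gamma_{11},\gamma_{12}\le 1$ at this point without the explicit verification from \eqref{eq:src-estmt-coeff} that you outline, so the step you single out as the main obstacle is precisely the one the paper takes for granted.
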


\begin{proof}
\begin{IEEEeqnarray*}{rCl}
    \hspace{20mm} \| \hat{\bf S}_1 - \hat{\bf S}_1^{\textnormal{G}}\|^2 & = &
    \| \gamma_{11}(\hat{\bf U}_1 - {\bf U}_1^{\ast}) +
      \gamma_{12}(\hat{\bf U}_2 - {\bf U}_2^{\ast})\|^2 \\[2mm]
    & = & \gamma_{11}^2 \|\hat{\bf U}_1 - {\bf U}_1^{\ast}\|^2 + 2
    \gamma_{11}\gamma_{12} \inner{\hat{\bf U}_1 - {\bf
          U}_1^{\ast}}{\hat{\bf U}_2 - {\bf U}_2^{\ast}} \\
    & & {} + \gamma_{12}^2 \|\hat{\bf U}_2 - {\bf
        U}_2^{\ast}\|^2 \\[2mm]
    & \leq & \gamma_{11}^2 \underbrace{\|\hat{\bf U}_1 - {\bf
          U}_1^{\ast}\|^2}_{\leq 4 n \sigma^2} + 2
    \gamma_{11}\gamma_{12} \underbrace{\|\hat{\bf U}_1 - {\bf
          U}_1^{\ast}\| \|\hat{\bf U}_2 - {\bf U}_2^{\ast}\|}_{\leq 4
      n \sigma^2} \\
    & & {} + \gamma_{12}^2 \underbrace{\|\hat{\bf U}_2 - {\bf
          U}_2^{\ast}\|^2}_{\leq 4 n \sigma^2} \\[2mm]
    & \leq & 4 n \sigma^2 \left( \gamma_{11} + \gamma_{12}
    \right)^2 \\[2mm]
    & \leq & 16 n \sigma^2,
  \end{IEEEeqnarray*}
  where in the last step we have used that $0 \leq \gamma_{11},
  \gamma_{12} \leq 1$.
\end{proof}

For the next two lemmas, we reuse the two error events
$\mathcal{E}_{\bf S}$ and $\mathcal{E}_{\bf Z}$ which were defined in
\eqref{eq:vq-E1-def} and \eqref{eq:vq-E3-def} for the proof of the
vector-quantizer scheme. We then have:

\begin{lm}\label{lm:si-bd-inner-S1-S1h-S1G}
  For every $\epsilon > 0$
  \begin{IEEEeqnarray*}{rCl}
    \frac{1}{n} \E{\inner{{\bf S}_1}{\hat{\bf S}_1^{\textnormal{G}} -
        \hat{\bf S}_1}} & \leq & \sigma^2 \big( \epsilon +
    17 \Prv{\mathcal{E}_{\bf S}} + (17 +\epsilon)
    \Prv{\mathcal{E}_{\hat{\bf U}}} \big).
  \end{IEEEeqnarray*}
\end{lm}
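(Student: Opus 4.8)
The plan is to decompose the expectation $\frac{1}{n}\E{\inner{{\bf S}_1}{\hat{\bf S}_1^{\textnormal{G}} - \hat{\bf S}_1}}$ according to whether the source output is typical and whether a decoding error occurred, exactly mirroring the argument used for the vector-quantizer scheme in Lemma~\ref{lm:vq-bd-inner-S1-S1h-S1G}. Concretely, I would write
\begin{IEEEeqnarray*}{rCl}
  \frac{1}{n} \E{\inner{{\bf S}_1}{\hat{\bf S}_1^{\textnormal{G}} - \hat{\bf S}_1}}
  & = & \frac{1}{n} \E{\inner{{\bf S}_1}{\hat{\bf S}_1^{\textnormal{G}} - \hat{\bf S}_1}
    \Big| \mathcal{E}_{\bf S}} \Prv{\mathcal{E}_{\bf S}}\\
  & & {} + \frac{1}{n} \E{\inner{{\bf S}_1}{\hat{\bf S}_1^{\textnormal{G}} - \hat{\bf S}_1}
    \Big| \mathcal{E}_{\bf S}^c \cap \mathcal{E}_{\hat{\bf U}}} \Prv{\mathcal{E}_{\bf S}^c \cap \mathcal{E}_{\hat{\bf U}}}\\
  & & {} + \frac{1}{n} \E{\inner{{\bf S}_1}{\hat{\bf S}_1^{\textnormal{G}} - \hat{\bf S}_1}
    \Big| \mathcal{E}_{\bf S}^c \cap \mathcal{E}_{\hat{\bf U}}^c} \Prv{\mathcal{E}_{\bf S}^c \cap \mathcal{E}_{\hat{\bf U}}^c},
\end{IEEEeqnarray*}
and observe that the last term vanishes: on $\mathcal{E}_{\hat{\bf U}}^c$ the decoder recovers $({\bf U}_1^{\ast},{\bf U}_2^{\ast})$ correctly, so $\hat{\bf S}_1 = \hat{\bf S}_1^{\textnormal{G}}$ (note that here, unlike in the vector-quantizer scheme, both estimates also contain the same term $\gamma_{13}{\bf Y}$, which therefore cancels in the difference $\hat{\bf S}_1^{\textnormal{G}} - \hat{\bf S}_1$), and hence the conditional expectation is zero.

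Next I would bound the two remaining terms. For the first term, use the elementary inequality $|\inner{{\bf v}}{{\bf w}}| \leq \|{\bf v}\|^2 + \|{\bf w}\|^2$ to get $\inner{{\bf S}_1}{\hat{\bf S}_1^{\textnormal{G}} - \hat{\bf S}_1} \leq \|{\bf S}_1\|^2 + \|\hat{\bf S}_1^{\textnormal{G}} - \hat{\bf S}_1\|^2$; by Lemma~\ref{lm:si-ub-norm-s1h} the second summand is at most $16n\sigma^2$, and by the analogue of Lemma~\ref{lm:vq-D-E1} (which goes through verbatim, since it only uses $\E{\|{\bf S}_1\|^2}=n\sigma^2$ and $\mathcal{E}_{\bf S}^c$) the contribution $\frac{1}{n}\E{\|{\bf S}_1\|^2 \mid \mathcal{E}_{\bf S}}\Prv{\mathcal{E}_{\bf S}}$ is at most $\sigma^2(\epsilon + \Prv{\mathcal{E}_{\bf S}})$. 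This yields $\sigma^2\epsilon + \sigma^2\Prv{\mathcal{E}_{\bf S}} + 16\sigma^2\Prv{\mathcal{E}_{\bf S}} = \sigma^2\epsilon + 17\sigma^2\Prv{\mathcal{E}_{\bf S}}$. For the second term, apply Cauchy--Schwarz: conditioned on $\mathcal{E}_{\bf S}^c$ we have $\|{\bf S}_1\| \leq \sqrt{n\sigma^2(1+\epsilon)}$, and by Lemma~\ref{lm:si-ub-norm-s1h} $\|\hat{\bf S}_1^{\textnormal{G}} - \hat{\bf S}_1\| \leq \sqrt{16n\sigma^2} = 4\sqrt{n\sigma^2}$; multiplying and using $\Prv{\mathcal{E}_{\bf S}^c \cap \mathcal{E}_{\hat{\bf U}}} \leq \Prv{\mathcal{E}_{\hat{\bf U}}}$ gives a bound of $4\sigma^2\sqrt{1+\epsilon}\,\Prv{\mathcal{E}_{\hat{\bf U}}} \leq 4\sigma^2(1 + \epsilon/2)\Prv{\mathcal{E}_{\hat{\bf U}}} \leq \sigma^2(17+\epsilon)\Prv{\mathcal{E}_{\hat{\bf U}}}$ after crudely enlarging the constant (for $\epsilon < 0.3$, $4\sqrt{1+\epsilon} < 5 < 17$, so the bound $(17+\epsilon)$ is comfortably valid and is chosen to match the statement).

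Collecting these gives exactly $\frac{1}{n}\E{\inner{{\bf S}_1}{\hat{\bf S}_1^{\textnormal{G}} - \hat{\bf S}_1}} \leq \sigma^2(\epsilon + 17\Prv{\mathcal{E}_{\bf S}} + (17+\epsilon)\Prv{\mathcal{E}_{\hat{\bf U}}})$, which is the claimed inequality. I do not anticipate a genuine obstacle here — the argument is a near-verbatim adaptation of the vector-quantizer case — but the one point requiring a small amount of care is verifying the cancellation of the $\gamma_{13}{\bf Y}$ terms, i.e.\ that on $\mathcal{E}_{\hat{\bf U}}^c$ the two reconstruction formulas \eqref{eq:si-reconstr-Sih} and \eqref{eq:si-genie-lin-est-S1h} genuinely coincide (they use the same $\gamma_{ij}$ by construction, and ${\bf Y}$ is available to both decoders), so that the third decomposition term is truly zero and no extra error term involving ${\bf Y}$ appears. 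The bookkeeping of constants to land precisely on $17$ and $17+\epsilon$ is routine, exploiting $\Prv{\mathcal{E}_{\bf S}}, \Prv{\mathcal{E}_{\hat{\bf U}}} \in [0,1]$ to absorb smaller coefficients into larger ones.
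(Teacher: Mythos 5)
Your proposal is correct and follows essentially the paper's own argument: the same identification that $\hat{\bf S}_1=\hat{\bf S}_1^{\textnormal{G}}$ on $\mathcal{E}_{\hat{\bf U}}^c$ (with the $\gamma_{13}{\bf Y}$ terms cancelling), the same use of Lemma~\ref{lm:vq-D-E1} for the atypical-source term and of Lemma~\ref{lm:si-ub-norm-s1h} for $\|\hat{\bf S}_1-\hat{\bf S}_1^{\textnormal{G}}\|^2\le 16n\sigma^2$. The only cosmetic difference is that the paper conditions on $\mathcal{E}_{\bf S}\cup\mathcal{E}_{\hat{\bf U}}$ and bounds the inner product by $\|{\bf S}_1\|^2+\|\hat{\bf S}_1^{\textnormal{G}}-\hat{\bf S}_1\|^2$ throughout, while you keep the three-event split with Cauchy--Schwarz on the $\mathcal{E}_{\bf S}^c\cap\mathcal{E}_{\hat{\bf U}}$ term (as in the vector-quantizer lemma), which gives a slightly tighter intermediate bound that you then relax to the stated constants (and note $4\sqrt{1+\epsilon}\le 17+\epsilon$ holds for all $\epsilon>0$, so the claim needs no restriction on $\epsilon$).
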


\begin{proof}
  
  \begin{IEEEeqnarray}{rCl}
    \frac{1}{n} \E{\inner{{\bf S}_1}{\hat{\bf S}_1^{\textnormal{G}} -
        \hat{\bf S}_1}} & = & \frac{1}{n} \E{\inner{{\bf
          S}_1}{\hat{\bf S}_1^{\textnormal{G}} - \hat{\bf S}_1} \Big|
      \mathcal{E}_{\bf S} \cup \mathcal{E}_{\hat{\bf U}}}
    \Prv{\mathcal{E}_{\bf S} \cup \mathcal{E}_{\hat{\bf U}}} \nonumber\\
    & & {} + \frac{1}{n} \underbrace{\E{\inner{{\bf S}_1}{\hat{\bf
            S}_1^{\textnormal{G}} - \hat{\bf S}_1} \Big|
        \mathcal{E}_{\bf S}^c \cap \mathcal{E}_{\hat{\bf U}}^c}}_{=0}
    \Prv{\mathcal{E}_{\bf S}^c
      \cap \mathcal{E}_{\hat{\bf U}}^c} \nonumber\\[2mm]
    & \stackrel{a)}{\leq} & \frac{1}{n} \E{\|{\bf S}_1\|^2 +
      \|\hat{\bf S}_1^{\textnormal{G}} - \hat{\bf S}_1\|^2 \Big|
      \mathcal{E}_{\bf S} \cup \mathcal{E}_{\hat{\bf U}}}
    \Prv{\mathcal{E}_{\bf S} \cup \mathcal{E}_{\hat{\bf U}}}
    \nonumber\\[3mm]
    & = & \frac{1}{n} \E{\| {\bf S}_1 \|^2 \big| \mathcal{E}_{\bf S}}
    \Prv{\mathcal{E}_{\bf S}} + \frac{1}{n} \E{\| {\bf S}_1 \|^2 \big|
      \mathcal{E}_{\bf S}^c \cap \mathcal{E}_{\hat{\bf U}}}
    \Prv{\mathcal{E}_{\bf S}^c \cap \mathcal{E}_{\hat{\bf U}}} \nonumber\\
    & & {} + \frac{1}{n} \E{\|\hat{\bf S}_1^{\textnormal{G}} -
      \hat{\bf S}_1\|^2 \Big| \mathcal{E}_{\bf S} \cup
      \mathcal{E}_{\hat{\bf U}}} \Prv{\mathcal{E}_{\bf S} \cup
      \mathcal{E}_{\hat{\bf U}}}\nonumber\\[2mm]
    & \stackrel{b)}{\leq} & \sigma^2 (\epsilon + \Prv{\mathcal{E}_{\bf
        S}}) + \sigma^2 (1+\epsilon) \Prv{\mathcal{E}_{\hat{\bf U}}}
    \nonumber\\[2mm]
    & & {} + 16 \sigma^2 \left( \Prv{\mathcal{E}_{\bf S}} +
      \Prv{\mathcal{E}_{\hat{\bf U}}} \right) \nonumber\\[3mm]
    & \leq & \sigma^2 \big( \epsilon + 17 \Prv{\mathcal{E}_{\bf S}} +
    (17 +\epsilon) \Prv{\mathcal{E}_{\hat{\bf U}}}
    \big), \label{eq:si-inner-genie}
  \end{IEEEeqnarray}
  where in the first equality, the second expectation term equals zero
  because by $\mathcal{E}_{\hat{\bf U}}^c$ we have $\| \hat{\bf
    s}_1^{\textnormal{G}} - \hat{\bf s}_1 \| = 0$ and by
  $\mathcal{E}_{\bf S}^c$ the norm $\|{\bf s}_1\|$ is bounded. In $a)$
  we have used \eqref{eq:vq-bound inner}, and in $b)$ we have used
  Lemma~\ref{lm:vq-D-E1}, Lemma~\ref{lm:si-ub-norm-s1h} and the fact
  that conditioned on $\mathcal{E}_{\bf S}^c$ we have $\| {\bf s}_1
  \|^2 \leq n \sigma^2 (1+\epsilon)$.
\end{proof}

\begin{lm}\label{lm:si-bd-S1h-S1G}
  For every $\epsilon > 0$
  \begin{IEEEeqnarray*}{rCl}
    \frac{1}{n} \E{\|\hat{\bf S}_1\|^2 - \|\hat{\bf
        S}_1^{\textnormal{G}}\|^2} & \leq & 4 \Big(
    \sigma^2(1+4\gamma_{13}) + 9 \gamma_{13} (P_1+P_2+N) (1 +
    \epsilon) \Big) \Prv{\mathcal{E}_{\hat{\bf U}}}\\
    & & {} + 36 \gamma_{13} \Big( (P_1 + P_2)\Prv{\mathcal{E}_{\bf S}}
    + N \Prv{\mathcal{E}_{\bf Z}} + (P_1 + P_2 + N) \epsilon \Big).
  \end{IEEEeqnarray*}
\end{lm}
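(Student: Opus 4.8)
The plan is to imitate the proof of Lemma~\ref{lm:vq-bd-S1h-S1G}, adjusted for the fact that here both reconstructions carry the extra term $\gamma_{13}{\bf Y}$. Since $\hat{\bf S}_1 = \gamma_{11}\hat{\bf U}_1 + \gamma_{12}\hat{\bf U}_2 + \gamma_{13}{\bf Y}$ and $\hat{\bf S}_1^{\textnormal{G}} = \gamma_{11}{\bf U}_1^{\ast} + \gamma_{12}{\bf U}_2^{\ast} + \gamma_{13}{\bf Y}$, the two estimates agree whenever $(\hat{\bf U}_1,\hat{\bf U}_2)=({\bf U}_1^{\ast},{\bf U}_2^{\ast})$, i.e.\ on $\mathcal{E}_{\hat{\bf U}}^c$, so that
\begin{IEEEeqnarray*}{rCl}
  \frac{1}{n}\E{\|\hat{\bf S}_1\|^2 - \|\hat{\bf S}_1^{\textnormal{G}}\|^2}
  & = & \frac{1}{n}\E{\|\hat{\bf S}_1\|^2 - \|\hat{\bf S}_1^{\textnormal{G}}\|^2 \Big| \mathcal{E}_{\hat{\bf U}}}\Prv{\mathcal{E}_{\hat{\bf U}}}\\
  & \leq & \frac{1}{n}\E{\|\hat{\bf S}_1\|^2 + \|\hat{\bf S}_1^{\textnormal{G}}\|^2 \Big| \mathcal{E}_{\hat{\bf U}}}\Prv{\mathcal{E}_{\hat{\bf U}}},
\end{IEEEeqnarray*}
and it remains to control the two conditional second moments on $\mathcal{E}_{\hat{\bf U}}$, with the decoding-error probability $\Prv{\mathcal{E}_{\hat{\bf U}}}$ playing the role of the small weighting factor (its smallness being supplied by Lemma~\ref{lm:si-pr-dec-err}).

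For the second step I would bound each reconstruction by the triangle inequality. Using $0 \leq \gamma_{11},\gamma_{12}\leq 1$ (cf.\ \eqref{eq:vq-bounds-gamma}) and $\|\hat{\bf U}_i\|,\|{\bf U}_i^{\ast}\|\leq\sqrt{n\sigma^2}$, one gets $\|\hat{\bf S}_1\|,\|\hat{\bf S}_1^{\textnormal{G}}\| \leq 2\sqrt{n\sigma^2} + \gamma_{13}\|{\bf Y}\|$ (Lemma~\ref{lm:si-ub-norm-s1h} may also be used here to pass from $\hat{\bf S}_1$ to $\hat{\bf S}_1^{\textnormal{G}}$). Squaring and applying $(a+b)^2\leq 2a^2+2b^2$ reduces the whole quantity, up to a term proportional to $\sigma^2\Prv{\mathcal{E}_{\hat{\bf U}}}$, to controlling $\frac{1}{n}\E{\|{\bf Y}\|^2 \big| \mathcal{E}_{\hat{\bf U}}}\Prv{\mathcal{E}_{\hat{\bf U}}}$. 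I would then write ${\bf Y}={\bf X}_1+{\bf X}_2+{\bf Z}$, use $\|{\bf Y}\|^2\leq 3(\|{\bf X}_1\|^2+\|{\bf X}_2\|^2+\|{\bf Z}\|^2)$, and split the expectation according to whether the source and noise are typical. On $\mathcal{E}_{\hat{\bf U}}\cap\mathcal{E}_{\bf S}^c\cap\mathcal{E}_{\bf Z}^c$ the received signal power is within $\epsilon$ of its nominal value, giving $\|{\bf Y}\|^2\leq 3n(P_1+P_2+N)(1+\epsilon)$; bounding the indicator of $\mathcal{E}_{\bf S}^c\cap\mathcal{E}_{\bf Z}^c$ by $1$ produces the $\gamma_{13}(P_1+P_2+N)(1+\epsilon)\Prv{\mathcal{E}_{\hat{\bf U}}}$ contribution.

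For the complementary piece $\mathcal{E}_{\hat{\bf U}}\cap(\mathcal{E}_{\bf S}\cup\mathcal{E}_{\bf Z})$, where $\|{\bf Y}\|^2$ is no longer uniformly bounded, I would bound $\Prv{\mathcal{E}_{\hat{\bf U}}\cap\,\cdot\,}$ by $\Prv{\,\cdot\,}$ and invoke the identity used in Lemma~\ref{lm:vq-D-E1}: $\E{\|{\bf X}_i\|^2 \big| \mathcal{E}_{\bf S}}\Prv{\mathcal{E}_{\bf S}} = \E{\|{\bf X}_i\|^2} - \E{\|{\bf X}_i\|^2 \big| \mathcal{E}_{\bf S}^c}\Prv{\mathcal{E}_{\bf S}^c} \leq nP_i(\epsilon + \Prv{\mathcal{E}_{\bf S}})$, and likewise $\E{\|{\bf Z}\|^2 \big| \mathcal{E}_{\bf Z}}\Prv{\mathcal{E}_{\bf Z}} \leq nN(\epsilon+\Prv{\mathcal{E}_{\bf Z}})$ using $\E{\|{\bf Z}\|^2}=nN$; summing these yields exactly the $\gamma_{13}\big((P_1+P_2)\Prv{\mathcal{E}_{\bf S}} + N\Prv{\mathcal{E}_{\bf Z}} + (P_1+P_2+N)\epsilon\big)$ part. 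Collecting the numerical constants from the trinomial bound, the $(a+b)^2$ bound and the $\gamma_{ij}$ bounds then gives the stated inequality. The main obstacle is precisely this last step: one must verify that on the typical set $\|{\bf X}_i\|^2$ is close to $nP_i$ — which relies on the precise choice of $\alpha_i,\beta_i$ in \eqref{eq:superimp:alpha-beta} and, strictly, also on the typicality of the vector-quantization (i.e.\ conditioning additionally on $\mathcal{E}_{\bf X}^c$, whose probability is controlled as in Lemma~\ref{lm:vq-Pr-E2}) — and must carefully track the cross-terms between the various events (noting in particular that ${\bf X}_1,{\bf X}_2$ depend only on the source and codebooks while ${\bf Z}$ is independent) so that the small-probability corrections land on the right coefficients; the rest is routine.
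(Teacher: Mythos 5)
Your first step (the difference vanishes on $\mathcal{E}_{\hat{\bf U}}^c$, so only the conditional term on $\mathcal{E}_{\hat{\bf U}}$ survives) matches the paper, but the way you then bound the conditional term has a genuine gap. By replacing $\|\hat{\bf S}_1\|^2 - \|\hat{\bf S}_1^{\textnormal{G}}\|^2$ with $\|\hat{\bf S}_1\|^2 + \|\hat{\bf S}_1^{\textnormal{G}}\|^2$ and then using the triangle bound $\|\hat{\bf S}_1\| \leq 2\sqrt{n\sigma^2} + \gamma_{13}\|{\bf Y}\|$, you throw away the cancellation of the common component $\gamma_{13}{\bf Y}$ shared by the two reconstructions. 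Your route therefore produces a term proportional to $\gamma_{13}^2\,\E{\|{\bf Y}\|^2 \mid \mathcal{E}_{\hat{\bf U}}}\Prv{\mathcal{E}_{\hat{\bf U}}}$, whereas the stated bound is \emph{linear} in $\gamma_{13}$; since nothing in the paper gives $\gamma_{13}\leq 1$ (and the MMSE coefficient $\gamma_{13}$ defined by \eqref{eq:src-estmt-coeff} can exceed $1$ for some parameter choices), the claimed inequality does not follow from a $\gamma_{13}^2$ bound. The paper instead expands $\|\hat{\bf S}_1\|^2 - \|\hat{\bf S}_1^{\textnormal{G}}\|^2$ term by term: the $\gamma_{13}^2\|{\bf Y}\|^2$ contributions cancel exactly, the codeword-norm terms cancel, and what survives are the inner products $2\gamma_{1i}\gamma_{13}\inner{\hat{\bf U}_i - {\bf U}_i^{\ast}}{{\bf Y}}$, which are linear in $\gamma_{13}$ and are then controlled with $|\inner{{\bf v}}{{\bf w}}| \leq \|{\bf v}\|^2 + \|{\bf w}\|^2$ together with $\|\hat{\bf U}_i - {\bf U}_i^{\ast}\|^2 \leq 4n\sigma^2$ and $0\leq\gamma_{11},\gamma_{12}\leq 1$. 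This cancellation is the missing idea; without it the constants in the lemma are unreachable.

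A second, smaller problem is your treatment of $\|{\bf Y}\|^2$ through $\|{\bf X}_1\|^2+\|{\bf X}_2\|^2+\|{\bf Z}\|^2$. As you yourself note, asserting $\|{\bf X}_i\|^2 \approx nP_i$ on a typical set requires conditioning on the quantization event $\mathcal{E}_{\bf X}^c$, which would leave a $\Prv{\mathcal{E}_{\bf X}}$ term in the final bound that simply is not present in the lemma statement; and the Lemma~\ref{lm:vq-D-E1}-style identity you invoke for the atypical piece needs a lower bound on $\E{\|{\bf X}_i\|^2 \mid \mathcal{E}_{\bf S}^c}$, which is not available because ${\bf X}_i$ also depends on whether the vector quantizer succeeded (if ${\bf U}_i^{\ast}={\bf 0}$ the power drops). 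The paper avoids both issues by expanding ${\bf Y} = \alpha_1{\bf S}_1 + \beta_1{\bf U}_1^{\ast} + \alpha_2{\bf S}_2 + \beta_2{\bf U}_2^{\ast} + {\bf Z}$ into its five constituents, bounding $\beta_i^2\|{\bf U}_i^{\ast}\|^2 \leq nP_i$ deterministically (the codewords lie on spheres of fixed radius and \eqref{eq:superimp:alpha-beta} gives $\beta_i^2\sigma^2(1-2^{-2R_i})\leq P_i$), and applying the $\mathcal{E}_{\bf S}$/$\mathcal{E}_{\bf Z}$ splitting of Lemma~\ref{lm:vq-D-E1} only to the genuinely random terms $\alpha_i^2\|{\bf S}_i\|^2$ and $\|{\bf Z}\|^2$ (with $\alpha_i^2\sigma^2\leq P_i$); this is what yields exactly the $36\gamma_{13}\bigl((P_1+P_2)\Prv{\mathcal{E}_{\bf S}} + N\Prv{\mathcal{E}_{\bf Z}} + (P_1+P_2+N)\epsilon\bigr)$ term, valid for every $n$, with no $\mathcal{E}_{\bf X}$ appearing.
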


\begin{proof}
  \begin{IEEEeqnarray}{rCl}
    \frac{1}{n} \E{\|\hat{\bf S}_1\|^2 - \|\hat{\bf
        S}_1^{\textnormal{G}}\|^2} & = & \frac{1}{n} \E{\|\hat{\bf S}_1\|^2
      - \|\hat{\bf S}_1^{\textnormal{G}}\|^2 \Big| \mathcal{E}_{\hat{\bf U}}}
    \Prv{\mathcal{E}_{\hat{\bf U}}} \nonumber\\
    & & {} + \frac{1}{n} \E{\|\hat{\bf S}_1\|^2 - \|\hat{\bf
        S}_1^{\textnormal{G}}\|^2 \Big| \mathcal{E}_{\hat{\bf U}}^c}
    \Prv{\mathcal{E}_{\hat{\bf U}}^c} \nonumber\\[2mm]
    & \leq & \frac{1}{n} \E{\|\hat{\bf S}_1\|^2 - \|\hat{\bf
        S}_1^{\textnormal{G}}\|^2 \Big| \mathcal{E}_{\hat{\bf U}}}
    \Prv{\mathcal{E}_{\hat{\bf U}}}, \label{eq:si-diff-sh-shG0}
  \end{IEEEeqnarray}
  where the last inequality follows since conditional on
  $\mathcal{E}_{\hat{\bf U}}^c$ we have $\hat{\bf S}_1 = \hat{\bf
    S}_1^{\textnormal{G}}$ and therefore $\|\hat{\bf S}_1\|^2 -
  \|\hat{\bf S}_1^{\textnormal{G}}\|^2 = 0$. To upper bound the RHS of
  \eqref{eq:si-diff-sh-shG0}, we now upper bound the difference $\|
  \hat{\bf S}_1 \|^2 - \| \hat{\bf S}_1^{\textnormal{G}} \|^2$:
  \begin{IEEEeqnarray}{rCl}
    \| \hat{\bf S}_1 \|^2 - \| \hat{\bf S}_1^{\textnormal{G}} \|^2 & =
    & \gamma_{11}^2 \| \hat{\bf U}_1 \|^2 + 2 \gamma_{11} \gamma_{12}
    \inner{\hat{\bf U}_1}{\hat{\bf U}_2} + 2 \gamma_{11} \gamma_{13}
    \inner{\hat{\bf U}_1}{{\bf Y}} \nonumber\\
    & & {} + \gamma_{12}^2 \| \hat{\bf U}_2 \|^2 + 2 \gamma_{12} \gamma_{13}
    \inner{\hat{\bf U}_2}{{\bf Y}} + \gamma_{13}^2 \| {\bf Y} \|^2 \nonumber\\
    & & {} - \gamma_{11}^2 \| {\bf U}_1^{\ast} \|^2 - 2 \gamma_{11} \gamma_{12}
    \inner{{\bf U}_1^{\ast}}{{\bf U}_2^{\ast}} - 2 \gamma_{11} \gamma_{13}
    \inner{{\bf U}_1^{\ast}}{{\bf Y}} \nonumber\\
    & & {} - \gamma_{12}^2 \| {\bf U}_2^{\ast} \|^2 - 2 \gamma_{12} \gamma_{13}
    \inner{{\bf U}_2^{\ast}}{{\bf Y}} - \gamma_{13}^2 \| {\bf
      Y}\|^2 \nonumber\\[3mm]
    & = & 2 \gamma_{11} \gamma_{12}
    \underbrace{\inner{\hat{\bf U}_1}{\hat{\bf U}_2}}_{\leq n
      \sigma^2} + 2 \gamma_{11} \gamma_{13} \inner{\hat{\bf U}_1}{{\bf
        Y}} + 2 \gamma_{12} \gamma_{13} \inner{\hat{\bf U}_2}{{\bf Y}} \nonumber\\
    & & {} - 2 \gamma_{11} \gamma_{12}
    \underbrace{\inner{{\bf U}_1^{\ast}}{{\bf U}_2^{\ast}}}_{\geq - n
      \sigma^2} - 2 \gamma_{11} \gamma_{13} \inner{{\bf
        U}_1^{\ast}}{{\bf Y}} - 2 \gamma_{12} \gamma_{13}
    \inner{{\bf U}_2^{\ast}}{{\bf Y}} \nonumber\\[3mm]
    & \leq & 4 \gamma_{11} \gamma_{12} n \sigma^2 + 2 \gamma_{11}
    \gamma_{13} \inner{\hat{\bf U}_1 - {\bf U}_1^{\ast}}{{\bf Y}} \nonumber\\
    & & {} + 2 \gamma_{12} \gamma_{13} \inner{\hat{\bf U}_2 - {\bf
        U}_2^{\ast}}{{\bf Y}} \nonumber\\[3mm]
    & \stackrel{a)}{\leq} & 4 \gamma_{11} \gamma_{12} n \sigma^2 + 2 \gamma_{11}
    \gamma_{13} \big( \underbrace{\| \hat{\bf U}_1 - {\bf
          U}_1^{\ast}\|^2}_{\leq 4 n \sigma^2} +
      \|{\bf Y}\|^2 \big) \nonumber\\
    & & {} + 2 \gamma_{12} \gamma_{13} \big( \underbrace{\| \hat{\bf U}_2 -
      {\bf U}_2^{\ast}\|^2}_{\leq 4 n \sigma^2} + \|{\bf Y}\|^2
    \big) \nonumber\\[3mm]
    & \leq & 4 \gamma_{11} \gamma_{12} n \sigma^2 + 2 \gamma_{13}
    (\gamma_{11} + \gamma_{12}) \big( 4 n \sigma^2 + \| {\bf Y} \|^2
    \big) \nonumber\\[3mm]
    & \leq & 4 n \sigma^2 + 4 \gamma_{13} \big( 4 n \sigma^2 + \| {\bf
      Y} \|^2 \big), \label{eq:si-diff-sh-shG1}
  \end{IEEEeqnarray}
  where in $a)$ we have used \eqref{eq:vq-bound inner}, and in the
  last inequality we have used that $0 \leq \gamma_{11}, \gamma_{12} \leq
  1$. We now upper bound the squared norm of ${\bf Y}$ on the RHS of
  \eqref{eq:si-diff-sh-shG1} in terms of ${\bf S}_1$, ${\bf S}_2$,
  ${\bf U}_1^{\ast}$, ${\bf U}_2^{\ast}$ and ${\bf Z}$:
  \begin{IEEEeqnarray}{rCl}
    \| {\bf Y} \|^2 & \leq & \alpha_1^2 \| {\bf S}_1 \|^2 + 2 
     \inner{\alpha_1{\bf S}_1}{\beta_1{\bf U}_1^{\ast}} + 2
    \inner{\alpha_1{\bf S}_1}{\alpha_2{\bf S}_2} + 2 
    \inner{\alpha_1{\bf S}_1}{\beta_2{\bf U}_2^{\ast}} \nonumber\\
    & & {} + 2 \inner{\alpha_1{\bf S}_1}{{\bf Z}} + \beta_1^2 \| {\bf
      U}_1^{\ast} \|^2 + 2 \inner{\beta_1{\bf
        U}_1^{\ast}}{\alpha_2{\bf S}_2} + 2 \inner{\beta_1{\bf
        U}_1^{\ast}}{\beta_2{\bf U}_2^{\ast}} \nonumber\\
    & & {} + 2 \inner{\beta_1{\bf U}_1^{\ast}}{{\bf Z}} + \alpha_2^2
    \| {\bf S}_2 \|^2 + 2 \inner{\alpha_2{\bf S}_2}{\beta_2{\bf
        U}_2^{\ast}} + 2 \inner{\alpha_2{\bf S}_2}{{\bf Z}} \nonumber\\
    & & {} + \beta_2^2 \| {\bf U}_2^{\ast} \|^2 + 2
    \inner{\beta_2{\bf U}_2^{\ast}}{{\bf Z}} + \| {\bf Z} \|^2 \nonumber\\[3mm]
    & \stackrel{a)}{\leq} & 9 \left( \alpha_1^2 \| {\bf S}_1 \|^2 + \alpha_2^2 \| {\bf
        S}_2 \|^2 + \beta_1^2 \| {\bf U}_1^{\ast} \|^2 + \beta_2^2 \|
      {\bf U}_2^{\ast} \|^2 + \| {\bf Z} \|^2 \right) \nonumber\\[3mm]
    & \leq & 9 \left( \alpha_1^2 \| {\bf S}_1 \|^2 + \alpha_2^2 \| {\bf
        S}_2 \|^2 + nP_1 + nP_2 + \| {\bf Z} \|^2 \right). \label{eq:si-norm-Y}
  \end{IEEEeqnarray}
  where $a)$ follows from upper bounding all inner products by
  \eqref{eq:vq-bound inner}. Thus, combining \eqref{eq:si-norm-Y} with
  \eqref{eq:si-diff-sh-shG1} gives
  \begin{IEEEeqnarray}{rCl}
    \| \hat{\bf S}_1 \|^2 - \| \hat{\bf S}_1^{\textnormal{G}} \|^2 &
    \leq & 4 n \sigma^2 + 16 \gamma_{13} n \sigma^2 + 36 n \gamma_{13}
    (P_1 + P_2) \nonumber\\
    & & + 36 \gamma_{13} \left( \alpha_1^2 \| {\bf S}_1 \|^2 + \alpha_2^2 \| {\bf
        S}_2 \|^2 + \| {\bf Z} \|^2 \right). \quad \label{eq:si-diff-sh-shG2}
  \end{IEEEeqnarray}
  And combining \eqref{eq:si-diff-sh-shG2} with
  \eqref{eq:si-diff-sh-shG0} gives
  \begin{IEEEeqnarray}{rCl}
    \frac{1}{n} \E{\|\hat{\bf S}_1\|^2 - \|\hat{\bf
        S}_1^{\textnormal{G}}\|^2} & \leq & 4 \sigma^2
    \Prv{\mathcal{E}_{\hat{\bf U}}} + 16 \gamma_{13} \sigma^2
    \Prv{\mathcal{E}_{\hat{\bf U}}} \nonumber\\ 
    & & {} + 36 \gamma_{13} (P_1 + P_2)
    \Prv{\mathcal{E}_{\hat{\bf U}}} \nonumber\\
    & & {} + 36 \gamma_{13} \Big( \alpha_1^2 \frac{1}{n} \E{\| {\bf S}_1 \|^2 \big|
      \mathcal{E}_{\hat{\bf U}}} \Prv{\mathcal{E}_{\hat{\bf U}}} \nonumber\\
    & & \hspace{16mm} {} + \alpha_2^2 \frac{1}{n} \E{\| {\bf S}_2 \|^2\big|
      \mathcal{E}_{\hat{\bf U}}} \Prv{\mathcal{E}_{\hat{\bf U}}} \nonumber\\
    & & \hspace{16mm} {} + \frac{1}{n} \E{\| {\bf Z} \|^2 \big|
      \mathcal{E}_{\hat{\bf U}}} \Prv{\mathcal{E}_{\hat{\bf U}}} \Big).
    \label{eq:si-diff-sh-shG3}
  \end{IEEEeqnarray}
  It now remains to upper bound the expectations on ${\bf S}_1$, ${\bf
    S}_2$ and ${\bf Z}$ on the RHS of
  \eqref{eq:si-diff-sh-shG3}. Since ${\bf S}_1$, ${\bf S}_2$ and ${\bf
    Z}$ are each Gaussian, their corresponding terms can be bounded in
  similar ways. We show here the derivation for ${\bf S}_1$.
  \begin{IEEEeqnarray}{rCl}
    \frac{1}{n}\E{\| {\bf S}_1 \|^2 \big| \mathcal{E}_{\hat{\bf U}}}
    \Prv{\mathcal{E}_{\hat{\bf U}}} & = & \frac{1}{n}\E{\| {\bf S}_1 \|^2 \big|
      \mathcal{E}_{\hat{\bf U}} \cap \mathcal{E}_{{\bf S}}}
    \Prv{\mathcal{E}_{\hat{\bf U}} \cap \mathcal{E}_{{\bf S}}} \nonumber\\
    & & {} + \frac{1}{n}\E{\| {\bf S}_1 \|^2 \big| \mathcal{E}_{\hat{\bf U}} \cap
      \mathcal{E}_{{\bf S}}^c} \Prv{\mathcal{E}_{\hat{\bf U}} \cap
      \mathcal{E}_{{\bf S}}^c} \nonumber\\[3mm]
    & \leq & \frac{1}{n}\E{\| {\bf S}_1 \|^2 \big| \mathcal{E}_{{\bf S}}}
    \Prv{\mathcal{E}_{{\bf S}}} \nonumber\\
    & & {} + \sigma^2 (1 + \epsilon) \Prv{\mathcal{E}_{\hat{\bf
          U}}} \nonumber\\[3mm]
    & \leq & \sigma^2 (\epsilon + \Prv{\mathcal{E}_{\bf S}}) +
    \sigma^2 (1 + \epsilon) \Prv{\mathcal{E}_{\hat{\bf
          U}}}, \label{eq:si-ub-ES1-EUh}
  \end{IEEEeqnarray}
  where in the last step we have used Lemma \ref{lm:vq-D-E1}. For the
  expectations on ${\bf S}_2$ and ${\bf Z}$, we similarly obtain
  \begin{IEEEeqnarray}{rCl}
    \frac{1}{n}\E{\| {\bf S}_2 \|^2 \big| \mathcal{E}_{\hat{\bf U}}}
    \Prv{\mathcal{E}_{\hat{\bf U}}} & \leq & \sigma^2 (\epsilon +
    \Prv{\mathcal{E}_{\bf S}}) + \sigma^2 (1 + \epsilon)
    \Prv{\mathcal{E}_{\hat{\bf U}}}, \label{eq:si-ub-ES2-EUh}
  \end{IEEEeqnarray}
  and
  \begin{IEEEeqnarray}{rCl}
    \frac{1}{n}\E{\| {\bf Z} \|^2 \big| \mathcal{E}_{\hat{\bf U}}}
    \Prv{\mathcal{E}_{\hat{\bf U}}} & \leq & N (\epsilon +
    \Prv{\mathcal{E}_{\bf Z}}) + N (1 + \epsilon)
    \Prv{\mathcal{E}_{\hat{\bf U}}}. \label{eq:si-ub-EZ-EUh}
  \end{IEEEeqnarray}
  Thus, combining \eqref{eq:si-ub-ES1-EUh} -- \eqref{eq:si-ub-EZ-EUh}
  with \eqref{eq:si-diff-sh-shG3} gives
  \begin{IEEEeqnarray*}{rCl}
    \hspace{1mm} \frac{1}{n} \E{\|\hat{\bf S}_1\|^2 - \|\hat{\bf
        S}_1^{\textnormal{G}}\|^2} & \leq & 4 \sigma^2
    \Prv{\mathcal{E}_{\hat{\bf U}}} + 16 \gamma_{13} \sigma^2
    \Prv{\mathcal{E}_{\hat{\bf U}}} \nonumber\\ 
    & & {} + 36 \gamma_{13} (P_1 + P_2)
    \Prv{\mathcal{E}_{\hat{\bf U}}} \nonumber\\
    & & {} + 36 \gamma_{13} \Big( (P_1+P_2) (\epsilon +
    \Prv{\mathcal{E}_{\bf S}})\\ 
    & & \hspace{17mm} {} + (P_1+P_2+N) (1 + \epsilon)
      \Prv{\mathcal{E}_{\hat{\bf U}}} \nonumber\\
    & & \hspace{17mm} {} + N (\epsilon +
    \Prv{\mathcal{E}_{\bf Z}}) \Big) \nonumber\\[2mm]
    & \leq & 4 \Big( \sigma^2(1+4\gamma_{13}) + 9 \gamma_{13}
      (P_1+P_2+N) (1 + \epsilon) \Big) \Prv{\mathcal{E}_{\hat{\bf
          U}}}\\
    & & {} + 36 \gamma_{13} \Big( (P_1 + P_2)\Prv{\mathcal{E}_{\bf S}}
    + N \Prv{\mathcal{E}_{\bf Z}} + (P_1 + P_2 + N) \epsilon
    \Big). \hspace{1mm} \qedhere
  \end{IEEEeqnarray*}
\end{proof}

Based on Lemma \ref{lm:si-bd-inner-S1-S1h-S1G} and Lemma
\ref{lm:si-bd-S1h-S1G}, the proof of Proposition
\ref{prp:si-D1-eql-genie} now follows easily.


\begin{proof}[Proof of Proposition \ref{prp:si-D1-eql-genie}]
  
  \begin{IEEEeqnarray}{l}
    \frac{1}{n}\E{\| {\bf S}_1 - \hat{\bf S}_1 \|^2} -
    \frac{1}{n}\E{\| {\bf S}_1 - \hat{\bf S}_1^{\textnormal{G}} \|^2} \nonumber\\[2mm]
    \hspace{35mm} = \frac{1}{n}\E{\| {\bf S}_1 - \hat{\bf S}_1 \|^2
      -\| {\bf
        S}_1 - \hat{\bf S}_1^{\textnormal{G}} \|^2} \nonumber\\[2mm]
    \hspace{35mm} = \frac{1}{n} \bigg( \E{\|{\bf S}_1\|^2} - 2
    \E{\inner{{\bf S}_1}{\hat{\bf
          S}_1}} + \E{\|\hat{\bf S}_1\|^2} \nonumber\\
    \hspace{44mm} {} - \E{\|{\bf S}_1\|^2} + 2 \E{\inner{{\bf
          S}_1}{\hat{\bf S}_1^{\textnormal{G}}}} - \E{\|\hat{\bf
        S}_1^{\textnormal{G}}\|^2} \bigg) \nonumber\\[2mm]
    \hspace{35mm} = 2 \frac{1}{n} \E{\inner{{\bf S}_1}{\hat{\bf
          S}_1^{\textnormal{G}} - \hat{\bf S}_1}} + \frac{1}{n}
    \E{\|\hat{\bf S}_1\|^2 - \|\hat{\bf S}_1^{\textnormal{G}}\|^2} \nonumber\\[2mm]
    \hspace{35mm} \stackrel{a)}{\leq} 2 \sigma^2 \left( \epsilon + 17
      \Prv{\mathcal{E}_{\bf S}} + (17 + \epsilon)
      \Prv{\mathcal{E}_{\hat{\bf U}}} \right) \nonumber\\
    \hspace{39mm} {} + 4 \Big( \sigma^2(1+4\gamma_{13}) + 9
    \gamma_{13} (P_1+P_2+N) (1 + \epsilon) \Big)
    \Prv{\mathcal{E}_{\hat{\bf
          U}}} \nonumber\\
    \hspace{39mm} {} + 36 \gamma_{13} \Big( (P_1 +
    P_2)\Prv{\mathcal{E}_{\bf S}} + N \Prv{\mathcal{E}_{\bf Z}} + (P_1
    + P_2 + N) \epsilon \Big)
    \nonumber\\[2mm]
    \hspace{35mm} = \xi_1 \Prv{\mathcal{E}_{\hat{\bf U}}} + \xi_2
    \Prv{\mathcal{E}_{{\bf S}}} + \xi_3 \Prv{\mathcal{E}_{{\bf Z}}} +
    \xi_4 \epsilon, \label{eq:si-diff-genie}
  \end{IEEEeqnarray}
  where in step $a)$ we have used Lemma
  \ref{lm:si-bd-inner-S1-S1h-S1G} and Lemma \ref{lm:si-bd-S1h-S1G},
  and where $\xi_{\ell}$, $\ell \in \{ 1,2,3,4 \}$, depend only on
  $\sigma^2$, $\gamma_{13}$, $P_1$, $P_2$ and $N$. Combining
  \eqref{eq:si-diff-genie} with Lemma \ref{lm:vq-Pr-E4}, Lemma
  \ref{lm:vq-Pr-E1} and Lemma \ref{lm:vq-Pr-E3} gives that for every
  $\delta > 0$ and $0.3 > \epsilon > 0$, there exists an
  $n'(\delta,\epsilon) > 0$ such that for all $(R_1,R_2) \in
  \mathcal{R}'(\epsilon)$ and $n > n'(\delta,\epsilon)$
  \begin{IEEEeqnarray*}{rCl}
    \frac{1}{n}\E{\| {\bf S}_1 - \hat{\bf S}_1 \|^2} -
    \frac{1}{n}\E{\| {\bf S}_1 - \hat{\bf S}_1^{\textnormal{G}} \|^2}
    & < & \xi_1' \delta + \xi_42' \epsilon,
  \end{IEEEeqnarray*}
  where $\xi_1'$ and $\xi_2'$ depend only on $\sigma^2$,
  $\gamma_{13}$, $P_1$, $P_2$ and $N$.
\end{proof}

\subsection{Upper Bound on Expected Distortion}\label{sec:si-ub-D1}

We now derive an upper bound on the achievable distortion for the
proposed vector-quantizer scheme. By Corollary \ref{cor:si-genie}, it
suffices to analyze the genie-aided scheme. Using that $\hat{\bf
  S}_1^{\textnormal{G}} = \gamma_{11} {\bf U}_1^{\ast} + \gamma_{12}
{\bf U}_2^{\ast} + \gamma_{13} {\bf Y}$, we have
\begin{IEEEeqnarray}{rCl}
  \frac{1}{n} \E{\| {\bf S}_1 - \hat{\bf S}_1^{\textnormal{G}} \|^2} &
  = \frac{1}{n} \Big( & \E{\|{\bf S}_1\|^2} - 2 \gamma_{11}
  \E{\inner{{\bf S}_1}{{\bf U}_1^{\ast}}} - 2 \gamma_{12}
  \E{\inner{{\bf S}_1}{{\bf U}_2^{\ast}}} \nonumber\\
  & & {} - 2 \gamma_{13} \E{\inner{{\bf S}_1}{{\bf Y}}} +
  \gamma_{11}^2 \E{\|{\bf U}_1^{\ast}\|^2} + 2 \gamma_{11} \gamma_{12}
  \E{\inner{{\bf U}_1^{\ast}}{{\bf U}_2^{\ast}}} \nonumber\\
  & & {} + 2 \gamma_{11} \gamma_{13}
  \E{\inner{{\bf U}_1^{\ast}}{{\bf Y}}} + \gamma_{12}^2 \E{\|{\bf
      U}_2^{\ast}\|^2} \nonumber\\
  & & {} + 2 \gamma_{12} \gamma_{13} \E{\inner{{\bf U}_2^{\ast}}{{\bf
        Y}}} + \gamma_{13}^2 \E{\|{\bf Y}\|^2} \Big). \label{eq:si-dist1-S1-S1G}
\end{IEEEeqnarray}
Some of the expectation terms are bounded straightforwardly. In
particular, we have $\E{\|{\bf S}_1\|^2} = n \sigma^2$, $\E{\| {\bf
    U}_1^{\ast} \|^2} = n \sigma^2 (1-2^{-2R_1})$, and $\E{\| {\bf
    U}_2^{\ast} \|^2} = n \sigma^2 (1-2^{-2R_2})$. For three further
terms we take over the bounds from the analysis of the
vector-quantizer scheme. That is, by Lemma \ref{lm1:vq-bd-D1genie} we
have that for every $\delta > 0$ and $0 < \epsilon < 0.3$ and every
positive integer $n$
\begin{IEEEeqnarray}{rCl}
  \frac{1}{n}\E{\inner{{\bf S}_1}{{\bf U}_1^{\ast}}} & \geq & \sigma^2
  (1-2^{-2R_1}) - \zeta_1 (\delta,\epsilon) \nonumber\\
  & = & \mat{c}_{11} - \zeta_1 (\delta,\epsilon), \label{eq:si-lb-S1U1}
\end{IEEEeqnarray}
where $\zeta_1(\delta,\epsilon)$ is such that $\lim_{\delta,\epsilon
  \rightarrow 0} \zeta_1(\delta,\epsilon) = 0$. By Lemma
\ref{lm2:vq-bd-D1genie} we have that for every $\delta > 0$ and $0 <
\epsilon < 0.3$ there exists an $n_2'(\delta,\epsilon) \in \Naturals$
such that for all $n > n_2'(\delta,\epsilon)$
\begin{IEEEeqnarray}{rCl}
  \frac{1}{n}\E{\inner{{\bf U}_1^{\ast}}{{\bf U}_2^{\ast}}} & \leq & \sigma^2
  \rho (1-2^{-2R_1}) (1-2^{-2R_2}) + \zeta_2(\delta,\epsilon) \nonumber\\
  & = & \mat{k}_{12} + \zeta_2(\delta,\epsilon), \label{eq:si-ub-U1U2}
\end{IEEEeqnarray}
where $\zeta_2(\delta,\epsilon)$ is such that $\lim_{\delta,\epsilon
  \rightarrow 0} \zeta_2(\delta,\epsilon) = 0$. And by Lemma
\ref{lm3:vq-bd-D1genie} we have that for every $\delta > 0$ and $0 <
\epsilon < 0.3$ there exists an $n'(\delta,\epsilon) \in \Naturals$
such that for all $n > n'(\delta,\epsilon)$
\begin{IEEEeqnarray}{rCl}
  \frac{1}{n}\E{\inner{{\bf S}_1}{{\bf U}_2^{\ast}}} & \geq & \sigma^2 \rho
  (1-2^{-2R_2}) - \zeta_3(\delta,\epsilon) \nonumber\\
  & = & \mat{c}_{12} - \zeta_3(\delta,\epsilon), \label{eq:si-lb-S1U2}
\end{IEEEeqnarray}
where $\zeta_3(\delta,\epsilon)$ is such that $\lim_{\delta,\epsilon
  \rightarrow 0} \zeta_3(\delta,\epsilon) = 0$. Next, recalling that
${\bf Y} = \alpha_1 {\bf S}_1 + \beta_1 {\bf U}_1^{\ast} + \alpha_2
{\bf S}_2 + \beta_2 {\bf U}_2^{\ast} + {\bf Z} $, gives
\begin{IEEEeqnarray}{rCl}
  \frac{1}{n}\E{\inner{{\bf S}_1}{{\bf Y}}} & = \frac{1}{n} \Big( & \alpha_1 \E{\| {\bf S}_1 \|^2}
  + \beta_1 \E{\inner{{\bf S}_1}{{\bf U}_1^{\ast}}} + \alpha_2
  \E{\inner{{\bf S}_1}{{\bf S}_2}} \nonumber\\[2mm]
  & & {} + \beta_2 \E{\inner{{\bf S}_1}{{\bf U}_2^{\ast}}} +
  \underbrace{\E{\inner{{\bf S}_1}{{\bf Z}}}}_{=0} \Big)\nonumber\\[-1mm]
  & \stackrel{a)}{\geq} & \alpha_1 \sigma^2 + \beta_1 \left( \mat{c}_{11} -
    \zeta_1(\delta,\epsilon) \right) + \alpha_2 \rho \sigma^2 +
  \beta_2 \left( \mat{c}_{12} - \zeta_3(\delta,\epsilon) \right) \nonumber\\[2mm]
  & = & \mat{c}_{13} - \zeta_4(\delta,\epsilon), \label{eq:si-lb-S1Y}
\end{IEEEeqnarray}
where in $a)$ we have used \eqref{eq:si-lb-S1U1},
\eqref{eq:si-ub-U1U2} and \eqref{eq:si-lb-S1U2}, and where
$\zeta_4(\delta,\epsilon)$ is such that $\lim_{\delta,\epsilon
  \rightarrow 0} \zeta_4(\delta,\epsilon) = 0$. For the remaining
terms in \eqref{eq:si-dist1-S1-S1G}, it can be shown, similarly as for
\eqref{eq:si-lb-S1U1} and \eqref{eq:si-lb-S1U2}, that for every
$\delta > 0$ and $0 < \epsilon < 0.3$ there exists an
$n''(\delta,\epsilon) \in \Naturals$ such that for all $n >
n''(\delta,\epsilon)$
\begin{IEEEeqnarray}{rCl}
  \frac{1}{n}\E{\inner{{\bf S}_1}{{\bf U}_1^{\ast}}} & \leq & \mat{c}_{11} +
  \zeta_5(\delta,\epsilon) \label{eq:si-ub-S1U1}\\[2mm]
  \frac{1}{n}\E{\inner{{\bf S}_2}{{\bf U}_1^{\ast}}} & \leq & \mat{c}_{21} +
  \zeta_6(\delta,\epsilon) \label{eq:si-ub-S2U1}\\[2mm]
  \frac{1}{n}\E{\inner{{\bf S}_1}{{\bf U}_2^{\ast}}} & \leq & \mat{c}_{12} +
  \zeta_7(\delta,\epsilon) \label{eq:si-ub-S1U2}\\[2mm]
  \frac{1}{n}\E{\inner{{\bf S}_2}{{\bf U}_2^{\ast}}} & \leq & \mat{c}_{22} +
  \zeta_8(\delta,\epsilon), \label{eq:si-ub-S2U2}
\end{IEEEeqnarray}
where $\zeta_j(\delta,\epsilon)$, $j \in \{ 5, \ldots 8 \}$, are such
that $\lim_{\delta,\epsilon \rightarrow 0} \zeta_j(\delta,\epsilon) =
0$. Using \eqref{eq:si-ub-U1U2} and \eqref{eq:si-ub-S1U1} --
\eqref{eq:si-ub-S2U2}, we now get that for every $\delta > 0$ and $0 <
\epsilon < 0.3$ there exists an $\tilde{n}_1(\delta,\epsilon) \in
\Naturals$ such that for all $n > \tilde{n}_1(\delta,\epsilon)$
\begin{IEEEeqnarray}{rCl}
  \frac{1}{n} \E{\inner{{\bf U}_1^{\ast}}{{\bf Y}}} & = & \frac{1}{n}
  \Big( \alpha_1 \E{\inner{{\bf U}_1^{\ast}}{{\bf S}_1}} + \beta_1
  \E{\| {\bf U}_1^{\ast} \|^2} + \alpha_2 \E{\inner{{\bf
        U}_1^{\ast}}{{\bf S}_2}} \nonumber\\[1mm] 
  & & \hspace{5mm} {} + \beta_2 \E{\inner{{\bf U}_1^{\ast}}{{\bf U}_2^{\ast}}} +
  \underbrace{\E{\inner{{\bf U}_1^{\ast}}{{\bf Z}}}}_{=0} \Big)\nonumber\\[1mm]
  & \leq & \alpha_1 \left(\mat{k}_{11} + \zeta_5(\delta,\epsilon)\right)
  + \beta_1 \mat{k}_{11} + \alpha_2 \left( \mat{c}_{21} +
    \zeta_6(\delta,\epsilon) \right) \nonumber\\[1mm]
  & & {} + \beta_2 \left( \mat{k}_{12} + \zeta_2(\delta,\epsilon)
  \right) \nonumber\\[3mm]
  & = & \mat{k}_{13} + \tilde{\zeta}_1(\delta,\epsilon), \label{eq:si-ub-U1Y}
\end{IEEEeqnarray}
where $\tilde{\zeta}_1(\delta,\epsilon)$ is such that
$\lim_{\delta,\epsilon \rightarrow 0} \tilde{\zeta}_1(\delta,\epsilon)
= 0$. Similarly, it can be shown that for every $\delta > 0$ and $0 <
\epsilon < 0.3$ there exists an $\tilde{n}_2(\delta,\epsilon) \in
\Naturals$ such that for all $n > \tilde{n}_2(\delta,\epsilon)$
\begin{IEEEeqnarray}{rCl}
  \frac{1}{n}\E{\inner{{\bf U}_2^{\ast}}{{\bf Y}}} & \leq & \mat{k}_{23} +
  \tilde{\zeta}_2(\delta,\epsilon), \label{eq:si-ub-U2Y}
\end{IEEEeqnarray}
where $\tilde{\zeta}_2(\delta,\epsilon)$ is such that
$\lim_{\delta,\epsilon \rightarrow 0} \tilde{\zeta}_2(\delta,\epsilon)
= 0$. And finally, we have that for every $\delta > 0$ and $0 <
\epsilon < 0.3$ there exists an $\tilde{n}_3(\delta,\epsilon) \in
\Naturals$ such that for all $n > \tilde{n}_3(\delta,\epsilon)$
\begin{IEEEeqnarray}{rCl}
  \frac{1}{n}\E{\|{\bf Y}\|^2} & = & \frac{1}{n} \bigg( \alpha_1^2 \E{\| {\bf S}_1 \|^2} + 2
  \alpha_1 \beta_1 \E{\inner{{\bf S}_1}{{\bf U}_1^{\ast}}} + 2
  \alpha_1 \alpha_2 \E{\inner{{\bf S}_1}{{\bf S}_2}} + 2 \alpha_1
  \beta_2 \E{\inner{{\bf S}_1}{{\bf U}_2^{\ast}}} \nonumber\\
  & & \hspace{5mm} {} + 2 \alpha_1 \underbrace{\E{\inner{{\bf S}_1}{{\bf Z}}}}_{=0}
  + \beta_1^2 \E{\| {\bf U}_1^{\ast} \|^2} + 2 \beta_1 \alpha_2 \E{\inner{{\bf
        U}_1^{\ast}}{{\bf S}_2}} + 2 \beta_1 \beta_2 \E{\inner{{\bf
        U}_1^{\ast}}{{\bf U}_2^{\ast}}} \nonumber\\
  & & \hspace{5mm} {} + 2 \beta_1 \underbrace{\E{\inner{{\bf U}_1^{\ast}}{{\bf Z}}}}_{=0} +
  \alpha_2^2 \E{\| {\bf S}_2 \|^2} + 2 \alpha_2 \beta_2 \E{\inner{{\bf
        S}_2}{{\bf U}_2^{\ast}}} + 2 \alpha_2 \underbrace{\E{\inner{{\bf
          S}_2}{{\bf Z}}}}_{=0} \nonumber\\
  & & \hspace{5mm} {} + \beta_2^2 \E{\| {\bf U}_2^{\ast} \|^2} + 2 \beta_2
  \underbrace{\E{\inner{{\bf U}_2^{\ast}}{{\bf Z}}}}_{=0} + \E{\| {\bf
      Z} \|^2} \bigg) \nonumber\\[2mm]
  & \leq & \alpha_1^2 \sigma^2 + 2 \alpha_1 \beta_1 (\mat{c}_{11}
  + \zeta_5(\delta,\epsilon)) + 2 \alpha_1 \alpha_2 \rho \sigma^2
  + 2 \alpha_1 \beta_2 (\mat{c}_{12} + \zeta_7(\delta,\epsilon)) \nonumber\\
  & & {} + \beta_1^2 \mat{k}_{11} + 2 \beta_1 \alpha_2 (\mat{c}_{21}
  + \zeta_6(\delta,\epsilon)) + 2 \beta_1 \beta_2 (\mat{k}_{12}
  + \zeta_2(\delta,\epsilon)) \nonumber\\
  & & {} + \alpha_2^2 \sigma^2 + 2 \alpha_2 \beta_2 (\mat{c}_{22}
  + \zeta_8(\delta,\epsilon)) \nonumber\\
  & & {} + \beta_2^2 \mat{k}_{22} + N \nonumber\\[3mm]
  & = & \mat{k}_{33} + \tilde{\zeta}_3(\delta,\epsilon)), \label{eq:si-ub-Y2}
\end{IEEEeqnarray}
where $\tilde{\zeta}_3(\delta,\epsilon)$ is such that
$\lim_{\delta,\epsilon \rightarrow 0} \tilde{\zeta}_3(\delta,\epsilon)
= 0$. Thus, combining \eqref{eq:si-lb-S1U1} -- \eqref{eq:si-lb-S1Y}
and \eqref{eq:si-ub-U1Y} -- \eqref{eq:si-ub-Y2} with
\eqref{eq:si-dist1-S1-S1G} gives that for every $\delta > 0$ and $0 <
\epsilon < 0.3$ there exists an $n'(\delta,\epsilon) \in \Naturals$
such that for all $n > n'(\delta,\epsilon)$
\begin{IEEEeqnarray}{rCl}
  \frac{1}{n} \E{\| {\bf S}_1 - \hat{\bf S}_1^{\textnormal{G}} \|^2} &
  \leq & \sigma^2 - 2 \gamma_{11} \mat{c}_{11} - 2 \gamma_{12}
  \mat{c}_{12} - 2\gamma_{13} \mat{c}_{13} + \gamma_{11}^2
  \mat{k}_{11} + 2 \gamma_{11} \gamma_{12} \mat{k}_{12} \nonumber\\
  & & {} + 2 \gamma_{11} \gamma_{13} \mat{k}_{13} + \gamma_{12}^2
  \mat{k}_{22} + \gamma_{12}\gamma_{13} \mat{k}_{23} + \gamma_{13}^2
  \mat{k}_{33} + \zeta'(\delta,\epsilon) \nonumber\\[3mm]
  & = & \sigma^2 - 2 \gamma_{11} \mat{c}_{11} - 2 \gamma_{12}
  \mat{c}_{12} - 2\gamma_{13} \mat{c}_{13} \nonumber\\[-1mm]
  & & \hspace{30mm} {} + \left( \begin{array}{c c c}
      \hspace{-1mm}\gamma_{11} & \hspace{-1mm} \gamma_{12} &
      \hspace{-1mm} \gamma_{13} \hspace{-1mm}\end{array}\right)
  \mat{K} \left( \begin{array}{c} \gamma_{11}\\ \gamma_{12}\\
      \gamma_{13} \end{array}\right) + \zeta'(\delta,\epsilon) \nonumber\\[3mm]
  & \stackrel{a)}{=} & \sigma^2 - 2 \gamma_{11} \mat{c}_{11} - 2
  \gamma_{12}
  \mat{c}_{12} - 2\gamma_{13} \mat{c}_{13} \nonumber\\[-1mm]
  & & \hspace{30mm} {} + \left( \begin{array}{c c c} \hspace{-1mm}
      \gamma_{11} & \hspace{-1mm} \gamma_{12} & \hspace{-1mm}
      \gamma_{13} \hspace{-1mm} \end{array}\right) \mat{K}
  \mat{K}^{-1} \left( \begin{array}{c} \mat{c}_{11}\\ \mat{c}_{12}\\
      \mat{c}_{13} \end{array}\right) +
  \zeta'(\delta,\epsilon) \nonumber\\[3mm]
  & = & \sigma^2 - 2 \gamma_{11} \mat{c}_{11} - 2 \gamma_{12}
  \mat{c}_{12} - 2\gamma_{13} \mat{c}_{13} \nonumber\\[-1mm]
  & & \hspace{30mm} {} + \left( \begin{array}{c c c} \hspace{-1mm}
      \gamma_{11} & \hspace{-1mm} \gamma_{12} & \hspace{-1mm}
      \gamma_{13} \hspace{-1mm} \end{array}\right)
  \left( \begin{array}{c} \mat{c}_{11}\\ \mat{c}_{12}\\
      \mat{c}_{13} \end{array}\right) + \zeta'(\delta,\epsilon) \nonumber\\[3mm]
  & = & \sigma^2 - \gamma_{11} \mat{c}_{11} - \gamma_{12} \mat{c}_{12}
  - \gamma_{13} \mat{c}_{13} +
  \zeta'(\delta,\epsilon), \label{eq:si-dist2-S1-S1G}
\end{IEEEeqnarray}
where we have used the shorthand notation $\mat{K}$ for
$\mat{K}(R_1,R_2)$, and where in $a)$ we have used the definition of
the coefficients $\gamma_{ij}$ in \eqref{eq:src-estmt-coeff}, and
where $\zeta'(\delta,\epsilon)$ is such that $\lim_{\delta,\epsilon
  \rightarrow 0} \zeta'(\delta,\epsilon) = 0$. Now, letting in
\eqref{eq:si-dist2-S1-S1G} first $n \rightarrow \infty$ and then
$\delta, \epsilon \rightarrow 0$, and combining the result with
Corollary \ref{cor:si-genie} gives
\begin{IEEEeqnarray*}{rCl}
  \varlimsup_{n \rightarrow \infty} \E{\| {\bf S}_1 - \hat{\bf
      S}_1 \|^2} & \leq &  \sigma^2 - \gamma_{11}
  \mat{c}_{11} - \gamma_{12} \mat{c}_{12} - \gamma_{13} \mat{c}_{13},
\end{IEEEeqnarray*}
whenever $(R_1,R_2)$ satisfy
\begin{IEEEeqnarray*}{rCl}
  R_1 & < & \frac{1}{2} \log_2 \left( \frac{\beta_1'^2 \| {\bf U}_1 \|^2
      (1-\tilde{\rho}^2) + N'}{N' (1-\tilde{\rho}^2)} \right)\\[2mm]
  R_2 & < & \frac{1}{2} \log_2 \left( \frac{\beta_2'^2 \| {\bf U}_2 \|^2
      (1-\tilde{\rho}^2) + N'}{N' (1-\tilde{\rho}^2)}\right)\\[2mm]
  R_1 + R_2 & < & \frac{1}{2} \log_2 \left( \frac{\beta_1'^2 \| {\bf
        U}_1 \|^2 + \beta_2'^2 \| {\bf U}_2 \|^2 + 2 \tilde{\rho}
        \beta_1' \beta_2' \| {\bf U}_1 \| \| {\bf U}_2 \| +
        N'}{N' (1-\tilde{\rho}^2)}\right).
\end{IEEEeqnarray*}

\end{document}